\newcommand{\e}{\varepsilon}
\newcommand{\Prob}{\mathbb{P}}
\newcommand{\Exp}[1]{\mathbb{E}\left[#1\right]}
\newcommand{\imax}{i_{\mathit{max}}}
\newtheorem{theorem}{Theorem}
\newtheorem{lem}[theorem]{Lemma}
\crefname{lem}{Lemma}{Lemmas}
\begin{document}

\title{The Fluid Mechanics of Liquid Democracy}

\author{Paul G\"olz \and Anson Kahng \and Simon Mackenzie \and Ariel D.\ Procaccia}
\date{}

\maketitle
\begin{abstract}
\emph{Liquid democracy} is the principle of making collective decisions by letting agents transitively delegate their votes. Despite its significant appeal, it has become apparent that a weakness of liquid democracy is that a small subset of agents may gain massive influence. To address this, we propose to change the current practice by allowing agents to specify multiple delegation options instead of just one. Much like in nature, where --- fluid mechanics teaches us --- liquid maintains an equal level in connected vessels, so do we seek to control the flow of votes in a way that balances influence as much as possible. Specifically, we analyze the problem of choosing delegations to approximately minimize the maximum number of votes entrusted to any agent, by drawing connections to the literature on confluent flow. We also introduce a random graph model for liquid democracy, and use it to demonstrate the benefits of our approach both theoretically and empirically.
\end{abstract}

\section{Introduction}\label{sec:intro}
\emph{Liquid democracy} is a potentially disruptive approach to democratic decision making. As in direct democracy, agents can vote on every issue by themselves. Alternatively, however, agents may delegate their vote, i.e., entrust it to any other agent who then votes on their behalf. Delegations are transitive; for example, if agents $2$ and $3$ delegate their votes to $1$, and agent $4$ delegates her vote to $3$, then agent $1$ would vote with the weight of all four agents, including herself. Just like representative democracy, this system allows for separation of labor, but provides for stronger accountability: Each delegator is connected to her transitive delegate by a path of personal trust relationships, and each delegator on this path can withdraw her delegation at any time if she disagrees with her delegate's choices.

Although the roots of liquid democracy can be traced back to the work of Miller~\cite{Mill69}, it is only in recent years that it has gained recognition among practitioners. Most prominently, the German Pirate Party adopted the platform \emph{LiquidFeedback} for internal decision-making in 2010. At the highest point, their installation counted more than 10\,000 active users~\cite{KKHS+15}. More recently, two parties --- the Net Party in Argentina, and Flux in Australia --- have run in national elections on the promise that their elected representatives would vote according to decisions made via their respective liquid-democracy-based systems. Although neither party was able to win any seats in parliament, their bids enhanced the promise and appeal of liquid democracy. 

However, these real-world implementations also exposed a weakness in the liquid democracy approach: Certain individuals, the so-called super-voters, seem to amass enormous weight, whereas most agents do not receive any delegations. In the case of the Pirate Party, this phenomenon is illustrated by an article in \href{http://www.spiegel.de/international/germany/liquid-democracy-web-platform-makes-professor-most-powerful-pirate-a-818683.html}{Der Spiegel}, according to which one particular super-voter's ``vote was like a decree,'' even though he held no office in the party. As Kling et al.~\cite{KKHS+15} describe, super-voters were so controversial that ``the democratic nature of the system was questioned, and many users became inactive.''
Besides the negative impact of super-voters on perceived legitimacy, super-voters might also be more exposed to bribing. Although delegators can retract their delegations as soon as they become aware of suspicious voting behavior, serious damage might be done in the meantime.
Furthermore, if super-voters jointly have sufficient power, they might find it more efficient to organize majorities through deals between super-voters behind closed doors, rather than to try to win a broad majority through public discourse.
Finally, recent work by Kahng et al.~\cite{KMP18} indicates that, even if delegations go only to more competent agents, a high concentration of power might still be harmful for social welfare, by neutralizing benefits corresponding to the Condorcet Jury Theorem.

While all these concerns suggest that the weight of super-voters should be limited, the exact metric to optimize for varies between them and is often not even clearly defined. For the purposes of this paper, we choose to minimize the weight of the heaviest voter. As is evident in the Spiegel article, the weight of individual voters plays a direct role in the perception of super-voters. But even beyond that, we are confident that minimizing this measure will lead to substantial improvements across all presented concerns.

Just how can the maximum weight be reduced? One approach might be to restrict the power of delegation by imposing caps on the weight. However, as argued by Behrens et al.~\cite{BKNS14}, delegation is always possible by coordinating outside of the system and copying the desired delegate's ballot. Pushing delegations outside of the system would not alleviate the problem of super-voters, just reduce transparency.
Therefore, we instead adopt a voluntary approach: If agents are considering multiple potential delegates, all of whom they trust, they are encouraged to leave the decision for one of them to a centralized mechanism.
With the goal of avoiding high-weight agents in mind, our research challenge is twofold:
\begin{quote}
\emph{First, investigate the algorithmic problem of selecting delegations to minimize the maximum weight of any agent, and, second, show that allowing multiple delegation options does indeed provide a significant reduction in the maximum weight compared to the status quo.}
\end{quote} 
Put another (more whimsical) way, we wish to design liquid democracy systems that emulate the \emph{law of communicating vessels}, which asserts that liquid will find an equal level in connected containers.

\subsection{Our Approach and Results}
We formally define our problem in \cref{sec:algorithm}.
In addition to minimizing the maximum weight of any voter, we specify how to deal with delegators whose vote cannot possibly reach any voter.
In general, our problem is closely related to minimizing congestion for confluent flow as studied by Chen et al.~\cite{chen}.
Not only does this connection suggest an optimal algorithm based on mixed integer linear programming, but we also get a polynomial-time $(1 + \log |V|)$-approximation algorithm, where $V$ is the set of voters.\footnote{Throughout this paper, let $\log$ denote the natural logarithm.}
In addition, we show that approximating our problem to within a factor of $\frac{1}{2} \, \log_2 |V|$ is NP-hard.

In \cref{sec:probmodel}, to evaluate the benefits of allowing multiple delegations, we propose a probabilistic model for delegation behavior --- inspired by the well-known \emph{preferential attachment} model~\cite{BA99} --- in which we add agents successively.
With a certain probability $d$, a new agent delegates; otherwise, she votes herself.
If she delegates, she chooses $k$ many delegation options among the previously inserted agents.
A third parameter $\gamma$ controls the bias of this selection towards agents who already receive many delegations.
Assuming $\gamma = 0$, i.e., that the choice of delegates is unbiased, we prove that allowing two choices per delegator ($k = 2$) asymptotically leads to dramatically lower maximum weight than classical liquid democracy ($k = 1$).
In the latter case, with high probability, the maximum weight is at least $\Omega(t^\beta)$ for some $\beta > 0$, whereas the maximum weight in the former case is only $\mathcal{O}(\log \log t)$ with high probability, where $t$ denotes simultaneously the time step of the process and the number of agents.
Our analysis draws on a phenomenon called the \emph{power of choice} that can be observed in many different load balancing models. In fact, even a greedy mechanism that selects a delegation option to locally minimize the maximum weight as agents arrive exhibits this asymptotic behavior, which upper-bounds the maximum weight for optimal resolution.

In \cref{sec:simulations}, we complement our theoretical findings with empirical results.
Our simulations demonstrate that our approach continues to outperform classical preferential attachment for higher values of $\gamma$.
We also show that the most substantial improvements come from increasing $k$ from one to two, i.e., that increasing $k$ even further only slightly reduces the maximum weight.
We continue to see these improvements in terms of maximum weight even if just some fraction of delegators gives two options while the others specify a single delegate.
Finally, we compare the optimal maximum weight with the maximum weight produced by the approximation algorithm and greedy heuristics.

\subsection{Related Work}\label{sec:related}
Kling et al.~\cite{KKHS+15} conduct an empirical investigation of the existence and influence of super-voters.  The analysis is based on daily data dumps, from 2010 until 2013, of the German Pirate Party installation of LiquidFeedback. As noted above, Kling et al.\ find that super-voters exist, and have considerable power. The results do suggest that super-voters behave responsibly, as they ``do not
fully act on their power to change the outcome of votes, and
they vote in favour of proposals with the majority of voters
in many cases.'' Of course, this does not contradict the idea that a balanced distribution of power would be desirable.

There are only a few papers that provide theoretical analyses of liquid democracy~\cite{Green15,CG17,KMP18}. 
We would like to stress the differences between our approach and the one adopted by Kahng et al.~\cite{KMP18}.
They consider binary issues in a setting with an objective ground truth, i.e., there is one ``correct'' outcome and one ``incorrect'' outcome. In this setting, voters are modeled as biased coins that each choose the correct outcome with an individually assigned probability, or \textit{competence level}.
The authors examine whether liquid democracy can increase the probability of making the right decision over direct democracy by having less competent agents delegate to more competent ones.
By contrast, our work is completely independent of the (strong) assumptions underlying the results of Kahng et al. In particular, our approach is agnostic to the final outcome of the voting process, does not assume access to information that would be inaccessible in practice, and is compatible with any number of alternatives and choice of voting rule used to aggregate votes. In other words, the goal is not to use liquid democracy to promote a particular outcome, but rather to adapt the process of liquid democracy such that more voices will be heard.

\section{Algorithmic Model and Results}\label{sec:alg}
\label{sec:algorithm}
Let us consider a delegative voting process where agents may specify multiple potential delegations. This gives rise to a directed graph, whose nodes represent agents and whose edges represent potential delegations. In the following, we will conflate nodes and the agents they represent.
A distinguished subset of nodes corresponds to agents who have voted directly, the \emph{voters}.
Since voters forfeit the right to delegate, the voters are a subset of the sinks of the graph.
We call all non-voter agents \emph{delegators}.

Each agent has an inherent voting weight of 1.
When the delegations will have been resolved, the weight of every agent will be the sum of weights of her delegators plus her inherent weight.
We aim to choose a delegation for every delegator in such a way that the maximum weight of any voter is minimized. 

This task closely mirrors the problem of congestion minimization for confluent flow (with infinite edge capacity):
There, a flow network is also a finite directed graph with a distinguished set of graph sinks, the \emph{flow sinks}.
Every node has a non-negative \emph{demand}.
If we assume unit demand, this demand is 1 for every node.
Since the flow is confluent, for every non-sink node, the algorithm must pick exactly one outgoing edge, along which the flow is sent.
Then, the \emph{congestion} at a node $n$ is the sum of congestions at all nodes who direct their flow to $n$ plus the demand of $n$.
The goal in congestion minimization is to minimize the maximum congestion at any flow sink. (We remark that the close connection between our problem and confluent flow immediately suggests a variant corresponding to splittable flow; we discuss this variant at length in Section~\ref{sec:disc}.) 

In spite of the similarity between confluent flow and resolving potential delegations, the two problems differ when a node has no path to a voter / flow sink.
In confluent flow, the result would simply be that no flow exists.
In our setting however, this situation can hardly be avoided.
If, for example, several friends assign all of their potential delegations to each other, and if all of them rely on the others to vote, their weight cannot be delegated to any voter.
Our mechanism cannot simply report failure as soon as a small group of voters behaves in an unexpected way. Thus, it must be allowed to leave these votes unused. 
At the same time, of course, our algorithm should not exploit this power to decrease the maximum weight, but must primarily maximize the number of utilized votes.
We formalize these issues in the following section.

\subsection{Problem Statement}
All graphs $G = (N, E)$ mentioned in this section will be finite and directed.
Furthermore, they will be equipped with a subset $V \subseteq \mathit{sinks}(G)$.
For the sake of brevity, these assumptions will be implicit in the notion ``graph $G$ with $V$''.

Some of these graphs represent situations in which all delegations have already been resolved and in which each vote reaches a voter:
We call a graph $(N, E)$ with $V$ a \emph{delegation graph} if it is acyclic, its sinks are exactly the set $V$, and every other vertex has outdegree one.
In such a graph, define the \emph{weight} $w(n)$ of a node $n \in N$ as
\[ w(n) \coloneqq 1 + \sum_{(m,n) \in E} w(m). \]
This is well-defined because $E$ is a well-founded relation on $N$.

Resolving the delegations of a graph $G$ with $V$ can now be described as the \textsc{MinMaxWeight} problem: Among all delegation subgraphs $(N', E')$ of $G$ with voting vertices $V$ of maximum $|N'|$, find one that minimizes the maximum weight of the voting vertices.

\subsection{Connections to Confluent Flow}
\label{subsec:confflow}
We recall definitions from the flow literature as used by Chen et al.~\cite{chen}.
We slightly simplify the exposition by assuming unit demand at every node.

Given a graph $(N, E)$ with $V$, a \emph{flow} is a function $f : E \to \mathbb{R}_{\geq 0}$.
For any node $n$, set $\mathit{in}(n) \coloneqq \sum_{(m, n) \in E} f(m, n)$ and $\mathit{out}(n) \coloneqq \sum_{(n, m) \in E} f(n, m)$.
At every node $n \in N \setminus V$, a flow must satisfy \emph{flow conservation}:
\[ \mathit{out}(n) = 1 + \mathit{in}(n). \]
The congestion at any node $n$ is defined as $1 + \mathit{in}(n)$.
A flow is \emph{confluent} if every node has at most one outgoing edge with positive flow. We define \textsc{MinMaxCongestion} as the problem of finding a confluent flow on a given graph such that the maximum congestion is minimized. 

To relate the two presented problems, we need to refer to the parts of a graph $(N, E)$ with $V$ from which $V$ is reachable:
The \emph{active} nodes $\mathit{active}_V (N, E)$ are all $n \in N$ such that $n \mathrel{E^*} v$ for some $v \in V$.
The active subgraph is the restriction of $(N, E)$ to $\mathit{active}_V (N, E)$.
In particular, $V$ is part of this subgraph.

\begin{lem}
    \label{lem:restricttoactive}
    Let $G = (N, E)$ with $V$ be a graph.
    Its delegation subgraphs $(N', E')$ that maximize $|N'|$ are exactly the delegation subgraphs with $N' = \mathit{active}_V (N,E)$.
    At least one such subgraph exists.
\end{lem}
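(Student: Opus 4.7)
The plan is to prove two directions separately: first that every delegation subgraph $(N', E')$ of $G$ must satisfy $N' \subseteq \mathit{active}_V(N,E)$, and second that there exists a delegation subgraph with $N' = \mathit{active}_V(N,E)$. Since the maximum $|N'|$ among all delegation subgraphs is then $|\mathit{active}_V(N,E)|$, and any such $N'$ is both a subset of $\mathit{active}_V(N,E)$ and of equal cardinality to it, equality follows automatically.

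For the containment direction, I would take any delegation subgraph $(N', E')$ and any node $n \in N'$. Since $(N', E')$ is acyclic and finite, and every non-sink node has outdegree one, starting from $n$ and repeatedly following the unique outgoing edge yields a finite path with no repeated vertex, which must therefore terminate at a sink. Because the sinks of $(N', E')$ are exactly $V$, this shows $n$ reaches some $v \in V$ via edges in $E' \subseteq E$, hence $n \in \mathit{active}_V(N,E)$.

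For the existence direction, I would construct the delegation subgraph explicitly. Let $A \coloneqq \mathit{active}_V(N,E)$ and for each $n \in A$ let $d(n)$ be the length of a shortest $E$-path from $n$ to $V$; note $d(v) = 0$ for $v \in V$ and $d(n) \geq 1$ for any active non-voter. For every active non-voter $n$, by definition of $d(n)$ there exists an edge $(n, m) \in E$ with $m \in A$ and $d(m) = d(n) - 1$; pick one such edge arbitrarily and place it in $E'$. The resulting graph $(A, E')$ has $V$ as the set of sinks (voters remain sinks by construction, and every other node in $A$ was given an outgoing edge), each non-voter has outdegree exactly one, and it is acyclic because $d$ strictly decreases along every chosen edge. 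Hence $(A, E')$ is a delegation subgraph of $G$ with $N' = A$.

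Combining the two directions: any delegation subgraph $(N', E')$ has $|N'| \leq |A|$, and this bound is attained by the construction above. Consequently, a delegation subgraph maximizes $|N'|$ if and only if $N' = A$, which is exactly the claim. The argument is essentially bookkeeping; the only mildly delicate point is ensuring acyclicity in the construction, which the BFS-distance function $d$ handles cleanly.
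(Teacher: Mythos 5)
Your proof is correct and follows essentially the same route as the paper: the containment direction is the same follow-the-unique-outgoing-edge argument, and your explicit BFS-distance construction is precisely the ``shortest-paths-to-set-$V$ forest'' the paper invokes for existence, just with the acyclicity verification spelled out. No gaps.
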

\begin{proof}
	First, we show that all nodes of a delegation subgraph are active. Indeed, consider any node $n_1$ in the subgraph.
	By following outgoing edges, we obtain a sequence of nodes $n_1 \, n_2 \dots$ such that $n_i$ delegates to $n_{i+1}$.
	Since the graph is finite and acyclic, this sequence must end with a vertex $n_j$ without outgoing edges.
	This must be a voter; thus, $n_1$ is active.

	Furthermore, there exists a delegation subgraph of $(N, E)$ with nodes exactly $\mathit{active}_V (N,E)$.
	Indeed, the shortest-paths-to-set-$V$ forest (with edges pointed in the direction of the paths) on the active subgraph is a delegation graph.

	By the first argument, all delegation subgraphs must be subgraphs of the active subgraph.
	By the second argument, to have the maximum number of nodes, they must include all nodes of this subgraph.
\end{proof}
\begin{lem}
    \label{lem:flowtodel}
    Let $(N, E)$ with $V$ be a graph %
    and let $f : E \to \mathbb{R}_{\geq 0}$ be a confluent flow (for unit demand).
    By eliminating all zero-flow edges from the graph, we obtain a delegation graph.
\end{lem}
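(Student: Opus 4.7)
The plan is to verify the three defining properties of a delegation graph in turn: outdegree one for non-voters, $V$ being exactly the set of sinks, and acyclicity. Let $G' = (N, E')$ denote the graph obtained from $(N,E)$ by removing every edge $e$ with $f(e) = 0$, still paired with the same set $V$.

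First I would handle the outdegrees. For any $n \in V$, since $V \subseteq \mathit{sinks}(G)$, the node $n$ already has no outgoing edges in $E$, so certainly none in $E'$. For any $n \in N \setminus V$, flow conservation gives $\mathit{out}(n) = 1 + \mathit{in}(n) \geq 1 > 0$, so $n$ has at least one outgoing edge carrying positive flow; confluence says it has at most one such edge; hence exactly one outgoing edge survives in $E'$. This simultaneously shows that the sinks of $G'$ are exactly $V$.

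The main step is acyclicity. Suppose for contradiction that $G'$ contains a directed cycle $n_1 \to n_2 \to \cdots \to n_k \to n_1$. None of the $n_i$ can lie in $V$, since voters are sinks of $G'$. For each $i$, the unique positive-flow outgoing edge of $n_i$ is the cycle edge, so $f(n_i, n_{i+1}) = \mathit{out}(n_i)$. Combining with $\mathit{in}(n_i) \geq f(n_{i-1}, n_i)$ and flow conservation at $n_i$ yields
\[ f(n_i, n_{i+1}) \;=\; 1 + \mathit{in}(n_i) \;\geq\; 1 + f(n_{i-1}, n_i). \]
Summing this inequality around the cycle (indices mod $k$) the flow terms cancel and one obtains $0 \geq k$, a contradiction. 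Hence $G'$ is acyclic.

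Together, these three properties show that $G'$ is a delegation graph with voters $V$, as claimed. I expect the cycle argument to be the only non-routine step; the outdegree and sink claims are immediate from flow conservation together with confluence, whereas acyclicity crucially uses that each step around a cycle contributes an additional unit of demand that has nowhere to go.
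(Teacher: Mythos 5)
Your proof is correct and takes essentially the same route as the paper: outdegree exactly one for non-voters follows from flow conservation (at least one positive outgoing edge) plus confluence (at most one), and acyclicity follows because flow can only accumulate around a cycle, contradicting $\mathit{out}(n) = 1 + \mathit{in}(n)$. Your telescoping sum around the cycle is just a slightly more explicit formalization of the paper's verbal ``the demand can only increase from one node to the next'' argument.
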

\begin{proof}
	We first claim that the resulting graph is acyclic.
	Indeed, for the sake of contradiction, suppose that there is a cycle including some node $n$.
	Consider the flow out of $n$, through the cycle and back into $n$.
	Since the flow is confluent, and thus the flow cannot split up, the demand can only increase from one node to the next.
	As a result, $\mathit{in}(n) \geq \mathit{out}(n)$.
	However, by flow conservation and unit demand, $\mathit{out}(n) = \mathit{in}(n) + 1$, which contradicts the previous statement.
	
	Furthermore, the sinks of the graph are exactly $V$:
	By assumption, the nodes of $V$ are sinks in the original graph, and thus in the resulting graph.
	For any other node, flow conservation dictates that its outflow be at least its demand 1, thus every other node must have outgoing edges.
	
	Finally, every node not in $V$ must have outdegree 1.
	As detailed above, the outdegree must be at least 1.
	Because the flow was confluent, the outdegree cannot be greater.
	
	As a result of these three properties, we have a delegation graph.
\end{proof}

\begin{lem}
    \label{lem:deltoflow}
    Let $(N, E)$ with $V$ be a graph in which all vertices are active, and let $(N,E')$ be a delegation subgraph.
    Let $f : E \to \mathbb{R}_{\geq 0}$ be defined such that, for every node $n \in N \setminus V$ with (unique) outgoing edge $e\in E'$, $f(e) \coloneqq w(n)$.
    On all other edges $e\in E\setminus E'$, set $f(e) \coloneqq 0$. 
    Then, $f$ is a confluent flow.
\end{lem}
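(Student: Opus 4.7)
The goal is just to verify the two defining properties of a confluent flow for the function $f$ constructed from the delegation subgraph, so this is essentially a bookkeeping exercise rather than a proof with a genuine obstacle. I would split the verification into (i) confluence and (ii) flow conservation at every non-voter node.

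For confluence, I would simply observe that in the delegation subgraph $(N, E')$ every node $n \in N \setminus V$ has outdegree exactly one, and that nodes in $V$ are sinks of the original graph $(N, E)$ (since $V \subseteq \mathit{sinks}(G)$) and hence have no outgoing edges at all. Therefore, at each node, at most one outgoing edge of $E$ can carry positive $f$-value: the unique edge in $E'$ for delegators, and no edge for voters. This gives confluence directly from the definition.

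For flow conservation at a non-voter $n$, I would compute $\mathit{out}(n)$ and $\mathit{in}(n)$ using the definition of $f$. On the outflow side, the unique $E'$-edge leaving $n$ carries $f$-value $w(n)$, and all other outgoing $E$-edges carry $0$, so $\mathit{out}(n) = w(n)$. On the inflow side, edges of $E \setminus E'$ contribute $0$, so $\mathit{in}(n) = \sum_{(m,n) \in E'} f(m,n)$. Here I need to split by whether $m \in V$ or not: if $m \in V$, then $m$ is a sink of $(N,E)$ so no such edge exists; otherwise $f(m,n) = w(m)$ by construction. Hence $\mathit{in}(n) = \sum_{(m,n) \in E'} w(m)$. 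Combining these two expressions and invoking the recursive definition $w(n) = 1 + \sum_{(m,n) \in E'} w(m)$ on the delegation graph $(N, E')$ yields $\mathit{out}(n) = 1 + \mathit{in}(n)$, which is flow conservation.

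The only subtlety worth flagging is that $w$ is well-defined on $(N, E')$: the assumption that every vertex of $(N,E)$ is active, together with $(N, E')$ being a delegation subgraph, ensures that $(N, E')$ is acyclic with sinks exactly $V$, so the recursion defining $w$ terminates and all the sums above are finite. With that remark, both conditions of a confluent flow have been checked, completing the proof.
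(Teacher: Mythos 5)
Your proposal is correct and matches the paper's argument, which simply notes that flow conservation follows from the definitions of weight and flow and that confluence holds by construction; you have merely spelled out the same bookkeeping in full detail. The extra remark about well-definedness of $w$ on the delegation subgraph is a fine (if optional) addition.
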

\begin{proof}
    For every non-sink, flow conservation holds by the definition of weight and flow.
    By construction, the flow must be confluent.
\end{proof}

\subsection{Algorithms}
The observations made above allow us to apply algorithms --- even approximation algorithms --- for \textsc{MinMaxCongestion} to our \textsc{MinMaxWeight} problem, that is, we can reduce the latter problem to the former.

\begin{theorem}
\label{thm:alg}
Let $\mathcal{A}$ be an algorithm for \emph{\textsc{MinMaxCongestion}} with approximation ratio $c\geq 1$. Let $\mathcal{A'}$ be an algorithm that, given $(N, E)$ with $V$, runs $\mathcal{A}$ on the active subgraph, and translates the result into a delegation subgraph by eliminating all zero-flow edges. Then $\mathcal{A'}$ is a $c$-approximation algorithm for \emph{\textsc{MinMaxWeight}}.
\end{theorem}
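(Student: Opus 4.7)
The plan is to reduce the \textsc{MinMaxWeight} problem on $G = (N,E)$ with $V$ to \textsc{MinMaxCongestion} on the active subgraph, using \cref{lem:restricttoactive,lem:flowtodel,lem:deltoflow} to translate solutions back and forth while preserving the objective value. I need to check two things: that the delegation subgraph $\mathcal{A}'$ returns is feasible (i.e., has the maximum possible $|N'|$), and that its maximum voter weight is at most $c$ times the optimum.

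For feasibility, I would first invoke \cref{lem:restricttoactive} to observe that the feasible delegation subgraphs of $G$ are exactly those with node set $\mathit{active}_V(N,E)$, and that at least one exists, so the active subgraph has $V$ as its set of sinks and admits a confluent flow with unit demand. Thus $\mathcal{A}$, run on the active subgraph, returns some confluent flow $f$, and applying \cref{lem:flowtodel} to $f$ yields a delegation graph on the node set $\mathit{active}_V(N,E)$, whose set of sinks is $V$. This is a delegation subgraph of $G$ with the maximum possible number of nodes, so the output is feasible.

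For the approximation guarantee, the main step is to show that on the active subgraph, delegation subgraphs and confluent flows have matching objective values, so that the optima of the two problems coincide. In one direction, given an optimal delegation subgraph $(N',E^\star)$ for \textsc{MinMaxWeight}, \cref{lem:deltoflow} produces a confluent flow $f^\star$ whose value on the outgoing edge of each non-voter $n$ equals $w(n)$; for any voter $v$, the congestion is $1 + \mathit{in}(v) = 1 + \sum_{(m,v)\in E^\star} w(m) = w(v)$, so $\max$-congestion $=$ $\max$-weight and therefore $\mathrm{OPT}_{\mathrm{cong}} \le \mathrm{OPT}_{\mathrm{weight}}$. In the other direction, starting from the flow $f$ returned by $\mathcal{A}$, I would argue by well-founded induction along the (acyclic) support of $f$ that for every non-voter $n$ the unique positive-flow outgoing edge carries $f$-value equal to the weight $w(n)$ in the delegation graph produced by \cref{lem:flowtodel}: indeed, flow conservation gives $f(\text{outgoing}) = 1 + \mathit{in}(n) = 1 + \sum_m w(m) = w(n)$, where the inductive hypothesis applies to each in-neighbor. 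Applying the same identity at a voter $v$ shows that its congestion equals its weight in the resulting delegation graph.

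Combining these, the maximum weight of $\mathcal{A}'$'s output equals the maximum congestion of $f$, which is at most $c \cdot \mathrm{OPT}_{\mathrm{cong}} \le c \cdot \mathrm{OPT}_{\mathrm{weight}}$, finishing the proof. The main obstacle I expect is the careful bookkeeping in the induction identifying $f$-values with weights — without it, the inequality $\mathrm{OPT}_{\mathrm{cong}} \ge \mathrm{OPT}_{\mathrm{weight}}$ direction is not obvious, since \cref{lem:flowtodel} by itself only guarantees a delegation graph, not that its weight function matches the flow. Everything else (feasibility, restriction to the active subgraph, and the reverse inequality via \cref{lem:deltoflow}) is a direct assembly of the lemmas already proved.
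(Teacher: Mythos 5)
Your proposal is correct and follows essentially the same route as the paper: restrict to the active subgraph via \cref{lem:restricttoactive}, use \cref{lem:flowtodel,lem:deltoflow} to show the two optima coincide, and then transfer the $c$-approximation guarantee. The only difference is that you make explicit, via a well-founded induction on the support of $f$, the fact that the flow value on each non-voter's outgoing edge equals its weight in the resulting delegation graph --- a correspondence the paper's proof asserts without spelling out --- which is a welcome bit of added rigor rather than a deviation.
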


\begin{proof}
By \cref{lem:restricttoactive}, removing inactive parts of the graph does not change the solutions to \textsc{MinMaxWeight}, so we can assume without loss of generality that all vertices in the given graph are active. 

Suppose that the optimal solution for \textsc{MinMaxCongestion} on the given instance has maximum congestion $\alpha$. By \cref{lem:flowtodel}, it can be translated into a solution for \textsc{MinMaxWeight} with maximum weight $\alpha$. By \cref{lem:deltoflow}, the latter instance has no solution with maximum weight less than $\alpha$, otherwise it could be used to construct a confluent flow with the same maximum congestion. It follows that the optimal solution to the given \textsc{MinMaxWeight} instance has maximum weight $\alpha$. 

Now, $\mathcal{A}$ returns a confluent flow with maximum congestion at most $c\cdot \alpha$. Using Lemma~\ref{lem:flowtodel}, $\mathcal{A'}$ constructs a solution to \textsc{MinMaxWeight} with maximum weight at most $c\cdot \alpha$. Therefore, $\mathcal{A'}$ is a $c$-approximation algorithm. 
\end{proof}

Note that Theorem~\ref{thm:alg} works for $c=1$, i.e., even for exact algorithms. Therefore, it is possible to solve \textsc{MinMaxWeight} by adapting any exact algorithm for \textsc{MinMaxFlow}. For completeness we provide a mixed integer linear programming (MILP) formulation of the latter problem in Appendix~\ref{app:milp}.

Since the foregoing algorithm is based on solving an NP-hard problem, it might be too inefficient for typical use cases of liquid democracy with many participating agents.
Fortunately, it might be acceptable to settle for a slightly non-optimal maximum weight if this decreases computational cost.
To our knowledge, the best polynomial approximation algorithm for \textsc{MinMaxCongestion} is due to Chen et al.~\cite{chen} and achieves an approximation ratio of $1 + \log |V|$. Their algorithm starts by computing the optimal solution to the splittable-flow version of the problem, by solving a linear program.
The heart of their algorithm is a non-trivial, deterministic rounding mechanism.
This scheme drastically outperforms the natural, randomized rounding scheme, which leads to an approximation ratio of $\Omega(|N|^{1/4})$ with arbitrarily high probability \cite{chen06}.

\subsection{Hardness of Approximation}
In this section, we demonstrate the NP-hardness of approximating the MinMaxWeight problem to within a factor of $\frac{1}{2} \, \log_2 |V|$.
On the one hand, this justifies the absence of an exact polynomial-time algorithm.
On the other hand, this shows that the approximation algorithm is optimal up to a multiplicative constant.

\begin{theorem}
\label{thm:nphard}
It is NP-hard to approximate the \emph{\textsc{MinMaxWeight}} problem to a factor of $\frac{1}{2} \log_2 |V|$, even when each node has outdegree at most $2$.
\end{theorem}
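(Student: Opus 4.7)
The plan is to transfer the NP-hardness of approximating the confluent-flow version of \textsc{MinMaxCongestion} due to Chen et al.~\cite{chen} to our \textsc{MinMaxWeight} problem. Chen et al.\ prove, by reduction from \textsc{SetCover}, that \textsc{MinMaxCongestion} on confluent unit-demand flows is NP-hard to approximate within a factor of $\frac{1}{2}\log_2 |V|$, and their construction can be realized with every non-sink vertex having outdegree at most $2$ and every non-sink reaching some vertex in $V$. Given the tight correspondence between the two problems established in \cref{subsec:confflow}, I expect the reduction to amount to little more than reinterpreting the same instance.

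In more detail, I would take a hard instance $G = (N, E)$ with sinks $V$ from Chen et al.'s reduction and read it directly as a \textsc{MinMaxWeight} instance. Since confluent unit-demand flow is only defined when every non-sink can reach $V$, every vertex of $G$ is active; by \cref{lem:restricttoactive}, every optimal delegation subgraph therefore uses all of $N$. Then \cref{lem:flowtodel} turns any confluent flow of maximum congestion $c$ into a delegation subgraph of maximum weight $c$, and \cref{lem:deltoflow} does the reverse. Hence the two problems have identical optimal values on $G$, and any polynomial-time $\alpha$-approximation for \textsc{MinMaxWeight} would produce one for \textsc{MinMaxCongestion}, transferring the $\frac{1}{2}\log_2 |V|$ inapproximability bound.

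The step that I expect to require the most care is guaranteeing the outdegree-$2$ condition in our statement. If Chen et al.'s construction already meets it, as I believe is the case for their set-cover gadget, the proof is complete. Otherwise, I would locally replace each vertex of outdegree $d \geq 3$ by a rooted binary tree of $d - 1$ auxiliary routing vertices whose leaves inherit the original outgoing edges. Such a gadget introduces no new sinks, leaves $|V|$ unchanged, and is rigid enough that every confluent routing through it projects to a unique original outgoing edge, so congestion at $V$ is preserved. Since $|N|$ only grows polynomially and the hardness factor depends on $|V|$ rather than $|N|$, the gap of $\frac{1}{2}\log_2 |V|$ is unaffected, completing the argument.
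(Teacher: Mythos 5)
Your final step --- reading the hard \textsc{MinMaxCongestion} instance as a \textsc{MinMaxWeight} instance and transferring the gap via \cref{lem:restricttoactive,lem:flowtodel,lem:deltoflow} --- is exactly how the paper proves \cref{thm:nphard}. But there is a genuine gap earlier: you treat Chen et al.'s $\frac{1}{2}\log_2 |V|$ inapproximability result as if it already applied to unit-demand instances, and it does not. In \textsc{MinMaxWeight} every agent carries inherent weight $1$, so the corresponding flow problem forces demand $1$ at \emph{every} node, whereas Chen et al.'s gadget crucially gives demand $0$ to almost all nodes and positive demand only to a few distinguished ones; the hard gap there is between a small constant and roughly $\ell = \log_2 k$, which is far smaller than the number of nodes. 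If you simply reinterpret that instance with unit demands everywhere, the $\phi$ extra units of weight swamp the gap entirely. The paper's \cref{lem:nphard} (Appendix~\ref{app:nphard}) exists precisely to repair this: it rebuilds the auxiliary network with the distinguished demands scaled up to $\Phi = \ell\phi + 1$ and $2\Phi$ (simulated in the unit-demand world by attaching $d-1$ extra leaf nodes), so that the mandatory unit demands contribute at most $\phi \ll \Phi$ and the ratio $(\ell+1)\Phi \,/\, (2\Phi + \phi)$ still exceeds $\frac{1}{2}\log_2 k$. Your proposal contains no analogue of this rescaling, and without it the reduction does not preserve the gap.

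Two smaller points. First, Chen et al.'s hardness comes from the directed two-vertex-disjoint-paths problem (via Fortune et al.), not from \textsc{SetCover}, and the paper must additionally prove (\cref{lem:disjoutdegree}) that disjoint paths remain NP-hard when outdegrees are capped at $2$ --- your binary-arborescence gadget is essentially the right tool for this, and the paper applies the same idea, but at the level of the disjoint-paths instance rather than the final flow network. Second, your claim that the outdegree-reducing gadget ``preserves congestion at $V$'' is false verbatim in the unit-demand setting, since each auxiliary routing vertex adds its own unit of weight; this is again absorbed only because of the $\Phi$-scaling you omitted. So the skeleton of your argument is right, but the quantitative heart of the hardness proof --- making the gap survive mandatory unit demands --- is missing.
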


Not surprisingly, we derive hardness via a reduction from \textsc{MinMaxCongestion}, i.e., a reduction in the opposite direction from the one given in Theorem~\ref{thm:alg}.
As shown by Chen et al.~\cite{chen}, approximating \textsc{MinMaxCongestion} to within a factor of $\frac{1}{2} \, \log_2 |V|$ is NP-hard. However, in our case, nodes have unit demands. Moreover, we are specifically interested in the case where each node has outdegree at most $2$, as in practice we expect outdegrees to be very small, and this case plays a special role in Section~\ref{sec:probmodel}. 

\begin{restatable}{lem}{restnphard}
\label{lem:nphard}
It is NP-hard to approximate the \emph{\textsc{MinMaxCongestion}} problem to a factor of $\frac{1}{2} \log_2 k$, where $k$ is the number of sinks, even when each node has unit demand and outdegree at most $2$.
\end{restatable}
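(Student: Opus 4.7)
The plan is to reduce from the general-demand, unrestricted-outdegree version of \textsc{MinMaxCongestion}, whose hardness at factor $\tfrac{1}{2}\log_2 k$ is established by Chen et al.~\cite{chen}. Given a hard instance with sinks $V$, I would construct in polynomial time a new instance on the same sink set $V$ in which every node has unit demand and outdegree at most $2$, and whose optimum matches the scaled original optimum up to lower-order additive terms. Since $k = |V|$ is unchanged, a $\tfrac{1}{2}\log_2 k$-approximation on the restricted instance would translate back into an essentially equally good approximation on the original, contradicting Chen et al.'s hardness.

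Two local gadgets would accomplish the reduction. For unit demand, each node $v$ with integer demand $d_v$ is replaced by a directed path $v_1 \to v_2 \to \cdots \to v_{d_v}$ of $d_v$ unit-demand nodes, with incoming edges of $v$ redirected into $v_1$ and outgoing edges moved onto $v_{d_v}$; flow conservation forces the outflow to equal the inflow plus $d_v$, every interior node has outdegree $1$, and any sink role is transferred to $v_{d_v}$, so $V$ is preserved setwise. For outdegree, each node $w$ of outdegree $d \geq 3$ with children $u_1,\ldots,u_d$ is replaced by a balanced binary choice tree of $O(d)$ new unit-demand nodes rooted at $w$, whose $d$ leaves each carry a single edge to one $u_i$; every internal tree node has outdegree exactly $2$, and because the flow is confluent, the choices at the tree nodes route $w$'s aggregated inflow along one root-to-leaf path to a selected $u_i$, while the off-path subtrees can be resolved confluently so that each leaf of the gadget absorbs only $O(\log d)$ additional unit-demand contributions from its own sibling branch.

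The main obstacle is showing that this additive overhead does not wash out the $\tfrac{1}{2}\log_2 k$ multiplicative gap. I would precede the gadget construction with a scaling step that multiplies every original demand by a polynomial factor $M$, implemented by lengthening the first gadget's path to $Md_v$ nodes. A flow path passes through at most $|N|$ outdegree gadgets, so the additive overhead at any sink is $O(|N|\log|N|)$, which becomes $o(M)$ once $M$ is chosen polynomially larger than $|N|\log|N|$. Both the completeness value $M\alpha^*$ and the soundness value $\tfrac{1}{2}M\alpha^*\log_2 k$ therefore shift only by a $(1 \pm o(1))$ factor, so any polynomial-time $\tfrac{1}{2}\log_2 k$-approximation on the restricted instances would yield a $(\tfrac{1}{2} - o(1))\log_2 k$-approximation for Chen et al.'s instances, contradicting their hardness once $k$ is sufficiently large.
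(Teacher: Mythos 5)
Your high-level strategy is genuinely different from the paper's: you treat Chen et al.'s inapproximability result as a black box and compose it with a gap-preserving reduction (demand paths, binary choice trees, and a polynomial scaling factor $M$) from the general problem to the unit-demand, outdegree-$2$ case. The paper instead reopens Chen et al.'s construction: it first proves that two-vertex-disjoint-paths remains NP-hard when every outdegree is at most $2$ (using essentially your binary-arborescence gadget, but applied to the \emph{source} problem rather than to the flow instance), then rebuilds the logarithmic-depth auxiliary network with demands $\Phi = \ell\phi+1$ and $2\Phi$ chosen up front so that the $+\phi$ overhead incurred by giving every auxiliary node unit demand is absorbed, and finally expands a demand-$D$ node into $D-1$ unit-demand feeders. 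Your individual gadgets are sound: the demand path preserves congestion exactly, a confluent routing inside a balanced choice tree can be arranged so that each leaf absorbs only depth-many extra units, and the set of sinks (hence $k$) is unchanged.

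The genuine gap is in the final accounting, where your $(1+o(1))$ loss meets the exact constant $\tfrac{1}{2}$. Black-box, you only know that Chen et al.'s instances satisfy $s > \tfrac{1}{2}\log_2 k \cdot c$ for soundness value $s$ and completeness value $c$; after your transformation the completeness value inflates to $Mc(1+o(1))$ while the soundness value is only guaranteed to be at least $Ms$, so the certified gap of the restricted instances is $(\tfrac{1}{2}-o(1))\log_2 k$, which is weaker than the $\tfrac{1}{2}\log_2 k$ claimed by the lemma and used in Theorem~\ref{thm:nphard}. Your closing sentence also has the loss pointing the wrong way: composing a $\tfrac{1}{2}\log_2 k$-approximation for the restricted problem with your reduction yields a $(\tfrac{1}{2}+o(1))\log_2 k$-approximation for the general problem, and that does \emph{not} contradict hardness at factor $\tfrac{1}{2}\log_2 k$. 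To close this you must open the black box after all --- e.g., verify that Chen et al.'s gap has relative slack $\Theta(1/\log k)$, which dominates your $o(1)$ once $M$ is a sufficiently large polynomial, or exploit integrality of $s$ and $c$ --- or simply bake the scaling into the construction from the start, as the paper does with its choice of $\Phi$. A second, harmless slip: the additive overhead at a sink is not the overhead along a single flow path ($O(|N|\log|N|)$) but the total gadget demand draining confluently into that sink, which can be as large as $O(|E|)$; this only affects how large $M$ must be chosen. You would also need to say explicitly what happens to demand-$0$ nodes, which your path gadget does not cover.
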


The proof of Lemma~\ref{lem:nphard} is relegated to Appendix~\ref{app:nphard}. We believe the lemma is of independent interest, as it shows a surprising separation between the case of outdegree $1$ (where the problem is moot) and outdegree $2$, and that the asymptotically optimal approximation ratio is independent of degree. But it also allows us to prove Theorem~\ref{thm:nphard} almost directly. 

\begin{proof}[Proof of Theorem~\ref{thm:nphard}]
We reduce (gap) \textsc{MinMaxCongestion} with unit demand and outdegree at most $2$ to (gap) \textsc{MinMaxWeight} with outdegree at most $2$. First, we claim that if there are inactive nodes, there is no confluent flow. Indeed, let $n_1$ be an inactive node. 
For the sake of contradiction, suppose that there exists a flow $f$.
Follow the positive flow to obtain a sequence $n_1 \, n_2 \dots$.
By definition, none of the nodes reachable from $n_1$ can be a voter.
Since, by flow conservation and unit demand, each node must delegate, the sequence must be infinite.
As detailed in the proof of \cref{lem:flowtodel}, a confluent flow with unit demand cannot contain cycles.
Thus, the sequence contains infinitely many different nodes, which contradicts the finiteness of $G$.

Therefore, we can assume without loss of generality that in the given instance of \textsc{MinMaxCongestion}, all nodes are active (as the problem is still NP-hard). The reduction creates an instance of \textsc{MinMaxWeight} that has the same graph as the given instance of \textsc{MinMaxCongestion}. Using an analogous argument to Theorem~\ref{thm:alg} (reversing the roles of \cref{lem:flowtodel} and \cref{lem:deltoflow} in its proof), we see that this is a strict approximation-preserving reduction. 
\end{proof}
 
\section{Probabilistic Model and Results}
\label{sec:probmodel}

Our generalization of liquid democracy to multiple potential delegations aims to decrease the concentration of weight.
Accordingly, the success of our approach should be measured by its effect on the maximum weight in real elections.
Since, at this time, we do not know of any available datasets,\footnote{There is one relevant dataset that we know of, which was analyzed by Kling et al.~\cite{KKHS+15}. However, due to stringent privacy constraints, the data privacy officer of the German Pirate Party was unable to share this dataset with us.} we instead propose a probabilistic model for delegation behavior, which can serve as a credible proxy.
Our model builds on the well-known preferential attachment model, which generates graphs possessing typical properties of social networks.

The evaluation of our approach will be twofold:
In \cref{sec:lowersingle,sec:upperchoice}, for a certain choice of parameters in our model, we establish a striking separation between traditional liquid democracy and our system.
In the former case, the maximum weight at time $t$ is $\Omega(t^\beta)$ for a constant $\beta$ with high probability, whereas in the latter case, it is in $\mathcal{O}(\log \log t)$ with high probability, even if each delegator only suggests two options. For other parameter settings, we empirically corroborate the benefits of our approach in \cref{sec:simulations}.

\subsection{The Preferential Delegation Model}
Many real-world social networks have degree distributions that follow a power law~\cite{KNT10,New01}. Additionally, in their empirical study, Kling et al.~\cite{KKHS+15} observed that the weight of voters in the German Pirate Party was ``power law-like'' and that the graph had a very unequal indegree distribution. In order to meld the previous two observations in our liquid democracy delegation graphs, we adapt a standard preferential attachment model~\cite{BA99} for this specific setting. On a high level, our \emph{preferential delegation} model is characterized by three parameters: $0 < d < 1$, the probability of delegation; $k \geq 1$, the number of delegation options from each delegator; and $\gamma \geq 0$, an exponent that governs the probability of delegating to nodes based on current weight.

At time $t = 1$, we have a single node representing a single voter.
In each subsequent time step, we add a node for agent $i$ and flip a biased coin to determine her delegation behavior.
With probability $d$, she delegates to other agents.
Else, she votes independently.
If $i$ does not delegate, her node has no outgoing edges.
Otherwise, add edges to $k$ many i.i.d.\ selected, previously inserted nodes, where the probability of choosing node $j$ is proportional to $(\mathit{indegree}(j) + 1)^{\gamma}$.
Note that this model might generate multiple edges between the same pair of nodes, and that all sinks are voters.
\Cref{fig:graphs} shows example graphs for different settings of $\gamma$.

\begin{figure}
    \centering
    \begin{subfigure}[h]{.49\textwidth}
        \centering
        \includegraphics[height=.23\textwidth]{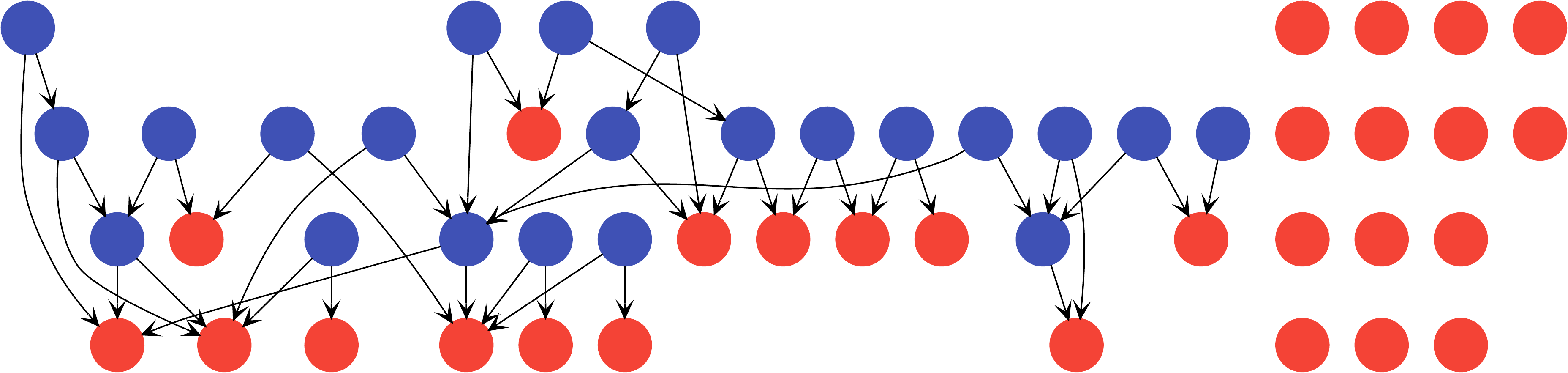}
        \caption{$\gamma = 0$}
    \end{subfigure}
    \begin{subfigure}[h]{.49\textwidth}
        \centering
        \includegraphics[height=.23\textwidth]{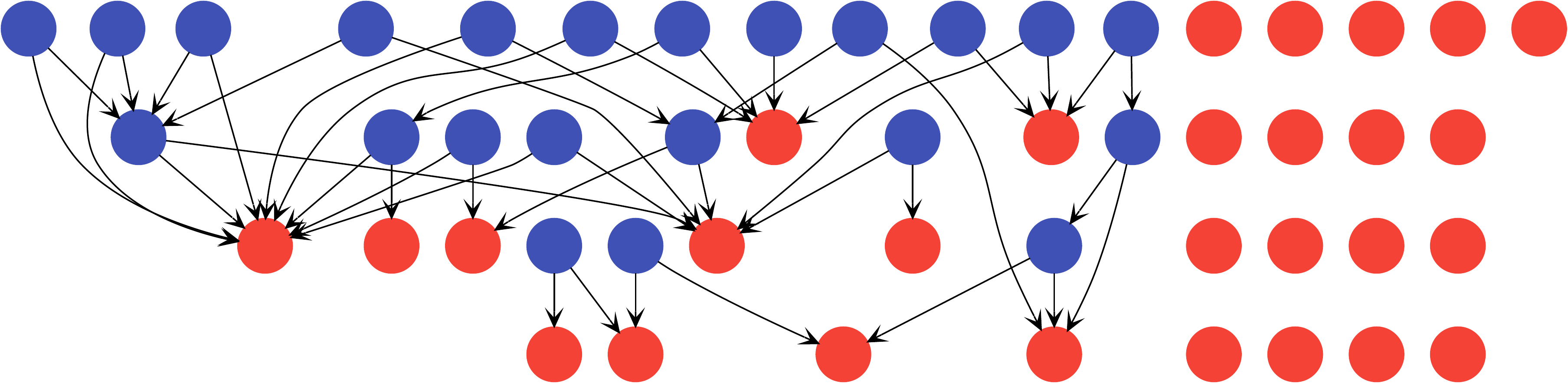}
        \caption{$\gamma = 1$}
    \end{subfigure}
    \caption{Example graphs generated by the preferential delegation model for $k=2$ and $d=0.5$.}
    \label{fig:graphs}
\end{figure}
In the case of $\gamma = 0$, which we term \emph{uniform delegation}, a delegator is equally likely to attach to any previously inserted node.
Already in this case, a ``rich-get-richer'' phenomenon can be observed, i.e., voters at the end of large networks of potential delegations will likely see their network grow even more.
Indeed, a larger network of delegations is more likely to attract new delegators.
In traditional liquid democracy, where $k = 1$ and all potential delegations will be realized, this explains the emergence of super-voters with excessive weight observed by Kling et al.~\cite{KKHS+15}.
We aim to show that for $k \geq 2$, the resolution of potential delegations can strongly outweigh these effects.
In this, we profit from an effect known as the ``power of two choices'' in load balancing described by Azar et al.~\cite{azar}.

For $\gamma > 0$, the ``rich-get-richer'' phenomenon additionally appears at the degrees of nodes.
Since the number of received potential delegations is a proxy for an agent's competence and visibility, new agents are more likely to attach to agents with high indegree.
In total, this is likely to further strengthen the inherent inequality between voters.
For increasing $\gamma$, the graph becomes increasingly flat, as a few super-voters receive nearly all delegations.
This matches observations from the LiquidFeedback dataset~\cite{KKHS+15} that ``the delegation network is slowly becoming less like a friendship network, and more like a bipartite networks of super-voters connected to normal voters.'' The special case of $\gamma = 1$ corresponds to preferential attachment as described by Barab\'asi and Albert~\cite{BA99}.

The most significant difference we expect to see between graphs generated by the preferential delegation model and real delegation graphs is the assumption that agents always delegate to more senior agents.
In particular, this causes generated graphs to be acyclic, which need not be the case in practice.
It does seem plausible that the majority of delegations goes to agents with more experience on the platform.
Even if this assumption should not hold, there is a second interpretation of our process if we assume --- as do Kahng et al.~\cite{KMP18} --- that agents can be ranked by competence and only delegate to more competent agents.
Then, we can think of the agents as being inserted in decreasing order of competence.
When a delegator chooses more competent agents to delegate to, her choice would still be biased towards agents with high indegree, which is a proxy for popularity.

In our theoretical results, we focus on the cases of $k=1$ and $k=2$, and assume $\gamma = 0$ to make the analysis tractable.
The parameter $d$ can be chosen freely between $0$ and $1$. Note that our upper bound for $k = 2$ directly translates into an upper bound for larger $k$, since the resolution mechanism always has the option of ignoring all outgoing edges except for the two first.
Therefore, to understand the effect of multiple delegation options, we can restrict our attention to $k = 2$.
This crucially relies on $\gamma = 0$, where potential delegations do not influence the probabilities of choosing future potential delegations.
Based on related results by Malyshkin and Paquette~\cite{malyshkin}, it seems unlikely that increasing $k$ beyond 2 will reduce the maximum weight by more than a constant factor.

\subsection{Lower Bounds for Single Delegation ($k = 1$, $\gamma = 0$)}
\label{sec:lowersingle}

As mentioned above, we first assume uniform delegation and a single delegation option per delegator, and derive a complementary lower bound on the maximum weight. To state our results rigorously, we say that a sequence $(\mathcal{E}_m)_m$ of events happens \emph{with high probability} if $\Prob[\mathcal{E}_m] \to 1$ for $m \to \infty$.
Since the parameter going to infinity is clear from the context, we omit it.

\begin{restatable}{theorem}{restlbwhp}
\label{thm:lbwhp}
	In the preferential delegation model with $k=1$, $\gamma=0$, and $d\in (0,1)$, with high probability, the maximum weight of any voter at time $t$ is in $\Omega(t^\beta)$, where $\beta > 0$ is a constant that depends only on $d$. 
\end{restatable}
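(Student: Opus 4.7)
The plan is to isolate the tree containing the voter at time $1$ and show that this single tree's size is already polynomial in $t$ with high probability. Since $k=1$, each delegator has a single outgoing edge, so the resolved graph is a forest whose roots are precisely the voters, and the weight of a voter equals the size of its tree. The agent inserted at time $1$ is always a voter; let $S_t$ denote the size of its tree at time $t$, so that $\max_v w(v) \geq S_t$. A new node added at time $t+1$ joins this tree iff it chooses to delegate (probability $d$) and its uniformly chosen predecessor lies in the tree (probability $S_t/t$), giving $\Prob[S_{t+1} = S_t + 1 \mid \mathcal{F}_t] = dS_t/t$.

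The natural linear martingale $M_t = S_t/\Gamma_t$ with $\Gamma_t := \prod_{s=1}^{t-1}(1+d/s) = \Theta(t^d)$ has mean $1$ and bounded second moment by the recurrence $\Exp{S_{t+1}^2 \mid \mathcal{F}_t} = S_t^2(1+2d/t) + dS_t/t$. But one cannot immediately extract a high-probability polynomial lower bound from it: the per-step range of $M_t$ is of order $t^{-d}$, which is too large relative to its unit mean for Azuma to help when $d<1/2$, and without ruling out $M_\infty = 0$, $L^2$-convergence alone does not close the argument. I would instead pass to $L_t := \log S_t$, whose increments lie in $[0,\log 2]$ and satisfy
\[
\Exp{L_{s+1}-L_s \mid \mathcal{F}_s} \;=\; \frac{d}{s}\,S_s\log\!\left(1+\tfrac{1}{S_s}\right).
\]
Since $x\mapsto x\log(1+1/x)$ is increasing on $[1,\infty)$ with value $\log 2$ at $x=1$, the compensator $A_t$ of $L_t$ satisfies $A_t \geq d\log 2\cdot(\log t - 1)$ pathwise. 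Using $\log(1+1/S_s)\leq 1/S_s$ and $S_s \geq 1$, the conditional variance of $L_{s+1}-L_s$ is at most $d/(sS_s) \leq d/s$, so the predictable quadratic variation of the compensated martingale $N_t := L_t - A_t$ through time $t$ is a.s.\ at most $d(\log t + 1)$.

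Freedman's inequality applied to $-N_t$, with $|N_{s+1}-N_s|\le \log 2$ and quadratic variation bound $O(\log t)$ a.s., gives for $\lambda = (d\log 2/2)\log t$ a lower-tail bound of order $\exp(-\Omega(\log t)) = t^{-\Omega(1)}$. Combined with the compensator lower bound, this yields $L_t \geq (d\log 2/4)\log t$ with probability at least $1 - t^{-\Omega(1)}$, so $S_t \geq t^\beta$ with high probability for $\beta := d\log 2/4$, which is a constant depending only on $d$ and positive whenever $d > 0$. The step I expect to be most delicate is precisely the passage to logarithms: the pathwise bounds on both the compensator and the quadratic variation depend on the monotonicity of $x\log(1+1/x)$ together with the trivial lower bound $S_s \geq 1$, and it is this transformation that simultaneously damps the multiplicative noise and manufactures a drift growing at the correct logarithmic rate.
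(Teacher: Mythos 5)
Your proof is correct, but it takes a genuinely different route from the paper's. The paper seeds the argument with a block $B_0$ of the first $\tau=\log t$ agents, tracks the \emph{total} weight delegated into $B_0$ across geometrically growing blocks, runs an induction in which each block's contribution is underapproximated by a binomial and controlled via Chernoff bounds, takes a union bound over the $O(\log t)$ blocks, and finally pigeonholes to extract a single heavy voter in $B_0$; the $\log t$-sized seed is exactly what makes the early Chernoff bounds polynomially small. You instead track only the first voter's weight $S_t$, observe that $(S_t)$ is a pure-birth chain with transition probability $dS_t/t$, and --- after correctly diagnosing why Azuma on the natural martingale $S_t/\Gamma_t$ fails for $d<1/2$ --- pass to $L_t=\log S_t$, where the monotonicity of $x\mapsto x\log(1+1/x)$ on $[1,\infty)$ gives a pathwise compensator lower bound of order $d\log 2\cdot\log t$ while the predictable quadratic variation is only $O(\log t)$, so Freedman's inequality yields a $t^{-\Omega(1)}$ failure probability. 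Your argument is more self-contained (no block decomposition, no pigeonhole, no need for a logarithmic seed) and isolates the mechanism cleanly; the paper's is more elementary in its probabilistic toolkit (binomial Chernoff bounds only) and yields a slightly better exponent, roughly $\log_2(1+d/4)$ versus your $d\log 2/4$, though both are $\Theta(d)$ and both fall short of the expectation-level rate $t^d$ of Theorem~\ref{thm:lbexp}. The only nits: your bound $|N_{s+1}-N_s|\le\log 2$ should read $\le\max(\log 2,\,d)<1$ since the compensator increment can exceed $\log 2$ when $d>\log 2$, and one should note explicitly that for $k=1$ every vote reaches a voter (edges always point to earlier agents, so the resolved graph is a forest rooted at voters); neither affects the conclusion.
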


We relegate the proof of \cref{thm:lbwhp} to \cref{app:lbwhp}.
Since bounding the expected value is conceptually clearer and more concise than a bound holding with high probability, we prove an analogous theorem in order to build intuition.

\begin{theorem}
	\label{thm:lbexp}
	In the preferential delegation model with $k=1$, $\gamma=0$, and $d\in (0,1)$, the expected maximum weight of any voter at time $t$ is in $\Omega(t^d)$. 
\end{theorem}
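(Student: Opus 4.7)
The plan is to single out one specific voter — the initial node, inserted at $t=1$, which is a voter by definition of the model — and track the expected weight of its delegation subtree over time. Let $W_s$ denote the weight of node $1$ after $s$ nodes have been inserted, i.e., the number of nodes whose (unique, since $k=1$) delegation chain terminates at node $1$, including node $1$ itself. Since every delegator's chain terminates at some voter (the graph is acyclic, as delegations always point to earlier nodes) and node $1$ is one of the voters, the maximum weight of any voter at time $t$ is at least $W_t$. Hence it suffices to show $\Exp{W_t} = \Omega(t^d)$.

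Next, I would establish a one-step recurrence. When the $(s+1)$-st node is inserted, with probability $1-d$ it votes itself and contributes nothing to $W$. With probability $d$ it delegates to a uniformly random one of the $s$ existing nodes (because $\gamma=0$ makes the selection weights all equal), and it joins node $1$'s subtree precisely when that chosen node is itself in node $1$'s subtree, which happens with conditional probability $W_s/s$. Combining,
\[
\Exp{W_{s+1} \mid W_s} \;=\; W_s \;+\; d\cdot\frac{W_s}{s} \;=\; W_s\!\left(1+\frac{d}{s}\right).
\]
Taking total expectations and iterating from the deterministic initial value $W_1 = 1$, the tower property yields
\[
\Exp{W_t} \;=\; \prod_{s=1}^{t-1}\left(1+\frac{d}{s}\right) \;=\; \prod_{s=1}^{t-1}\frac{s+d}{s} \;=\; \frac{\Gamma(t+d)}{\Gamma(1+d)\,\Gamma(t)}.
\]

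Finally, I would invoke the standard Stirling asymptotic $\Gamma(t+d)/\Gamma(t) = t^d\,(1+o(1))$ to conclude
\[
\Exp{W_t} \;=\; \frac{1+o(1)}{\Gamma(1+d)}\; t^d \;=\; \Omega(t^d),
\]
which via $\Exp{\max\text{-weight at time }t} \geq \Exp{W_t}$ gives the theorem. There is no serious obstacle: the argument is a small Pólya-urn style computation, and the only points to be careful about are (i) verifying that the graph really is acyclic so $W_s$ is well-defined as a subtree weight, and (ii) checking that the uniform selection probability is exactly $1/s$ when $\gamma=0$ — both immediate from the model. The lower-bound-in-expectation proof is therefore cleaner than the corresponding with-high-probability statement (Theorem~\ref{thm:lbwhp}) because we do not need to control the variance of $W_t$, only its mean.
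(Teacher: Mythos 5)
Your proposal is correct and follows essentially the same route as the paper: both track the weight of the first node, derive the recurrence $\Exp{S_{t+1}} = (1+d/t)\,\Exp{S_t}$, solve it as $\Gamma(t+d)/(\Gamma(d+1)\Gamma(t))$, and conclude $\Theta(t^d)$. The only cosmetic difference is that the paper finishes with Gautschi's inequality while you invoke the standard Stirling asymptotic $\Gamma(t+d)/\Gamma(t) \sim t^d$; both are valid.
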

\begin{proof}[Proof of Theorem~\ref{thm:lbexp}]
	Let $w_t(i)$ denote the weight of node $i$ at time $t$.
	Clearly, $\mathbb{E}[\max_j w_t(j)] \geq \mathbb{E}[w_t(1)]$, and therefore we can lower-bound the expected maximum weight of any node by the expected weight of the first node.
	
	For $i \geq 1$, let $D_i$ denote the event that voter $i$ transitively delegates to voter $1$.
	In addition, denote $S_t \coloneqq w_t(1) = \sum_{i=1}^t D_i$. Our goal is to prove that $\mathbb{E}[S_t] \in \Theta(t^d)$.
	
	We begin by showing that the expected weight of voter $1$ satisfies the following recurrences:
	\begin{align}
	\label{eq:recur1}
	\mathbb{E}[S_1] &= 1 \\
	\label{eq:recur2}
	\mathbb{E}[S_{t+1}] &= \left(1 + \frac{d}{t}\right) \cdot \mathbb{E}[S_t]
	\end{align}
	
	Indeed, for \cref{eq:recur1}, voter $1$'s weight after one time step is always $1$. For \cref{eq:recur2}, by linearity of expectation, $\mathbb{E}[S_{t+1}] = \mathbb{E}[S_t] + \mathbb{P}[D_{t+1}]$.
	For $1 \leq i < t+1$, let $D'_{t+1,i}$ denote the event that in time $t+1$, the coin flip decides to delegate, voter $i$ is chosen, and voter $i$ transitively delegates to voter $1$.
	Clearly, $D_{t+1}$ is the disjoint union of all $D'_{t+1,i}$.
	Therefore, $\mathbb{P}[D_{t+1}] = \sum_{1=1}^t \mathbb{P}[D'_{t+1,i}]$.
	Since the coin tosses in step $t+1$ are independent of the previous steps, $\mathbb{P}[D'_{t+1,i}] = d \cdot \frac{1}{t} \cdot \mathbb{P}[D_i]$.
	Putting the last steps together, have $\mathbb{P}[D_{t+1}] = \frac{d}{t} \cdot \sum_{i=1}^t \mathbb{P}[D_i] = \frac{d}{t} \cdot \mathbb{E}[S_t]$.
	In total, $\mathbb{E}[S_{t+1}] = \mathbb{E}[S_t] + \frac{d}{t} \cdot \mathbb{E}[S_t] = (1 + \frac{d}{t}) \cdot \mathbb{E}[S_t]$.
	
	Clearly, the recursion in \cref{eq:recur1,eq:recur2} must have a unique solution. We claim that it is
	\begin{equation}
	\label{eq:recursol}
	\mathbb{E}[S_t] = \frac{\Gamma(t + d)}{\Gamma(d + 1) \cdot \Gamma(t)},
	\end{equation}
	where the Gamma function is Legendre's extension of the factorial to real (and complex) numbers, defined by $\Gamma(z)=\int_{z=0}^{\infty} x^{z-1}e^{-x}\, dx$.
	Indeed, The equation satisfies \cref{eq:recur1}: $\frac{\Gamma(1 + d)}{\Gamma(d + 1) \cdot \Gamma(1)} = 1$. For \cref{eq:recur2}, we have
	\begin{align*}
	\left(1 + \frac{d}{t}\right) \cdot \frac{\Gamma(t + d)}{\Gamma(d + 1) \cdot \Gamma(t)} 
	= \frac{t + d}{t} \cdot \frac{\Gamma(t + d)}{\Gamma(d + 1) \cdot \Gamma(t)} 
	= \frac{\Gamma(t + 1 + d)}{\Gamma(d + 1) \cdot \Gamma(t + 1)}.
	\end{align*}
	
	Using this closed-form solution, we can bound $\mathbb{E}[S_t]$ as follows. 
	By Gautschi's inequality~\cite[Eq.~(7)]{gautschi}, we have
	\[ (t + 1)^{d-1} \leq \frac{\Gamma(t + d)}{\Gamma(t + 1)} \leq t^{d-1}. \]
	We multiply all sides by $t$ to obtain
	\[ t \cdot (t + 1)^{d-1} \leq \frac{\Gamma(t + d)}{\Gamma(t)} \leq t^{d}. \]
	Finally, we have
	\begin{equation}
	\label{eqn:upperlowerbounds}
	\frac{t \cdot (t + 1)^{d-1}}{\Gamma(d+1)} \leq \frac{\Gamma(t + d)}{\Gamma(d+1) \cdot \Gamma(t)} \leq \frac{t^{d}}{\Gamma(d+1)}.
	\end{equation}
	
	Next, we establish the tightness of the upper and lower bounds by showing that
	\begin{equation}
	\lim_{t \to \infty} \frac{\frac{t \cdot (t + 1)^{d - 1}}{\Gamma(d+1)}}{\frac{t^d}{\Gamma(d+1)}} = 1. \label{eqn:tight}
	\end{equation}
	Indeed, simplifying yields 
	\begin{align*}
	\lim_{t \to \infty} \frac{\frac{t \cdot (t + 1)^{d - 1}}{\Gamma(d+1)}}{\frac{t^d}{\Gamma(d+1)}}  = \lim_{t \to \infty} \frac{(t + 1)^{d - 1}}{t^{d-1}} 
	= \left( \lim_{t \to \infty} \frac{t+1}{t} \right)^{d-1} 
	= 1,
	\end{align*}
	as desired.
	
	Therefore, from \cref{eqn:upperlowerbounds} and \cref{eqn:tight}, we have shown that $\mathbb{E}[S_t]$ scales in $\Theta(t^d)$.
\end{proof}

Before proceeding to the upper bound and showing the separation, we would like to point out that --- with a minor change to our model --- these lower bounds also hold for $\gamma = 1$.
While the probability of attaching to a delegator $n$ remains proportional to $(1 + \mathit{indegree}(n))^\gamma$, the probability for voters $n$ would instead be proportional to $(2 + \mathit{indegree}(n))^\gamma$.\footnote{Clearly, our results for $\gamma = 0$ hold for both variants.}
If we represent voters with a self-loop edge, both terms just equal $\mathit{degree}(n)^\gamma$, which arguably makes this implementation of preferential attachment cleaner to analyze (e.g.,~\cite{bollobas}).
Thus, we can interpret preferential attachment for $\gamma = 1$ as uniformly picking an edge and then flipping a fair coin to decide whether to attach the edge's start or endpoint.
Since every node has exactly one outgoing edge, this is equivalent to uniformly choosing a node and then, with probability $\frac{1}{2}$, instead picking its successor.
This has the same effect on the distribution of weights as just uniformly choosing a node in uniform delegation, so \cref{thm:lbwhp,thm:lbexp} also hold for $\gamma = 1$ in our modified setting.
Real-world delegation networks, which we suspect to resemble the case of $\gamma = 1$, should therefore exhibit similar behavior.

\subsection{Upper Bound for Double Delegation ($k = 2$, $\gamma = 0$)}
\label{sec:upperchoice}

Analyzing cases with $k > 1$ is considerably more challenging. One obstacle is that we do not expect to be able to incorporate optimal resolution of potential delegations into our analysis, because the computational problem is hard even when $k=2$ (see Theorem~\ref{thm:nphard}). 
Therefore, we give a pessimistic estimate of optimal resolution via a greedy delegation mechanism, which we can reason about alongside the stochastic process.
Clearly, if this stochastic process can guarantee an upper bound on the maximum weight with high probability, this bound must also hold if delegations are optimally resolved to minimize maximum weight.

In more detail, whenever a new delegator is inserted into the graph, the greedy mechanism immediately selects one of the delegation options.
As a result, at any point during the construction of the graph, the algorithm can measure the weight of the voters.
Suppose that a new delegator suggests two delegation options, to agents $a$ and $b$.
By following already resolved delegations, the mechanism obtains voters $a^*$ and $b^*$ such that $a$ transitively delegates to $a^*$ and $b$ to $b^*$.
The greedy mechanism then chooses the delegation whose voter currently has lower weight, resolving ties arbitrarily.

This situation is reminiscent of a phenomenon known as the ``power of choice.''
In its most isolated form, it has been studied in the \emph{balls-and-bins} model, for example by Azar et al.~\cite{azar}.
In this model, $n$ balls are to be placed in $n$ bins.
In the classical setting, each ball is sequentially placed into a bin chosen uniformly at random.
With high probability, the fullest bin will contain $\Theta( \log n/\log \log n )$ balls at the end of the process.
In the choice setting, two bins are independently and uniformly selected for every ball, and the ball is placed into the emptier one.
Surprisingly, this leads to an exponential improvement, where the fullest bin will contain at most $\Theta \left( \log \log n \right)$ balls with high probability.

We show that, at least for $\gamma = 0$ in our setting, this effect outweighs the ``rich-get-richer'' dynamic described earlier:

\begin{theorem}
	\label{thm:thm1}
In the preferential delegation model with $k=2$, $\gamma=0$, and $d\in (0,1)$, the maximum weight of any voter at time $t$ is $\log_2 \log t + \Theta(1)$ with high probability. 
\end{theorem}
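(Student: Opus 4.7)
The plan is to prove the upper bound $M(t) \le \log_2 \log t + O(1)$ by analyzing the greedy resolution mechanism introduced in the preamble (which, by construction, dominates any optimal resolution) via a \emph{layered induction} in the spirit of Azar, Broder, Karlin and Upfal~\cite{azar}, and to complement it with a matching lower bound driven by the rich-get-richer dynamic of the process.

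Setup. I would first recast the greedy mechanism as a weighted power-of-two-choices balls-and-bins process: at step $t+1$, with probability $1-d$ a new voter opens a bin of weight $1$; with probability $d$, two bins are sampled with probability proportional to their current weight (because a uniformly random prior agent lies in voter $v$'s subtree with probability $w_t(v)/t$) and one unit of weight is placed in the lighter bin, ties broken arbitrarily. A Chernoff bound shows that the number of voters at time $t$ concentrates around $(1-d)\,t$, and we may condition on this event throughout.

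Upper bound. Let $n_i(t)$ be the number of voters of weight at least $i$ at time $t$ and $W_{i-1}(t)=\sum_{v:\,w_t(v)\ge i-1} w_t(v)$. The heart of the argument is the one-step inequality
\[
\Prob\bigl[\, n_i(t+1)=n_i(t)+1 \,\big|\, \mathcal{F}_t \,\bigr] \;\le\; d\left(\frac{W_{i-1}(t)}{t}\right)^{\!2},
\]
because for $n_i$ to increase the lighter of the two sampled voters must have weight exactly $i-1$, forcing both sampled agents to lie in voters of weight at least $i-1$. Combining the bound $W_{i-1}(t)\le M(t)\cdot n_{i-1}(t)$ with a running inductive hypothesis $M(t)\le L$ produces a supermartingale for $n_i$; Azuma--Hoeffding concentration and a union bound over $i\le \log_2\log t + C$ then yield the doubly-exponential decay $n_i(t)/t \lesssim c^{-2^{\,i}}$ with high probability. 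Once $i \ge \log_2\log t + O(1)$ this forces $n_i(t)<1$, which gives the desired upper bound on $M(t)$.

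Lower bound and main obstacle. For the matching lower bound I would adapt a witness-tree argument to the preferential delegation graph: the weight-proportional sampling makes the earliest $\Theta(1)$ voters absorb a constant fraction of delegators' suggestions, and a pigeonholing over the resulting forced delegations shows that some voter must receive $\Omega(\log_2\log t)$ weight \emph{irrespective of how the two-way choices are resolved}, yielding the $\Theta(1)$ lower term. The chief obstacle in the upper bound is the circular dependency between $M(t)$ and the $n_i(t)$: the one-step increment probability for $n_i$ depends on $M(t)$, which is precisely what the induction aims to bound. The standard ABKU-style fix is to run the induction simultaneously over all levels, conditioning at each level on the high-probability event that $M(t)$ has not yet exceeded the previously certified threshold, and deferring the union bound until pointwise supermartingale concentration is established. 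A secondary twist relative to classical balls-and-bins is that sampling here is weighted by current bin size rather than uniform, which affects constants but not the structural recurrence.
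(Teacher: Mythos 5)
Your high-level plan --- recast greedy resolution as a weighted two-choice process, track tail quantities per weight level, and close a squared recurrence by layered induction \`a la Azar et al.\ --- is exactly the paper's strategy, and your one-step inequality $\Prob[n_i(t+1)=n_i(t)+1\mid\mathcal{F}_t]\le d\,(W_{i-1}(t)/t)^2$ is precisely the recurrence the paper uses (its $F_j(k)$ is your $W_k(t)$). However, the specific way you close the recursion contains a genuine gap. By substituting $W_{i-1}(t)\le M(t)\cdot n_{i-1}(t)$ with $M(t)\le L=\log_2\log t+C$, your level-to-level recurrence becomes $n_i/t\lesssim d\,L^2\,(n_{i-1}/t)^2$. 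A doubly-exponential decay only ignites once the seed level $i_0$ satisfies $n_{i_0}(t)/t< 1/(d\,L^2)$, but a priori $n_i(t)/t$ decays only polynomially in $i$ (e.g.\ $n_i\le t/i$ from total weight $t$), so $i_0$ must grow like a power of $\log\log t$. The resulting bound is $\mathrm{poly}(\log\log t)$, not $\log_2\log t+\Theta(1)$. The paper sidesteps this by running the induction directly on $W_i(t)/t$: when a weight-$(i-1)$ voter crosses the threshold, $W_i$ jumps by $i$ (not by $M(t)$), giving the fixed-point recurrence $\alpha_i\le i\,d\,\alpha_{i-1}^2$; taking logarithms, the additive $\log i$ term contributes only $2^n\sum_j 2^{-j}\log(k_0+j)=O(2^n)$, so the decay stays doubly exponential from a \emph{constant} starting level $k_0$ (chosen using the separately-proved fact that $\alpha_k=O(k^{-2})$). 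Keeping the induction on $W_i$ rather than converting to $n_i$ is the one missing idea; with it, your outline aligns with the paper's Lemmas on $(\alpha_k)_k$, the dominating sequence $f(k)=k\,f(k-1)^2$, and the simultaneous concentration over all levels up to $k_*(m)=\log_2\log m+\Theta(1)$.

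Two smaller points. First, the ``run the induction simultaneously over all levels, conditioning on the previously certified event'' step hides most of the technical work: because the bound must hold for all times $j$ in a range that depends on $m$ and on the level $k$, the paper needs a staged schedule $\phi(m,k)$, a random-walk domination argument to control excursions of $F_j(k)/j$ above its limit, and a final ``freezing'' lemma showing $F$ at level $k_*(m)+r$ stops changing after time $(\log\log m)^M$. Second, your lower-bound sketch (witness trees plus pigeonholing to force $\Omega(\log_2\log t)$ weight under \emph{any} resolution) is far from a proof and is the harder direction in the weighted-sampling setting; note that the paper's own argument for this theorem establishes only the upper bound.
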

Due to space constraints, we defer the proof to \cref{sec:upperchoiceproofs}.
In our proof we build on work by Malyshkin and Paquette~\cite{malyshkin}, who study the maximum \emph{degree} in a graph generated by preferential attachment with the power of choice. In addition, we incorporate ideas by Haslegrave and Jordan~\cite{haslegrave}.\footnote{More precisely, for the definition of the sequence $(\alpha_k)_k$ as well as in \cref{lem:alphatozero,lem:alphatozerofast}.}
 
\section{Simulations}
\label{sec:simulations}

In this section, we present our simulation results, which support the two main messages of this paper: that allowing multiple delegation options significantly reduces the maximum weight, and that it is computationally feasible to resolve delegations in a way that is close to optimal.

Our simulations were performed on a MacBook Pro (2017) on MacOS~10.12.6 with a 3.1~GHz Intel Core i5 and 16~GB of RAM.
All running times were measured with at most one process per processor core.
Our simulation software is written in Python~3.6 using Gurobi~8.0.1 to solve MILPs.
All of our simulation code is open-source and available at \href{https://github.com/pgoelz/fluid}{\small\texttt{https://github.com/pgoelz/fluid}}.

\subsection{Multiple vs.\ Single Delegations}
\label{subsec:multsingle}
For the special case of $\gamma = 0$, we have established a doubly exponential, asymptotic separation between single delegation ($k = 1$) and two delegation options per delegator ($k = 2$).
While the strength of the separation suggests that some of this improvement will carry over to the real world, we still have to examine via simulation whether improvements are visible for realistic numbers of agents and other values of $\gamma$.

To this end, we empirically evaluate two different mechanisms for resolving delegations. First, we optimally resolve delegations by solving the MILP for confluent flow in \cref{app:milp} with the Gurobi optimizer. Our second mechanism is the greedy ``power of choice'' algorithm used in the theoretical analysis and introduced in \cref{sec:upperchoice}. 

\begin{figure}
	\centering
	\begin{subfigure}[h]{.49\textwidth}
		\includegraphics[width=\textwidth]{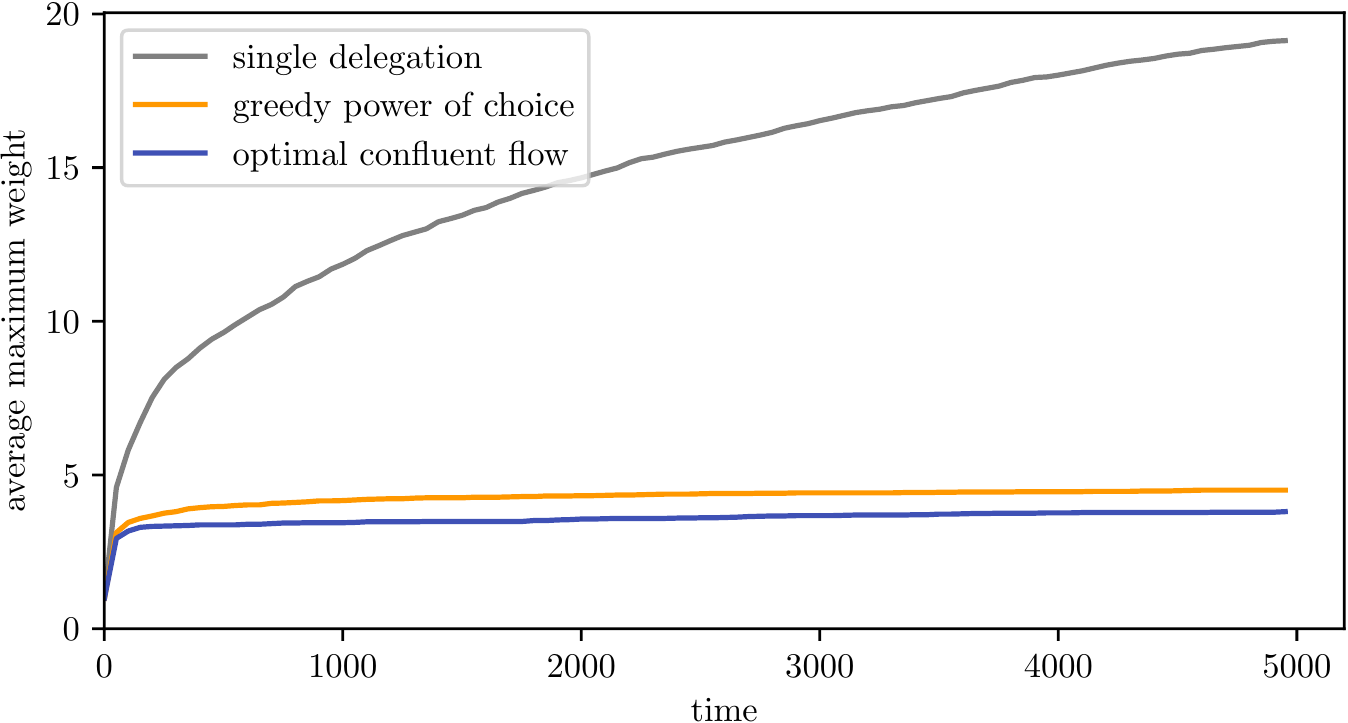}
		\caption{$\gamma = 0$, $d = 0.25$}
	\end{subfigure}
	\begin{subfigure}[h]{.49\textwidth}
		\includegraphics[width=\textwidth]{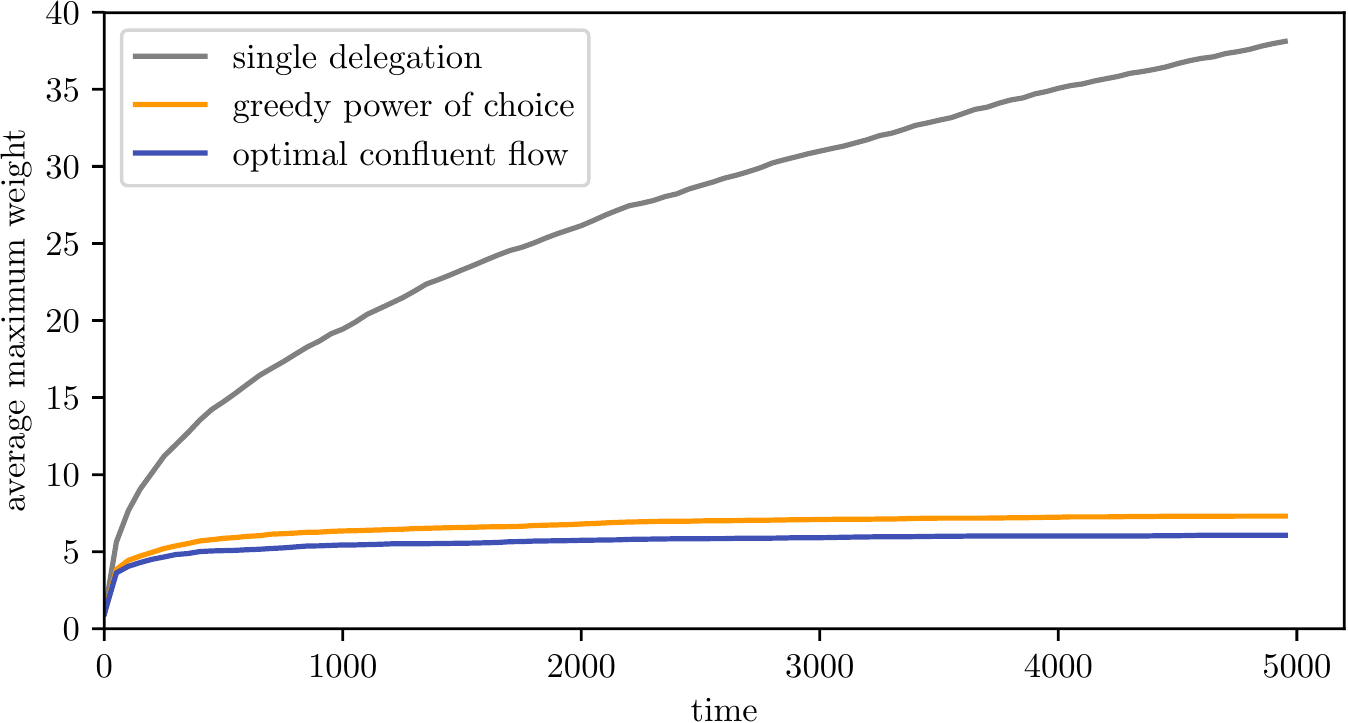}
		\caption{$\gamma = 1$, $d = 0.25$}
	\end{subfigure} \\
	\bigskip $ $\\ \medskip
	\begin{subfigure}[h]{.49\textwidth}
		\includegraphics[width=\textwidth]{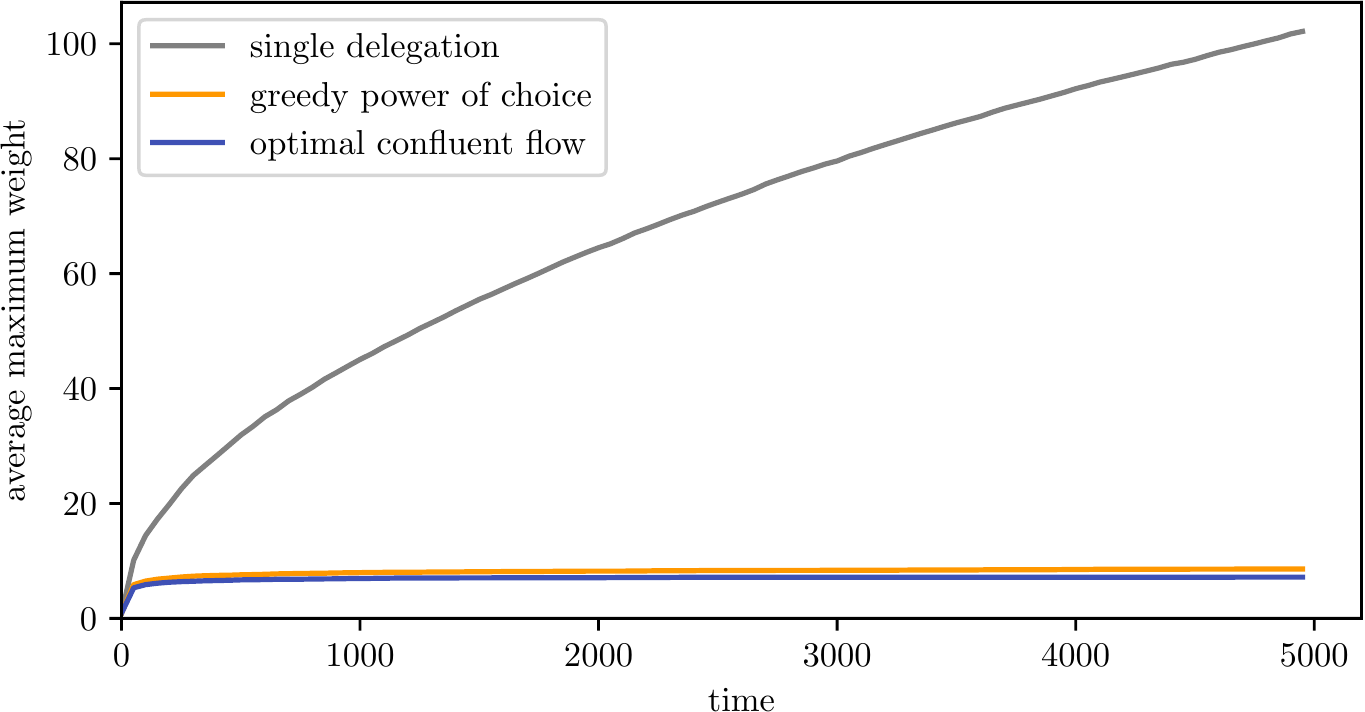}
		\caption{$\gamma = 0$, $d = 0.5$}
	\end{subfigure}
	\begin{subfigure}[h]{.49\textwidth}
		\includegraphics[width=\textwidth]{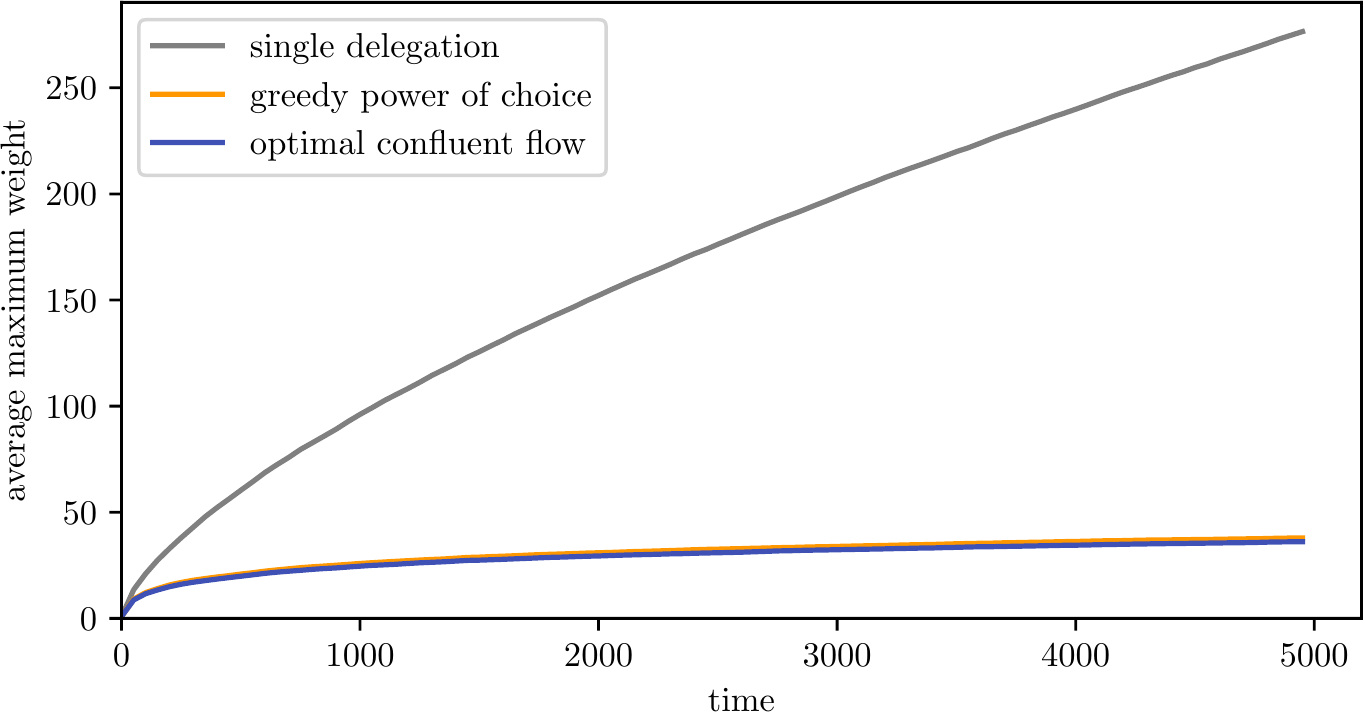}
		\caption{$\gamma = 1$, $d = 0.5$}
	\end{subfigure} \\
	\bigskip $ $\\ \medskip
	\begin{subfigure}[h]{.49\textwidth}
		\includegraphics[width=\textwidth]{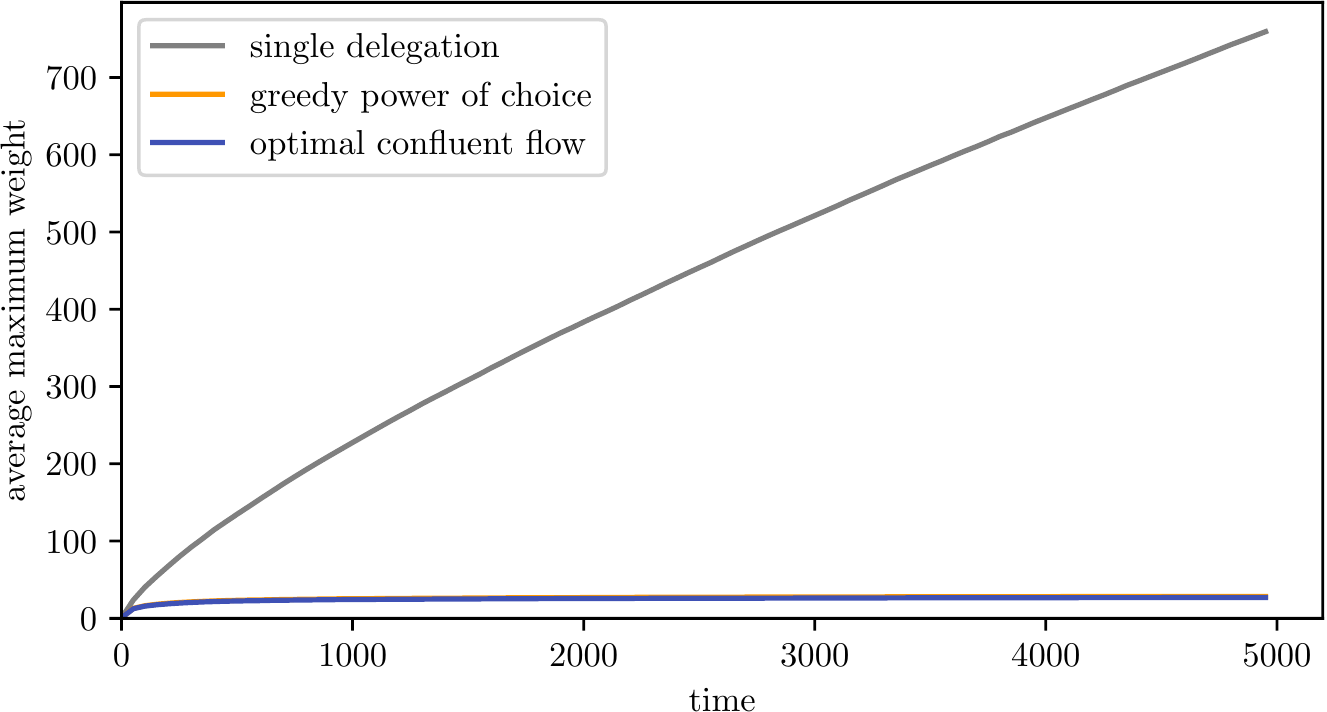}
		\caption{$\gamma = 0$, $d = 0.75$}
		\label{fig:g0d75}
	\end{subfigure}
	\begin{subfigure}[h]{.49\textwidth}
		\includegraphics[width=\textwidth]{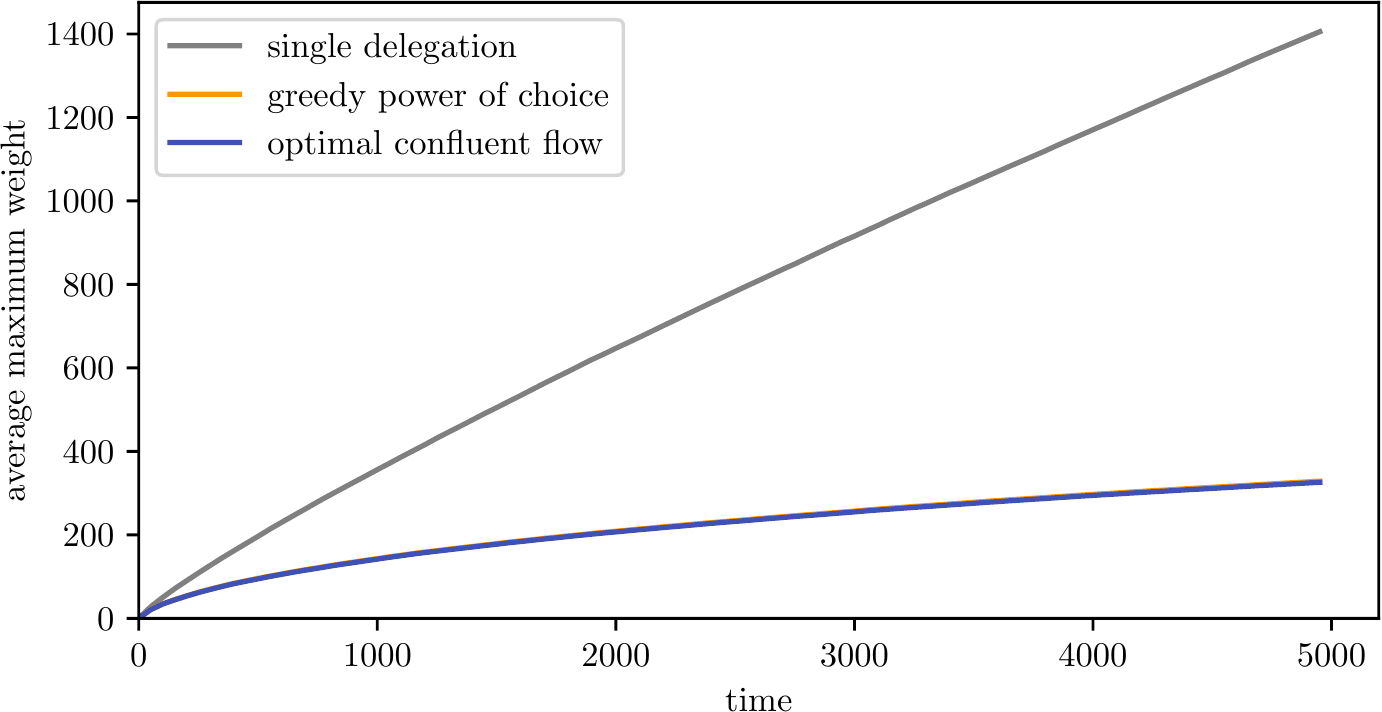}
		\caption{$\gamma = 1$, $d = 0.75$}
	\end{subfigure}
	
    \caption{Maximum weight averaged over 100 simulations of length 5\,000 time steps each. Maximum weight has been computed every 50 time steps.}
	\label{fig:singlevstwo}
\end{figure}
In \cref{fig:singlevstwo}, we compare the maximum weight produced by a single-delegation process to the optimal maximum weight in a double-delegation process, for different values of $\gamma$ and $d$.
Since our theoretical analysis used a greedy over-approximation of the optimum, we also run the greedy mechanism on the double-delegation process.
Corresponding figures for $\gamma = 0.5$ can be found in \cref{fig:singlevstwogammapointfive} in \cref{sec:appsinglevsdoublefigures}.

These simulations show that our asymptotic findings translate into considerable differences even for small numbers of agents, across different values of $d$.
Moreover, these differences remain nearly as pronounced for values of $\gamma$ up to $1$, which corresponds to classical preferential attachment.
This suggests that our mechanism can outweigh the social tendency towards concentration of votes; however, evidence from real-world elections is needed to settle this question.
Lastly, we would like to point out the similarity between the graphs for the optimal maximum weight and the result of the greedy algorithm, which indicates that a large part of the separation can be attributed to the power of choice.

If we increase $\gamma$ to large values, the separation between single and double delegation disappears.
In \cref{fig:singlevstwogammatwo} in \cref{sec:appsinglevsdoublefigures}, for $\gamma = 2$, all three curves are hardly distinguishable from the linear function $d \cdot \mathit{time}$, meaning that one voter receives nearly all the weight.
The reason is simple: In the simulations used for that figure, 99\,\% %
of all delegators give two identical delegation options, and 99.8\,\% %
of these delegators (98.8\,\% of all delegators) %
give both potential delegations to the heaviest voter in the graph. There are even values of $\gamma>1$ and $d$ such that the curve for single delegation falls below the ones for double delegation (as can be seen in \cref{fig:singlevstwogammagtone} in \cref{sec:appsinglevsdoublefigures}). 
Since adding two delegation options per step makes the indegrees grow faster, the delegations concentrate toward a single voter more quickly, and again lead to a wildly unrealistic concentration of weight. Thus, it seems that large values of $\gamma$ do not actually describe our scenario of multiple delegations.

\begin{figure}
	\centering
	\captionsetup{width=0.45\textwidth}
	\begin{minipage}{.5\textwidth}
		\centering
		\includegraphics[width=\textwidth]{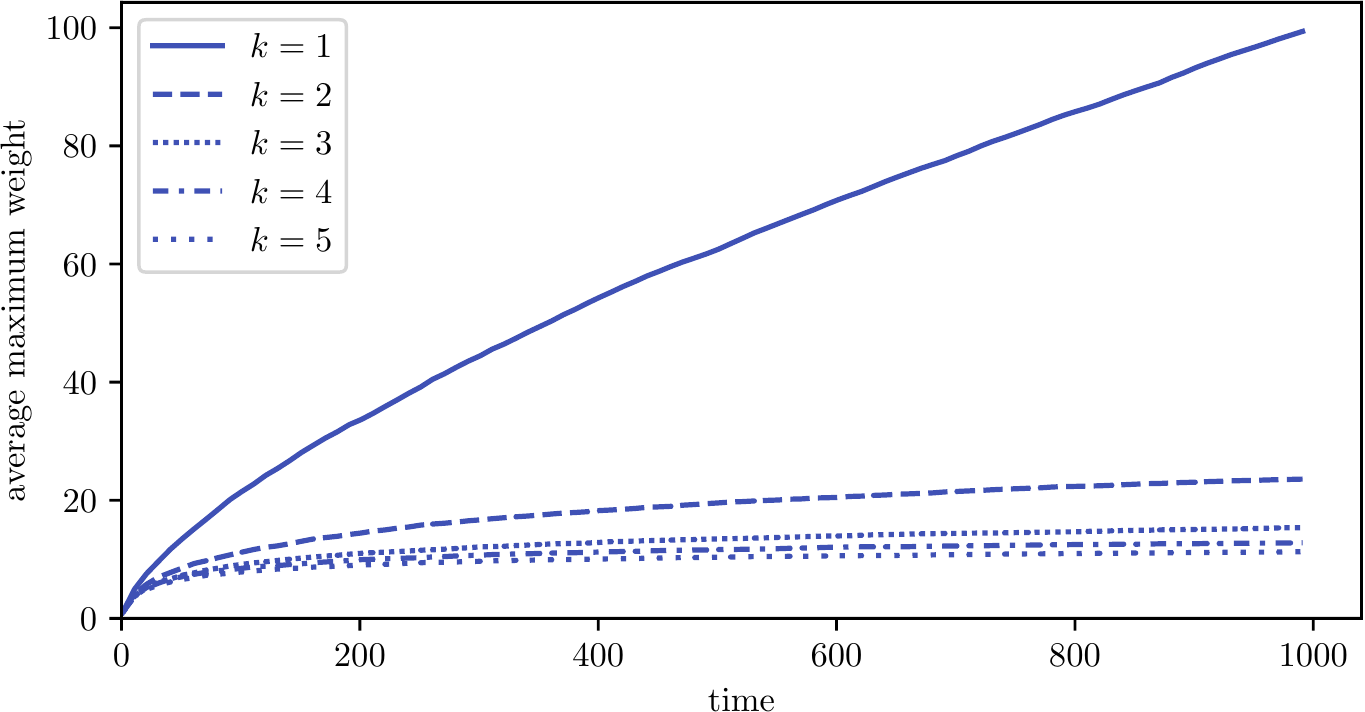}
		\captionof{figure}{Optimal maximum weight for different $k$ averaged over 100 simulations, computed every 10 steps. $\gamma = 1$, $d = 0.5$.}
		\label{fig:differentk}
	\end{minipage}%
	\begin{minipage}{.5\textwidth}
	\centering
	\includegraphics[width=\textwidth]{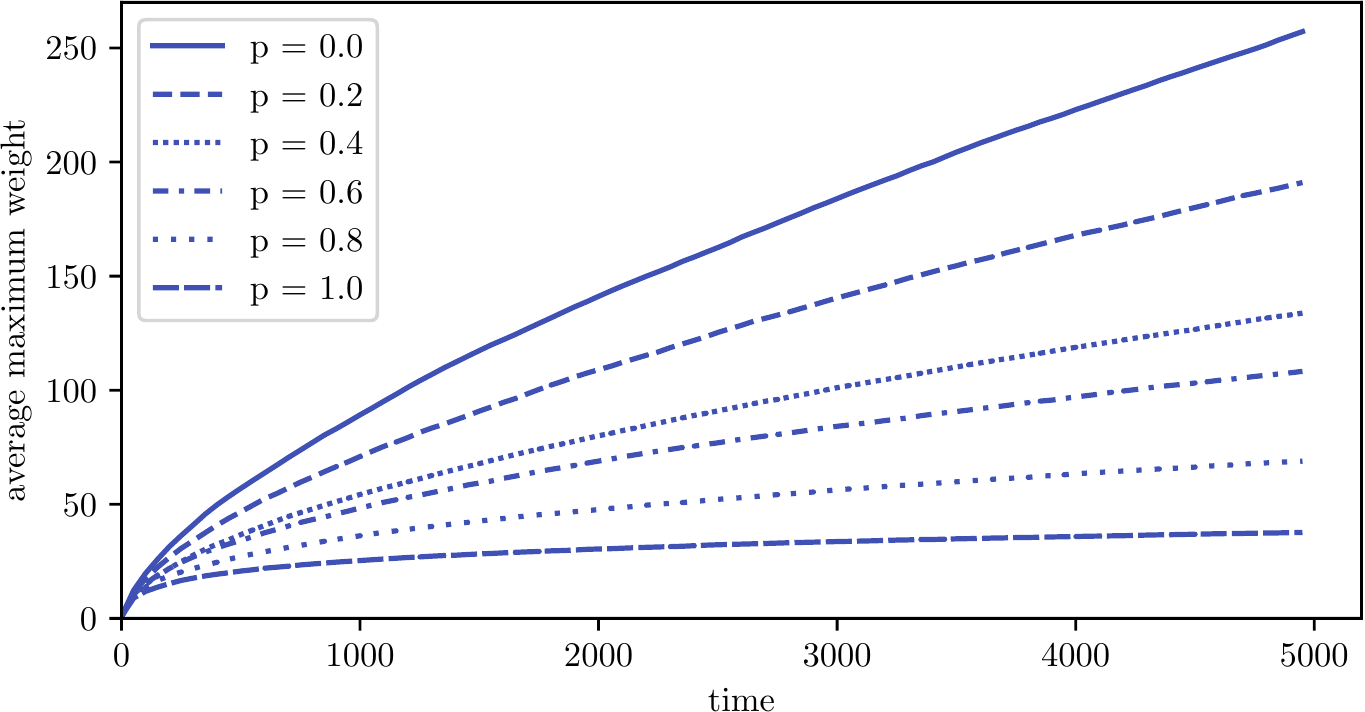}
	\captionof{figure}{Optimal maximum weight averaged over 100 simulations. Voters give two delegations with probability $p$; else one. $\gamma = 1$, $d = 0.5$.}
	\label{fig:differentp}
	\end{minipage}
\end{figure}

As we have seen, switching from single delegation to double delegation greatly improves the maximum weight in plausible scenarios.
It is natural to wonder whether increasing $k$ beyond $2$ will yield similar improvements.
As \cref{fig:differentk} shows, however, the returns of increasing $k$ quickly diminish, which is common to many incarnations of the power of choice~\cite{azar}.

\subsection{Evaluating Mechanisms}
Already the case of $k = 2$ appears to have great potential; but how easily can we tap it?

We have observed that, on average, the greedy ``power of choice'' mechanism comes surprisingly close to the optimal solution.
However, this greedy mechanism depends on seeing the order in which our random process inserts agents and on the fact that all generated graphs are acyclic, which need not be true in practice.
If the graphs were acyclic, we could simply first sort the agents topologically and then present the agents to the greedy mechanism in reverse order.
On arbitrary active graphs, we instead proceed through the strongly connected components in reversed topological order, breaking cycles and performing the greedy step over the agents in the component.
To avoid giving the greedy algorithm an unfair advantage, we use this generalized greedy mechanism throughout this section.
Thus, we compare the generalized greedy mechanism, the optimal solution, the $(1 + \log |V|)$-approximation algorithm\footnote{For one of their subprocedures, instead of directly optimizing a convex program, Chen et al.~\cite{chen} reduce this problem to finding a lexicographically optimal maximum flow in $\mathcal{O}(n^5)$. We choose to directly optimize the convex problem in Gurobi, hoping that this will increase efficiency in practice.} and a random mechanism that materializes a uniformly chosen option per delegator.

\begin{figure}
	\centering
	\begin{subfigure}[h]{0.49\textwidth}
		\includegraphics[width=\linewidth]{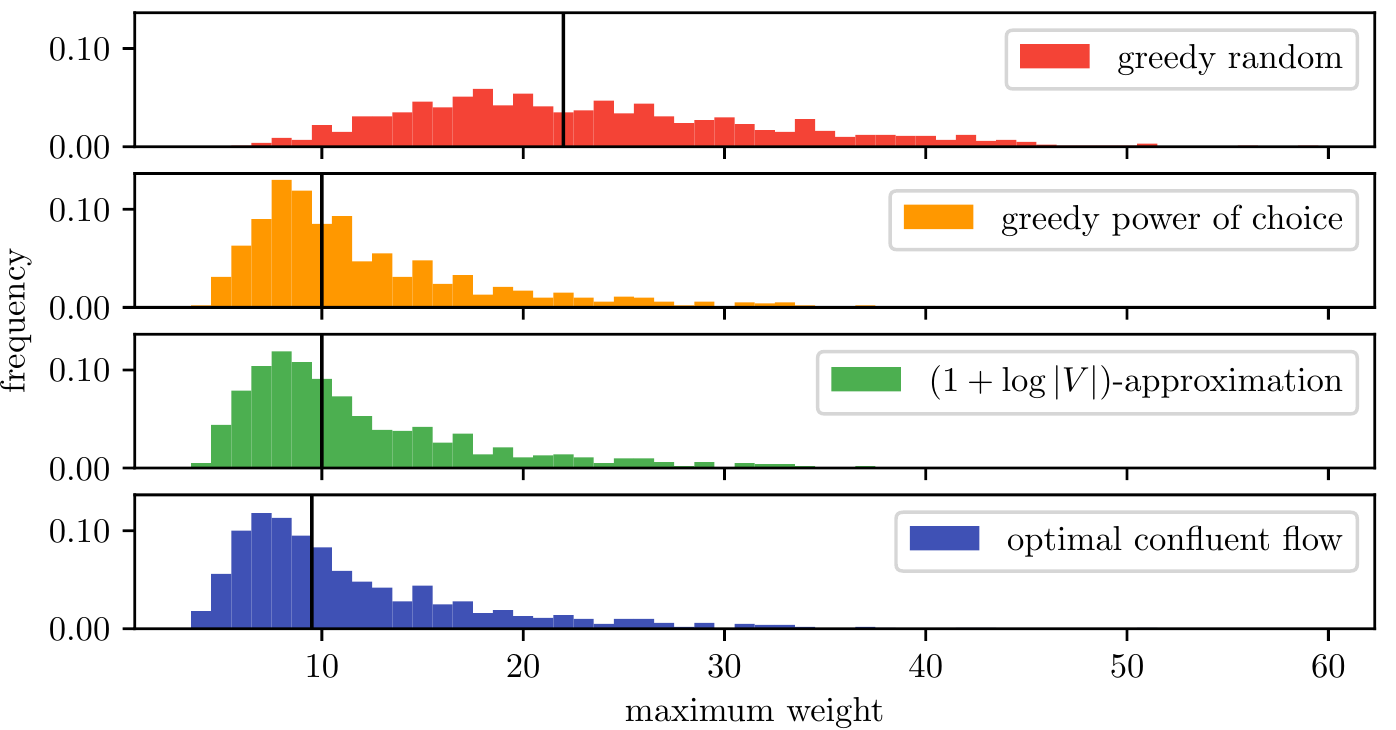}
		\caption{$t = 100$}
		\label{subfig:histg100d50t100}
	\end{subfigure}
	\begin{subfigure}[h]{0.49\textwidth}
		\includegraphics[width=\linewidth]{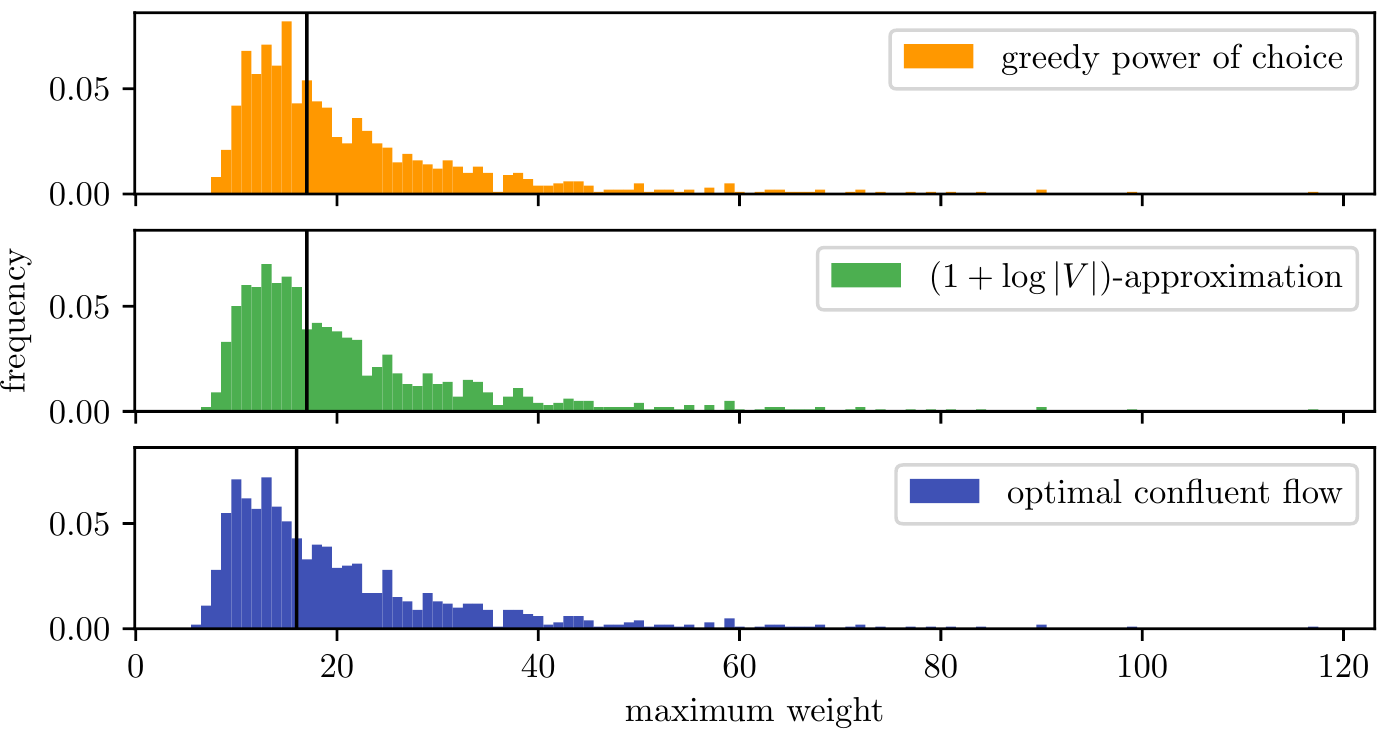}
		\caption{$t = 500$}
		\label{subfig:histg100d50}
	\end{subfigure}
	\caption{Frequency of maximum weights at time $t$ over $1\,000$ runs. $\gamma = 1$, $d = 0.5$, $k = 2$. The black lines mark the medians.}
	\label{fig:histg100d50}
\end{figure}

On a high level, we find that both the generalized greedy algorithm and the approximation algorithm perform comparably to the optimal confluent flow solution, as shown in \cref{fig:histg100d50} for $d = 0.5$ and $\gamma = 1$.
As \cref{fig:singletraceg100d50t2000} suggests, all three mechanisms seem to exploit the advantages of double delegation, at least on our synthetic benchmarks.
These trends persist for other values of $d$ and $\gamma$, as presented in \cref{app:histograms}.

The similar success of these three mechanisms might indicate that our probabilistic model for $k = 2$ generates delegation networks that have low maximum weights for arbitrary resolutions.
However, this is not the case:
The random mechanism does quite poorly on instances with as few as $t = 100$ agents, as shown in \cref{subfig:histg100d50t100}.
With increasing $t$, the gap between random and the other mechanisms only grows further, as indicated by \cref{fig:singletraceg100d50t2000}.
In general, the graph for random delegations looks more similar to single delegation than to the other mechanisms on double delegation.
Indeed, for $\gamma = 0$, random delegation is equivalent to the process with $k = 1$, and, for higher values of $\gamma$, it performs even slightly worse since the unused delegation options make the graph more centralized (see \cref{fig:randomvssingle} in \cref{app:randomvssingle}).
Because of the poor performance of random delegation, if simplicity is a primary desideratum, we recommend using the generalized greedy algorithm instead.

As \cref{fig:runningtime} and the graphs in \cref{sec:runtimegraphs} demonstrate, all three other mechanisms, including the optimal solution, easily scale to input sizes as large as the largest implementations of liquid democracy to date. Whereas the three mechanisms were close with respect to maximum weight, our implementation of the approximation algorithm is typically slower than the optimal solution (which requires a single call to Gurobi), and the generalized greedy algorithm is blazing fast. These results suggest that it would be possible to resolve delegations almost optimally even at a national scale.

\begin{figure}
	\centering
	\captionsetup{width=0.45\textwidth}
	\begin{minipage}{.5\textwidth}
		\centering
		\includegraphics[width=\linewidth]{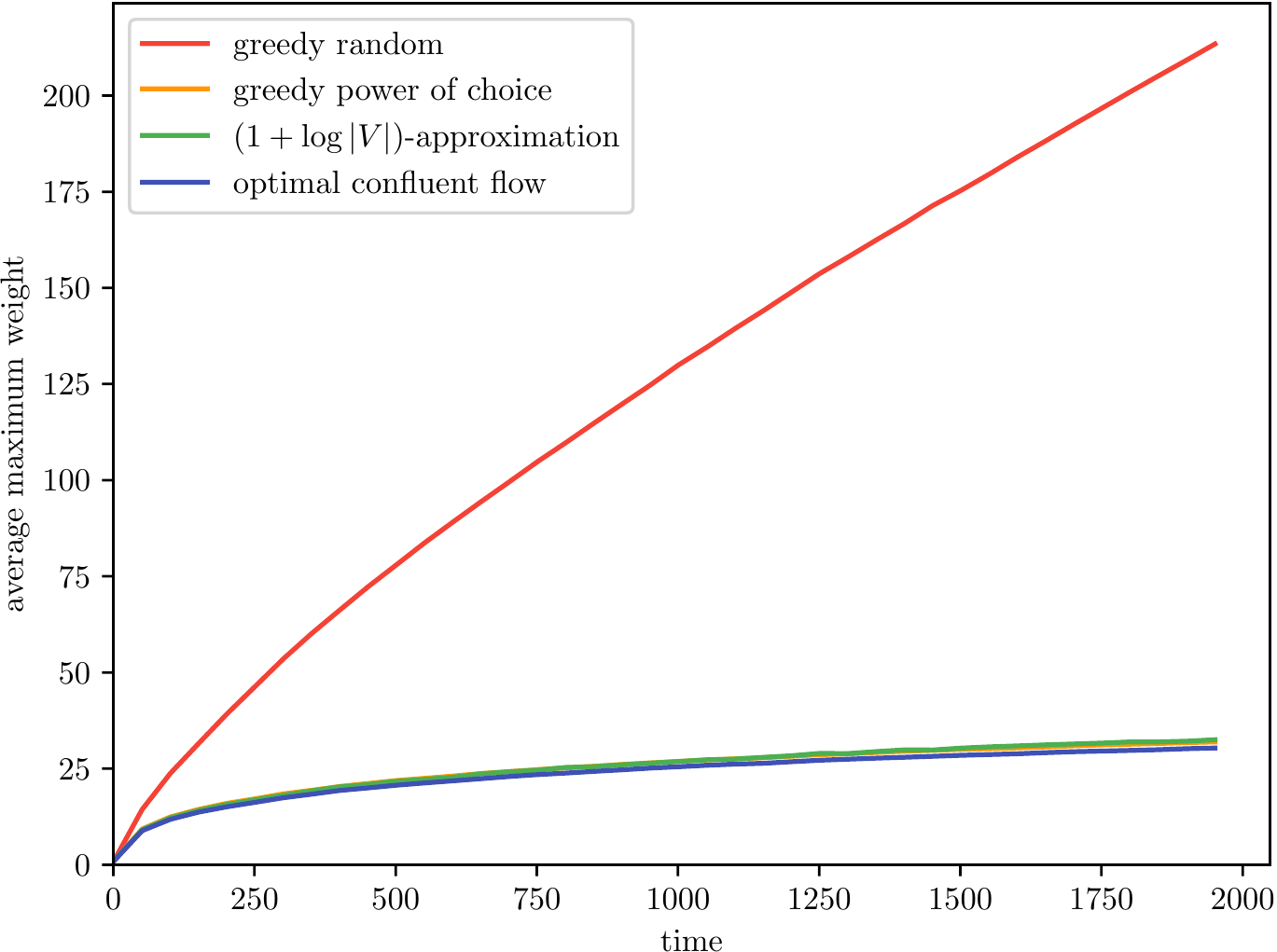}
		\captionof{figure}{Maximum weight per algorithm for $d = 0.5$, $\gamma = 1$, $k = 2$, averaged over 100 simulations.}
		\label{fig:singletraceg100d50t2000}
	\end{minipage}%
	\begin{minipage}{.5\textwidth}
		\centering
		\includegraphics[height=.748\linewidth]{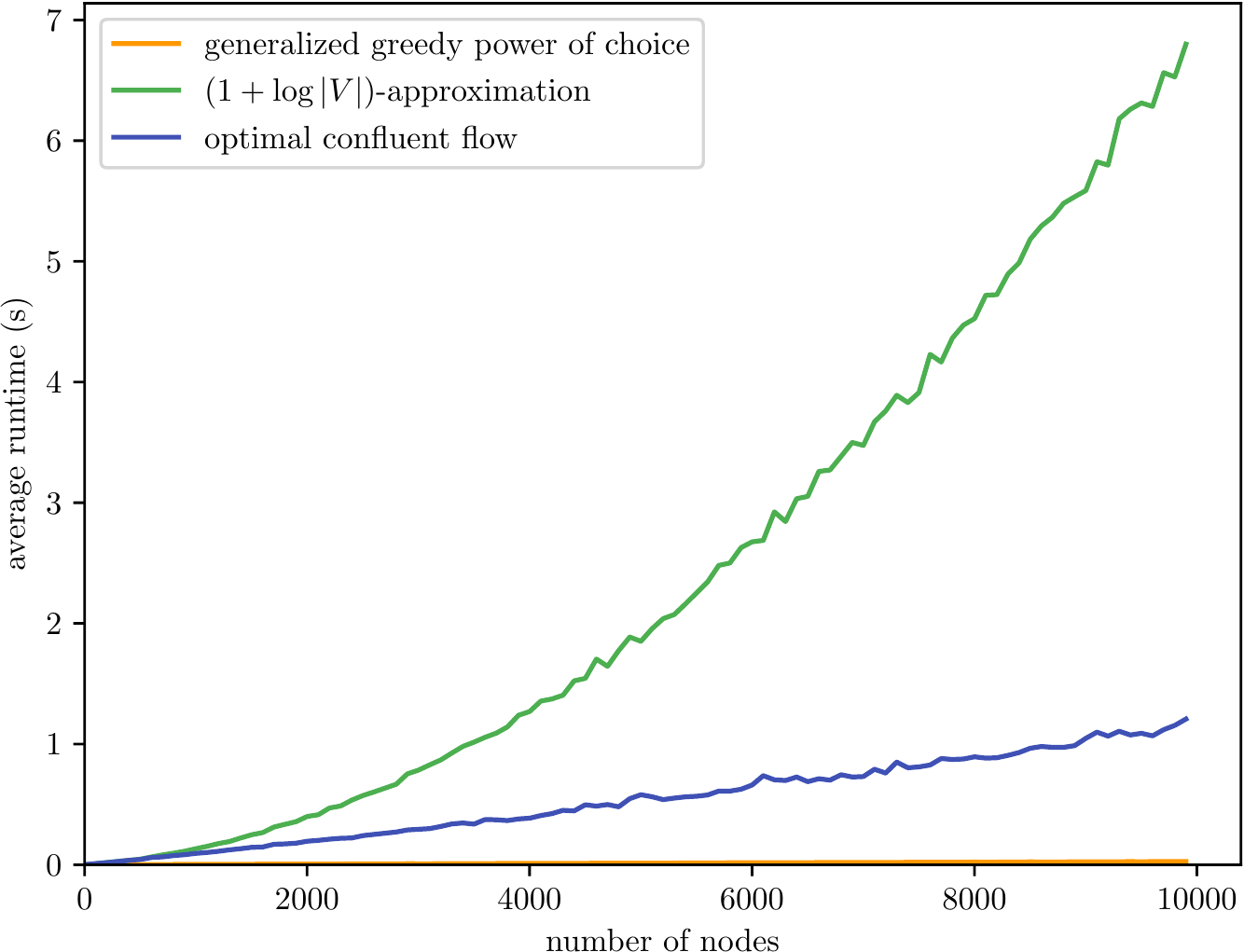}
		\captionof{figure}{Running time of mechanisms on graphs for $d = 0.5$, $\gamma = 1$, averaged over 20 simulations.}
		\label{fig:runningtime}
	\end{minipage}
\end{figure}

\section{Discussion}\label{sec:disc}
The approach we have presented and analyzed revolves around the idea of allowing agents to specify multiple delegation options, and selecting one such option per delegator. As mentioned in Section~\ref{sec:alg}, a natural variant of this approach corresponds to splittable --- instead of confluent --- flow. In this variant, the mechanism would not have to commit to a single outgoing edge per delegator.
Instead, a delegator's weight could be split into arbitrary fractions between her potential delegates.
Indeed, such a variant would be computationally less expensive, and the maximum voting weight can be no higher than in our setting.
However, we view our concept of delegation as more intuitive and transparent:
Whereas, in the splittable setting, a delegator's vote can disperse among a large number of agents, our mechanism assigns just one representative to each delegator.
As hinted at in the introduction, this is needed to preserve the high level of accountability guaranteed by classical liquid democracy.
We find that this fundamental shortcoming of splittable delegations is not counterbalanced by a marked decrease in maximum weight. Indeed, representative empirical results given in \cref{app:confsplit} show that the maximum weight trace is almost identical under splittable and confluent delegations. This conclusion is supported by additional results in \cref{app:histograms}.
Furthermore, note that in the preferential delegation model with $k=1$, splittable delegations do not make a difference, so the lower bounds given in \cref{thm:lbwhp,thm:lbexp} go through. And, when $k\geq 2$, the upper bound of Theorem~\ref{thm:thm1} directly applies to the splittable setting. Therefore, our main technical results in Section~\ref{sec:probmodel} are just as relevant to splittable delegations. 

To demonstrate the benefits of multiple delegations as clearly as possible, we assumed that every agent provides two possible delegations.
In practice, of course, we expect to see agents who want to delegate but only trust a single person to a sufficient degree.
This does not mean that delegators should be required to specify multiple delegations.
For instance, if this was the case, delegators might be incentivized to pad their delegations with very popular agents who are unlikely to receive their votes. 
Instead, we encourage voters to specify multiple delegations on a voluntary basis, and we hope that enough voters participate to make a significant impact.
Fortunately, as demonstrated in \cref{fig:differentp}, much of the benefits of multiple delegation options persist even if only a fraction of delegators specify two delegations.

Without doubt, a centralized mechanism for resolving delegations wields considerable power. Even though we only use this power for our specific goal of minimizing the maximum weight, agents unfamiliar with the employed algorithm might suspect it of favoring specific outcomes. To mitigate these concerns, we propose to divide the voting process into two stages. In the first, agents either specify their delegation options or register their intent to vote. Since the votes themselves have not yet been collected, the algorithm can resolve delegations without seeming partial. In the second stage, voters vote using the generated delegation graph, just as in classic liquid democracy, which allows for transparent decisions on an arbitrary number of issues. Additionally, we also allow delegators to change their mind and vote themselves if they are dissatisfied with how delegations were resolved. This gives each agent the final say on their share of votes, and can only further reduce the maximum weight achieved by our mechanism. We believe that this process, along with education about the mechanism's goals and design, can win enough trust for real-world deployment.

Beyond our specific extension, one can consider a variety of different approaches that push the current boundaries of liquid democracy. For example, in a recent position paper, Brill~\cite{Brill18} raises the idea of allowing delegators to specify a ranked list of potential representatives. His proposal is made in the context of alleviating delegation cycles, whereas our focus is on avoiding excessive concentration of weight. But, on a high level, both proposals envision centralized mechanisms that have access to richer inputs from agents. Making and evaluating such proposals now is important, because, at this early stage in the evolution of liquid democracy, scientists can still play a key role in shaping this exciting paradigm.

\section*{Acknowledgments}

We are grateful to Miklos Z.\ Racz for very helpful pointers to analyses of preferential attachment models.

\bibliographystyle{splncs03}
\bibliography{abb,ultimate,others}

\begin{thebibliography}{10}
\providecommand{\url}[1]{\texttt{#1}}
\providecommand{\urlprefix}{URL }

\bibitem{azar}
Azar, Y., Broder, A.Z., Karlin, A.R., Upfal, E.: Balanced allocations. In:
  Proceedings of the 26th Annual ACM Symposium on Theory of Computing (STOC).
  pp. 593--602 (1994)

\bibitem{BA99}
Barab\'asi, A.L., Albert, R.: Emergence of scaling in random networks. Science
  286,  509--512 (1999)

\bibitem{BKNS14}
Behrens, J., Kistner, A., Nitsche, A., Swierczek, B.: The Principles of
  {LiquidFeedback}. Interaktive Demokratie (2014)

\bibitem{bollobas}
Bollob{\'a}s, B., Riordan, O.M.: Mathematical results on scale-free random
  graphs. In: Bornholdt, S., Schuster, H.G. (eds.) Handbook of Graphs and
  Networks: From the Genome to the Internet, chap.~1, pp. 1--34. Wiley-VCH
  (2003)

\bibitem{Brill18}
Brill, M.: Interactive democracy. In: Proceedings of the 17th International
  Conference on Autonomous Agents and Multi-Agent Systems (AAMAS) (2018)

\bibitem{canonne}
Canonne, C.L.: A short note on {Poisson} tail bounds. Manuscript (2017)

\bibitem{chen}
Chen, J., Kleinberg, R.D., Lov{\'{a}}sz, L., Rajaraman, R., Sundaram, R.,
  Vetta, A.: (almost) tight bounds and existence theorems for single-commodity
  confluent flows. Journal of the ACM  54(4),  \ article 16 (2007)

\bibitem{chen06}
Chen, J., Rajaraman, R., Sundaram, R.: Meet and merge: Approximation algorithms
  for confluent flows. Journal of Computer and System Sciences  72(3),
  468--489 (2006)

\bibitem{CG17}
Christoff, Z., Grossi, D.: Binary voting with delegable proxy: An analysis of
  liquid democracy. In: Proceedings of the 16th Conference on Theoretical
  Aspects of Rationality and Knowledge (TARK). pp. 134--150 (2017)

\bibitem{fortune}
Fortune, S., Hopcroft, J.E., Wyllie, J.: The directed subgraph homeomorphism
  problem. Theoretical Computer Science  10,  111--121 (1980)

\bibitem{gautschi}
Gautschi, W.: Some elementary inequalities relating to the gamma and incomplete
  gamma function. Studies in Applied Mathematics  38(1--4),  77--81 (1959)

\bibitem{Green15}
Green-Armytage, J.: Direct voting and proxy voting. Constitutional Political
  Economy  26(2),  190--220 (2015)

\bibitem{haslegrave}
Haslegrave, J., Jordan, J.: Preferential attachment with choice. Random
  Structures and Algorithms  48(4),  751--766 (2016)

\bibitem{KMP18}
Kahng, A., Mackenzie, S., Procaccia, A.D.: Liquid democracy: An algorithmic
  perspective. In: Proceedings of the 32nd AAAI Conference on Artificial
  Intelligence (AAAI) (2018)

\bibitem{KKHS+15}
Kling, C.C., Kunegis, J., Hartmann, H., Strohmaier, M., Staab, S.: Voting
  behaviour and power in online democracy. In: Proceedings of the 9th
  International AAAI Conference on Web and Social Media (ICWSM). pp. 208--217
  (2015)

\bibitem{KNT10}
Kumar, R., Novak, J., Tomkins, A.: Structure and evolution of online social
  networks. In: Proceedings of the 12th International Conference on Knowledge
  Discovery and Data Mining (KDD). pp. 611--617 (2006)

\bibitem{malyshkin}
Malyshkin, Y., Paquette, E.: The power of choice over preferential attachment.
  Latin American Journal of Probability and Mathematical Statistics  12(2),
  903--915 (2015)

\bibitem{Mill69}
Miller, J.C.: A program for direct and proxy voting in the legislative process.
  Public Choice  7(1),  107--113 (1969)

\bibitem{New01}
Newman, M.E.J.: Clustering and preferential attachment in growing networks.
  Physical review E  64(2),  1--13 (2001)

\end{thebibliography}

\newpage
\appendix
\section{Proof of Lemma~\lowercase{\ref{lem:nphard}} -- Hardness of \textsc{MinMaxCongestion}}
\label{app:nphard}

We first require the following lemma. 

\begin{lem}
    \label{lem:disjoutdegree}
    Let $G$ be a directed graph in which all vertices have an outdegree of at most 2. Given vertices $s_1, s_2, t_1, t_2$, 
    it is NP-hard to decide whether there exist vertex-disjoint paths from $s_1$ to $t_1$ and from $s_2$ to $t_2$.
\end{lem}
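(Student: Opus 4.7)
The plan is to reduce from the classical two-vertex-disjoint-directed-paths problem, proved NP-hard by Fortune, Hopcroft, and Wyllie on general directed graphs (with no outdegree bound). Given an arbitrary instance $(H; s_1, t_1, s_2, t_2)$ of that problem, I would build in polynomial time an equivalent instance $(G; s_1', t_1', s_2', t_2')$ in which every vertex has outdegree at most~$2$, via a \emph{vertex-splitting gadget}. Provided the construction preserves YES/NO answers, the claimed NP-hardness follows immediately.

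The gadget works as follows. For each vertex $v$ of $H$ with out-neighbors $u_1, \dots, u_k$ taken in some fixed order, I replace $v$ by a chain of fresh vertices $v^{(1)} \to v^{(2)} \to \cdots \to v^{(k)}$, and for each $i$ I add an ``exit'' edge $v^{(i)} \to u_i^{(1)}$, where $u_i^{(1)}$ denotes the entry vertex of $u_i$'s own gadget. Every edge of $H$ pointing into $v$ is redirected to $v^{(1)}$. Vertices whose original outdegree is already at most~$2$ may be left unchanged. The sources and sinks of the new instance are declared to be $s_j' := s_j^{(1)}$ and $t_j' := t_j^{(1)}$. Each new vertex has outdegree at most $2$ (exactly $2$, except for $v^{(k)}$ which has outdegree $1$), and the blowup in size is linear in $|V(H)| + |E(H)|$.

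For correctness, the forward direction is routine: two vertex-disjoint paths in $H$ can be simulated step by step in $G$, at each original vertex walking partway down the corresponding chain and then taking the matching exit edge; since the original paths visit disjoint sets of vertices of $H$, the simulated paths visit disjoint gadgets. The backward direction relies on the key structural property of the gadget: the only way to enter $v$'s gadget from outside is through $v^{(1)}$. Hence if two vertex-disjoint paths in $G$ both visit $v$'s gadget, they must both pass through $v^{(1)}$, contradicting disjointness. Therefore each gadget is used by at most one path, and contracting each gadget back to its original vertex recovers two vertex-disjoint paths in $H$.

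The main obstacle I foresee is choosing the right gadget rather than anything technically subtle. A tempting alternative is a balanced binary splitter of depth $\lceil \log_2 k \rceil$, but that would create multiple ways of entering the same gadget and muddy the backward argument. The single-entry chain keeps the correspondence between ``choose out-edge $i$ at $v$'' and ``walk to $v^{(i)}$ and exit'' transparent, which is exactly what the disjointness argument needs. The only other thing to check is the boundary behaviour at sources (no incoming edges to reroute) and sinks (the simulated path must stop at $t_j^{(1)}$ instead of wandering further down the chain), but both are automatic once the new sources and sinks are taken to be the entry copies $s_j^{(1)}$ and $t_j^{(1)}$.
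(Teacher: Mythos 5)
Your proposal is correct and matches the paper's proof in all essentials: both reduce from the Fortune--Hopcroft--Wyllie two-vertex-disjoint-paths problem by replacing each high-outdegree vertex with a single-entry, outdegree-$\le 2$ splitting gadget, and both correctness arguments hinge on the observation that a gadget can only be entered through its unique entry vertex. The only difference is cosmetic --- you use a chain (caterpillar) where the paper uses a binary arborescence --- and this changes nothing in the argument.
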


\begin{proof}
    Without the restriction on the outdegree, the problem is NP-hard.\footnote{Chen et al.~\cite{chen} cite a previously established result for this~\cite{fortune}.}
    We reduce the general case to our special case.

    Let $G'$ be an arbitrary directed graph; let $s_1', s_2', t_1', t_2'$ be distinguished vertices.
    To restrict the outdegree, replace each node $n$ with outdegree $d$ by a binary arborescence (directed binary tree with edges facing away from the root) with $d$ sinks.
    All incoming edges into $n$ are redirected towards the root of the arborescence; outgoing edges from $n$ instead start from the different leaves of the arborescence.
    Call the new graph $G$, and let $s_1, s_2, t_1, t_2$ refer to the roots of the arborescences replacing $s_1', s_2', t_1', t_2'$, respectively.

    Clearly, our modifications to $G'$ can be carried out in polynomial time.
    It remains to show that there are vertex-disjoint paths from $s_1$ to $t_1$ and from $s_2$ to $t_2$ in $G$ iff there are vertex-disjoint paths from $s_1'$ to $t_1'$ and from $s_2'$ to $t_2'$ in $G'$.

    If there are disjoint paths in $G'$, we can translate these paths into $G$ by visiting the arborescences corresponding to the nodes on the original path one after another.
    Since both paths visit disjoint arborescences, the new paths must be disjoint.

    Suppose now that there are disjoint paths in $G$.
    Translate the paths into $G'$ by visiting the nodes corresponding to the sequence of visited arborescences.
    Since each arborescence can only be entered via its root, disjointness of the paths in $G$ implies disjointness of the translated paths in $G'$.
\end{proof}

We are now ready to prove Lemma~\ref{lem:nphard}.

\restnphard*

\begin{proof}
    We adapt the proof of Theorem~1 of Chen et al.~\cite{chen}.

    Let $G = (V, E), s_1, s_2, t_1, t_2$ be given as in Lemma~\ref{lem:disjoutdegree}.
    Without loss of generality, $G$ only contains nodes from which $t_1$ or $t_2$ is reachable, $t_1$ and $t_2$ are sinks and all four vertices are distinct.
    Let $\ell = \lceil \log_2 |V| \rceil$ and $k = 2^\ell$.
    Build the same auxiliary network as that built by Chen et al.~\cite{chen}, which consists of a binary arborescence whose $k - 1$ nodes are copies of $G$.
    The construction is illustrated in \cref{fig:approxreduction}. For more details, refer to \cite{chen}.
    \begin{figure}
        \centering
          \resizebox{\textwidth}{!}{\begin{tikzpicture}[on grid,>=stealth',node distance=.8cm]
\tikzstyle{do}=[circle,thick,draw=black,minimum size=1.8mm,inner sep=0]
\tikzstyle{s1}=[do,label={[label distance=4]left:$s_1$}]
\tikzstyle{s2}=[do,fill=black,label={[label distance=4]right:$s_2$}]
\tikzstyle{t1}=[do,label={[label distance=4]right:$t_1$}]
\tikzstyle{t2}=[do,fill=black,label={[label distance=4]left:$t_2$}]
\tikzstyle{sink}=[rectangle,thick,draw=black,minimum size=2.8mm,inner sep=0]

\node (hph) at (2.7, 0) {};
\node (vph) at (0,3) {};
\node (vsph) at (0,-1.5) {};

\node (phantomA1) [circle,minimum size=2cm,draw=black,dashed] at ($3*(vph)$) {};

\node (phantomA2) [circle,minimum size=2cm,draw=black,dashed] at ($-2*(hph)+2*(vph)$) {};
\node (phantomA3) [circle,minimum size=2cm,draw=black,dashed] at ($2*(hph)+2*(vph)$) {};

\node (phantomA4) [circle,minimum size=2cm,draw=black,dashed] at ($-3*(hph)+(vph)$){};
\node (phantomA5) [circle,minimum size=2cm,draw=black,dashed] at ($-1*(hph)+(vph)$) {};
\node (phantomA6) [circle,minimum size=2cm,draw=black,dashed] at ($1*(hph)+(vph)$) {};
\node (phantomA7) [circle,minimum size=2cm,draw=black,dashed] at ($3*(hph)+(vph)$) {};

\node (phantomA8) [circle,minimum size=2cm,draw=black,dashed] at ($-3.5*(hph)$) {};
\node (phantomA9) [circle,minimum size=2cm,draw=black,dashed] at ($-2.5*(hph)$) {};
\node (phantomA10) [circle,minimum size=2cm,draw=black,dashed] at ($-1.5*(hph)$) {};
\node (phantomA11) [circle,minimum size=2cm,draw=black,dashed] at ($-0.5*(hph)$) {};
\node (phantomA12) [circle,minimum size=2cm,draw=black,dashed] at ($0.5*(hph)$) {};
\node (phantomA13) [circle,minimum size=2cm,draw=black,dashed] at ($1.5*(hph)$) {};
\node (phantomA14) [circle,minimum size=2cm,draw=black,dashed] at ($2.5*(hph)$) {};
\node (phantomA15) [circle,minimum size=2cm,draw=black,dashed] at ($3.5*(hph)$) {};

\node (s1A1) [s1,above left=of phantomA1,fill=black,minimum size=2.5mm] {};
\node (s2A1) [s2,above right=of phantomA1] {};
\node (t2A1) [t2,below left=of phantomA1] {};
\node (t1A1) [t1,below right=of phantomA1] {};
\draw [very thick,densely dotted] (s1A1) -- ($(phantomA1) + (-.3cm,.05cm)$) -- ($(phantomA1) + (.1cm,.2cm)$) -- ($(phantomA1) + (.2cm,-.5cm)$) -- (t1A1);
\draw [very thick] (s2A1) -- ($(phantomA1) + (.2cm,.45cm)$) -- ($(phantomA1) + (.4cm,.0cm)$) -- ($(phantomA1) + (-.1cm,-.05cm)$) -- ($(phantomA1) + (-.2cm,-.6cm)$) -- (t2A1);
\node (s1A2) [s1,above left=of phantomA2] {};
\node (s2A2) [s2,above right=of phantomA2] {};
\node (t2A2) [t2,below left=of phantomA2] {};
\node (t1A2) [t1,below right=of phantomA2] {};
\draw [very thick,densely dotted] (s1A2) -- ($(phantomA2) + (-.3cm,.05cm)$) -- ($(phantomA2) + (.1cm,.2cm)$) -- ($(phantomA2) + (.2cm,-.5cm)$) -- (t1A2);
\draw [very thick] (s2A2) -- ($(phantomA2) + (.2cm,.45cm)$) -- ($(phantomA2) + (.4cm,.0cm)$) -- ($(phantomA2) + (-.1cm,-.05cm)$) -- ($(phantomA2) + (-.2cm,-.6cm)$) -- (t2A2);
\node (s1A3) [s1,above left=of phantomA3] {};
\node (s2A3) [s2,above right=of phantomA3] {};
\node (t2A3) [t2,below left=of phantomA3] {};
\node (t1A3) [t1,below right=of phantomA3] {};
\draw [very thick,densely dotted] (s1A3) -- ($(phantomA3) + (-.3cm,.05cm)$) -- ($(phantomA3) + (.1cm,.2cm)$) -- ($(phantomA3) + (.2cm,-.5cm)$) -- (t1A3);
\draw [very thick] (s2A3) -- ($(phantomA3) + (.2cm,.45cm)$) -- ($(phantomA3) + (.4cm,.0cm)$) -- ($(phantomA3) + (-.1cm,-.05cm)$) -- ($(phantomA3) + (-.2cm,-.6cm)$) -- (t2A3);
\node (s1A4) [s1,above left=of phantomA4] {};
\node (s2A4) [s2,above right=of phantomA4] {};
\node (t2A4) [t2,below left=of phantomA4] {};
\node (t1A4) [t1,below right=of phantomA4] {};
\draw [very thick,densely dotted] (s1A4) -- ($(phantomA4) + (-.3cm,.05cm)$) -- ($(phantomA4) + (.1cm,.2cm)$) -- ($(phantomA4) + (.2cm,-.5cm)$) -- (t1A4);
\draw [very thick] (s2A4) -- ($(phantomA4) + (.2cm,.45cm)$) -- ($(phantomA4) + (.4cm,.0cm)$) -- ($(phantomA4) + (-.1cm,-.05cm)$) -- ($(phantomA4) + (-.2cm,-.6cm)$) -- (t2A4);
\node (s1A5) [s1,above left=of phantomA5] {};
\node (s2A5) [s2,above right=of phantomA5] {};
\node (t2A5) [t2,below left=of phantomA5] {};
\node (t1A5) [t1,below right=of phantomA5] {};
\draw [very thick,densely dotted] (s1A5) -- ($(phantomA5) + (-.3cm,.05cm)$) -- ($(phantomA5) + (.1cm,.2cm)$) -- ($(phantomA5) + (.2cm,-.5cm)$) -- (t1A5);
\draw [very thick] (s2A5) -- ($(phantomA5) + (.2cm,.45cm)$) -- ($(phantomA5) + (.4cm,.0cm)$) -- ($(phantomA5) + (-.1cm,-.05cm)$) -- ($(phantomA5) + (-.2cm,-.6cm)$) -- (t2A5);
\node (s1A6) [s1,above left=of phantomA6] {};
\node (s2A6) [s2,above right=of phantomA6] {};
\node (t2A6) [t2,below left=of phantomA6] {};
\node (t1A6) [t1,below right=of phantomA6] {};
\draw [very thick,densely dotted] (s1A6) -- ($(phantomA6) + (-.3cm,.05cm)$) -- ($(phantomA6) + (.1cm,.2cm)$) -- ($(phantomA6) + (.2cm,-.5cm)$) -- (t1A6);
\draw [very thick] (s2A6) -- ($(phantomA6) + (.2cm,.45cm)$) -- ($(phantomA6) + (.4cm,.0cm)$) -- ($(phantomA6) + (-.1cm,-.05cm)$) -- ($(phantomA6) + (-.2cm,-.6cm)$) -- (t2A6);
\node (s1A7) [s1,above left=of phantomA7] {};
\node (s2A7) [s2,above right=of phantomA7] {};
\node (t2A7) [t2,below left=of phantomA7] {};
\node (t1A7) [t1,below right=of phantomA7] {};
\draw [very thick,densely dotted] (s1A7) -- ($(phantomA7) + (-.3cm,.05cm)$) -- ($(phantomA7) + (.1cm,.2cm)$) -- ($(phantomA7) + (.2cm,-.5cm)$) -- (t1A7);
\draw [very thick] (s2A7) -- ($(phantomA7) + (.2cm,.45cm)$) -- ($(phantomA7) + (.4cm,.0cm)$) -- ($(phantomA7) + (-.1cm,-.05cm)$) -- ($(phantomA7) + (-.2cm,-.6cm)$) -- (t2A7);
\node (s1A8) [s1,above left=of phantomA8] {};
\node (s2A8) [s2,above right=of phantomA8] {};
\node (t2A8) [t2,below left=of phantomA8] {};
\node (t1A8) [t1,below right=of phantomA8] {};
\draw [very thick,densely dotted] (s1A8) -- ($(phantomA8) + (-.3cm,.05cm)$) -- ($(phantomA8) + (.1cm,.2cm)$) -- ($(phantomA8) + (.2cm,-.5cm)$) -- (t1A8);
\draw [very thick] (s2A8) -- ($(phantomA8) + (.2cm,.45cm)$) -- ($(phantomA8) + (.4cm,.0cm)$) -- ($(phantomA8) + (-.1cm,-.05cm)$) -- ($(phantomA8) + (-.2cm,-.6cm)$) -- (t2A8);
\node (s1A9) [s1,above left=of phantomA9] {};
\node (s2A9) [s2,above right=of phantomA9] {};
\node (t2A9) [t2,below left=of phantomA9] {};
\node (t1A9) [t1,below right=of phantomA9] {};
\draw [very thick,densely dotted] (s1A9) -- ($(phantomA9) + (-.3cm,.05cm)$) -- ($(phantomA9) + (.1cm,.2cm)$) -- ($(phantomA9) + (.2cm,-.5cm)$) -- (t1A9);
\draw [very thick] (s2A9) -- ($(phantomA9) + (.2cm,.45cm)$) -- ($(phantomA9) + (.4cm,.0cm)$) -- ($(phantomA9) + (-.1cm,-.05cm)$) -- ($(phantomA9) + (-.2cm,-.6cm)$) -- (t2A9);
\node (s1A10) [s1,above left=of phantomA10] {};
\node (s2A10) [s2,above right=of phantomA10] {};
\node (t2A10) [t2,below left=of phantomA10] {};
\node (t1A10) [t1,below right=of phantomA10] {};
\draw [very thick,densely dotted] (s1A10) -- ($(phantomA10) + (-.3cm,.05cm)$) -- ($(phantomA10) + (.1cm,.2cm)$) -- ($(phantomA10) + (.2cm,-.5cm)$) -- (t1A10);
\draw [very thick] (s2A10) -- ($(phantomA10) + (.2cm,.45cm)$) -- ($(phantomA10) + (.4cm,.0cm)$) -- ($(phantomA10) + (-.1cm,-.05cm)$) -- ($(phantomA10) + (-.2cm,-.6cm)$) -- (t2A10);
\node (s1A11) [s1,above left=of phantomA11] {};
\node (s2A11) [s2,above right=of phantomA11] {};
\node (t2A11) [t2,below left=of phantomA11] {};
\node (t1A11) [t1,below right=of phantomA11] {};
\draw [very thick,densely dotted] (s1A11) -- ($(phantomA11) + (-.3cm,.05cm)$) -- ($(phantomA11) + (.1cm,.2cm)$) -- ($(phantomA11) + (.2cm,-.5cm)$) -- (t1A11);
\draw [very thick] (s2A11) -- ($(phantomA11) + (.2cm,.45cm)$) -- ($(phantomA11) + (.4cm,.0cm)$) -- ($(phantomA11) + (-.1cm,-.05cm)$) -- ($(phantomA11) + (-.2cm,-.6cm)$) -- (t2A11);
\node (s1A12) [s1,above left=of phantomA12] {};
\node (s2A12) [s2,above right=of phantomA12] {};
\node (t2A12) [t2,below left=of phantomA12] {};
\node (t1A12) [t1,below right=of phantomA12] {};
\draw [very thick,densely dotted] (s1A12) -- ($(phantomA12) + (-.3cm,.05cm)$) -- ($(phantomA12) + (.1cm,.2cm)$) -- ($(phantomA12) + (.2cm,-.5cm)$) -- (t1A12);
\draw [very thick] (s2A12) -- ($(phantomA12) + (.2cm,.45cm)$) -- ($(phantomA12) + (.4cm,.0cm)$) -- ($(phantomA12) + (-.1cm,-.05cm)$) -- ($(phantomA12) + (-.2cm,-.6cm)$) -- (t2A12);
\node (s1A13) [s1,above left=of phantomA13] {};
\node (s2A13) [s2,above right=of phantomA13] {};
\node (t2A13) [t2,below left=of phantomA13] {};
\node (t1A13) [t1,below right=of phantomA13] {};
\draw [very thick,densely dotted] (s1A13) -- ($(phantomA13) + (-.3cm,.05cm)$) -- ($(phantomA13) + (.1cm,.2cm)$) -- ($(phantomA13) + (.2cm,-.5cm)$) -- (t1A13);
\draw [very thick] (s2A13) -- ($(phantomA13) + (.2cm,.45cm)$) -- ($(phantomA13) + (.4cm,.0cm)$) -- ($(phantomA13) + (-.1cm,-.05cm)$) -- ($(phantomA13) + (-.2cm,-.6cm)$) -- (t2A13);
\node (s1A14) [s1,above left=of phantomA14] {};
\node (s2A14) [s2,above right=of phantomA14] {};
\node (t2A14) [t2,below left=of phantomA14] {};
\node (t1A14) [t1,below right=of phantomA14] {};
\draw [very thick,densely dotted] (s1A14) -- ($(phantomA14) + (-.3cm,.05cm)$) -- ($(phantomA14) + (.1cm,.2cm)$) -- ($(phantomA14) + (.2cm,-.5cm)$) -- (t1A14);
\draw [very thick] (s2A14) -- ($(phantomA14) + (.2cm,.45cm)$) -- ($(phantomA14) + (.4cm,.0cm)$) -- ($(phantomA14) + (-.1cm,-.05cm)$) -- ($(phantomA14) + (-.2cm,-.6cm)$) -- (t2A14);
\node (s1A15) [s1,above left=of phantomA15] {};
\node (s2A15) [s2,above right=of phantomA15] {};
\node (t2A15) [t2,below left=of phantomA15] {};
\node (t1A15) [t1,below right=of phantomA15] {};
\draw [very thick,densely dotted] (s1A15) -- ($(phantomA15) + (-.3cm,.05cm)$) -- ($(phantomA15) + (.1cm,.2cm)$) -- ($(phantomA15) + (.2cm,-.5cm)$) -- (t1A15);
\draw [very thick] (s2A15) -- ($(phantomA15) + (.2cm,.45cm)$) -- ($(phantomA15) + (.4cm,.0cm)$) -- ($(phantomA15) + (-.1cm,-.05cm)$) -- ($(phantomA15) + (-.2cm,-.6cm)$) -- (t2A15);

\draw [->,thick] (t2A1) edge (s1A2);
\draw [->,thick] (t2A2) edge (s1A4);
\draw [->,thick] (t2A3) edge (s1A6);
\draw [->,thick] (t2A4) edge (s1A8);
\draw [->,thick] (t2A5) edge (s1A10);
\draw [->,thick] (t2A6) edge (s1A12);
\draw [->,thick] (t2A7) edge (s1A14);

\draw [->,thick] (t1A1) edge (s1A3);
\draw [->,thick] (t1A2) edge (s1A5);
\draw [->,thick] (t1A3) edge (s1A7);
\draw [->,thick] (t1A4) edge (s1A9);
\draw [->,thick] (t1A5) edge (s1A11);
\draw [->,thick] (t1A6) edge (s1A13);
\draw [->,thick] (t1A7) edge (s1A15);

\node (s2A8) [sink] at ($(t2A8)+(vsph)$) {}
edge [<-,thick] (t2A8);
\node (s1A8) [sink] at ($(t1A8)+(vsph)$) {}
edge [<-,thick] (t1A8);
\node (s2A9) [sink] at ($(t2A9)+(vsph)$) {}
edge [<-,thick] (t2A9);
\node (s1A9) [sink] at ($(t1A9)+(vsph)$) {}
edge [<-,thick] (t1A9);
\node (s2A10) [sink] at ($(t2A10)+(vsph)$) {}
edge [<-,thick] (t2A10);
\node (s1A10) [sink] at ($(t1A10)+(vsph)$) {}
edge [<-,thick] (t1A10);
\node (s2A11) [sink] at ($(t2A11)+(vsph)$) {}
edge [<-,thick] (t2A11);
\node (s1A11) [sink] at ($(t1A11)+(vsph)$) {}
edge [<-,thick] (t1A11);
\node (s2A12) [sink] at ($(t2A12)+(vsph)$) {}
edge [<-,thick] (t2A12);
\node (s1A12) [sink] at ($(t1A12)+(vsph)$) {}
edge [<-,thick] (t1A12);
\node (s2A13) [sink] at ($(t2A13)+(vsph)$) {}
edge [<-,thick] (t2A13);
\node (s1A13) [sink] at ($(t1A13)+(vsph)$) {}
edge [<-,thick] (t1A13);
\node (s2A14) [sink] at ($(t2A14)+(vsph)$) {}
edge [<-,thick] (t2A14);
\node (s1A14) [sink] at ($(t1A14)+(vsph)$) {}
edge [<-,thick] (t1A14);
\node (s2A15) [sink] at ($(t2A15)+(vsph)$) {}
edge [<-,thick] (t2A15);
\node (s1A15) [sink] at ($(t1A15)+(vsph)$) {}
edge [<-,thick] (t1A15);
        \end{tikzpicture}}
        \caption{Auxiliary network generated from $G$, here for $k = 16$. Recreation of \cite[Fig.~2]{chen}.}
        \label{fig:approxreduction}
    \end{figure}
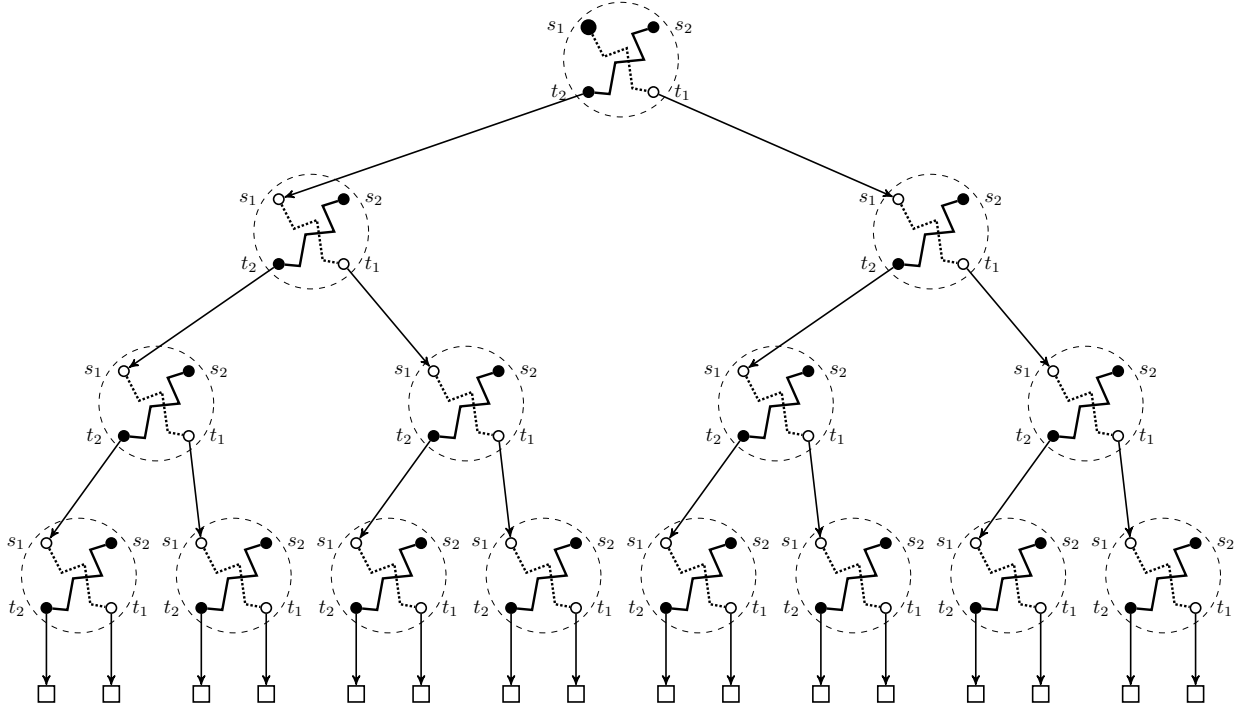

    Without loss of generality, we can have polynomially-bounded positive integer demands.
    To express a demand of $d$ at a node $n$ in our unit-demand setting, add $d - 1$ nodes with a single outgoing edge to $n$.

    Denote the number of nodes in the network by $\phi \coloneqq (k - 1) \cdot |V| + k$, and set $\Phi \coloneqq \ell \cdot \phi + 1$.
    In \cite{chen}, every copy of $s_2$ and $t_2$ has demand 1, the copy of $s_1$ at the root has demand 2, and all other nodes have demand 0.
    Instead, we give these nodes demands of $\Phi$, $2 \Phi$ and $1$, respectively.
    Note that the size of the generated network\footnote{Even after unfolding our non-unitary-demand nodes.} is polynomial in the size of $G$ and that the outdegree of each node is at most 2.
    From every node, one of the sinks $S$ displayed as rectangles in \cref{fig:approxreduction} is reachable.
    Since the minimum-distance-to-$S$ spanning forest describes a flow, a flow in the network exists.
    
    Suppose that $G$ contains vertex-disjoint paths $P_1$ from $s_1$ to $t_1$ and $P_2$ from $s_2$ to $t_2$.
    In each copy of $G$ in the network, route the flow along these paths.
    We can complete the confluent flow inside of this copy in such a way that the demand of every node is routed to $t_1$ or $t_2$:
    By assumption, each of the nodes can reach one of these two path endpoints.
    Iterate over all nodes in order of ascending distance to the closest endpoint and make sure that their flow is routed to an endpoint.
    For the endpoints themselves, there is nothing to do.
    For positive distance, a node might be part of a path and thus already connected to an endpoint.
    Else, look at its successor in a shortest path to an endpoint.
    By the induction hypothesis, all flow from this successor is routed to an endpoint, so route the node's flow to this successor.
    If we also use the edges between copies of $G$ and between the copies and the sinks, we obtain a confluent flow.
    Each sink except for the rightmost one can only collect the demand of two nodes with demand $\Phi$ plus a number of nodes with demand $1$.
    The rightmost sink collects the demand from the single node with demand $2 \Phi$ plus some unitary demands.
    Thus, the congestion of the system can be at most $2 \Phi + \phi$.

    Now, consider the case in which $G$ does not have such vertex-disjoint paths.
    In every confluent flow and in every copy, there are three options:
    \begin{itemize}
        \item the flow from $s_1$ flows to $t_2$ and the flow from $s_2$ flows to $t_1$,
        \item the flow from $s_1$ and $s_2$ flows to $t_1$, or
        \item the flow from $s_1$ and $s_2$ flows to $t_2$.
    \end{itemize}
    In each case, the flow coming in through $s_1$ is joined by additional demand of at least $\Phi$.
    Consider the path from the copy of $s_1$ at the root to a sink.
    By a simple induction, the congestion at the endpoind of the $i$th copy of $G$ is at least $(i + 1) \cdot \Phi$.
    Thus, the total congestion at the sink must be at least $(\ell + 1) \cdot \Phi$. The lemma now follows from the fact that 
	$$
	\frac{\log_2 k}{2}(2\Phi+\phi) =\frac{\ell}{2}(2\Phi+\phi)<(\ell+1)\cdot\Phi.
	$$
\end{proof}

\section{Proof of Theorem~\lowercase{\ref{thm:lbwhp}} -- Lower Bound for \lowercase{$k = 1$} with High Probability}
\label{app:lbwhp}

\restlbwhp*

\begin{proof}
It suffices to show that, with high probability, there exists a voter at every time $t$ whose weight is bounded from below by a function in $\Omega(t^\beta)$.

For ease of exposition, we pretend that $\imax \coloneqq \log_2 \frac{t}{\log t}$ is an integer.\footnote{The same argument works for $\imax \coloneqq \left\lfloor \log_2 \frac{t}{\log t} \right\rfloor$ if we appropriately bound the term.}
We divide the $t$ agents into $\imax + 1$ blocks $B_0, \dots, B_{\imax}$.
The first block $B_0$ contains agents $1$ to $\tau \coloneqq \log t$, and every subsequent block $B_i$ contains agents $(\tau \, 2^{i - 1}, \tau \, 2^i]$.

We keep track of the total weight $S_i$ of all voters in $B_0$ after the entirety of block $B_i$ has been added.
Furthermore, we define an event $X_i$ saying that a high enough number of agents in block $B_i$ transitively delegate into $B_0$.
If all $X_i$ hold, $S_{\imax}$ scales like a power function.
Then, we show that, as $t$ increases, the probability of any $X_i$ failing goes to zero.
Thus, our bound on $S_{\imax}$ holds with high probability.
The total weight of $B_0$ and the weight of the maximum voter in $B_0$ can differ by at most a factor of $\tau$, which is logarithmic.
Thus, with high probability, there is a voter in $B_0$ whose weight is a power function.

In more detail, let $\e \coloneqq \frac{1}{2}$ and let $d' \coloneqq (1 - \e) \, d = \frac{d}{2}$.
For each $i \geq 0$, let $Y_i$ denote the number of votes from block $i$ transitively going into $B_0$.
Clearly, $S_i = \sum_{j = 0}^{i} Y_i$.
For $i > 0$, let $X_i$ denote the event that
\[ Y_i > d' \, \frac{\tau \, \left(1 + \frac{d'}{2}\right)^{i-1}}{2}\text{.} \]

\paragraph{Bounding the Expectation of $Y_i$} 
We first prove by induction on $i$ that, if $X_1$ through $X_i$ hold, then 
\begin{equation}
\label{eq:si}
S_i \geq \tau \left( 1 + \frac{d'}{2} \right)^i.
\end{equation}
For $i = 0$, $S_0 = \tau$ and the claim holds. For $i > 0$, by the induction hypothesis, $S_{i - 1} \geq \tau \left( 1 + \frac{d'}{2} \right)^{i - 1}$.
    By the assumption $X_i$, $$Y_i > d' \, \frac{\tau \, \left(1 + \frac{d'}{2}\right)^{i-1}}{2}.$$
    Thus,
    \[ S_{i} = S_{i - 1} + Y_i \geq \tau \left( 1 + \frac{d'}{2} \right)^{i - 1} + d' \, \frac{\tau \, \left(1 + \frac{d'}{2}\right)^{i-1}}{2} = \tau \, \left(1 + \frac{d'}{2}\right)^{i-1} \, \left(1 + \frac{d'}{2}\right) = \tau \, \left(1 + \frac{d'}{2}\right)^i \text{.}\]
    This concludes the induction and establishes \cref{eq:si}.

Now, for any agent $j$ in $B_i$, the probability of transitively delegating into $B_0$ is
\[ d \, \frac{\sum_{v \in V \cap B_0} w_{j-1}(v)}{j - 1} \geq d \, \frac{S_{i - 1}}{\tau \, 2^i}.\]
    Conditioned on $X_1,\ldots,X_{i-1}$, we can thus lower-bound $Y_i$ by a binomial variable $\mathit{Bin}\left(\tau \, 2^{i - 1}, d \, \frac{S_{i - 1}}{\tau \, 2^i}\right)$ to obtain
\[ \mathbb{E}[Y_i\ |\ X_1,\ldots,X_{i-1}] \geq \tau \, 2^{i - 1} \, d \, \frac{S_{i - 1}}{\tau \, 2^i} = d \, \frac{S_{i - 1}}{2} \geq d \, \frac{\tau \, \left(1 + \frac{d'}{2}\right)^{i-1}}{2}. \]
Denoting the right hand side by 
$$
\mu\coloneq d \, \frac{\tau \, \left(1 + \frac{d'}{2}\right)^{i-1}}{2},
$$ 
note that $X_i$ holds if $Y_i > (1 - \e) \, \mu$.

\paragraph{Failure Probability Goes to 0}
Now, we must show that, with high probability, all $X_i$ hold.
By underapproximating the probability of delegation by a binomial random variable as before and by using a Chernoff bound, we have for all $i > 0$
    \begin{align*}
        \Prob[X_i \mid X_1, \dots, X_{i - 1}] \geq \Prob \left[ \mathit{Bin}\left( \tau \, 2^{i - 1}, d \, \frac{\tau \, \left(1 + d'/2\right)^{i - 1}}{\tau \, 2^i} \right) > (1 - \e{}) \, \mu \right] \geq 1 - e^{-\frac{\e{}^2 \, \mu}{2}}.
    \end{align*}
By the union bound, 
\begin{align*}
    \Prob[\exists i, 1 \leq i \leq \imax. \, \text{$X_i$ fails}] \leq \sum_{i=1}^{\imax} e^{-\frac{\e^2 \, d \, \tau \, \left(1 + d'/2 \right)^{i-1}}{4}}.
\end{align*}
We wish to show that the right hand side goes to 0 as $t$ increases. We have
\begin{align*}
    \sum_{i=1}^{\imax} e^{-\frac{\e^2 \, d \, \tau \, \left(1 + d'/2 \right)^{i-1}}{4}} &\leq \imax \left( e^{-\frac{\e^2 \, d \, \tau}{4}} \right) & \text{(by monotonicity)}\\
    &= \left( \log_2 \frac{t}{\log t}\right) \left( t^{-\frac{\e^2 \, d}{4}} \right), & \text{(by definitions of $\imax$, $\tau$)}
\end{align*}
which indeed approaches $0$ as $t$ increases.

\paragraph{Bounding the Maximum Weight}
Note that the weight of $B_0$ at time $t$ is exactly $S_{\imax}$.
Set $x \coloneqq 1 + d'/2 > 1$, which is a constant.
With high probability, by \cref{eq:si}, 
\begin{align*}
    \frac{S_{\imax}}{\tau} &\geq \left( 1 + \frac{d'}{2} \right)^{\imax} 
    = x^{\log_2 \frac{t}{\log t}} 
    = \left( \frac{t}{\log t} \right)^{\log_2 x} \text{.}
\end{align*}

Since $x > 1$, $\log_2 x > 0$.
For any $0 < \beta < \log_2 x$, $\frac{S_{\imax}}{\tau} \in \Omega(t^{\beta})$ with high probability.
Since $B_0$ has weight $S_{\imax}$ and contains at most $\tau$ voters, with high probability there is some voter in $B_0$ with that much weight.
\end{proof}

\section{Proof of Theorem~\lowercase{\ref{thm:thm1}} -- Upper Bound}
\label{sec:upperchoiceproofs}

Because the proof of \cref{thm:thm1} is quite intricate and technical, we begin with a sketch of its structure.
Proofs for the individual lemmas can be found in the subsequent subsections.

\subsection{Proof Sketch}
For our analysis, it would be natural to keep track of the number of voters $v$ with a specific weight $w_j(v) = k$ at a specific point $j$ in time.
In order to simplify the analysis, we instead keep track of random variables
\[ F_j(k) \coloneqq \sum_{\substack{v \in V \\ w_j(v) \geq k}} w_j(v)\text{,} \]
i.e., we sum up the weights of all voters with weight at least $k$.
Since the total weight increases by one in every step, we have
\begin{gather}
    \label{eq:fullfj1}
    \forall j. \; F_j(1) = j \text{, and} \\
    \label{eq:fullfj}
    \forall j, k.\;F_j(k) \leq j \text{.}
\end{gather}
If $F_j(k) < k$ for some $j$ and $k$, the maximum weight of any voter must be below $k$. 

If we look at a specific $k > 1$ in isolation, the sequence $(F_j(k))_j$ evolves as a Markov process initialized at $F_1(k) = 0$ and then governed by the rule
\begin{equation}
\label{eq:fdifference}
F_{m+1}(k) - F_m(k) = \begin{cases}
                         1 & \Prob = d \, \left( \frac{F_m(k)}{m} \right)^2 \\
                         k & \Prob = d \, \left( \left( \frac{F_m(k-1)}{m} \right)^2 - \left( \frac{F_m(k)}{m} \right)^2 \right) \\
                         0 & \text{else}
                         \end{cases}\text{.}
\end{equation}
In the first case, both potential delegations of a new delegator lead to voters who already had weight at least $k$.
We must thus give her vote to one of them, increasing $F_m(k)$ by one.
In the second case, a new delegator offers two delegations leading to voters of weight at least $k - 1$, at least one of which has exactly weight $k - 1$.
Our greedy algorithm will then choose a voter with weight $k - 1$.
Because this voter is suddenly counted in the definition of $F_j(k)$, $F_m(k)$ increases by $k$.
Finally, if a new voter appears, or if a new delegator can transitively delegate to a voter with weight less than $k - 1$, then $F_m(k)$ does not change.

In order to bound the maximum weight of a voter, we first need to get a handle on the general distribution of weights.
For this, we define a sequence of real numbers $(\alpha_k)_k$ such that, for every $k \geq 1$, the sequence $\frac{F_j(k)}{j}$ converges in probability to $\alpha_k$.
Set $\alpha_1 \coloneqq 1$. For every $k > 1$, let $\alpha_k$ be the unique root $0 < x < \alpha_{k-1}$ of the polynomial
\begin{equation}
    \label{eq:alphadef}
    a_k(x, p) \coloneqq d \, x^2 + k \, d \, (p^2 - x^2) - x
\end{equation}
for $p$ set to $\alpha_{k-1}$.\footnote{The equation $0 = a_k(x, p)$ can be obtained from \cref{eq:fdifference} by na\"ively assuming that $\frac{F_{j}(k - 1)}{j}$ converges to a value $p$ and $\frac{F_j(k)}{j}$ converges to $x$, then plugging these values in the expectation of the recurrence.}
Since $a_k(0, \alpha_{k-1}) > 0$ and $a_k(\alpha_{k-1}, \alpha_{k-1}) < 0$, such a solution exists by the intermediate value theorem.
Because the polynomial is quadratic, such a solution must be unique in the interval.
It follows that the $\alpha_k$ form a strictly decreasing sequence in the interval $(0, 1]$.

The sequence $(\alpha_k)_k$ converges to zero, and eventually does so very fast.
However, this is not obvious from the definition and, depending on $d$, the sequence can initially fall slowly.
In \cref{lem:alphatozero}, we demonstrate convergence to zero, and in \cref{lem:alphatozerofast}, we show that the sequence falls in $\mathcal{O}(k^{-2})$.
Based on this, in \cref{lem:3.2}, we choose an integer $k_0$ such that the sequence falls very fast from there.
In the same lemma, we define a more nicely behaved sequence $(f(k))_{k \geq k_0}$ that is a strict upper bound on $(\alpha_k)_{k \geq k_0}$ and that is contained between two doubly-exponentially decaying functions.

\begin{restatable}{lem}{lemthreeone}
\label{lem:3.1}
For all $k \geq 1$, $\e > 0$ and functions $\omega(m)$ such that $\omega(m) \to \infty$ and $\omega(m) < m$ (for sufficiently large $m$),
\[ \Prob \left[\exists j,\; \omega(m) \leq j \leq m.\;\frac{F_j(k)}{j} > \alpha_k + \e{} \right] \to 0\text{.} \]
\end{restatable}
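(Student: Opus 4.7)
The plan is induction on $k$. The base case $k=1$ is immediate: $F_j(1)=j$ holds deterministically by \eqref{eq:fullfj1}, so $F_j(1)/j=1=\alpha_1$ and the bad event is empty.

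For the inductive step, fix $k\ge 2$ and $\e>0$; set $\Phi_j := F_j(k)/j$ and $P_j := F_j(k-1)/j$. A direct computation from \eqref{eq:fdifference} gives
\[ \mathbb{E}[\Phi_{j+1}-\Phi_j\mid \mathcal{F}_j] = \frac{a_k(\Phi_j, P_j)}{j+1}, \]
with one-step change bounded in magnitude by $k/(j+1)$. The polynomial $a_k(\cdot,\alpha_{k-1})$ is a downward parabola (leading coefficient $d(1-k)<0$) vanishing at $\alpha_k$, hence is strictly negative on $(\alpha_k,\infty)$. By continuity, choose $\e',c>0$ such that $a_k(x,p)\le -c$ whenever $x\ge\alpha_k+\e/2$ and $p\le\alpha_{k-1}+\e'$.

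Next, apply the inductive hypothesis at level $k-1$ with tolerance $\e'$ and a slower-growing threshold $\omega'(m):=\lfloor\sqrt{\omega(m)}\rfloor$, which still tends to infinity. This produces an event $\mathcal{G}$ with $\Prob[\mathcal{G}]\to 1$ on which $P_j\le\alpha_{k-1}+\e'$ for every $j\in[\omega'(m),m]$. Decompose $\Phi_j = \Phi_{\omega'(m)} + A_j + N_j$ into its predictable cumulative drift $A_j$ and a zero-mean martingale $N_j$; the total conditional variance of $N$ on $[\omega'(m),m]$ is $O(k^2/\omega'(m))$, so Doob's maximal inequality yields $\sup_{\omega'(m)\le j\le m}|N_j-N_{\omega'(m)}| = o(1)$ on an event $\mathcal{M}$ of probability tending to $1$.

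The main obstacle is upgrading these local and aggregate bounds to a uniform bound on $\sup_{j\in[\omega(m),m]}\Phi_j$; the plan handles this by contradiction on $\mathcal{G}\cap\mathcal{M}$. Let $T\in[\omega(m),m]$ be the first time with $\Phi_T>\alpha_k+\e$, if any, and let $T_0\in[\omega'(m),T)$ be the most recent time with $\Phi_{T_0}\le\alpha_k+\e/2$. If such a $T_0$ exists, then $A_T-A_{T_0}\le 0$ (the drift is nonpositive throughout $[T_0,T]$) and $\Phi_T-\Phi_{T_0}\ge\e/2-k/T$, so $N_T-N_{T_0}\ge\e/2-o(1)$, contradicting $\mathcal{M}$. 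If no such $T_0$ exists, then $\Phi_j>\alpha_k+\e/2$ throughout $[\omega'(m),T]$, forcing $A_T-A_{\omega'(m)}\le -c\,\log(T/\omega'(m))\to-\infty$; combined with $\Phi_{\omega'(m)}\le P_{\omega'(m)}\le\alpha_{k-1}+\e'\le 1$ and $|N_T-N_{\omega'(m)}|=o(1)$, this forces $\Phi_T<0$, again a contradiction. Hence no such $T$ exists, so $\Phi_j\le\alpha_k+\e$ throughout $[\omega(m),m]$ on $\mathcal{G}\cap\mathcal{M}$, completing the induction.
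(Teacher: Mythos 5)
Your proof is correct, but it takes a genuinely different route from the paper's. The paper also inducts on $k$ and conditions on the level-$(k-1)$ event, but it then works with $F_j(k)$ itself: on the region where $F_j(k)/j \ge \alpha_k$ it stochastically dominates the increments by i.i.d.\ variables $\eta_i'$ with mean $\alpha_k + \e/4$, uses a Chernoff bound to show the process drops below $(\alpha_k+\e/2)j$ before time $\omega(m)$, and then bounds the probability of ever re-crossing $(\alpha_k+\e)j$ by a union over infinitely many restarted random-walk excursions (indexed by stopping times $\rho_i$), each controlled by Hoeffding's inequality. You instead pass to $\Phi_j = F_j(k)/j$, observe the exact identity $\Exp{\Phi_{j+1}-\Phi_j \mid \mathcal{F}_j} = a_k(\Phi_j,P_j)/(j+1)$ (which is correct, and is a nice way to see where the defining polynomial $a_k$ comes from), and run a Doob decomposition: the predictable drift is $\le -c/(j+1)$ whenever $\Phi_j \ge \alpha_k+\e/2$ and $P_j\le\alpha_{k-1}+\e'$, while the martingale part is uniformly $o(1)$ past $\omega'(m)$ by the $L^2$ maximal inequality, since the increments are $O(k/j)$. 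The pathwise contradiction at the first crossing time $T$ then replaces the paper's excursion bookkeeping entirely. Your approach buys a cleaner, more conceptual argument that avoids both the explicit i.i.d.\ domination and the sum over excursions; what it gives up is the exponential tail that the paper's Chernoff/Hoeffding machinery produces (Chebyshev/Doob only yields a polynomial rate $O(k^2/(\omega'(m)\lambda^2))$), but only the qualitative ``with high probability'' statement is needed where \cref{lem:3.1} is invoked, so nothing downstream is lost. One cosmetic slip: in your first case the drift is not nonpositive on all of $[T_0,T]$ --- at $j=T_0$ you may have $\Phi_{T_0}<\alpha_k$ and hence a positive one-step drift --- but that single increment is $O(1/\omega'(m))=o(1)$, so the contradiction $N_T-N_{T_0}\ge \e/2-o(1)$ survives unchanged.
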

\begin{proof}[Proof sketch (detailed in \cref{pf:3.1})]
    The proof proceeds by induction on $k$.
    For $k = 1$, the claim directly holds.
    For larger $k$, we use a suitably chosen $\delta$ in place of $\e$ and $\omega_0$ in place of $\omega$ for the induction hypothesis.
    With the induction hypothesis, we bound the $\frac{F_{m}(k - 1)}{m}$ term in the recurrence in \cref{eq:fdifference}.
    Furthermore, all steps $F_j(k) - F_{j - 1}(k)$ where $\frac{F_{j - 1}(k)}{j - 1} \geq \alpha_k$ holds can be dominated by independent and \emph{identically distributed} random variables $\eta_j'$.

    Denote by $\pi$ the first point $j \geq \omega_0(m)$ such that $\frac{F_j(k)}{j} \leq \alpha_k + \frac{\e{}}{2}$.
    The $\eta_j'$ then dominate all steps $F_j(k) - F_{j - 1}(k)$ for $\omega_0(m) < j \leq \pi$.
    Using Chernoff's bound and suitably chosen $\delta$ and $\omega_0$, we show that, with high probability, $\pi \leq \omega(m)$.

    Because of this, if $\frac{F_j(k)}{j} > \alpha_k + \e$ for some $j \geq \omega(m)$, the sequence $\left(\frac{F_j(k)}{j}\right)_j$ must eventually cross from below $\alpha_k + \frac{\e{}}{2}$ to above $\alpha_k + \e$ without in between falling below $\alpha_k$.
    On this segment, we can overapproximate the sequence by a random walk with steps distributed as $\eta_j'$.
    Since the sequence might previously fall below $\alpha_k$ an arbitrary number of times, we overapproximate the probability of ever crossing $\alpha_k + \e$ for $j \geq \omega(m)$ by a sum over infinitely many random walks.
    This sum converges to $0$ for $m \to \infty$, which shows our claim. 
\end{proof}

The above lemma gives us a good characterization of the behavior of $(F_j(k))_j$ for any fixed $k$ (and large enough $j$).
To prove an upper bound on the maximum weight, however, we are ultimately interested in statements about $F_j(k(m))$, where $k(m) \in \Theta(\log_2 \log m)$ and the range of $j$ varies with $m$.
In order to obtain such results, we will first show in \cref{lem:3.3} that whole ranges of $k$ simultaneously satisfy bounds with high probability.

As in the previous lemma, we can only show our bounds with high probability for $j$ past a certain period of initial chaos.
Taking a role similar to $\omega(m)$ in \cref{lem:3.1}, we will define a function $\phi(m, k)$ that gives each $k$ a certain amount of time to satisfy the bounds, depending on $m$:
Let $\rho(m) \coloneqq (\log \log m)^{\frac{1}{3}}$ and define $\phi(m,k) \coloneqq \rho(m) \, C^{2^{k+1}}$, where $C$ is an integer sufficiently large such that
\begin{equation}
    \label{eq:defcapc}
    \log C > \text{max} \left(1, \; c_1, \; \log \left( \frac{2}{1-d} \right) + \frac{c_1}{2} \right).
\end{equation}
In the above, $c_1$ is a postive constant defining the lower bound on $f(k)$ in \cref{lem:3.2}.

Additionally, let $k_*(m)$ be the smallest integer such that
\begin{equation}
    \label{eqn:kstar}
    C^{2^{k_*(m)+1}} \geq \sqrt{m}.
\end{equation}
Note that $C^{2^{k_*(m)+1}} < m$ because increasing the double exponent in increments of $1$ is equivalent to squaring the term.
By applying logarithms to $C^{2^{k_*(m)+1}} \geq \sqrt{m}$ and $C^{2^{k_*(m)+1}} < m$, we obtain $\log_2 \log_C m - 2 \leq k_*(m) < \log_2 \log_C m - 1$, from which it follows that $k_*(m) = \log_2 \log m + \Theta(1)$.

\begin{restatable}{lem}{lemthreethree}
\label{lem:3.3}
With high probability, for all $k_0 \leq k \leq k_*(m)$, and for all $\phi(m,k) \leq j \leq m$, \[ \frac{F_j(k)}{j} \leq f(k)\text{.} \]
\end{restatable}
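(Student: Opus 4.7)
The plan is to argue by induction on $k$ from $k_0$ to $k_*(m)$. For each $k$, let $E_k$ denote the event that $\frac{F_j(k)}{j} \leq f(k)$ for all $\phi(m,k) \leq j \leq m$, and conclude with a union bound. Since $k_*(m) - k_0 + 1 = \Theta(\log \log m)$, it suffices to show that $\Prob[E_k^c \mid E_{k-1}] = o(1/\log \log m)$ uniformly in $k$.

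For the base case $k = k_0$, the claim follows from \cref{lem:3.1} applied with $\omega(m) = \phi(m,k_0)$, which tends to infinity thanks to the factor $\rho(m) = (\log \log m)^{1/3}$, and with $\e > 0$ chosen small enough that $\alpha_{k_0} + \e \leq f(k_0)$; such an $\e$ exists because $f(k_0) > \alpha_{k_0}$ by \cref{lem:3.2}.

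For the inductive step, condition on $E_{k-1}$. Because $\phi(m,k) = C^{2^k}\,\phi(m,k-1) \gg \phi(m,k-1)$, the hypothesis $E_{k-1}$ controls $\frac{F_j(k-1)}{j} \leq f(k-1)$ throughout the window $[\phi(m,k-1), m] \supseteq [\phi(m,k),m]$ relevant to $E_k$. Substituting this bound into \cref{eq:fdifference} shows that, whenever $\frac{F_j(k)}{j} \geq f(k)$, the increments $F_{j+1}(k) - F_j(k)$ are stochastically dominated by i.i.d.\ variables $\tilde \eta$ that take value $1$ with probability $d\,f(k)^2$, value $k$ with probability $d\,(f(k-1)^2 - f(k)^2)$, and $0$ otherwise. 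The defining inequality of $f$ from \cref{lem:3.2} (namely, $f(k)$ strictly exceeds the fixed point $\alpha_k$ of $a_k(\,\cdot\,, f(k-1))$) guarantees $\mathbb{E}[\tilde\eta] < f(k)$, so these dominators exhibit a strictly negative drift past $f(k)$.

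The remainder of the argument mirrors the strategy of \cref{lem:3.1}. First, a Chernoff bound applied to the dominator sum over the window $[\phi(m,k-1),\phi(m,k)]$ shows that, with probability $1 - o(1/\log\log m)$, the ratio $\frac{F_{\phi(m,k)}(k)}{\phi(m,k)}$ falls strictly below $f(k)$ by some margin $\delta > 0$. Any subsequent failure of $E_k$ then corresponds to a first crossing of the ratio from below $f(k) - \delta/2$ to above $f(k)$, which we overapproximate by a random walk with steps distributed as $\tilde\eta$; summing geometrically over countably many possible crossings yields the desired bound. The main obstacle, and the reason for the intricate choice of $\phi$, is calibration: the variance of $\tilde\eta$ scales like $k^2 f(k-1)^2$, while the gap to overcome scales like $f(k)$, both of which decay doubly exponentially in $k$. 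Getting $o(1/\log\log m)$ failure therefore requires $\phi(m,k)$ to grow at least doubly exponentially in $k$, and the choice $\phi(m,k) = \rho(m)\,C^{2^{k+1}}$ with $C$ satisfying \cref{eq:defcapc} is precisely what makes both the preliminary Chernoff estimate and the geometric crossing sum beat $\log\log m$ for every $k \in [k_0, k_*(m)]$, allowing the final union bound to close.
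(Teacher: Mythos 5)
Your proposal is correct in outline and shares the paper's skeleton: induction over $k$ with the base case supplied by \cref{lem:3.1} applied at $\omega(m)=\phi(m,k_0)$ and $\e=f(k_0)-\alpha_{k_0}$, per-level failure events combined by a union bound over the $\Theta(\log\log m)$ values of $k$, and the doubly-exponential $\phi$ doing the calibration. Where you diverge is the inductive step. The paper avoids the random-walk/crossing machinery of \cref{lem:3.1} entirely: on $\mathcal{G}_k$ each increment of $F(k+1)$ is at most $k+1$ and nonzero with probability at most $d\,f(k)^2$ \emph{unconditionally}, so the sum of increments up to time $j$ is dominated by $(k+1)\cdot\mathrm{Binom}(j-\phi(m,k),\,d\,f(k)^2)$ with mean at most $d\,f(k+1)\,j$; a single Chernoff bound per $j$, summed geometrically over $j\geq\phi(m,k+1)$, then suffices, and the slack $1-d$ absorbs both the multiplicative Chernoff margin $\gamma=\frac{d+1}{2d}$ and the initial-segment term $\phi(m,k)/\phi(m,k+1)=C^{-2^{k+1}}$. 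Your finer two-level dominator $\tilde\eta$ buys nothing here --- its mean is $d\,\theta^2(1-k)+d\,f(k)\leq d\,f(k)$ for \emph{any} domination threshold $\theta$, including $\theta=0$ --- so the burn-in-plus-crossing argument you import from \cref{lem:3.1} is heavier than needed. It would still go through, but note one imprecision: as stated, your domination holds only when $F_j(k)/j\geq f(k)$, yet the crossing segment you analyze starts below $f(k)-\delta/2$, where that domination does not apply; you would need to set the threshold at $f(k)-\delta/2$ (or at $0$), which is harmless precisely because the drift bound above is threshold-independent. Your calibration remarks (variance of order $k^2 f(k-1)^2$ versus a gap of order $(1-d)f(k)$, forcing $\phi(m,k)$ to exceed roughly $k^2/f(k)^2$ and hence requiring $\log C>c_1$ as in \cref{eq:defcapc}) are consistent with what the paper's computation actually uses.
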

\begin{proof}[Proof sketch (detailed in \cref{pf:3.3})]
Let $\mathcal{G}_k$ be the event
\[ \mathcal{G}_k \coloneqq \left\{\forall j, \phi(m,k) \leq j \leq m. \; \frac{F_j(k)}{j} \leq f(k) \right\}. \]
Our goal is to show that $\mathcal{G}_k$ holds for all $k$ in our range.
Similarly to an induction, we begin by showing $\mathcal{G}_{k_0}$ with high probability and then give evidence for how, under the assumption $\mathcal{G}_{k}$, $\mathcal{G}_{k + 1}$ is likely to happen.
Instead of an explicit induction, we piece together these parts in a union bound.

    The base case $\mathcal{G}_{k_0}$ follows from \cref{lem:3.1} with $\omega(m) \coloneqq \phi(m, k_0)$ and $\e \coloneqq f(k_0) - \alpha_{k_0}$.

    For the step, fix some $k \geq k_0$, and assume $\mathcal{G}_k$.
    We want to give an upper bound on the probability that $\mathcal{G}_{k + 1}$ happens.
We split this into multiple substeps:
    First, we prove that, given $\mathcal{G}_k$, some auxiliary event $\mathcal{E}(k + 1)$ happens only with probability converging to 0.
    Then, we show that $\overline{\mathcal{E}(k + 1)} \subseteq \mathcal{G}_{k + 1}$ where $\overline{\mathcal{E}}$ denotes the complement of an event $\mathcal{E}$.
    This means that, whenever the unlikely event does not take place, $\mathcal{G}_{k + 1}$ holds.
This allows the step to be repeated.

    If $\mathcal{G}_k$ does not hold for any $k_0 \leq k \leq k_*(m)$, then $\overline{\mathcal{G}_{k_0}}$ or one of the $\mathcal{E}(k)$ must have happened.
    The union bound converges to zero for $m \to \infty$, proving our claim.
\end{proof}

As promised, the last lemma enables us to speak about the behavior of $F_j(k(m))$. We will use a sequence of such statements to show that, with high probability, $F_j(k(m))$ for some $k(m)$ does not change over a whole range of $j$:
\begin{restatable}{lem}{lemthreesix}
\label{lem:3.6}
There exists $M > 0$ and an integer $r > 0$ such that, for $j_0(m) \coloneqq (\log \log m)^M$,
$F_m(k_*(m)+r) = F_{j_0(m)}(k_*(m)+r)$
    holds with high probability. In addition, there is $\beta > \frac{1}{2}$ such that, with high probability,
    \begin{equation}
        \label{eq:3.6add}
        F_{j_0(m)}(k_*(m) + r - 1) \leq j_0(m)^{1 - \beta}\text{.}
    \end{equation}
\end{restatable}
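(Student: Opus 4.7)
The plan is to establish the two claims in sequence, treating \eqref{eq:3.6add} first and then bootstrapping from it to prove the constancy statement $F_m(k_*(m)+r) = F_{j_0(m)}(k_*(m)+r)$.

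For \eqref{eq:3.6add}, I would apply \cref{lem:3.3} with $m$ replaced by the smaller time scale $m' = j_0(m) = (\log \log m)^M$; since $j_0(m) \to \infty$ as $m \to \infty$, the ``with high probability'' conclusion is preserved. This yields
\[ \frac{F_{j_0(m)}(k_*(j_0(m)))}{j_0(m)} \leq f(k_*(j_0(m))). \]
Since $k_*(j_0(m)) \approx \log_2 \log \log \log m$ grows far more slowly than $k_*(m) \approx \log_2 \log m$, for any fixed $r \geq 1$ and sufficiently large $m$ we have $k_*(m) + r - 1 \geq k_*(j_0(m))$, and the monotonicity of $F_j(\cdot)$ in its second argument gives $F_{j_0(m)}(k_*(m) + r - 1) \leq j_0(m) \cdot f(k_*(j_0(m)))$. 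Substituting the doubly exponential bound on $f$ from \cref{lem:3.2}, together with $2^{k_*(j_0(m))} = \Omega(\log j_0(m))$, turns this into a polynomial bound of the form $j_0(m)^{1-\beta_0}$ for some explicit $\beta_0 > 0$. If $\beta_0 > 1/2$ we are done; otherwise I would iterate the step argument of \cref{lem:3.3} a bounded number of additional levels past $k_*(j_0(m))$, each iteration roughly squaring the smallness (since $f(k+1)$ is, up to constants, the square of $f(k)$), and choose $r$ large enough to cover these extra levels so that $\beta > 1/2$.

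For the constancy claim, let $K = k_*(m) + r$. By \eqref{eq:fdifference}, the one-step conditional probability that $F_{j+1}(K) > F_j(K)$ is at most $d(F_j(K-1)/j)^2$, so by a union bound it suffices to show
\[ \sum_{j=j_0(m)}^{m-1} d \left(\frac{F_j(K-1)}{j}\right)^2 \;\longrightarrow\; 0 \text{ in probability.} \]
I would split this sum at the threshold $\phi(m, k_*(m))$. On the upper segment $[\phi(m, k_*(m)), m]$, \cref{lem:3.3} applied at scale $m$, together with $F_j(K-1) \leq F_j(k_*(m))$, gives $F_j(K-1)/j \leq f(k_*(m))$; the doubly exponential smallness of $f(k_*(m))$ together with a sufficiently large $r$ renders the contribution $m \cdot d \cdot f(k_*(m))^2$ negligible. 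On the lower segment $[j_0(m), \phi(m, k_*(m)))$, \cref{lem:3.3} at scale $m$ is silent, so I would either invoke \cref{lem:3.3} at a geometric sequence of intermediate scales $m_1 < m_2 < \cdots$ linking $j_0(m)$ to $\phi(m, k_*(m))$ and use the appropriate bound on each sub-interval, or bootstrap from the Step~1 estimate by dominating $F_j(K-1)$ by a process whose expected increment is controlled by $f$ and applying a Chernoff-style concentration bound.

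The main obstacle is precisely this lower-range sum: \cref{lem:3.3} at scale $m$ offers no information for $j < \phi(m, k_*(m))$, and the Step~1 bound applies only at the single time $j = j_0(m)$. Propagating the bound forward in $j$ -- either by a multi-scale union bound or by a coupling/martingale argument -- is the core technical task, and its success depends crucially on the doubly exponential decay from \cref{lem:3.2}, which provides enough slack for polynomial-size union-bound errors to remain summable across the logarithmically long interval.
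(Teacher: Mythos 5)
There is a genuine gap, and it is in the constancy claim rather than where you locate it. On the upper segment $[\phi(m,k_*(m)),m]$ you bound $F_j(K-1)/j$ by $f(k_*(m))$ via monotonicity in $k$ and \cref{lem:3.3}, and assert that $m\cdot d\cdot f(k_*(m))^2$ is negligible ``together with a sufficiently large $r$.'' But $r$ cannot help here: monotonicity only lets you descend to level $k_*(m)$, because \cref{lem:3.3} is silent for $k>k_*(m)$ (the window $\phi(m,k_*(m)+1)\geq m$ is empty), so the bound you actually have is $f(k_*(m))\approx m^{-\beta'}$ with $\beta'=c_2/(2^{k_0+2}\log C)$, a constant that is \emph{not} guaranteed to exceed $\tfrac12$. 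Hence $m\cdot f(k_*(m))^2\approx m^{1-2\beta'}$ need not vanish. The same deficiency afflicts both of your proposed fixes for the lower segment: a multi-scale union bound over intermediate scales again only delivers the raw exponent $\beta'$ (or $\beta_0=\tfrac23\beta'$ as in \cref{lem:beta0}), so $\sum_j d\,j^{-2\beta_0}$ still need not converge; and invoking \cref{lem:3.3} at a growing number of scales simultaneously is not licensed by a bare ``with high probability'' statement without quantitative failure probabilities. The missing ingredient is precisely the paper's \cref{lem:3.5}: a bootstrapping step that, given $F_j(k)/j\leq j^{-\beta}$ \emph{uniformly over} $j\in[(\log\log m)^{M_1},m]$, dominates the increments of $F_j(k+1)$ by Bernoulli/Poisson variables of mean $d\,j^{-2\beta}$, sums them to $O(j^{1-2\beta})$, and concludes $F_j(k+1)/j\leq j^{-\frac32\beta}$ on a slightly shorter range. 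Iterating this a constant number of times (this is what fixes $r$) pushes the exponent past $\tfrac12$ on the whole range $[j_0(m),m]$, after which your union bound becomes the over-harmonic tail $d\sum_{i\geq j_0(m)}i^{-2\beta}=o(1)$ and the constancy claim follows.

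Your treatment of \cref{eq:3.6add} is closer to the mark --- anchoring \cref{lem:3.3} at the scale $j_0(m)$ and using monotonicity in $k$ to bridge the unbounded gap between $k_*(j_0(m))$ and $k_*(m)+r-1$ does yield $F_{j_0(m)}(k_*(m)+r-1)\leq j_0(m)^{1-\beta_0}$, essentially the paper's \cref{lem:beta0} evaluated at one time point. But your fallback for $\beta_0\leq\tfrac12$, ``iterate the step argument of \cref{lem:3.3} past $k_*(j_0(m))$,'' does not go through as stated: that step bounds $F_j(k+1)/j$ by roughly $(k+1)f(k)^2+\phi(\cdot,k)/j$ with $f(k)$ a \emph{constant}, and there is no room for further levels at the right endpoint $j=j_0(m)$ of its own scale. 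The squaring you want only materializes if the hypothesis is a $j$-dependent bound $j^{-\beta}$ holding on a whole range, which is again the content of \cref{lem:3.5}, not of \cref{lem:3.3}. So both halves of the lemma hinge on the same boosting mechanism that your proposal gestures at but does not supply.
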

\begin{proof}[Proof sketch (detailed in \cref{pf:3.6})]
    In \cref{lem:beta0}, we finally get a statement about $F_j(k_*(m))$:
    By choosing different $k$ for different $j$ in \cref{lem:3.3}, we obtain a constant $\beta_0 > 0$ such that, with high probability,
    \[ \forall j, \log \log m \leq j \leq m. \; \frac{F_j(k_*(m))}{j} \leq j^{- \beta_0}\text{.} \] 
    
    We now increase $\beta_0$ until it is larger than $\frac{1}{2}$.
    Set $r'_0 \coloneqq 0$ and $M_0 \coloneqq 1$.
    In \cref{lem:3.5}, we boost a proposition of the form
    \[ \forall j, (\log \log m)^{M_i} \leq j \leq m. \; \frac{F_j(k_*(m) + r'_i)}{j} \leq j^{- \beta_i} \]
    holding with high probability to obtain, for some $M_{i + 1} > 0$ and with high probability,
    \[ \forall j, (\log \log m)^{M_{i + 1}} \leq j \leq m. \; \frac{F_j(k_*(m) + r'_i + 1)}{j} \leq j^{- \frac{3}{2}\,\beta_i}\text{.} \]
    If we set $r'_{i + 1} \coloneqq r'_i + 1$ and $\beta_{i + 1} \coloneqq \frac{3}{2} \, \beta_i$, we can repeatedly apply this argument until some $\beta_i > \frac{1}{2}$.
    Let $M$, $r'$ and $\beta$ denote $M_i$, $r'_i$ and $\beta_i$, respectively, for this $i$.
    If, furthermore, $r \coloneqq r' + 1$, \cref{eq:3.6add} follows as a special case.

    We then simply union-bound the probability of $F_j(k_*(m) + r)$ increasing for any $j$ between $j_0(m)$ and $m$.
    Using the above over-approximation in \cref{eq:fdifference} gives us an over-harmonic series, whose value goes to zero with $m \to \infty$.
\end{proof}

We can now prove \cref{thm:thm1}. Let $Q_i$ denote the maximum weight after $i$ time steps.
\begin{proof}[Proof of \cref{thm:thm1}]
By \cref{lem:3.6}, with high probability, $F_m(k_*(m)+r) = F_{j_0(m)}(k_*(m)+r)$. Therefore, we have that with high probability
\begin{align*}
    F_m(k_*(m)+r) &= F_{j_0(m)}(k_*(m)+r) &\\
    & \leq F_{j_0(m)}(k_*(m)+r-1) & \text{(by monotonicity)} \\
    & \leq j_0(m)^{1-\beta} & \text{(by \cref{eq:3.6add})} \\
    & = \left((\log \log m)^M\right)^{1-\beta} &\\
    & \leq (\log \log m)^{M + 1}.&
\end{align*}

For any $j$ and $k$, $Q_j\leq \max \{k, F_j(k)\}$.
Since, for large enough $m$, $k_*(m) + r < (\log \log m)^{M + 1}$, the maximum weight $Q_m$ is at most $(\log \log m)^{M + 1}$ with high probability. This result holds for general $m$, so we are allowed to plug in $j_0(m)$ for $m$.
Then, $Q_{j_0(m)} \leq \left(\log \log j_0(m) \right)^{M + 1}$. Moreover, $\left(\log \log j_0(m) \right)^{(M + 1)^2} < j_0(m)$ for sufficiently large $m$ because $M$ is a constant and polylogarithmic terms grow asymptotically slower than polynomial terms. Rewriting this yields 
\begin{equation}
\label{eq:jeefa}
Q_{j_0(m)} \leq \left(\log \log j_0(m) \right)^{M + 1} < j_0(m)^{1/(M + 1)}.
\end{equation}

Now, note that $k_*(m) + r \geq \left( j_0(m)^{1/(M + 1)} \right)$ for large enough $m$. Therefore, \cref{eq:jeefa} implies that, with high probability, a graph generated in $j_0(m)$ time steps has no voters of weight $k_*(m) + r$ or higher. In other words, with high probability, $F_{j_0(m)}(k_*(m)+r) = 0$, so with high probability $F_{m}(k_*(m)+r) = 0$ (again by Lemma~\ref{lem:3.6}). 
    This means that the maximum weight after $m$ time steps is also upper-bounded by $k_*(m)+r = \log_2 \log m + \Theta(1)$.
\end{proof}

\subsection{Bounds on $(\alpha_k)_k$}
\begin{lem}
\label{lem:alphatozero}
\[ \alpha_{\infty} \coloneqq \lim_{k \to \infty} \alpha_k = \alpha_{\infty} = 0\]
\end{lem}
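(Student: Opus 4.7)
The plan is to first show that $(\alpha_k)_k$ converges (which is immediate from its construction), and then to rule out a strictly positive limit by deriving a contradiction with the divergence of the harmonic series.

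Since $(\alpha_k)_k$ is strictly decreasing in $(0,1]$ and bounded below by $0$, it converges to some $\alpha_\infty \in [0, 1)$. To get a workable recurrence, I would rearrange the defining equation $a_k(\alpha_k, \alpha_{k-1}) = 0$, i.e., $d\alpha_k^2 + kd(\alpha_{k-1}^2 - \alpha_k^2) - \alpha_k = 0$, to isolate the squared gap:
\[ \alpha_{k-1}^2 - \alpha_k^2 = \frac{\alpha_k(1 - d\alpha_k)}{kd}. \]
This one-line algebraic identity does essentially all the work; everything else is qualitative.

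Now suppose for contradiction that $\alpha_\infty > 0$. Since $d \in (0,1)$ and $\alpha_\infty \leq \alpha_1 = 1$, the quantity $\alpha_\infty(1 - d\alpha_\infty)$ is strictly positive, so by continuity there exist a constant $c > 0$ and an index $k_0$ such that $\alpha_k(1 - d\alpha_k) \geq c$ for every $k \geq k_0$. Substituting into the identity above and telescoping from $k_0$ to an arbitrary $N > k_0$ yields
\[ \alpha_{k_0 - 1}^2 - \alpha_N^2 \;=\; \sum_{k=k_0}^N (\alpha_{k-1}^2 - \alpha_k^2) \;\geq\; \frac{c}{d}\sum_{k=k_0}^N \frac{1}{k}. \]
Letting $N \to \infty$, the right-hand side diverges while the left-hand side remains bounded above by $\alpha_{k_0-1}^2 \leq 1$. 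This contradiction forces $\alpha_\infty = 0$.

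I do not anticipate a serious obstacle: the approach reduces the lemma to the observation that if $\alpha_k$ stayed bounded away from zero, the identity would force $\alpha_{k-1}^2 - \alpha_k^2$ to be of order $1/k$, whose sum diverges and is thus incompatible with $\alpha_k^2$ being a convergent (hence Cauchy) sequence. The only step requiring a hint of care is the passage from $\alpha_\infty > 0$ to the uniform lower bound $\alpha_k(1 - d\alpha_k) \geq c$, which is routine since $x \mapsto x(1 - dx)$ is continuous and positive on a neighborhood of $\alpha_\infty$.
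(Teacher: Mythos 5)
Your proof is correct, and it reaches the conclusion by a cleaner route than the paper's. Both arguments are ultimately contradictions driven by the divergence of the harmonic series applied to a telescoping sum, but the decompositions differ. The paper telescopes the first-order differences $\alpha_k - \alpha_{k-1}$: it introduces the diagonal function $a(x) = a_k(x,x) = d\,x^2 - x$, takes its maximum $\varepsilon \le 0$ over $[\alpha_\infty, 1]$ via the extreme value theorem, bounds $\frac{\partial}{\partial x} a_k(x,\alpha_{k-1})$ from below by $-2k-1$, and then uses the mean value theorem to convert $a_k(\alpha_{k-1},\alpha_{k-1}) \le \varepsilon$ into $\alpha_k - \alpha_{k-1} < \varepsilon/(3k)$, so that $\varepsilon < 0$ would force $\alpha_\infty = -\infty$. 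You instead telescope the squares $\alpha_{k-1}^2 - \alpha_k^2$, for which the defining equation $a_k(\alpha_k,\alpha_{k-1}) = 0$ gives the \emph{exact} identity $\alpha_{k-1}^2 - \alpha_k^2 = \alpha_k(1-d\,\alpha_k)/(kd)$; note that the numerator is precisely $-a(\alpha_k)$, so the two proofs are estimating the same quantity, but your version needs no mean value theorem and no derivative bound --- only the observation that $\alpha_k(1-d\,\alpha_k)$ stays bounded away from $0$ if $\alpha_\infty > 0$. (That last step is even easier than your continuity argument suggests: $\alpha_k \ge \alpha_\infty$ and $1 - d\,\alpha_k \ge 1-d$ give the uniform bound $c = \alpha_\infty(1-d)$ for all $k$, with no need for a threshold $k_0$.) What your approach buys is brevity and the elimination of two analytic tools; what the paper's buys is the explicit quantity $\varepsilon$ and a bound on the increments $\alpha_k - \alpha_{k-1}$ themselves, though neither is reused elsewhere, so your version would serve equally well.
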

\begin{proof}
Set
\[ a(x) \coloneqq a_k(x, x) = d \, x^2 - x \text{,}\]
which is independent of $k$.
Since $a(x)$ is continuous, it must take on a maximum value $\e$ on the interval $[\alpha_{\infty}, 1]$ by the extreme value theorem.
Thus, $a(\alpha_k) \leq \e$ for all $k$.
It holds that $\e \leq 0$, where $\e = 0$ iff $a_{\infty} = 0$.
For some fixed $k > 1$ and for $x \in (\alpha_{\infty}, 1)$, consider
\begin{align*}
\frac{\mathrm{d}}{\mathrm{d}x} a_k(x, \alpha_{k-1}) &= - 2\,d\,x\,(k - 1) - 1 \\
&> - 2 \, k - 1\text{.}
\end{align*}
By the mean value theorem, $\frac{a_k(\alpha_{k}, \alpha_{k - 1}) - a_k(\alpha_{k - 1}, \alpha_{k - 1})}{\alpha_{k} - \alpha_{k - 1}} > -2\,k - 1$.
This is equivalent to $\alpha_{k} - \alpha_{k - 1} < \frac{a_k(\alpha_{k-1}, \alpha_{k - 1}) - a_k(\alpha_{k}, \alpha_{k - 1})}{2\,k + 1}$.
Since $a_k(\alpha_{k - 1}, \alpha_{k - 1}) = a(\alpha_{k-1}) \leq \e$ and $a_k(\alpha_{k}, \alpha_{k - 1}) = 0$ by definition, it follows that
\[ \alpha_{k} - \alpha_{k - 1} \leq \frac{\e}{2\,k + 1} < \frac{\e}{3\,k} \text{.}\]
Then,
    \[ \alpha_{\infty} = \alpha_1 + \sum_{k = 2}^{\infty}(\alpha_k - \alpha_{k - 1}) < 1 + \frac{\e}{3} \, \sum_{k=2}^{\infty} \frac{1}{k}\text{.}\]
If $\e < 0$, the harmonic series on the right-hand side makes the term diverge to negative infinity, which is a contradiction.
Thus, $\e = 0$ and $a_{\infty} = 0$.
\end{proof}

\begin{lem}
\label{lem:alphatozerofast}
\[ \alpha_k \in \mathcal{O}\left(\frac{1}{k^2}\right) \]
\end{lem}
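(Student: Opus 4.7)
The plan is to prove $\alpha_k \in \mathcal{O}(1/k^2)$ in two stages: first show that $k\,\alpha_k \to 0$, then bootstrap a simple induction.

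First I would rewrite the defining identity $a_k(\alpha_k, \alpha_{k-1}) = 0$ in the convenient form
\[
  \alpha_k\,(1 - d\,\alpha_k) \;=\; k\,d\,(\alpha_{k-1}^2 - \alpha_k^2),
\]
which immediately yields $\alpha_k \leq k\,d\,\alpha_{k-1}^2$ (from $\alpha_k \leq \alpha_k(1-d\alpha_k) + d\alpha_k^2 \leq kd\alpha_{k-1}^2$, or more directly from dropping the $-(k-1)d\alpha_k^2$ term) and, once $\alpha_k \leq 1/(2d)$, which holds eventually by \cref{lem:alphatozero}, also $\alpha_k \leq 2\,k\,d\,(\alpha_{k-1}^2 - \alpha_k^2)$.

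The main step is to show $\sum_{k} \alpha_k < \infty$. Summing the latter inequality from $K$ to $N$ and applying summation by parts to the telescoping terms $k\,(\alpha_{k-1}^2 - \alpha_k^2)$ gives
\[
  \sum_{k=K}^{N} \alpha_k \;\leq\; 2\,d\,K\,\alpha_{K-1}^2 \;+\; 2\,d \sum_{k=K}^{N-1} \alpha_k^2.
\]
Choosing $K$ large enough that $\alpha_{K-1} \leq 1/(4d)$ (again by \cref{lem:alphatozero}) and using $\alpha_k^2 \leq \alpha_{K-1}\,\alpha_k$ for $k \geq K-1$, the quadratic sum is dominated by $\frac{1}{2}\sum \alpha_k$ and can be absorbed into the left-hand side, giving $\sum_{k \geq K} \alpha_k \leq 4\,d\,K\,\alpha_{K-1}^2$, a bound independent of $N$. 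Since $(\alpha_k)_k$ is decreasing, this forces $k\,\alpha_k \leq 2\sum_{j=\lceil k/2 \rceil}^{k} \alpha_j \to 0$ as $k \to \infty$.

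With $k\,\alpha_k \to 0$ in hand, I would pick $k_0$ with $k_0\,\alpha_{k_0} \leq 1/(8d)$ and set $C \coloneqq k_0^2\,\alpha_{k_0} \leq k_0/(8d)$. A straightforward induction on $k \geq k_0$ then shows $k^2\,\alpha_k \leq C$: assuming $\alpha_k \leq C/k^2$, the step uses $\alpha_{k+1} \leq (k+1)\,d\,\alpha_k^2$ together with $(k+1)^3/k^4 \leq 8/k$ for $k \geq 1$ and $k \geq k_0 \geq 8\,d\,C$ to conclude $(k+1)^2\,\alpha_{k+1} \leq 8\,d\,C^2/k \leq C$. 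For $k < k_0$ the trivial bound $\alpha_k \leq 1$ gives $k^2\,\alpha_k \leq k_0^2$, so $\alpha_k \in \mathcal{O}(1/k^2)$ uniformly in $k$.

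The main obstacle is the summability step. The recurrence is only a pointwise relation between consecutive terms, and converting it into control on the full tail $\sum_{k \geq K} \alpha_k$ requires both the Abel summation trick (to make the coefficient $k$ telescope against $\alpha_{k-1}^2 - \alpha_k^2$) and the observation that the residual $\sum \alpha_k^2$ is \emph{linearly} dominated by the sum we are trying to bound once $\alpha_k$ is sufficiently small, so that it can be moved to the other side rather than handled term by term.
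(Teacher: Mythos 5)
Your proof is correct, but it takes a genuinely different route from the paper's. The paper works directly with the quadratic $a_k(x,\alpha_{k-1})$: it evaluates this polynomial at the test point $x=\left(\tfrac{k-1}{k}\right)^2\alpha_{k-1}$, uses the expansion $(k-1)^5-k^5=-5k^4+\mathcal{O}(k^3)$ together with $\alpha_{k-1}\to 0$ (\cref{lem:alphatozero}) to show the value is negative for large $k$, and concludes from the sign pattern of the quadratic that $\alpha_k<\left(\tfrac{k-1}{k}\right)^2\alpha_{k-1}$; telescoping this product immediately gives the $1/k^2$ rate. You instead pass through the intermediate statement $\sum_k\alpha_k<\infty$, obtained by Abel summation against the telescoping differences $\alpha_{k-1}^2-\alpha_k^2$ and an absorption of the residual $\sum\alpha_k^2$ into the left-hand side, from which monotonicity yields $k\alpha_k\to 0$; the final induction then needs only the crude bound $\alpha_k\le k\,d\,\alpha_{k-1}^2$ (which is the paper's own \cref{eq:alphaupper}). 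I checked the details — the summation by parts, the choice of $K$ with $\alpha_{K-1}\le 1/(4d)$, and the constants $8dC\le k_0$ in the induction step all work out. Your argument is somewhat longer but more elementary, avoiding the asymptotic expansion and limit computation, and it delivers the summability of $(\alpha_k)_k$ as a bonus; the paper's argument is more direct and gives the slightly sharper multiplicative form $\alpha_k<\left(\tfrac{k-1}{k}\right)^2\alpha_{k-1}$. Both arguments rely on \cref{lem:alphatozero} as a prerequisite, so neither is more self-contained than the other.
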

\begin{proof}
We will show that, for large enough $k > k'$, $\alpha_{k} < \left(\frac{k - 1}{k}\right)^2 \, \alpha_{k - 1}$.
Then, by induction, $\alpha_{k} \leq \left(\prod_{i=k' + 1}^{k} \left(\frac{i - 1}{i}\right)^2\right) \, \alpha_{k'} = \frac{1}{k^2} \, \left(k'\right)^2 \, \alpha_{k'} \in \mathcal{O}\left(\frac{1}{k^2}\right)$ for all $k \geq k'$.

Consider the value
\begin{align*}
    a_{k}\left(\left(\frac{k - 1}{k}\right)^2 \, \alpha_{k - 1}, \alpha_{k - 1} \right) &= - d \, (k - 1) \, \left(\frac{k - 1}{k}\right)^4 \, \alpha_{k - 1}^2 + d \, k \, \alpha_{k - 1}^2 - \left(\frac{k - 1}{k}\right)^2 \, \alpha_{k - 1} \\
    & = - d \, \alpha_{k - 1}^2 \, \left( \frac{(k - 1)^5 - k^5}{k^4} \right) - \left(\frac{k - 1}{k}\right)^2 \, \alpha_{k - 1} \\
    & = \alpha_{k - 1} \, \left(- d \, \alpha_{k - 1} \, \left( \frac{- 5 \, k^4 + \mathcal{O}\left(k^3\right)}{k^4} \right) - \left(\frac{k - 1}{k}\right)^2 \right)\text{.}
\end{align*}
Note that
\[ \lim_{k \to \infty} - d \, \alpha_{k - 1} \, \left( \frac{- 5 k^4 + \mathcal{O}\left(k^3\right)}{k^4} \right) - \left(\frac{k - 1}{k}\right)^2 = -d \, 0 \, (-5) - 1 = -1 \text{,}\]
where the limit of $\alpha_k$ has been shown in \cref{lem:alphatozero}.
Thus, for sufficiently large $k$, $a_{k}\left(\left(\frac{k - 1}{k}\right)^2 \, \alpha_{k - 1}, \alpha_{k - 1} \right) < 0$.
As mentioned right after the definition of $\alpha_k$, $a_k(0, \alpha_{k - 1}) > 0$ and $a_k(\alpha_{k - 1}, \alpha_{k - 1}) < 0$.
Since $a_k(x, \alpha_{k-1})$ is a quadratic polynomial in $x$, there can be no root in $\left[\left(\frac{k - 1}{k}\right)^2 \, \alpha_{k - 1}, \alpha_{k - 1}\right]$.
Therewith, $\alpha_k < \left(\frac{k - 1}{k}\right)^2 \, \alpha_{k - 1}$, as desired.
\end{proof}

\begin{lem}
\label{lem:3.2}
    There is a fixed integer $k_0$, a function $f$ with some starting value $f(k_0)$ and $f(k) \coloneqq k \cdot f(k-1)^2$ for $k > k_0$ and constants $c_1, c_2 > 0$ such that
    \begin{itemize}
        \item for all $k \geq k_0$, $f(k) > \alpha_k$ and
        \item for all $n \geq 0$,
            \[ \exp \left(-c_1 \, 2^n \right) \leq f(k_0 + n) \leq \exp \left(-c_2 \, 2^n\right)\text{.} \]
    \end{itemize}
\end{lem}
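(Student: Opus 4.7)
\medskip
\noindent\textbf{Proof proposal for \cref{lem:3.2}.}
The plan is to extract from the defining equation $a_k(\alpha_k,\alpha_{k-1})=0$ the polynomial inequality $\alpha_k \leq k\,\alpha_{k-1}^2$, which mirrors the recursion $f(k)=k\,f(k-1)^2$ almost exactly; this is the key observation that couples the two sequences. Rearranging the definition gives
\[
  \alpha_k + d(k-1)\alpha_k^2 \;=\; k\,d\,\alpha_{k-1}^2,
\]
and since $d(k-1)\alpha_k^2\geq 0$ and $d\leq 1$, we obtain $\alpha_k \leq k\,\alpha_{k-1}^2$. Consequently, as soon as we have $f(k_0)>\alpha_{k_0}$, a one-line induction yields
\[
  f(k) \;=\; k\,f(k-1)^2 \;>\; k\,\alpha_{k-1}^2 \;\geq\; \alpha_k
\]
for every $k>k_0$, so the first bullet reduces to choosing a suitable initial value.

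The second bullet is best analyzed after taking logarithms. Setting $g(n):=-\log f(k_0+n)$, the recursion becomes the affine equation $g(n)=2\,g(n-1)-\log(k_0+n)$, whose solution is
\[
  g(n) \;=\; 2^n\Bigl(g(0)\;-\;\sum_{i=1}^{n}\frac{\log(k_0+i)}{2^i}\Bigr).
\]
Since $\log(k_0+i)\geq 0$, the upper bound $g(n)\leq 2^n g(0)$ is immediate, giving $f(k_0+n)\geq \exp(-c_1\,2^n)$ with $c_1:=g(0)=-\log f(k_0)$. For the lower bound, note that the infinite series $S(k_0):=\sum_{i\geq 1}\log(k_0+i)/2^i$ converges, with $S(k_0)\leq \log(k_0+1)+O(1)$ as $k_0\to\infty$. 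Therefore $g(n)\geq (g(0)-S(k_0))\,2^n$, and this gives the desired upper bound $f(k_0+n)\leq \exp(-c_2\,2^n)$ with $c_2:=g(0)-S(k_0)$, provided this is strictly positive.

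It remains to choose $k_0$ and $f(k_0)$ so that both conditions are simultaneously satisfied: $f(k_0)>\alpha_{k_0}$ (for the first bullet) and $-\log f(k_0) > S(k_0)$ (to guarantee $c_2>0$, with $c_1>0$ being automatic once $f(k_0)<1$). By \cref{lem:alphatozerofast}, $\alpha_{k_0}=O(k_0^{-2})$, while $e^{-S(k_0)}$ is of order $1/k_0$ up to a constant. Hence, for all sufficiently large $k_0$, we may pick any $f(k_0)$ in the interval $\bigl(\alpha_{k_0},\;e^{-S(k_0)}\bigr)$, for instance $f(k_0):=\alpha_{k_0}^{2/3}$ or $f(k_0):=k_0^{-3/2}$.

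The main obstacle, modest in scope, is the interaction between the prefactor $k$ in the recursion and the logarithm that it introduces after taking logs: one must verify that the series $S(k_0)$ grows strictly more slowly than $-\log\alpha_{k_0}\asymp 2\log k_0$, so that the window for admissible $f(k_0)$ is nonempty. This is precisely what \cref{lem:alphatozerofast} provides, which is why that lemma had to be established first. Everything else is a straightforward induction and series manipulation.
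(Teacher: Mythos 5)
Your proposal is correct and follows essentially the same route as the paper's proof: both derive $\alpha_k \le k\,\alpha_{k-1}^2$ from the defining polynomial, run the same one-line induction, solve the logarithmic recursion explicitly, bound the series $\sum_{i\ge 1} 2^{-i}\log(k_0+i)$ by roughly $\log k_0 + 2$, and invoke \cref{lem:alphatozerofast} to make room for an admissible starting value. The only difference is cosmetic — the paper fixes $f(k_0) \coloneqq \alpha_{k_0} + \tfrac{1}{2e^2 k_0}$ explicitly, whereas you argue abstractly that the window $\bigl(\alpha_{k_0},\, e^{-S(k_0)}\bigr)$ is nonempty for large $k_0$.
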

\begin{proof}
Choose $k_0$ such that $\alpha_{k_0} < \frac{1}{2\,e^2\, k_0}$.
We can do so because, by \cref{lem:alphatozerofast}, $\alpha_k$ falls quadratically.
Let $f(k)$ be defined as in the statement of the lemma with 
    \begin{equation}
        \label{eq:fk0}
        f(k_0) \coloneqq \alpha_{k_0} + \frac{1}{2\, e^2\, k_0} < \frac{1}{e^2 \, k_0}\text{.}
    \end{equation}

Since $k \geq 1$, by the definition of $\alpha_k$,
\begin{equation}
    \label{eq:alphaupper}
\alpha_k = d \, (1 - k) \, \alpha_k^2 + k \, d \, \alpha_{k - 1}^2 \leq k \, d \, \alpha_{k - 1}^2 < k \, \alpha_{k - 1}^2 \text{.}
\end{equation}

By construction, $f(k_0) > \alpha_{k_0}$.
If $\alpha_{k - 1} < f(k - 1)$, then $\alpha_{k} < k \, \alpha_{k - 1}^2 < k \, f(k - 1)^2 = f(k)$, where the first inequality is \cref{eq:alphaupper}.
Thus, $(f(k))_{k \geq k_0}$ strictly dominates $(\alpha_k)_{k \geq k_0}$.

We will now show the doubly exponential bounds on $f(k)$.
A simple induction on $n \geq 0$ shows that 
\[f(k_0+n) = f(k_0)^{2^{n}} \, \prod_{i=1}^{n} (k_0+i)^{2^{n-i}}, \]
and taking the logarithm of both sides yields
\[\log(f(k_0+n)) = 2^{n} \, \log(f(k_0)) + 2^{n} \sum_{i=1}^{n} 2^{-i} \log(k_0+i). \]
Therefore, because $\log(k_0 + i) > 0$, we see that
\[\log(f(k_0 + n)) > 2^{n} \, \log(f(k_0)), \]
so setting $c_1 = -\log(f(k_0))$ yields the desired lower bound.

For the upper bound, note that $\log(k_0 + i) < \log(k_0) + i$, which means that
\begin{align*}
    \sum_{i=1}^{n} 2^{-i} \log(k_0+i) & \leq \sum_{i=1}^{\infty} 2^{-i} (\log(k_0) +i) \\
    &= \log(k_0) + \frac{2}{(2-1)^2} \\
    &= \log(k_0) + 2.
\end{align*}
Therefore, we have that
\begin{align*}
    \log(f(k_0+n)) &\leq 2^{n} \, \big(\log(f(k_0)) + \log(k_0) + 2\big) \\
    & = 2^{n} \, \log(f(k_0)\,k_0\,e^2),
\end{align*}
    and because we have $f(k_0) \, k_0 \, e^2 < 1$ by \cref{eq:fk0}, we can let $c_2 = -\log(f(k_0)\,k_0\,e^2)$ to complete the upper bound.
\end{proof}

\subsection{Proof of \cref{lem:3.1}}
\lemthreeone*
\begin{proof}
    \label{pf:3.1}
By induction on $k \geq 1$.
For $k = 1$, $\frac{F_j(1)}{j} = \frac{j}{j} = 1 = \alpha_1$ for all $j$ and the claim follows.

Now let $k > 1$.
Since $\alpha_k < 1$ and since decreasing $\e$ only strengthens our statement, we may assume without loss of generality that
\begin{equation}
\label{eq:epssmall}
\frac{\e{}}{2} < 1 - \alpha_k\text{.}
\end{equation}
Let $\mathcal{A}$ be the event
\[\mathcal{A} \coloneqq \left\{ \forall j,\; \omega_0(m) \leq j \leq m. \; \frac{F_j(k-1)}{j} \leq \alpha_{k-1} + \delta \right\}\text{,}\]
where $\delta > 0$ and $\omega_0$ are fixed values depending on $d$, $k$ and $\e$, which we will give later.
By the induction hypothesis, $\Prob [\mathcal{A}] \to 1$. Therefore, it suffices to show that
\[\Prob \left[ \exists j,\; \omega(m) \leq j \leq m. \; \tfrac{F_j(k)}{j} > \alpha_k + \e \; \middle| \; A \right] \to 0. \]
From here on, we assume that $\mathcal{A}$ holds and show that $\frac{F_j(k)}{j} \leq \alpha_k + \e$ with high probability.

\subsubsection*{Overapproximating $F_j(k)$:}
Let $j$ be such that $\omega_0(m) \leq j \leq m$.
Our goal in this section is to overapproximate $F_j(k)$ as a sum of independent and identically distributed random variables, at least under certain conditions.
We begin by decomposing $F_j(k)$ as a sum of differences $\chi_{i+1} \coloneqq F_{i+1}(k) - F_i(k)$, distributed as in \cref{eq:fdifference}:
\begin{equation}
    \label{eq:fsumchi}
    F_{j}(k) = F_{\omega_0(m)}(k) + \sum_{i=\omega_0(m) + 1}^j \chi_i \leq \omega_0(m) + \sum_{i=\omega_0(m) + 1}^j \chi_i\text{,}
\end{equation}
where the inequality follows from \cref{eq:fullfj}.
By $\mathcal{A}$, it holds that $\frac{F_{i-1}(k-1)}{i-1} \leq \alpha_{k-1} + \delta$ for all $i$ such that $\omega_0(m) < i \leq m$. Thus, for all such $i$
\[ \chi_i \leq \eta_i \coloneqq \begin{cases}
                         1 & \Prob = d \, \left( \frac{F_{i-1}(k)}{i-1} \right)^2 \\
                         k & \Prob = d \, \left( \left( \alpha_{k-1} + \delta \right)^2 - \left( \frac{F_{i-1}(k)}{i-1} \right)^2 \right) \\
                         0 & \text{else}
                         \end{cases}\text{.} \]
By setting $g_i \coloneqq \frac{F_i(k)}{i} - \alpha_k$, we can rewrite the above as
\[ \eta_i = \begin{cases}
                         1 & \Prob = d \, \left( \alpha_k + g_{i-1} \right)^2 \\
                         k & \Prob = d \, \left( \left( \alpha_{k-1} + \delta \right)^2 - \left( \alpha_k + g_{i-1} \right)^2 \right) \\
                         0 & \text{else}
                         \end{cases}\text{.} \]
Choose $\delta > 0$ such that $2 \, \alpha_{k-1} \, \delta + \delta^2 = \frac{\e}{4 \, k \, d}$.
The quadratic equation in $\delta$ must have a positive solution because $2\,\alpha_{k-1}$ and $\frac{\e}{4\,k\,d}$ are positive.
Under the additional assumption that $g_{i-1} \geq 0$, we can overapproximate $\eta_i$ by moving $d (2\,\alpha_k\,g_{i-1} + g_{i-1}^2)$ probability from the first to the second case to obtain:
\[ \eta_i \leq \eta_i' \coloneqq \begin{cases}
                         1 & \Prob = d \, \alpha_k^2 \\
                         k & \Prob = d \, \left( \alpha_{k-1}^2 - \alpha_k^2 + \frac{\e}{4\,k\,d} \right) \\
                         0 & \text{else}
                         \end{cases}\text{.}\]
The $\eta_i'$ are independent and identically distributed. By the definition of $\alpha_k$ in \cref{eq:alphadef}, $\Exp{\eta_i'} = \alpha_k + \frac{\e{}}{4}$.

\subsubsection*{Starting Point $\omega_0(m) \leq \pi \leq \omega(m)$:}
Let $\pi$ be the first $j \geq \omega_0(m)$ such that $\frac{F_j(k)}{j} \leq \alpha_k + \frac{\e{}}{2}$ (write $\pi = \infty$ if no such $j$ exists).
We will use $\pi$ as a starting point for the following analysis, where we show that, with high probability, no $j \geq \pi$ violates our desired property $\frac{F_j(k)}{j} \leq \alpha_k + \e$.
Since we want this to hold for all $j \geq \omega(m)$, we must first show that, with high probability, $\pi \leq \omega(m)$.

Assume that this is not the case, i.e., that $\pi > \omega(m)$.
Then, in particular, $\frac{F_{\omega(m)}}{\omega(m)} > \alpha_k + \frac{\e{}}{2}$.
Furthermore, for all $i$ such that $\omega_0(m) < i \leq \omega(m)$, $g_{i-1} = \frac{F_{i-1}(k)}{i-1} - \alpha_k > \frac{\e{}}{2} > 0$, and therefore $\chi_i \leq \eta_i \leq \eta_i'$.
\begin{align*}
    \Prob [\pi > \omega(m)] & \leq \Prob \left[ \frac{F_{\omega(m)}(k)}{\omega(m)} > \alpha_k + \frac{\e}{2} \right] & \\
    & \leq \Prob \left[ \sum_{i = \omega_0(m) + 1}^{\omega(m)} \chi_i > \left(\alpha_k + \frac{\e}{2}\right) \, \omega(m) - \omega_0(m) \right] & \text{(by \cref{eq:fsumchi})} \\
    & \leq \Prob \left[ \sum_{i = \omega_0(m) + 1}^{\omega(m)} \eta_i' > \alpha_k\,(\omega(m) - \omega_0(m))+\frac{\e}{2} \, \omega(m) - (1 - \alpha_k) \, \omega_0(m) \right] & \\
    \intertext{We choose $\omega_0(m) \coloneqq \frac{\frac{\e}{2}-\frac{\e}{3}}{1 - \alpha_k - \frac{\e}{3}}\,\omega(m)$.
Using \cref{eq:epssmall}, one verifies that the fraction is well-defined and that $0 \leq \omega_0(m) \leq \omega(m)$.
With this definition, it holds that $\frac{\e}{2} \, \omega(m) - (1 - \alpha_k) \, \omega_0(m) = \frac{\e}{3} \, (\omega(m) - \omega_0(m))$.
    Thus, we can rewrite the last inequality as}
    \Prob [\pi > \omega(m)] &\leq \Prob \left[ \sum_{i = \omega_0(m) + 1}^{\omega(m)} \eta_i' > \left(\alpha_k + \frac{\e}{3}\right) \,(\omega(m) - \omega_0(m)) \right]\text{.} &
\end{align*}
The $\eta_i'$ are bounded by $k$, and $\Exp{\sum_{i = \omega_0(m) + 1}^{m} \eta_i'} = \left(\alpha_k + \frac{\e}{4}\right) \, (\omega(m) - \omega_0(m))$ is smaller by a constant factor than $\left(\alpha_k + \frac{\e}{3}\right) \,(\omega(m) - \omega_0(m))$.
Therefore, by Chernoff's bound, the probability decays exponentially in $\omega(m) - \omega_0(m)$.
Since $\omega(m) - \omega_0(m) \in \Theta(\omega(m)) \to \infty$, $\pi \leq \omega(m)$ with high probability.

\subsubsection*{Behavior from $\pi$ on:}
We may now assume $\pi \leq \omega(m)$.
In this section, we bound the probability that $F_j(k)$ surpasses the line $(\alpha_k + \e)\,j$ at some time $j \geq \pi$.

Consider a random walk started at a position $a$ and at some time $t$, whose steps are distributed as $\eta_i'$.
Let $M_a^t(j)$ denote its position at time $j$, i.e., after $j - t$ random steps.
Should the random walk ever drop below the line $\alpha_k\,j$, it is set to $- \infty$ and stops evolving.
Define a function
\[p(t_0) \coloneqq \sup_{t \geq t_0} \Prob \left[\exists j \geq t.\; M_{\left(\alpha_k + \frac{\e}{2}\right)\,t}^t(j) > (\alpha_k + \e)\,j \right] \text{.}\]
Since the random walk only dominates $F_j(k)$ as long as $g_j \geq 0$, i.e., as long as $F_j(k) \geq \alpha_k \, j$, we dissect the evolution of $F_j(k)$ for increasing $j \geq \pi$ into segments.
Set $\rho_0 \coloneqq \pi$.
If, after some $\rho_i$, the process drops below the line $\alpha_k\,j$ and then enters the range $\left(\alpha_k + \frac{\e}{2}\right)\,j - k < F_j(k) \leq \left(\alpha_k + \frac{\e}{2}\right)\,j$ again, call the time of entering $\rho_{i + 1}$.
Else, write $\rho_{i + 1} = \infty$.
Clearly, if $\rho_i < \infty$, $\rho_{i + 1} \geq \rho_i + 1$.
Thus, for all $i$, $\rho_i \geq \pi + i \geq \omega_0(m) + i$.
\begin{align*}
\Prob [\exists j \geq \pi.\; F_j(k) > (\alpha_k + \e) \, j] & \leq \sum_{i=0}^{\infty} \Prob [\exists j, \rho_{i + 1} > j \geq \rho_{i}.\; F_j(k) > (\alpha_k + \e) \, j]\\
\intertext{For crossing the line $(\alpha_k + \e)\,j$ in the time range $[\rho_j, \rho_{j+1})$, the process must get from a position $\leq (\alpha_k + \frac{\e}{2})\,j$ to $> (\alpha_k + \e)\,j$ without dropping below $\alpha_k \, j$ in between. This event is stochastically dominated by the event that our random walk, when started from the potentially higher position $(\alpha_k + \frac{\e}{2})\,\rho_i$ at time $\rho_i$, will ever cross $(\alpha_k + \e)\,j$:}
\Prob [\exists j \geq \pi.\; F_j(k) > (\alpha_k + \e) \, j] & \leq \sum_{i=0}^{\infty} p\left(\rho_i\right) \\
\intertext{Because of the supremum in its definition, $p(t_0)$ is monotonically decreasing, and we can replace $\rho_i$ by its lower bound:}
\Prob [\exists j \geq \pi.\; F_j(k) > (\alpha_k + \e) \, j]& \leq \sum_{i=0}^{\infty} p\left(\omega_0(m) + i\right)\text{,} \\
\intertext{which, as we will show in \cref{lem:ptozero}, is}
& \leq \sum_{i=\omega_0(m) + 1}^{\infty} s \, e^{-r \, i}
\end{align*}
for positive constants $r$ and $s$.
This series converges because $\sum_{i=0}^{\infty} s \, e^{-r \, i} = s \, \frac{e^r}{e^r - 1}$. With increasing $m$, the lower bound of the sum increases and the term goes to $0$, proving our claim.
\end{proof}

\begin{lem}
\label{lem:ptozero}
There are positive constants $r$ and $s$, depending only on $\e$ and $k$, such that, for all $t_0$,
\[ p(t_0) \leq s \, e^{-r\,t_0}\text{.}\]
\end{lem}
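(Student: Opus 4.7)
\medskip

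\noindent\textbf{Proof plan for \cref{lem:ptozero}.} The plan is to reduce the question about $M^t_a$ to a classical large-deviations bound for a sum of bounded i.i.d.\ variables, then absorb the ``kill'' rule and the supremum for free. Let $X_1, X_2, \ldots$ be i.i.d.\ copies of $\eta_i'$, so $\Exp{X_i} = \alpha_k + \tfrac{\e}{4}$ and $0 \leq X_i \leq k$, and set $S_n = X_1 + \cdots + X_n$. Since killing the walk at the line $\alpha_k\,j$ only lowers the crossing probability, we may upper bound
\[
\Prob\!\left[\exists j \geq t.\; M^t_{(\alpha_k+\e/2)t}(j) > (\alpha_k+\e)\,j\right]
\;\leq\;
\Prob\!\left[\exists n \geq 0.\; (\alpha_k+\tfrac{\e}{2})t + S_n > (\alpha_k+\e)(t+n)\right].
\]

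The inner event is $S_n > (\alpha_k + \e)\,n + \tfrac{\e}{2}\,t$, which after subtracting $\Exp{S_n} = (\alpha_k+\tfrac{\e}{4})n$ becomes
\[
S_n - \Exp{S_n} \;>\; \lambda_n, \qquad \lambda_n \coloneqq \tfrac{3\e}{4}\,n + \tfrac{\e}{2}\,t.
\]
The case $n=0$ is impossible because $\lambda_0 = \tfrac{\e}{2}\,t > 0 = S_0 - \Exp{S_0}$, so only $n \geq 1$ contributes. First I would apply Hoeffding's inequality, using $0 \leq X_i \leq k$, to each term: $\Prob[S_n - \Exp{S_n} > \lambda_n] \leq \exp(-2\lambda_n^2 / (n k^2))$. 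Expanding $\lambda_n^2$ and throwing away the positive cross term, I would get
\[
\frac{2 \lambda_n^2}{n k^2} \;\geq\; \frac{9 \e^2}{8 k^2}\, n \;+\; \frac{3 \e^2}{2 k^2}\, t,
\]
which cleanly factorises the tail bound into a part that decays in $t$ and a part that decays in $n$.

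Then I would take a union bound over $n \geq 1$ and sum the resulting geometric series:
\[
\Prob[\ldots] \;\leq\; \exp\!\Bigl(-\tfrac{3 \e^2}{2 k^2}\, t\Bigr) \, \sum_{n=1}^{\infty} \exp\!\Bigl(-\tfrac{9 \e^2}{8 k^2}\, n\Bigr) \;=\; \exp\!\Bigl(-\tfrac{3 \e^2}{2 k^2}\, t\Bigr) \cdot \frac{1}{e^{9\e^2/(8k^2)} - 1}.
\]
Setting $r \coloneqq 3\e^2/(2 k^2)$ and $s \coloneqq 1/\bigl(e^{9\e^2/(8k^2)} - 1\bigr)$, both positive constants depending only on $\e$ and $k$, gives the bound $s\, e^{-r t}$ for every fixed $t$. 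Since this bound is monotone decreasing in $t$, taking the supremum over $t \geq t_0$ yields $p(t_0) \leq s\, e^{-r t_0}$, as required.

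There is no real obstacle here: the only thing to double-check is that the walk's ``kill'' rule can be ignored (it can, because it only removes probability from the event we bound) and that the $n = 0$ case is vacuous, so starting the geometric sum at $n = 1$ is justified. The splitting of $\lambda_n^2/n$ into an $n$-part and a $t$-part is the one quantitatively important step, and it is a routine algebraic manipulation.
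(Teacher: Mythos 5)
Your proposal is correct and follows essentially the same route as the paper's proof: dominate the killed walk by an unrestricted i.i.d.\ walk, apply Hoeffding with range $k$ to the event $S_n - \Exp{S_n} > \tfrac{3}{4}\e n + \tfrac{\e}{2}t$, drop the nonnegative $t^2/n$ term in the exponent, and sum the resulting geometric series, arriving at the same $r = 3\e^2/(2k^2)$ (your $s$ differs only because you start the sum at $n=1$ rather than $n=0$, which is immaterial). The only nitpick is terminological: the term you discard is the $(\e t/2)^2$ square term, not the cross term --- the cross term is precisely what supplies the $e^{-rt}$ decay you keep.
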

\begin{proof}
Let $M'(n)$ be a random walk with step distribution $\eta_i'$ that--in contrast to $M_a^t(j)$--starts at time and position 0 and that does not have a stopping condition.
$M_a^t(j)$ is dominated by $a + M'(j - t)$ for all $a$, $t$ and $j$ where $j \geq t$.
The expectation of $M'(n)$ equals $n \, \Exp{\eta_i'} = n \, \left(\alpha_k + \frac{\e}{4}\right)$.
For any fixed $t$,
\begin{align*}
    \mathcal{E}_t \coloneqq{}& \left\{ \exists j \geq t.\; M_{\left(\alpha_k + \frac{\e{}}{2}\right)\,t}^t(j) > (\alpha_k + \e{})\,j \right\} \\
    \subseteq &{} \left\{ \exists n \geq 0. \; M'(n) > \left(\alpha_k + \e{}
    \right)\,n + \frac{\e{}}{2}\,t \right\} \\
    \subseteq &{} \left\{ \exists n \geq 0. \; M'(n) - \Exp{M'(n)} > \frac{3}{4}\,\e{}\,n + \frac{\e{}}{2}\,t \right\} \text{.}
\end{align*}
Call the last event $\mathcal{E}_t'$.
By Hoeffding's inequality, for any $n \geq 0$,
\begin{align*}
\Prob\left[M'(n) - \Exp{M'(n)} > \frac{3}{4}\,\e{}\,n + \frac{\e{}}{2}\,t\right] &\leq \exp \left( - \frac{2 \, \left(\frac{3}{4}\,\e\,n + \frac{\e}{2}\,t \right)^2}{n \, k^2} \right)\text{.} \\
\intertext{Set $r \coloneqq \frac{3\,\e^2}{2\,k^2}$ to obtain}
\Prob\left[M'(n) - \Exp{M'(n)} > \frac{3}{4}\,\e{}\,n + \frac{\e{}}{2}\,t\right] &\leq \exp \left( - r \, \left( \frac{3}{4}\,n + t + \frac{t^2}{3\,n} \right)  \right) \\
&\leq \exp \left( - r \, \left( \frac{3}{4}\,n + t \right) \right) \text{.}
\end{align*}
Thus,
\begin{align*}
    \Prob[\mathcal{E}_t'] &\leq \sum_{n=0}^{\infty} \exp \left( - r \, \left( \frac{3}{4}\,n + t \right) \right) = e^{- r \, t} \sum_{n=0}^{\infty} \exp \left( -\frac{3}{4} \, r \, n\right) = e^{- r \, t} \, \frac{e^{\frac{3}{4}\,r}}{e^{\frac{3}{4}\,r} - 1}\text{.}
\end{align*}
By setting $s \coloneqq \frac{e^{\frac{3}{4}\,r}}{e^{\frac{3}{4}\,r} - 1}$, have $\Prob[\mathcal{E}_t'] \leq e^{- r \, t}\,s$.
Since this bound decreases monotonically in $t$,
\[ p(t_0) = \sup_{t \geq t_0} \Prob[\mathcal{E}_t] \leq \sup_{t \geq t_0} e^{- r \, t}\,s = e^{-r \, t_0}\,s \text{.}\]
\end{proof}

\subsection{Proof of \cref{lem:3.3}}
\lemthreethree*
\begin{proof}~
    \label{pf:3.3}

\subsubsection*{Definitions}
Define the event
\[ \mathcal{G}_k \coloneqq \left\{\forall j, \phi(m,k) \leq j \leq m. \; \frac{F_j(k)}{j} \leq f(k) \right\}. \]
As in \cref{lem:3.1}, we dissect $F_j(k +1)$ for any $\phi(m,k+1) \leq j \leq m$ as
\[ \frac{F_j(k+1)}{j} = \frac{F_{\phi(m,k)}(k+1)}{j} + \frac{1}{j} \sum_{i=\phi(m,k)+1}^j \xi_i (k+1)\text{,} \]
where $\xi_{i + 1}(k+1) \coloneqq F_{i+1}(k+1) - F_i(k+1)$ as in \cref{eq:fdifference}. We now define a random variable $X_{j,k + 1}$ to bound the sum of $\xi_i$ terms from above.
Let $X_{j,k+1}$ be distributed as
\[ X_{j,k + 1} \sim (k+1) \cdot \text{Binom}(j - \phi(m,k), d \, f(k)^2)\text{,} \]
and note that, on $\mathcal{G}_k$, $\sum_{i=\phi(m,k) + 1}^j \xi_i (k+1)$ is stochastically dominated by $X_{j,k + 1}$.
This is because the $(j - \phi(m, k))$ many $\xi_i$ are independent, bounded by $(k - 1)$ and have non-zero value with probability $d \, \left(\frac{F_{i-1}(k)}{i - 1}\right)^2 = d \, f(k)^2$.

Now, consider the event 
\[ \mathcal{E}(k+1) = \left\{\exists j, \phi(m, k+1) \leq j \leq m. \; \sum_{i=\phi(m,k)+1}^j \xi_i(k+1) > \gamma \, \Exp{X_{j,k+1}} \right\}, \]
where $\gamma \coloneqq \frac{d+1}{2 \, d} > 1$ is a constant chosen such that $d \, \gamma = \frac{d+1}{2} < 1$.

\subsubsection*{$\mathcal{E}(k + 1)$ Is Unlikely}
We  bound $\Prob[\mathcal{E}(k+1) \cap \mathcal{G}_k]$ using standard Chernoff bounds as follows:
\begin{align*}
    \Prob[\mathcal{E}(k+1) \cap \mathcal{G}_k] &\leq \Prob[ \exists j, \phi(m,k+1) \leq j \leq m. \; X_{j,k + 1} > \gamma \, \Exp{X_{j,k + 1}}] \\
     &\leq \Prob[ \exists j, \phi(m,k+1) \leq j \leq m. \; \frac{X_{j,k + 1}}{k + 1} > \gamma \, \Exp{\frac{X_{j,k + 1}}{k + 1}}] \\
    & \leq \sum_{j=\phi(m,k+1)}^{m} \exp \left(-\frac{(\gamma-1)^2 \, \Exp{\frac{X_{j,k}}{k + 1}}}{3}\right) \\
    & \leq \sum_{j=\phi(m,k+1)}^{m} \exp \left(- \frac{(\gamma-1)^2}{3} \, (j - \phi(m,k)) \, d \, f(k)^2\right) \\
    \intertext{Set $c \coloneqq \frac{(\gamma-1)^2 \, d}{3}$. From $\gamma > 1$, it follows that $c$ is positive. We then have}
    \Prob[\mathcal{E}(k+1) \cap \mathcal{G}_k] &\leq \sum_{j=\phi(m,k + 1)}^{\infty} \exp \left( - c \, f(k)^2 \, (j - \phi(m,k)) \right) \\
    & \leq \sum_{\ell=0}^{\infty} \exp \left(-c \, f(k)^2 \, (\ell + \phi(m,k+1) - \phi(m,k)) \right) & \text{($\ell \coloneqq j - \phi(m,k+1)$)}\\
    & \leq e^{-c \, f(k)^2 \, (\phi(m,k+1) - \phi(m,k))} \sum_{\ell=0}^{\infty} e^{-c \, f(k)^2 \, \ell} \\
    & \leq \frac{1}{1 - e^{-c \, f(k)^2}} \, e^{-c \, f(k)^2 \, (\phi(m,k+1) - \phi(m,k))}. & \text{(geometric series)} \\
    \intertext{
                Furthermore, since the sequence of $f(k)$ converges to $0$, we can find a constant $C'$ (independent of $k$) such that
        \[ C'\geq \frac{f(k)^2}{1 - e^{-c \, f(k)^2}}. \]
        Indeed, note that if we let $x \coloneqq f(k)^2$, then it suffices to show that $\lim_{x \rightarrow 0} \frac{x^2}{1 - e^{-c \, x^2}}$ is constant. 
        Because both the numerator and denominator go to $0$ in the limit as $x$ approaches $0$, we can apply L'H\^{o}pital's rule to obtain
        \begin{equation*}
            \lim_{x \rightarrow 0} \frac{x}{1 - e^{-c \, x}} = \lim_{x \rightarrow 0} \frac{1}{c \, e^{-c \, x}} = \frac{1}{c}.
        \end{equation*}
        Plugging this in yields 
    }
    \Prob[\mathcal{E}(k+1) \cap \mathcal{G}_k] &\leq \frac{C'}{f(k)^2} \left( e^{-c \, f(k)^2 \, (\phi(m,k+1) - \phi(m,k))} \right). & \\
    \intertext{Now, expanding the definition of $\phi$, and applying \cref{lem:3.2} to bound $f(k)^2$, we have}
    \Prob[\mathcal{E}(k+1) \cap \mathcal{G}_k] &\leq C' \left( \frac{e^{-c \, \rho(m)\, \left(C^{2^{k+2}} - C^{2^{k+1}} \right) \exp{(-c_1 \, 2^{k - k_0 + 1})}}}{\exp{(-c_1\,  2^{k - k_0 + 1})}} \right) &\text{(by \cref{lem:3.2})}\\
    & \leq C' \left( e^{-c \, \rho(m)\, \left(C^{2^{k+2}} - C^{2^{k+1}} \right) \exp{(-c_1 \, 2^{k - k_0 + 1})} + c_1\,  2^{k - k_0 + 1}} \right) \\
    & \leq C' \left( e^{-c \, \rho(m)\, C^{2^{k+1}} \left(C^{2^{k+1}} - 1 \right) \exp{(-c_1 \, 2^{k})} + c_1\,  2^{k - k_0 + 1}} \right)\text{.} \\
    \intertext{Because $\left(C^{2^{k+1}} - 1 \right) > 1$ and $C^{2^{k+1}} > C^{2^{k}}$ for $k > 0$, we have}
    \Prob[\mathcal{E}(k+1) \cap \mathcal{G}_k] &\leq C' \left( e^{-c \, \rho(m)\, C^{2^k} \exp{(-c_1 \, 2^k)}+ c_1\,  2^{k - k_0 + 1}} \right).
    \stepcounter{equation}\tag{\theequation}\label{eqn:Ekplusone}
\end{align*}

\subsubsection*{$\mathcal{G}_k$ and $\overline{\mathcal{E}(k+1)}$ Together Imply $\mathcal{G}_{k + 1}$}

Given both $\mathcal{G}_{k}$ and the complement of $\mathcal{E}(k+1)$, we know that 
\begin{align*}
    \frac{F_j(k+1)}{j} &= \frac{F_{\phi(m,k)}(k+1)}{j} + \frac{1}{j} \sum_{i = \phi(m,k)+1}^j \xi_i(k+1) \\
    & \leq \frac{\phi(m,k)}{j} + \frac{1}{j} \left( \gamma \, \Exp{X_{j,k+1}} \right) & \text{(by \cref{eq:fullfj}, $\overline{\mathcal{E}(k + 1)}$)} \\
    & \leq \frac{\phi(m,k)}{\phi(m,k+1)} + \frac{\frac{d+1}{2d}}{j} \, (k+1) \, (j - \phi(m,k))\, d\, f(k)^2 & \text{($j \geq \phi(m, k + 1)$, $\Exp{X_{j,k+1}}$)}\\
    & \leq C^{2^{k+1} - 2^{k+2}} + \left( \frac{d+1}{2} \right) \, (k+1) \, f(k)^2 & \text{(def. of $\phi$, $j - \phi(m,k) \leq j$).}\\
    \intertext{Since $2^{k + 1} - 2^{k + 2} = - 2^{k + 1}$, and since, by \cref{eq:defcapc}, $C > \frac{2}{1-d} \, e^{c_1 \, 2^{-1}} \geq \frac{2}{1-d} \, e^{c_1 \, 2^{-k_0}}$,}
    \frac{F_j(k+1)}{j} & \leq \left( \frac{2}{1-d} \, e^{c_1 \, 2^{-k_0}} \right)^{-2^{k+1}} + \frac{d+1}{2} \, (k+1) \, f(k)^2 & \text{(\cref{eq:defcapc})}\\
    & \leq \left( \frac{2}{1-d} \right)^{-2^{k+1}} e^{-c_1 \, 2^{-k_0} \, 2^{k+1}} + \frac{d+1}{2} \, (k+1) \, f(k)^2 \\
    & \leq \left(\frac{1-d}{2} \right) e^{-c_1 2^{k+1-k_0}} + \frac{d+1}{2} \, (k+1) \, f(k)^2 \\
    & \leq \left(\frac{1-d}{2} \right) f(k+1) + \frac{d+1}{2} \, f(k+1) & \text{(by \cref{lem:3.2})}\\
    & = f(k+1).
\end{align*}

\subsubsection*{Combining the Previous Steps}
In the previous step, we established $\mathcal{G}_k \cap \overline{\mathcal{E}(k+1)} \subseteq \mathcal{G}_{k+1}$.
This implies $\mathcal{G}_k \cap \overline{\mathcal{G}_{k + 1}} \subseteq \mathcal{E}(k + 1) \cap \mathcal{G}_k$.
Conceptually, this means that if $\mathcal{G}_k$ happens but not $\mathcal{G}_{k + 1}$, this can be blamed on the unlikely event $\mathcal{E}(k + 1)$.
We show the implication:
By taking the complement of both sides, we have $\overline{\mathcal{G}_{k+1}} \subseteq \overline{\mathcal{G}_k} \cup \mathcal{E}(k+1)$.
By intersecting of both sides with $\mathcal{G}_k$, we obtain $\overline{\mathcal{G}_{k+1}} \cap \mathcal{G}_k \subseteq (\overline{\mathcal{G}_k} \cup \mathcal{E}(k+1)) \cap \mathcal{G}_k$. From here, note that $\overline{\mathcal{G}_k} \cap \mathcal{G}_k = \varnothing$ and therefore we have $\overline{\mathcal{G}_{k+1}} \cap \mathcal{G}_k \subseteq \mathcal{E}(k+1) \cap \mathcal{G}_k$.

We are interested in the probability that $\mathcal{G}_k$ fails for some $k_0 \leq k \leq k_*(m)$, which we can upper-bound as
\begin{align*}
    \Prob\left[\exists k,  k_0 \leq k \leq k_*(m). \; \overline{\mathcal{G}_k}\right] &\leq \Prob\left[\overline{\mathcal{G}_{k_0}}\right] + \sum_{k=k_0}^{k_*(m) - 1} \Prob\left[\overline{\mathcal{G}_{k+1}} \cap \mathcal{G}_k\right] \\
     &\leq \Prob\left[\overline{\mathcal{G}_{k_0}}\right] + \sum_{k=k_0}^{k_*(m) - 1} \Prob[\mathcal{E}(k+1) \cap \mathcal{G}_k] \\
    \intertext{by the above. By \cref{eqn:Ekplusone},}
    \Prob\left[\exists k, k_0 \leq k \leq k_*(m). \; \overline{\mathcal{G}_k}\right] & \leq \Prob\left[\overline{\mathcal{G}_{k_0}}\right] + \sum_{k={k_0}}^{k_*(m) - 1} C' \left( e^{-c \, \rho(m)\, C^{2^k} \exp{(-c_1 \, 2^k)}+ c_1\,  2^{k-k_0+1}} \right) \\
    & = \Prob\left[\overline{\mathcal{G}_{k_0}}\right] + \sum_{k={k_0}}^{k_*(m) - 1} C' \left( e^{-c \, \rho(m)\, (C/e^{c_1})^{2^k}+ c_1\,  2^{k-k_0+1}} \right). \\
\intertext{
    For $m$ sufficiently large, it is easy to verify that the term inside the sum is monotonically decreasing in $k$. Indeed, note that for large $m$, the exponent is dominated by $-c' \, \rho(m)\, (C/e^{c_1})^{2^k}$ because $c_1\,  2^{k-k_0+1}$ has no dependence on $m$, and by \cref{eq:defcapc}, $C/e^{c_1} > 1$. Therefore, for some constant $A \approx c \, (C\, / \, e^{c_1})^{2^{k_0}}$, we have
}
    \Prob\left[\exists k_0 \leq k \leq k_*(m). \; \overline{\mathcal{G}_k}\right] &\leq \Prob\left[\overline{\mathcal{G}_{k_0}}\right] + (k_*(m) - k_0) \, e^{-A \, \rho(m)} \\
    & \leq \Prob\left[\overline{\mathcal{G}_{k_0}}\right] + k_*(m) \, e^{-A \, \rho(m)}.
\end{align*}
The left summand converges to zero as discussed in the proof sketch.
Furthermore, because $k_*(m) = \mathcal{O}(\log\log m)$, the right summand tends to $0$ with $m \to \infty$.
Indeed, note that
\begin{align*}
    k_*(m) \, e^{-A \, \rho(m)} &= e^{\log k_*(m)} \, e^{-A \, (\log \log m)^{1/3}} \\
    & = e^{\log k_*(m) - A (\log \log m)^{1/3}} \\
    & \in e^{\mathcal{O}\left(\log \log \log m\right) - \Omega\left((\log \log m)^{1/3}\right)} .
\end{align*}
For large enough $m$, the exponent diverges to negative infinity. As a result, $k_*(m) \, e^{-A \, \rho(m)} \to 0$ and thus $\Prob[\exists k_0 \leq k \leq k_*(m). \; \overline{\mathcal{G}_k}] \to 0$.
\end{proof}

\subsection{Proof of \cref{lem:3.6}}
\begin{lem}
\label{lem:beta0}
There exists a constant $\beta_0 > 0$ such that, with high probability, 
\[ \forall j, \log \log m \leq j \leq m. \; \frac{F_j(k_*(m))}{j} \leq j^{- \beta_0}. \] 
\end{lem}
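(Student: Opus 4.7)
My plan is to use Lemma~\ref{lem:3.3} as a black box and, for each $j$ in the range, cleverly pick a threshold $k=k(j)\le k_*(m)$ whose growth with $j$ is fast enough to make the bound $f(k(j))$ decay as a polynomial in $j$. The key observation is monotonicity: since $F_j(k)$ is nonincreasing in $k$, any bound I can prove on $F_j(k(j))/j$ for $k(j)\le k_*(m)$ upper-bounds $F_j(k_*(m))/j$ as well.

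\medskip
Concretely, for each $j$ with $\log\log m \le j \le m$ I would set $k(j)$ to be the largest integer $k\in[k_0,k_*(m)]$ with $\phi(m,k)\le j$. For $m$ large enough, $\phi(m,k_0)=(\log\log m)^{1/3}\,C^{2^{k_0+1}}\le \log\log m$, so such a $k(j)\ge k_0$ indeed exists. From the definition of $\phi$ one gets directly $2^{k(j)+1}\le \log_C(j/\rho(m))$, and from the maximality of $k(j)$ (either the next step would violate $\phi(m,k)\le j$, or $k(j)=k_*(m)$ and one uses $C^{2^{k_*(m)+1}}\ge\sqrt m\ge \sqrt{j}$) one obtains the matching lower bound
\[
2^{k(j)} \;\ge\; \frac{\log(j/\rho(m))}{4\log C}.
\]

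\medskip
Plugging this into the upper bound $f(k(j))\le \exp(-c_2\,2^{k(j)-k_0})$ from Lemma~\ref{lem:3.2} gives $f(k(j))\le (j/\rho(m))^{-c_3}$ with $c_3\coloneqq c_2/(2^{k_0+2}\log C)>0$. Since $j\ge \log\log m=\rho(m)^3$, we have $\rho(m)^{c_3}\le j^{c_3/3}$, and therefore
\[
f(k(j)) \;\le\; j^{-c_3}\,\rho(m)^{c_3} \;\le\; j^{-2c_3/3}.
\]
Setting $\beta_0\coloneqq 2c_3/3$ and invoking Lemma~\ref{lem:3.3} (which fails with probability $o(1)$) together with monotonicity of $F_j(k)$ in $k$ yields
\[
\frac{F_j(k_*(m))}{j} \;\le\; \frac{F_j(k(j))}{j} \;\le\; f(k(j)) \;\le\; j^{-\beta_0}
\]
simultaneously for all $j\in[\log\log m,m]$, which is exactly the claim.

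\medskip
I do not expect a real obstacle here: the proof is essentially a careful choice of $k(j)$ combined with the doubly-exponential decay of $f$ from Lemma~\ref{lem:3.2}. The only subtlety worth flagging is the bookkeeping at the two endpoints — checking that $k(j)\ge k_0$ for all $j\ge \log\log m$ (which uses that $\rho(m)$ grows) and that the $k(j)=k_*(m)$ case still produces a lower bound on $2^{k(j)}$ in terms of $\log(j/\rho(m))$ rather than $\log m$; both reduce to the inequality $j\le m$ together with the defining property of $k_*(m)$.
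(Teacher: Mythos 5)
Your proposal is correct and follows essentially the same route as the paper's proof: for each $j$ you select a level $k(j)$ with $\phi(m,k(j))\le j$ (the paper sandwiches $j$ between $\phi(m,k(j))$ and $\phi(m,k(j)+1)$, which is the same device), invoke the uniform guarantee of Lemma~\ref{lem:3.3} together with the doubly exponential decay of $f$ from Lemma~\ref{lem:3.2}, and absorb the $\rho(m)$ factor using $j\ge\log\log m=\rho(m)^3$, arriving at the identical constant $\beta_0=\tfrac{2}{3}\cdot\tfrac{c_2}{2^{k_0+2}\log C}$. The endpoint bookkeeping you flag (existence of $k(j)\ge k_0$ and the $k(j)=k_*(m)$ case via $C^{2^{k_*(m)+1}}\ge\sqrt{m}$) matches the paper's verification that $\phi(m,k_0)\le\log\log m$ and $\phi(m,k_*(m)+1)\ge m$.
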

\begin{proof}
First, let $k_0 \leq k \leq k_*(m)$ and $\phi(m,k) \leq j' \leq m$. Putting together the previous results yields
\begin{align*}
    \frac{F_{j'}(k_*(m))}{j'} & \leq \frac{F_{j'}(k)}{j'} & \text{($F_{j'}(k)$ is monotone in $k$)} \\
    & \leq f(k) & \text{(\cref{lem:3.3})} \\
    & \leq \exp \left(-c_2 \, 2^{k-k_0} \right) & \text{(\cref{lem:3.2})} \\
    \intertext{with high probability. By arithmetic,}
    \frac{F_{j'}(k_*(m))}{j'} &= \left(C^{1/\log C}\right)^{-c_2 \, 2^{k - k_0}} \\
    &= \left(C^{1/\log C}\right)^{-c_2 \, \frac{2^{k + 2}}{2^{k_0 + 2}}} \\
    &= \left(C^{-2^{k + 2}}\right)^{\frac{c_2}{2^{k_0 + 2} \, \log C}} \\
    &= \left( \frac{1}{C^{2^{k+2}}} \right)^\beta \\
    &= \left(\frac{\rho(m)}{\phi(m,k+1)} \right)^\beta,
\end{align*}
where $\beta \coloneqq \frac{c_2}{2^{k_0 + 2} \, \log C} > 0$ is a constant independent of $j$, $k$ and $m$.
From the strong formulation in \cref{lem:3.3}, it follows that this holds with high probability for all such $j$ and $k$ simultaneously.

Now, consider some $j$ in the range $\log \log m \leq j \leq m$ mentioned in the lemma.
We can find some $k(j)$ between $k_0$ and $k_*(m)$ such that \begin{equation}
    \label{eq:jphi}
    \phi(m,k(j)) \leq j \leq \phi(m,k(j) + 1).
\end{equation}
In order to show that we can find such a $k(j)$, by the monotonicity of $\phi(m, k)$ in $k$, it suffices to show that $\phi(m,k_*(m) + 1) \geq m$ and $\phi(m,k_0) \leq \log \log m$.
Because $\phi(m,k_0) = (\log \log m)^{1/3} \, C^{2^{k_0+1}}$, and because $C^{2^{k_0+1}}$ is a constant, for large enough $m$, $\phi(m, k_0)$ is indeed less than $\log \log m$.
As for the upper bound, we have
\begin{align*}
    \phi(m,k_*(m) + 1) &= (\log \log m)^{1/3} \, C^{2^{k_*(m)+2}} \\
    & = (\log \log m)^{1/3} \, C^{2 \cdot 2^{k_*(m)+1}} \\
    & = (\log \log m)^{1/3} \, \left(C^{2^{k_*(m)+1}}\right)^2 \\
    & \geq (\log \log m)^{1/3} \, \left(\sqrt{m}\right)^2 & \text{(by \cref{eqn:kstar})}\\
    & = (\log \log m)^{1/3} \, m \\ 
    & \geq m. & \text{(for large $m$)}
\end{align*}

Because of \cref{eq:jphi}, we can apply the results from the first section to get
\begin{align*}
\frac{F_{j}(k_*(m))}{j} &\leq \left(\frac{\rho(m)}{\phi(m,k(j) + 1)} \right)^\beta \\
&\leq \left( \frac{\rho(m)}{j} \right)^\beta \\
&= \left( \frac{(\log \log m)^{1/3}}{j} \right)^\beta \\
&\leq \left( \frac{j^{1/3}}{j} \right)^\beta \\
&= j^{- \frac{2}{3} \, \beta} \\
& = j^{- \beta_0},
\end{align*}
where $\beta_0 \coloneqq \frac{2}{3} \beta$.
\end{proof}

\begin{lem}
\label{lem:3.5}
Assume that there exist constants $M_1 > 0$, $0 < \beta < \frac{1}{2}$ and a function $k(m) \in \log_2 \log m + \Theta(1)$ such that, with high probability,
\[ \forall j, (\log \log m)^{M_1} \leq j \leq m. \; \frac{F_j(k(m))}{j} \leq j^{- \beta}\text{.} \]
Then, there is an $M_2 > 0$ such that, with high probability,
\[ \forall j, (\log \log m)^{M_2} \leq j \leq m. \; \frac{F_j(k(m) + 1)}{j} \leq j^{- \frac{3}{2}\,\beta}\text{.} \]
\end{lem}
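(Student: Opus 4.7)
The plan is to adapt the coupling-and-concentration strategy used in the proof of \cref{lem:3.3}. Write $k := k(m)$ for brevity and set $\ell_0 := (\log\log m)^{M_1}$. Let $\mathcal{A}$ denote the hypothesized event, which holds with high probability. On $\mathcal{A}$, the bound $F_i(k)/i \leq i^{-\beta}$ holds for every $i \in [\ell_0, m]$, so by the recurrence in \cref{eq:fdifference} the conditional probability that $\xi_{i+1}(k+1) := F_{i+1}(k+1) - F_i(k+1)$ is nonzero is at most $d\,(F_i(k)/i)^2 \leq d\, i^{-2\beta}$, and its size is at most $k+1$. Coupling via the generating uniforms, one can construct independent Bernoullis $B_i \sim \mathrm{Ber}(d\, i^{-2\beta})$ such that $\xi_{i+1}(k+1) \leq (k+1)\, B_i$ whenever $F_i(k)/i \leq i^{-\beta}$ --- hence throughout $[\ell_0,m]$ on $\mathcal{A}$. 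Setting $T_j := (k+1)\sum_{i=\ell_0}^{j-1} B_i$, we obtain $F_j(k+1) \leq \ell_0 + T_j$ on $\mathcal{A}$ for every $j \in [\ell_0, m]$.

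Since $k+1 = O(\log\log m)$, a short computation gives
\[ \Exp{T_j} \leq (k+1)\, d \sum_{i=\ell_0}^{j-1} i^{-2\beta} \leq \frac{d(k+1)}{1-2\beta}\, j^{1-2\beta}, \]
so the ratio $\Exp{T_j}/j^{1-3\beta/2} = O(\log\log m \cdot j^{-\beta/2})$ tends to zero for $j \geq (\log\log m)^{M_2}$ whenever $M_2 > 2/\beta$. A multiplicative Chernoff bound for independent Bernoullis then yields
\[ \Prob\!\left[\, T_j \geq \tfrac{1}{2}\, j^{1-3\beta/2}\, \right] \leq \exp\!\left(-\Omega\!\left(\frac{j^{1-3\beta/2}}{k+1}\right)\right). \]
Because $F_j(k+1)$ is monotone in $j$, it suffices to apply this bound at the $O(\log m)$ dyadic points $j = 2^\ell$ inside $[(\log\log m)^{M_2}, m]$. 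Choosing $M_2$ large enough that the exponent above is $\omega(\log\log m)$ at the lower endpoint (for instance $M_2 > 2/(1-3\beta/2)$) makes the resulting union-bound tail vanish as $m \to \infty$.

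On the intersection of $\mathcal{A}$ with this good event, a dyadic interpolation gives $F_j(k+1) \leq \ell_0 + 2^{-3\beta/2}\, j^{1-3\beta/2}$ for every $j$ in the range. If $M_2$ is also chosen strictly larger than $M_1/(1-3\beta/2)$, then $\ell_0 / j^{1-3\beta/2} \to 0$ uniformly over the range, so for sufficiently large $m$ the two pieces add up to at most $j^{1-3\beta/2}$, as required. Any $M_2 > \max\bigl(2/\beta,\; 2/(1-3\beta/2),\; M_1/(1-3\beta/2)\bigr)$ suffices, and the total failure probability is $\Prob[\,\overline{\mathcal{A}}\,]$ plus the union-bound contribution, both $o(1)$.

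The main subtlety --- shared with the proof of \cref{lem:3.3} --- is that $\mathcal{A}$ is not measurable in the natural filtration, so the coupling cannot literally be conditioned on $\mathcal{A}$. We resolve this by coupling against the prefix events $\mathcal{A}_i := \{\forall \ell \in [\ell_0, i]:\, F_\ell(k)/\ell \leq \ell^{-\beta}\}$, which are measurable at time $i$ and satisfy $\mathcal{A} \subseteq \mathcal{A}_i$; the dominating process $T_j$ is then a bona fide sum of independent Bernoullis built from fresh uniforms, to which Chernoff applies unconditionally. The remaining work is purely bookkeeping to ensure that one value of $M_2$ simultaneously satisfies the constraints coming from the expectation term, the Chernoff tail, the union bound, and the absorption of $\ell_0$.
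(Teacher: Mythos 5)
Your proof is correct and its skeleton matches the paper's: both arguments dominate the increments $\xi_{i+1}(k(m)+1)$ on the hypothesized event by $(k(m)+1)$ times independent Bernoullis with success probability $d\,i^{-2\beta}$, bound the expectation of the resulting sum by $\Theta\bigl((k(m)+1)\,j^{1-2\beta}\bigr)$, and then choose $M_2$ large enough that both the initial segment $(\log \log m)^{M_1}$ and the $O(\log\log m)$ prefactor are absorbed into the slack $j^{\beta/2}$ separating $j^{1-2\beta}$ from $j^{1-\frac{3}{2}\beta}$. The only divergence is the concentration step: the paper further dominates each Bernoulli by a Poisson variable, collapses the sum into a single Poisson with mean $\tfrac{c}{1-2\beta}\,j^{1-2\beta}$, applies a Poisson tail bound, and union-bounds over \emph{every} $j$ in the range (the series $\sum_j \exp(-C'\,j^{1-2\beta})$ converges because $\beta<\tfrac12$, so its tail starting at $j_0(m)$ vanishes); you instead apply a multiplicative Chernoff bound to the Bernoulli sum directly and exploit monotonicity to union-bound over only $O(\log m)$ dyadic points. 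Both routes work; yours avoids the Poisson machinery and the imported tail bound, at the cost of the dyadic bookkeeping (note only that interpolating near the top of the range requires extending the dominating sum $T_j$ past $m$ via the auxiliary Bernoullis, or capping the dyadic points at $m$ --- a trivial patch, since $T_j$ is monotone by construction). Your explicit treatment of the measurability of the conditioning event via prefix events is also welcome; the paper handles this point only implicitly.
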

\begin{proof}
Let $\mathcal{C}$ be the event in the hypothesis, and assume that it holds in the following.
Moreover, set $j_0(m) \coloneqq (\log \log m)^{M_1}$.

Let $j$ be such that $j_0(m) \leq j \leq m$.
As in \cref{lem:3.1}, write $F_j (k(m) + 1) = F_{j_0(m)}(k(m) + 1) + \sum_{i = j_0(m) + 1}^{j} \chi_i$, where $\chi_i = F_{i}(k(m) + 1) - F_{i - 1}(k(m) + 1)$.
On $\mathcal{C}$, $\frac{\chi_i}{k(m) + 1}$ is dominated by a Bernoulli variable with mean $d \, (i - 1)^{-2 \, \beta}$ by the recurrence in \cref{eq:fdifference}. %
The Bernoulli variable in turn is dominated by a Poisson variable with mean $\lambda = - \log \left(1 - d \, (i - 1)^{-2 \, \beta}\right)$ since its probability of being $0$ is $e^{- \lambda} = 1 - d \, (i - 1)^{-2 \, \beta}$.
All these random variables are independent.

For a general Bernoulli variable with mean $0 < p < 1$ and its dominating Poisson variable with mean $- \log(1-p)$, look at the ratio of these means $- \frac{\log(1 - p)}{p}$.
Its derivative with respect to $p$ is $\frac{\log (1 - p)}{p^2} + \frac{1}{p \, (1 - p)}$, which is positive for all $p$.
Thus, the ratio must increase monotonically in $p$, and for $p \coloneqq d \, (i - 1)^{-2 \, \beta}$, the ratio must decrease monotonically in $i$.
    Thus, if we set $c \coloneqq d \, \left(- \frac{\log(1 - d\,1^{-2 \, \beta})}{d \, 1^{-2 \, \beta}}\right) = - \log (1 - d) > 0$, the mean of the Poisson variable corresponding to $\frac{\chi_i}{k(m) + 1}$ can be overapproximated by $c \, (i - 1)^{-2 \, \beta}$ for large enough $m$.

Consider $\Delta_j \coloneqq \sum_{i = j_0(m) + 1}^{j} \chi_i$.
Since the sum of independent Poisson variables is a Poisson variable with the sum of the means as its mean, we can dominate $\frac{\Delta_j}{k(m) + 1}$ by a Poisson variable $X_j$ with mean
\begin{align*}
    \Exp{X_j} &\coloneqq \frac{c}{1 - 2 \, \beta} \, j^{1 - 2 \beta}& \\
    & \geq \left. \frac{c}{1 - 2 \, \beta} \, i^{1 - 2 \beta} \right|_{j_0(m) - 1}^{j-1} & \text{(for large enough $m$)} \\
    & = c \, \int_{j_0(m) - 1}^{j-1} i^{-2 \, \beta} \, \mathrm{d}i & \\
    & \geq c \, \sum_{i = j_0(m)}^{j - 1} i^{-2 \, \beta} & \text{(by comparison with Riemann sum)} \\
    & = \sum_{i = j_0(m) + 1}^{j} c \, (i - 1)^{-2 \, \beta}\text{.}
\end{align*}
By the tail bound described in \cite{canonne}, for any $C' > 1$,
\begin{align*}
    \Prob[X_j \geq C' \, \Exp{X_j}] &\leq \exp \left( - 2\, \frac{\left((C' - 1)\,\Exp{X_j}\right)^2}{2 \, \Exp{X_j}} \, \frac{C'\,\log C' - C' + 1}{(C' - 1)^2} \right) \\
    &= \exp \left(- (C' \, \log C' - C' + 1) \, \Exp{X_j} \right) \\
    &= \exp \left(- \frac{c}{1 - 2 \, \beta}\,(C' \, \log C' - C' + 1) \, j^{1 - 2 \, \beta} \right) \\
    \intertext{We can fix a sufficiently large constant $C'$, dependent on $d$ and $\beta$ but independent from $m$, such that $C' \leq \frac{c}{1 - 2 \, \beta} \, (C' \log C' - C' + 1)$ and thus}
    \Prob[X_j \geq C' \, \Exp{X_j}] &\leq \exp \left(- C' \, j^{1 - 2 \, \beta} \right)
\end{align*}
for all $j$ between $j_0(m)$ and $m$.
Therefore,
\[\Prob[\exists j, j_0(m) \leq j \leq m.\;\Delta_j \geq (k(m) + 1) \, C' \, \Exp{X_j}] \leq \sum_{j=j_0(m)}^{\infty} \exp\left( - C' \, j^{1 - 2 \, \beta} \right)\text{.} \]
Since $\beta < \frac{1}{2}$ by assumption, the terms $\exp (- C'\,j^{1 - 2 \beta})$ fall faster than those of the sequence $\sum \frac{1}{j^2}$.
By the direct comparison test, the series converges, and the probability goes to $0$ for $m \to \infty$. 
With high probability, it must hold that
\begin{align*}
    F_j(k(m)+1) &\leq F_{j_0(m)}(k(m)+1) + (k(m) + 1) \, C' \, \frac{c}{1 - 2 \, \beta} \, j^{1 - 2 \, \beta} \\
    &\leq j_0(m) + (k(m) + 1) \, C' \, \frac{c}{1 - 2 \, \beta} \, j^{1 - 2 \, \beta} \text{.} \\
    \intertext{
As $k(m) = \log_2 \log m + \Theta(1)$, $k(m) + 1$ is also in $\log_2 \log m + \Theta(1)$.
We can choose $M_2$ large enough such that $j_0(m) \leq \frac{1}{2} \, (\log \log m)^{M_2 \, (1 - \frac{3}{2}\,\beta)}$ and such that $(k(m) + 1) \, C' \, \frac{c}{1 - 2 \, \beta} \leq \frac{1}{2} \, (\log \log m)^{\frac{\beta}{2} \, M_2}$ for sufficiently large $m$.
Then, for all $j$ between $(\log \log m)^{M_2}$ and $m$,
    }
    F_j(k(m) + 1) &\leq \frac{1}{2} \, j^{1 - \frac{3}{2}\,\beta} + \frac{1}{2} \, j^{\frac{\beta}{2}} \, j^{1 - 2\,\beta} = j^{1 - \frac{3}{2}\,\beta}\text{.}
\end{align*}
\end{proof}

\lemthreesix*

\begin{proof}
    \label{pf:3.6}
We repeatedly strengthen \cref{lem:beta0} using \cref{lem:3.5}, increasing $\beta$ in every step until $\beta > \frac{1}{2}$.
Note that \cref{lem:3.5} does not apply to $\beta$ exactly equal to $\frac{1}{2}$.
In this case, we weaken our hypthesis by slightly decreasing $\beta$ such that we obtain $\beta > \frac{1}{2}$ in the next step.

After increasing $\beta$ $r'$ many times, we obtain $\beta > \frac{1}{2}$ and $M$ such that the event
\[\mathcal{C} \coloneqq \forall j, j_0(m) \leq j \leq m. \; F_j(k_*(m) + r') \leq j^{1-\beta}\]
holds with high probability. For $r \coloneqq r' + 1$, this shows \cref{eq:3.6add}.

Assuming $\mathcal{C}$, we can bound
\begin{align*}
    \Prob[\exists j, j_0(m) \leq j \leq m. \; &F_j(k_*(m)+r'+1) > F_{j_0(m)}(k_*(m)+r'+1)] \\
    & \leq \sum_{i=j_0(m)}^{m - 1} d \, \left( \frac{F_i(k_*(m) + r')}{i} \right)^2 & \text{(\cref{eq:fdifference})}\\
    & \leq d \, \sum_{i=j_0(m)}^{m - 1} \left( \frac{i^{1-\beta}}{i} \right)^2 \\
    & = d \, \sum_{i=j_0(m)}^m i^{-2\beta} \\
    & = o(1)
\end{align*}
because $\beta > 1/2$ and therefore the series is over-harmonic.
\end{proof}

\section{MILP Formulation for Minimizing Congestion}
\label{app:milp}
	
Congestion minimization for confluent flow can be expressed as a mixed integer linear program (MILP).

To stress the connection to \textsc{MinMaxWeight}, denote the congestion at a voter $i$ by $w(i)$.
For each potential delegation $(u,v)$, $f(u,v)$ gives the amount of flow between $u$ and $v$.
This flow must be nonnegative (\ref{eq:milpnonneg}) and satisfy flow conservation (\ref{eq:milpflow}).
Congestion is defined in \cref{eq:milpw}.
To minimize maximum congestion, we introduce a variable $z$ that is higher than the congestion of any voter (\ref{eq:milpz}), and minimize $z$ (\ref{eq:milpmin}).

So far, we have described a Linear Program for optimizing splittable flow.
To restrict the solutions to confluent flow, we must enforce an `all-or-nothing' constraint on outflow from any node, i.e. at most one outgoing edge per node can have positive flow.
We express this using a convex-hull reformulation.
We introduce a binary variable $x_{u,v}$ for each edge (\ref{eq:milpbin}), and set the sum of binary variables for all outgoing edges of a node to $1$ (\ref{eq:milpbinone}).
If $M$ is a constant larger than the maximum possible flow, we can then bound $f(u, v) \leq M \, x_{u, v}$ (\ref{eq:milpconf}) to have at most one positive outflow per node.

The final MILP is thus

\begin{align}
    &\text{minimize}  && z && \label{eq:milpmin} \\
    &\text{subject to}
                         && f(m,n) \geq 0 \qquad && \forall (m,n) \in E, \label{eq:milpnonneg} \\
                         &&& \displaystyle \sum_{(n,m) \in E} f(n, m) = 1 + \displaystyle \sum_{(m,n) \in E} f(m,n) \qquad && \forall n \in N \setminus V, \label{eq:milpflow} \\
                         &&& \displaystyle w(v) = 1 + \sum \limits_{(n,v) \in E} f(n,v) \qquad && \forall v \in V, \label{eq:milpw} \\
                         &&& z \geq w(v) \qquad && \forall v \in V, \label{eq:milpz} \\
                         &&& x_{m,n} \in \{0,1\} \qquad && \forall (m,n) \in E, \label{eq:milpbin} \\
                         &&& \displaystyle \sum_{(n,m) \in E} x_{n, m} = 1 \qquad && \forall n \in N \setminus V, \label{eq:milpbinone} \\
                         &&& f(m,n) \leq M \cdot x_{m,n} \qquad && \forall (m,n) \in E. \label{eq:milpconf}
\end{align}

\section{Additional Figures}
\label{app:figs}

 \subsection{Single vs.\ Double Delegation}
 \label{sec:appsinglevsdoublefigures}

\begin{figure}[H]
\centering
\begin{subfigure}[h]{.49\textwidth}
\includegraphics[width=\textwidth]{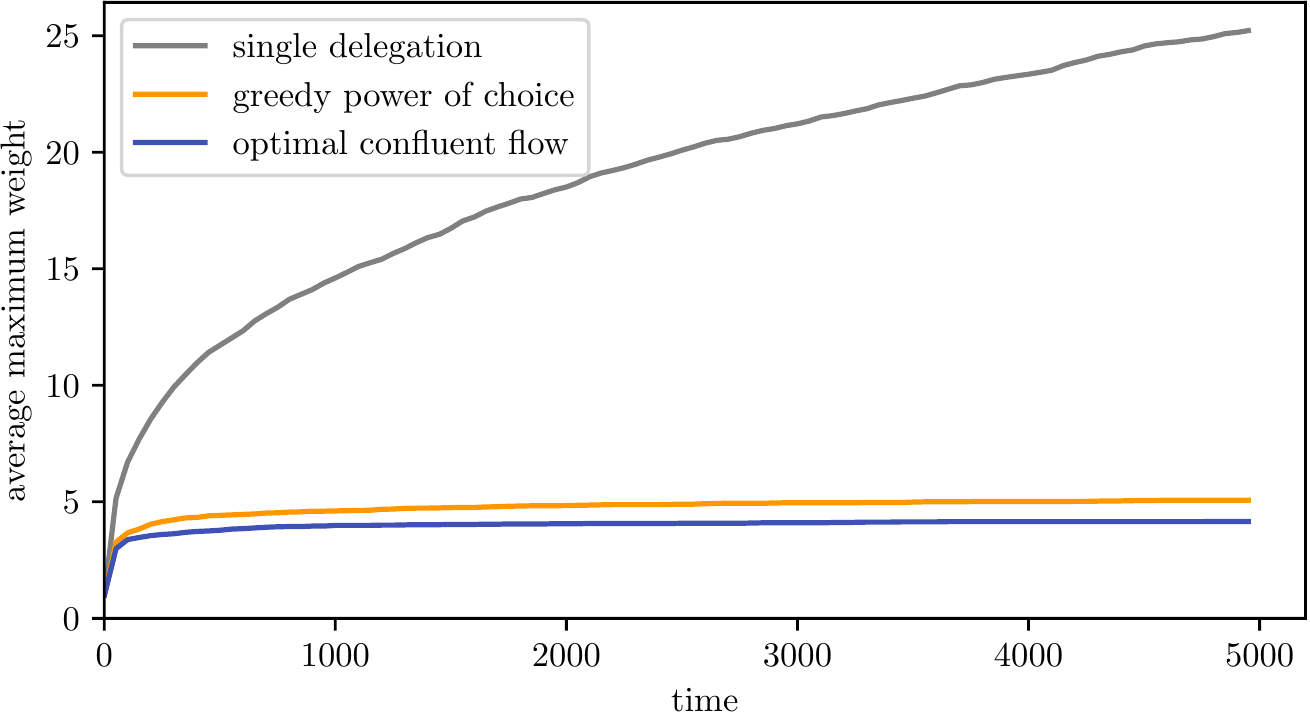}
\caption{$\gamma = 0.5$, $d = 0.25$}
\end{subfigure}\\
    \begin{subfigure}[h]{.49\textwidth}
\includegraphics[width=\textwidth]{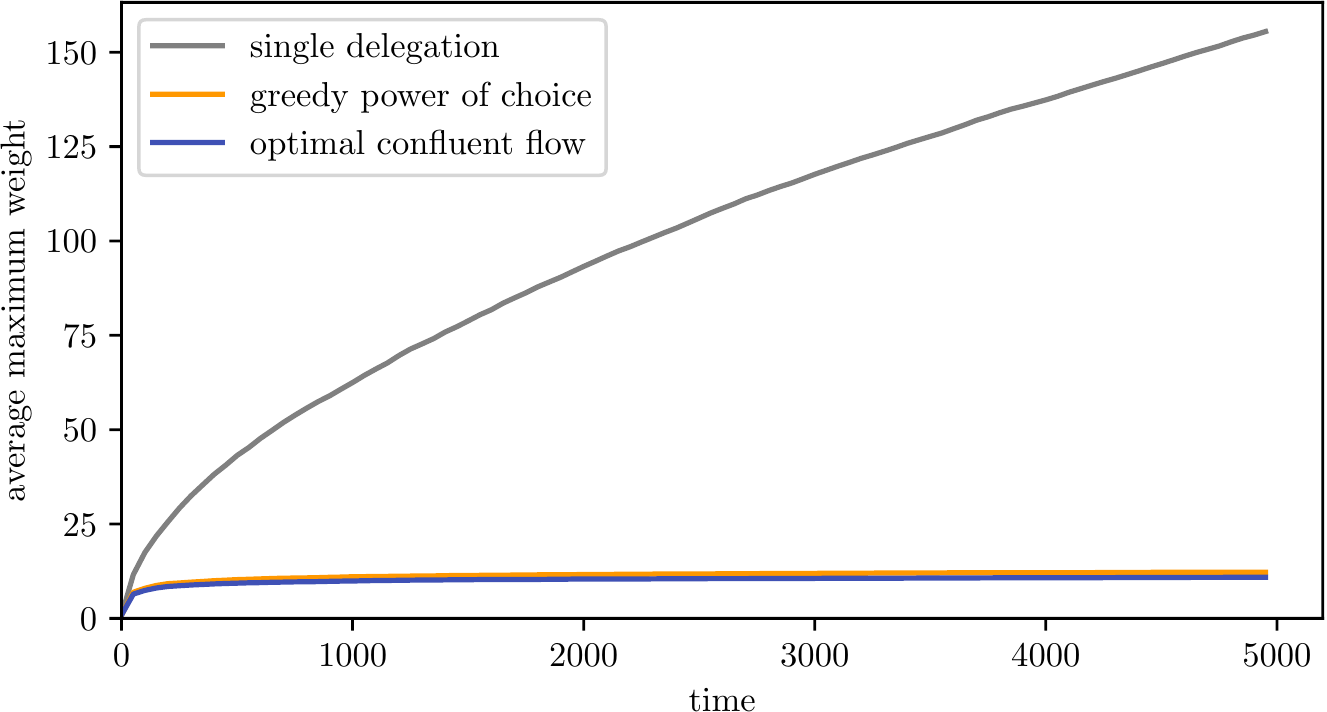}
\caption{$\gamma = 0.5$, $d = 0.5$}
\end{subfigure} \\
\begin{subfigure}[h]{.49\textwidth}
\includegraphics[width=\textwidth]{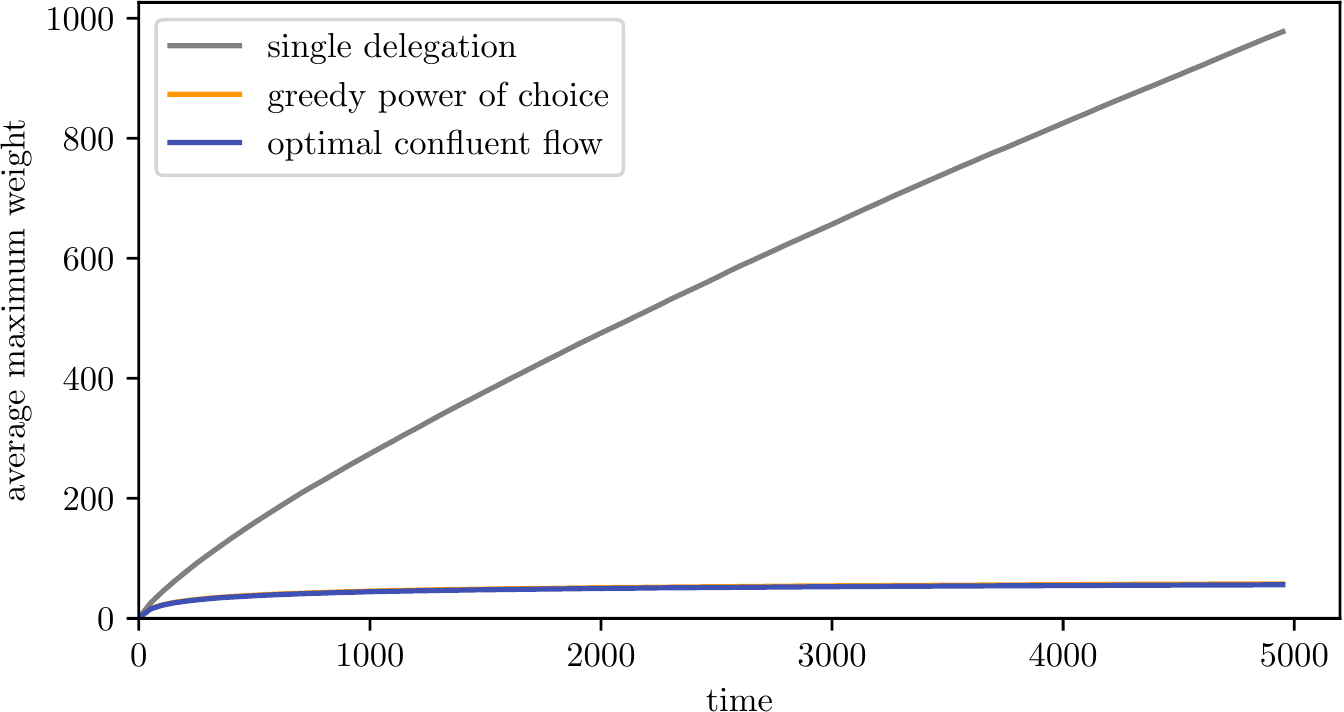}
\caption{$\gamma = 0.5$, $d = 0.75$}
\end{subfigure}
    \caption{Maximum weight averaged over 100 simulations of length 5\,000 time steps each. Maximum weight has been computed every 50 time steps.}
    \label{fig:singlevstwogammapointfive}
\end{figure}

\begin{figure}[ht]
	\centering
	\includegraphics[width=.5\textwidth]{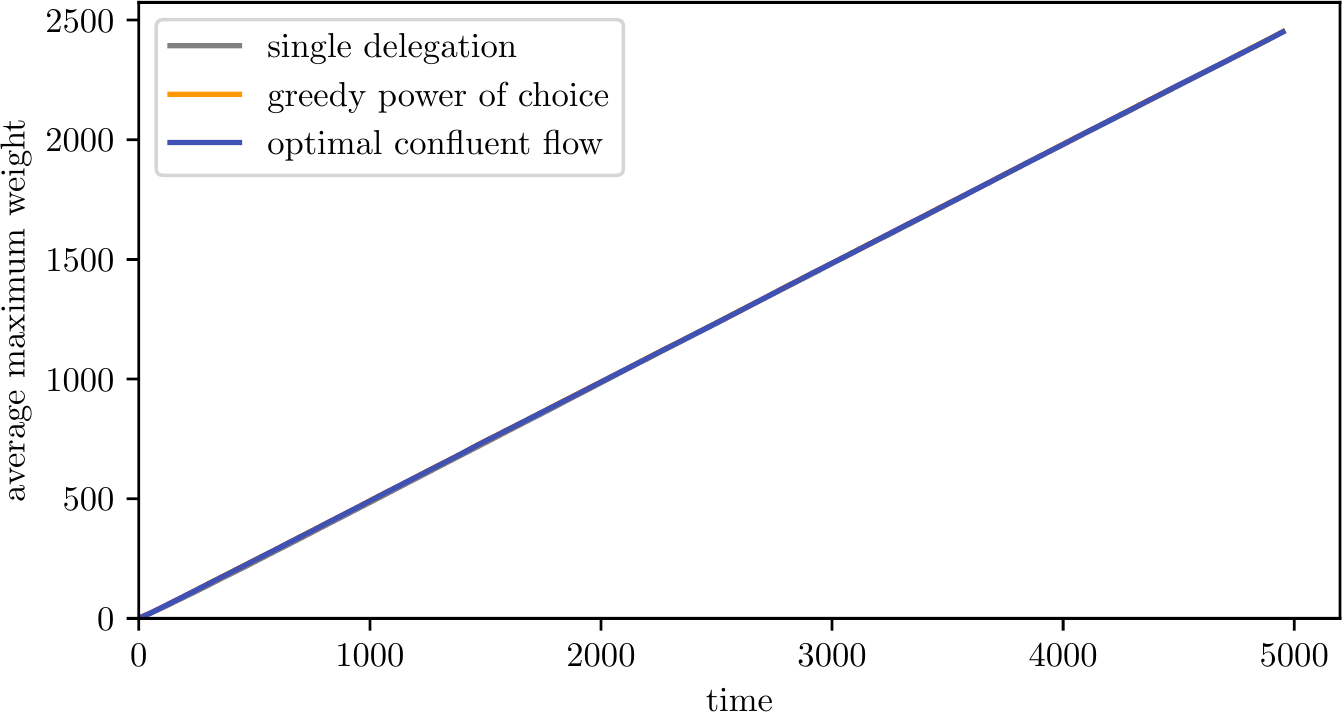}
	\captionof{figure}{Maximum weight averaged over 100 simulations, computed every 50 time steps. $\gamma = 2$, $d = 0.5$.}
	\label{fig:singlevstwogammatwo}
\end{figure}

\begin{figure}[H]
\centering
\begin{subfigure}[h]{.49\textwidth}
\includegraphics[width=\textwidth]{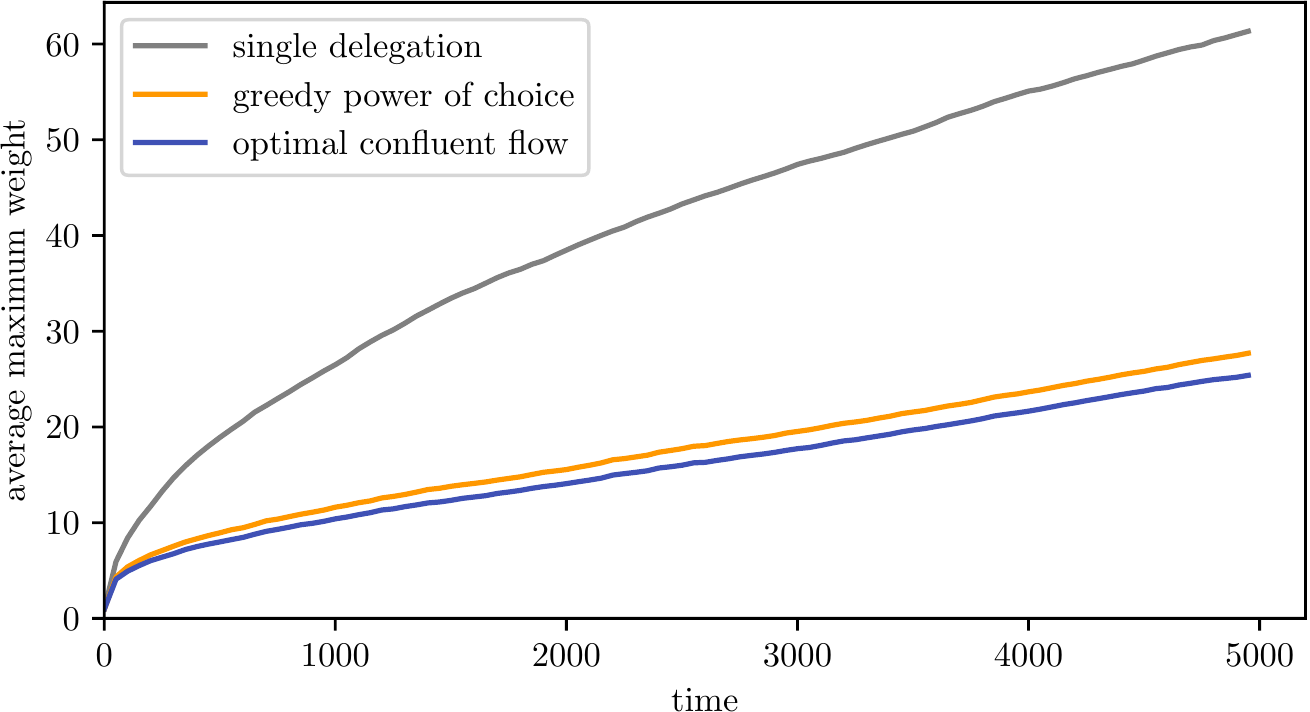}
\caption{$\gamma = 1.25$, $d = 0.25$: 1.6\,\%}
\end{subfigure}
\begin{subfigure}[h]{.49\textwidth}
\includegraphics[width=\textwidth]{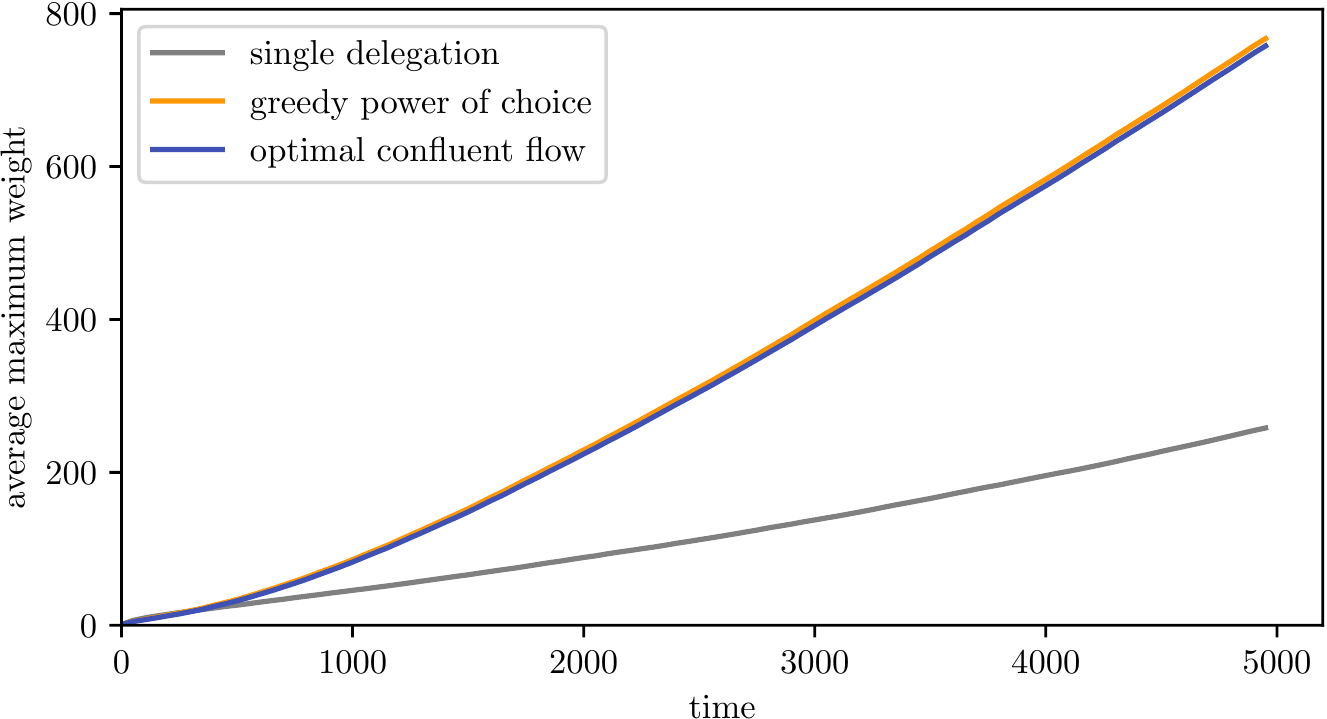}
\caption{$\gamma = 1.5$, $d = 0.25$: 59.8\,\%}
\end{subfigure} \\
    \begin{subfigure}[h]{.49\textwidth}
\includegraphics[width=\textwidth]{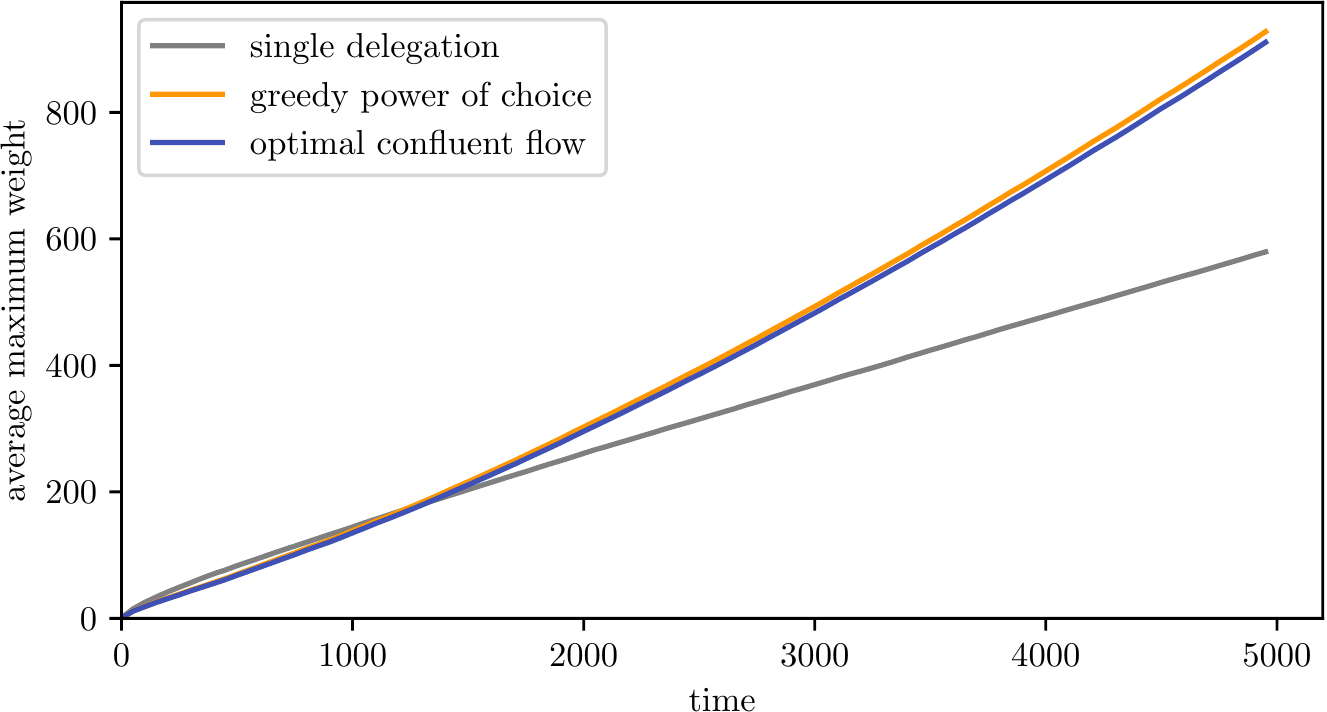}
\caption{$\gamma = 1.25$, $d = 0.5$: 28.2\,\%}
\end{subfigure}
    \begin{subfigure}[h]{.49\textwidth}
\includegraphics[width=\textwidth]{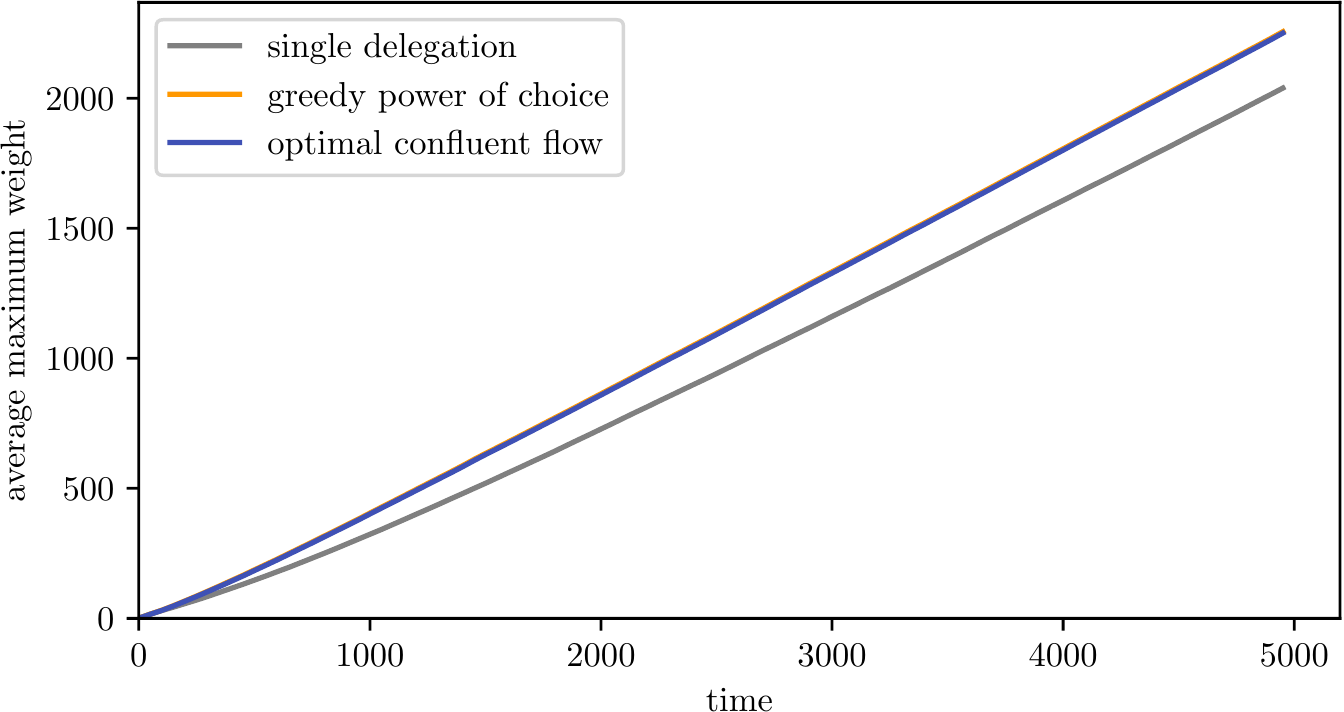}
\caption{$\gamma = 1.5$, $d = 0.5$: 87.7\,\%}
\end{subfigure} \\
\begin{subfigure}[h]{.49\textwidth}
\includegraphics[width=\textwidth]{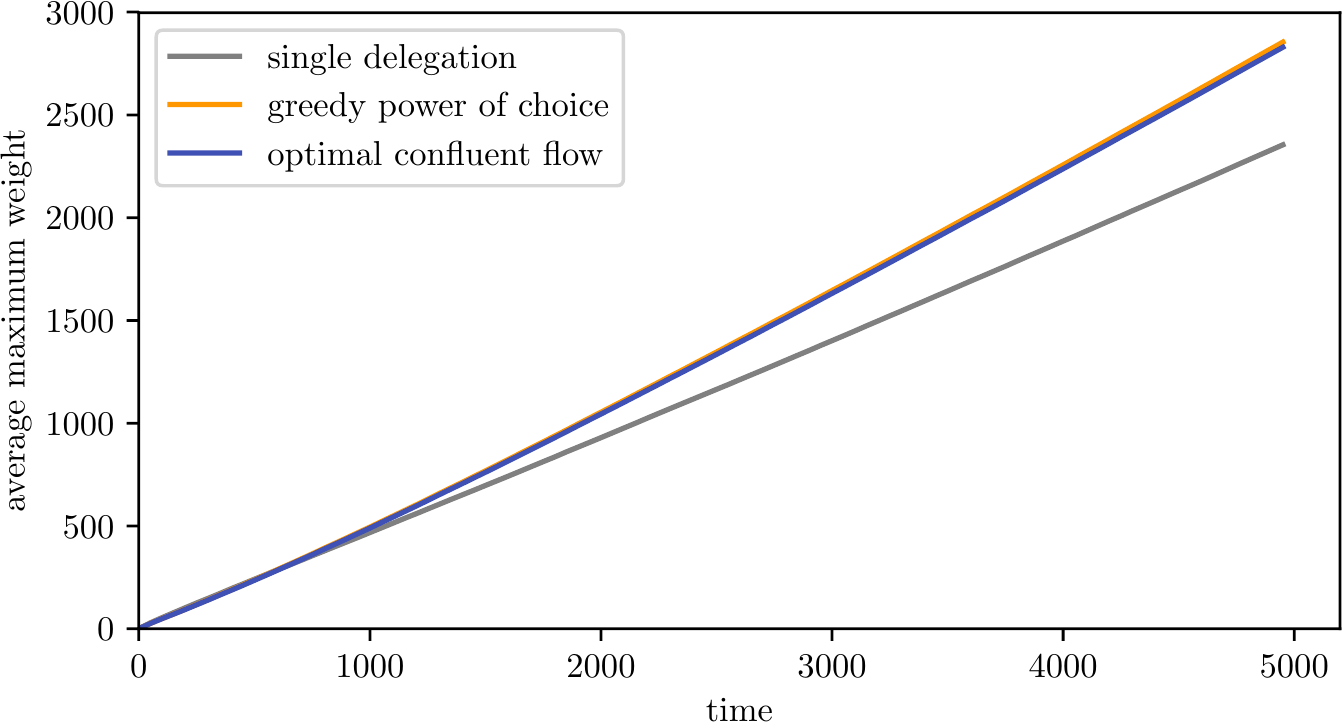}
\caption{$\gamma = 1.25$, $d = 0.75$: 55.4\,\%}
\end{subfigure} 
\begin{subfigure}[h]{.49\textwidth}
\includegraphics[width=\textwidth]{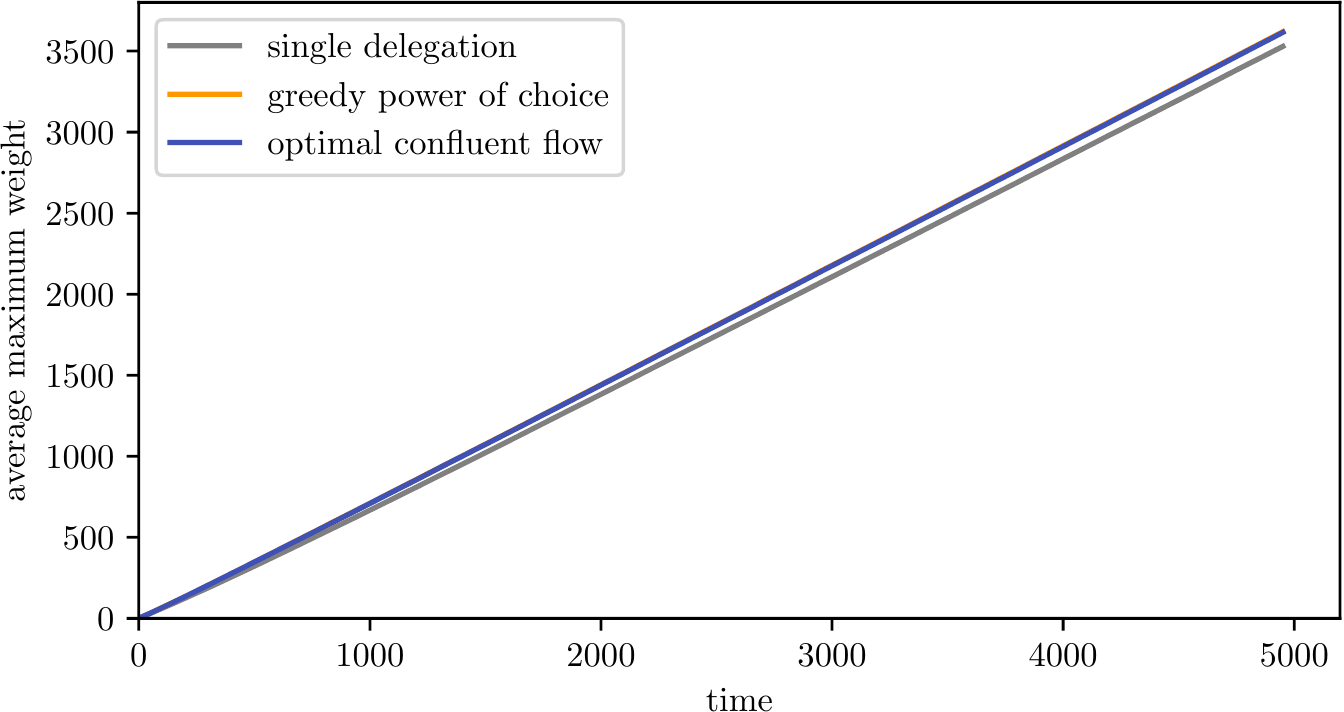}
\caption{$\gamma = 1.5$, $d = 0.75$: 94.1\,\%}
\end{subfigure}
    \caption{Maximum weight averaged over 100 simulations of length 5\,000 time steps each. Maximum weight has been computed every 50 time steps. The subfigure captions contain the percentage of delegators who give two identical delegation options.}
    \label{fig:singlevstwogammagtone}
\end{figure}

\subsection{Random vs.\ Single Delegation}
\label{app:randomvssingle}

\begin{figure}[H]
	\centering
	\begin{subfigure}[h]{.32\textwidth}
		\includegraphics[width=\textwidth]{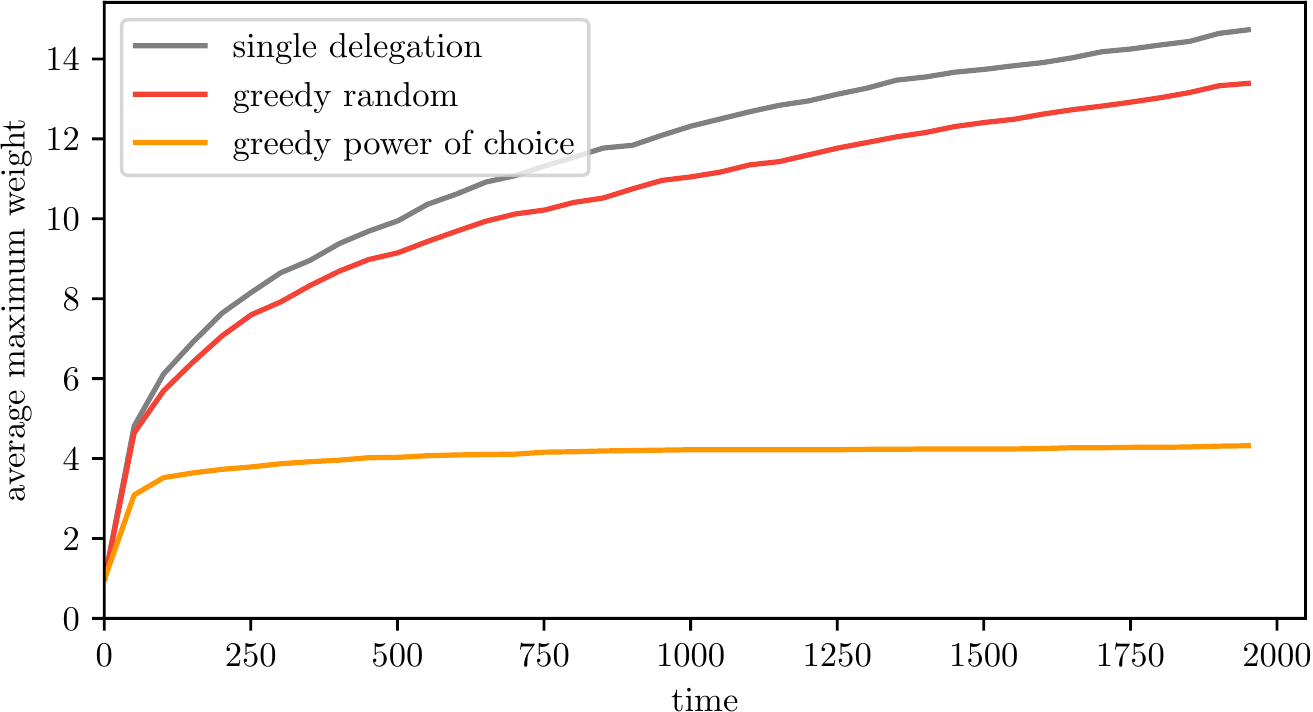}
		\caption{$\gamma = 0$, $d = 0.25$}
	\end{subfigure}
	\begin{subfigure}[h]{.32\textwidth}
		\includegraphics[width=\textwidth]{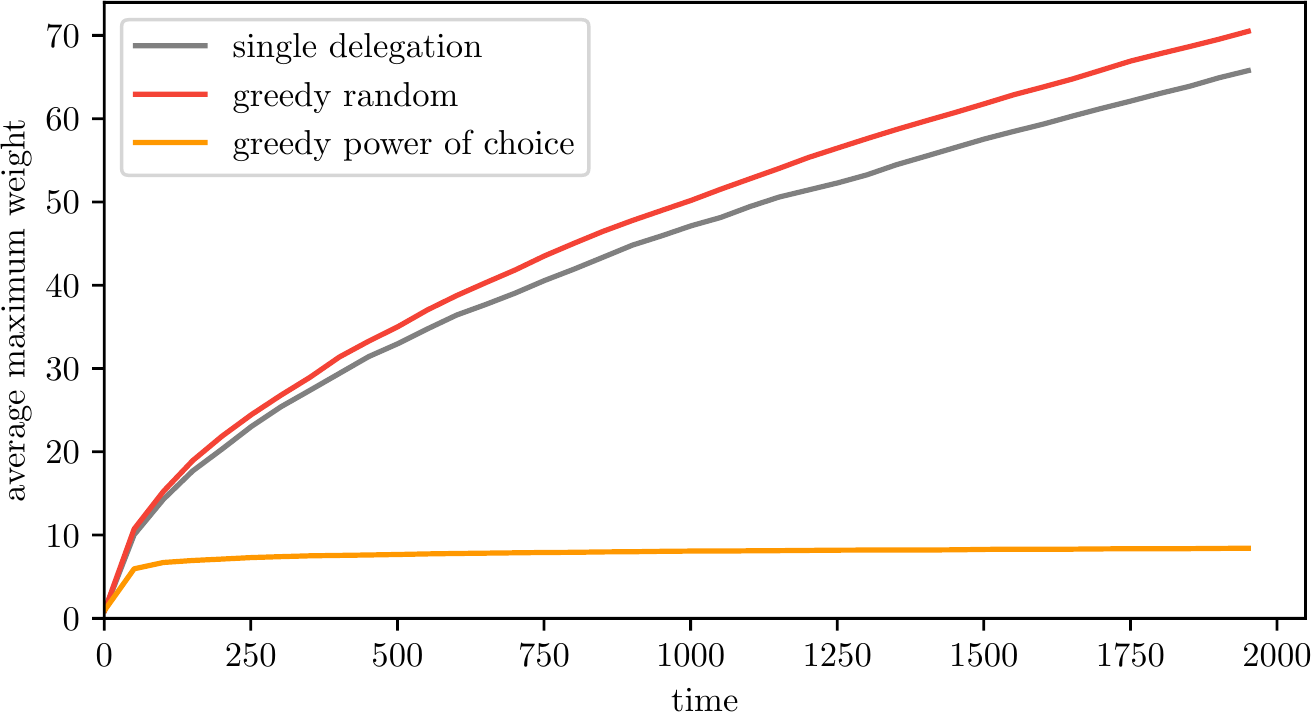}
		\caption{$\gamma = 0$, $d = 0.5$}
	\end{subfigure}
	\begin{subfigure}[h]{.32\textwidth}
		\includegraphics[width=\textwidth]{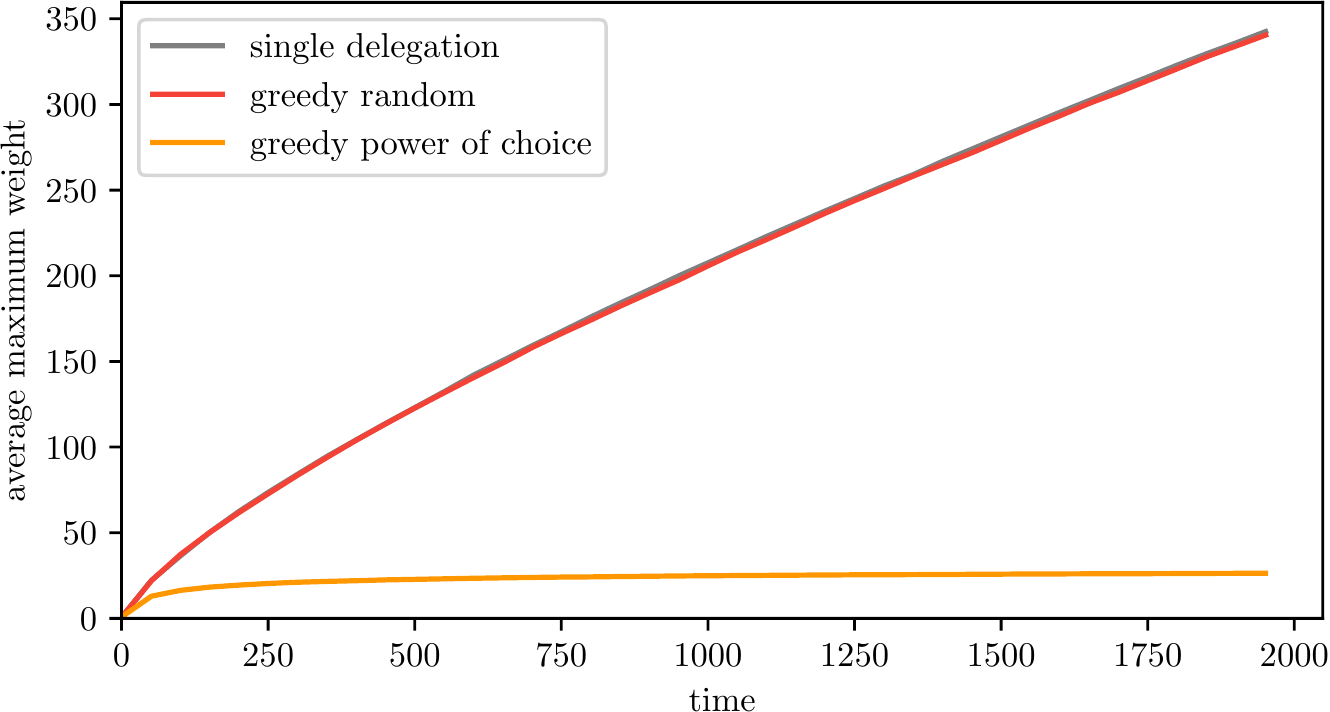}
		\caption{$\gamma = 0$, $d = 0.75$}
	\end{subfigure}	\\
	\begin{subfigure}[h]{.32\textwidth}
		\includegraphics[width=\textwidth]{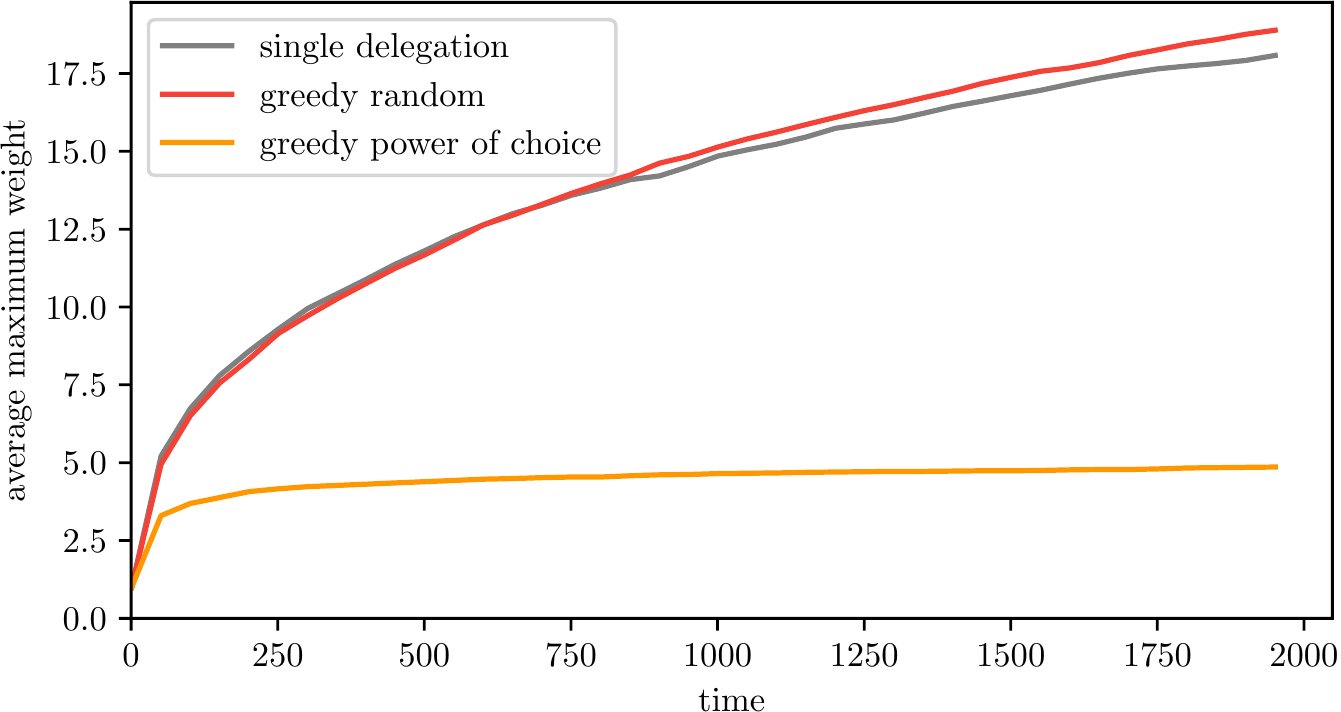}
		\caption{$\gamma = 0.5$, $d = 0.25$}
	\end{subfigure}
	\begin{subfigure}[h]{.32\textwidth}
		\includegraphics[width=\textwidth]{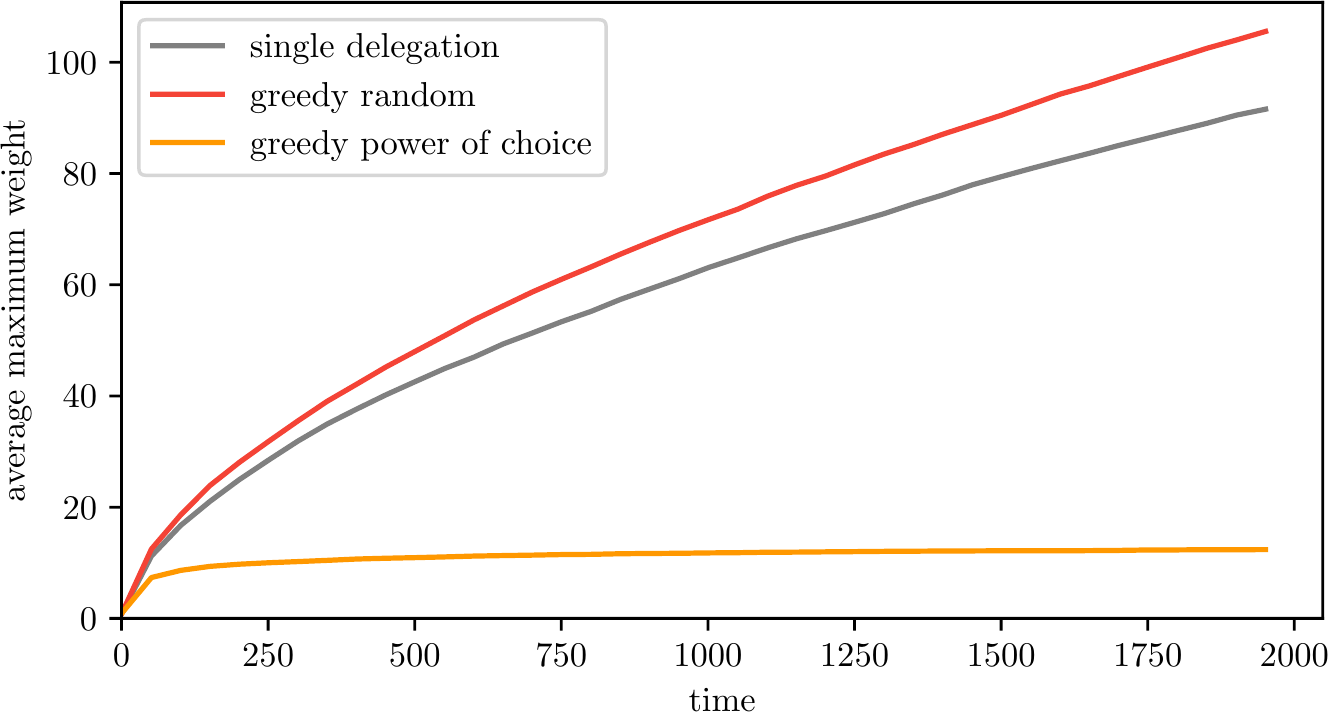}
		\caption{$\gamma = 0.5$, $d = 0.5$}
	\end{subfigure}
	\begin{subfigure}[h]{.32\textwidth}
		\includegraphics[width=\textwidth]{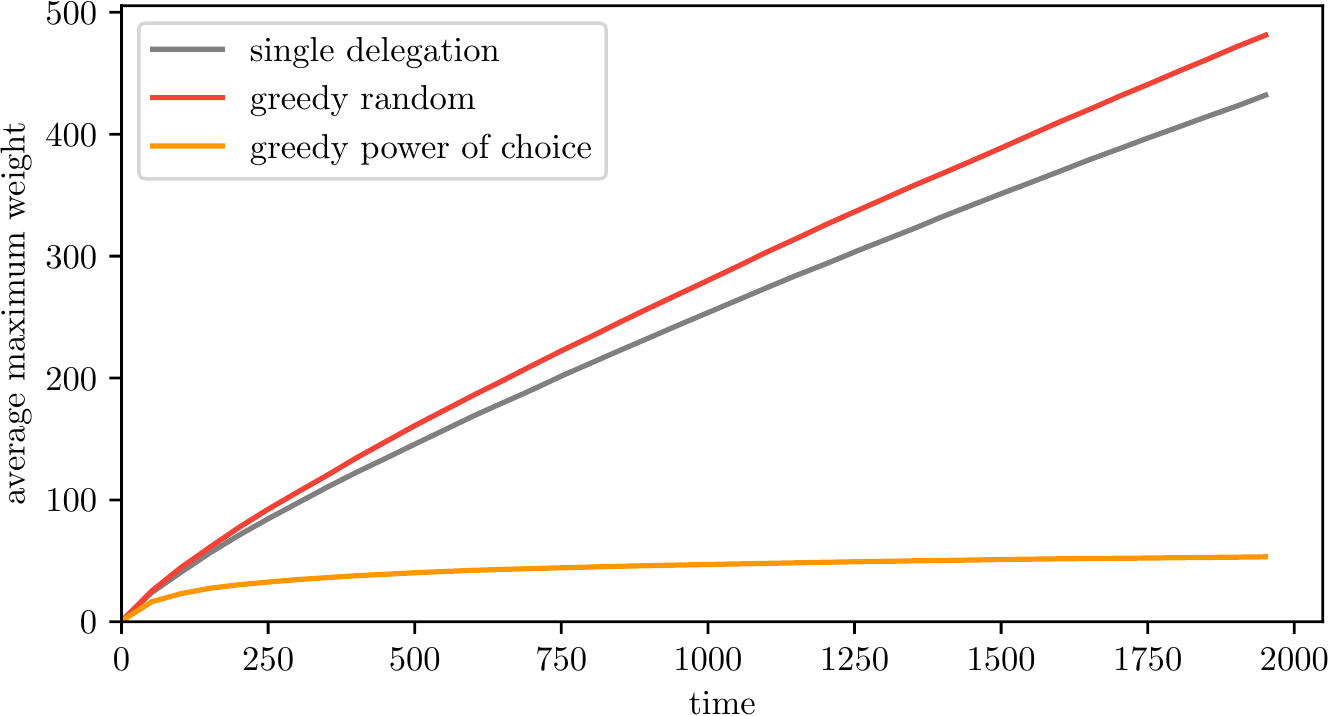}
		\caption{$\gamma = 0.5$, $d = 0.75$}
	\end{subfigure}	\\
	\begin{subfigure}[h]{.32\textwidth}
		\includegraphics[width=\textwidth]{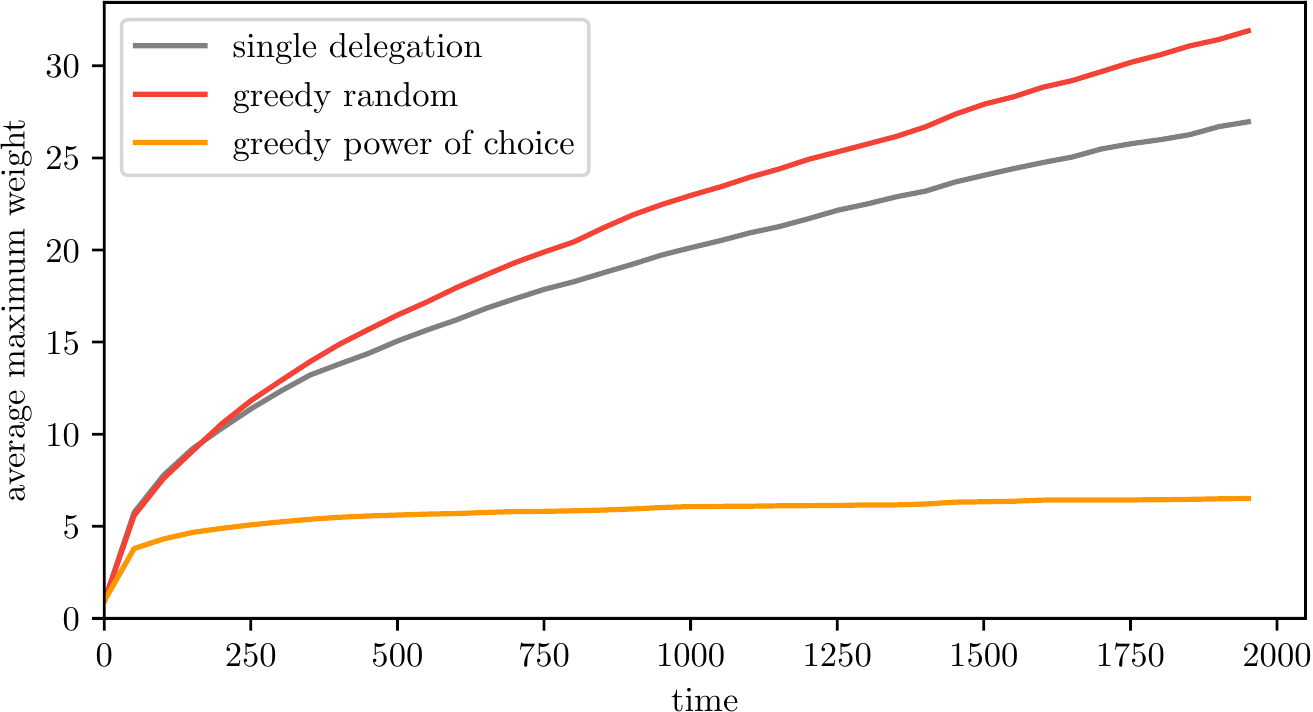}
		\caption{$\gamma = 1$, $d = 0.25$}
	\end{subfigure}
	\begin{subfigure}[h]{.32\textwidth}
		\includegraphics[width=\textwidth]{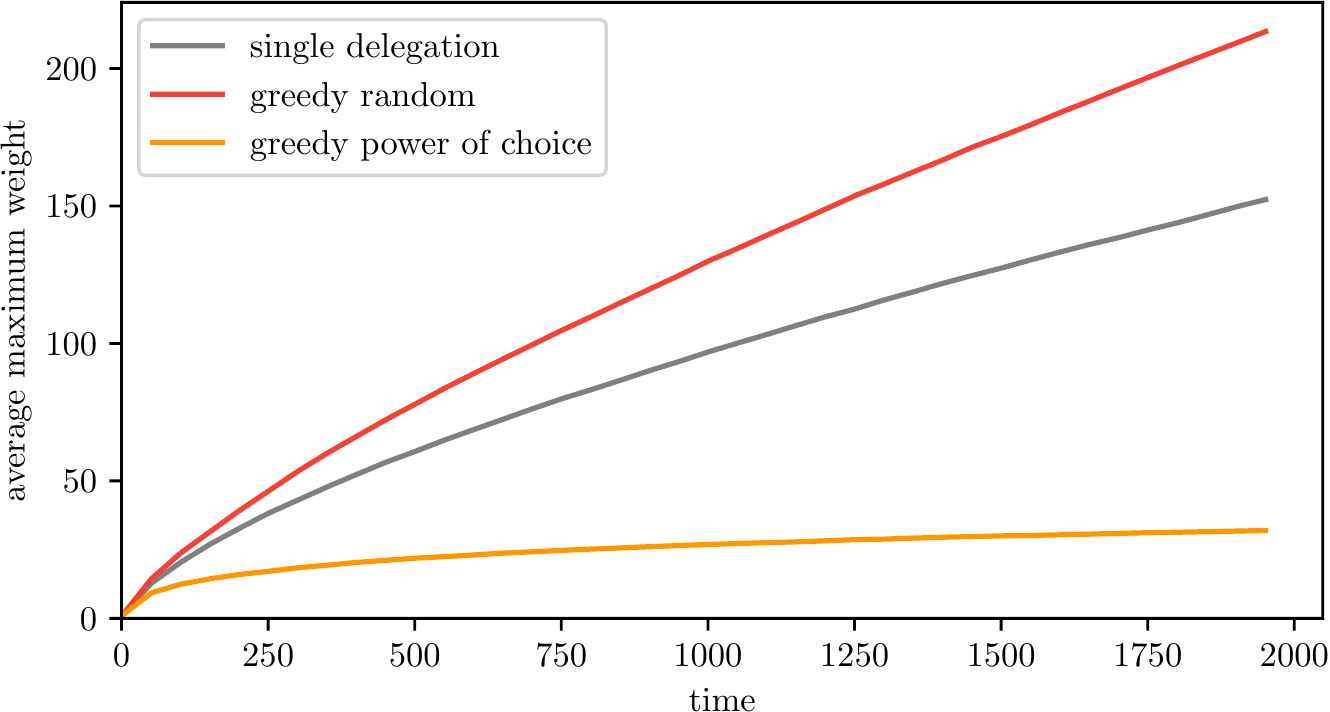}
		\caption{$\gamma = 1$, $d = 0.5$}
	\end{subfigure}
	\begin{subfigure}[h]{.32\textwidth}
		\includegraphics[width=\textwidth]{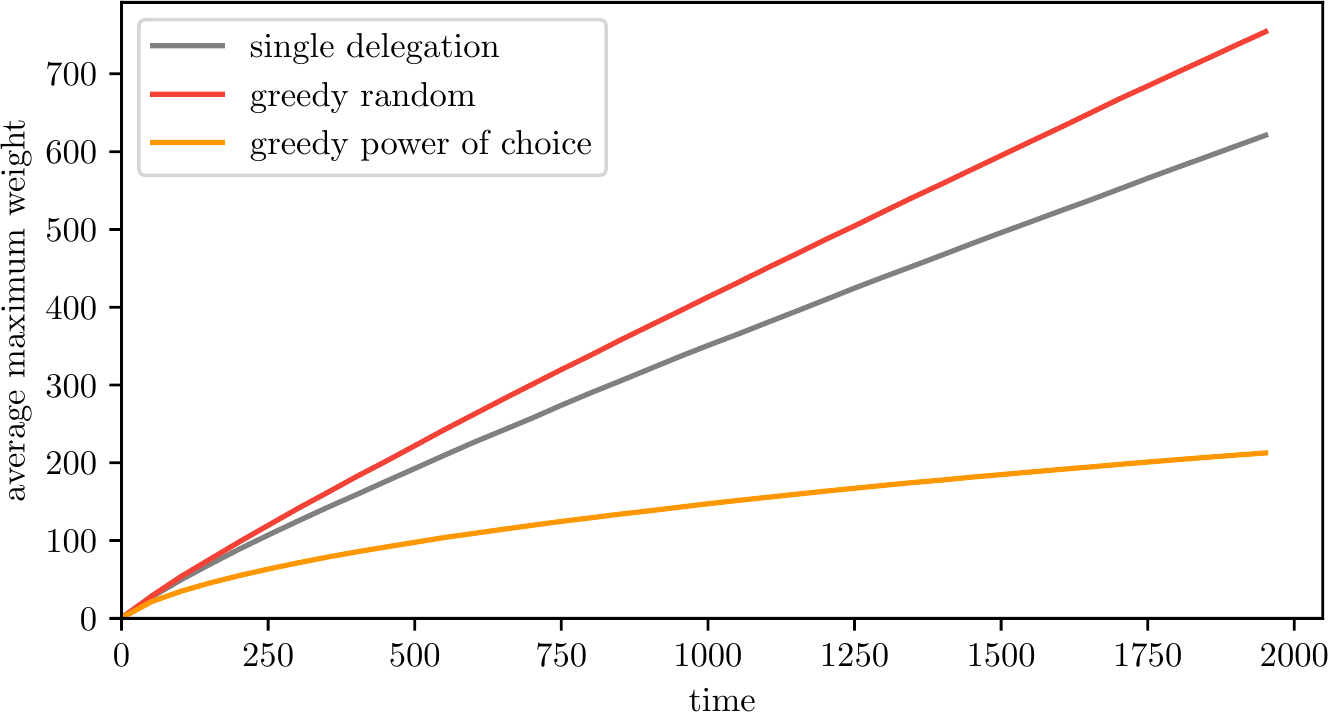}
		\caption{$\gamma = 1$, $d = 0.75$}
	\end{subfigure}
	\caption{Maximum weight averaged over 200 simulations of length 2\,000 time steps each. Maximum weight has been computed every 20 time steps.}
	\label{fig:randomvssingle}
\end{figure}

\subsection{Confluent vs.\ Splittable Flow}
\label{app:confsplit}

In order to determine whether we could gain significantly from relaxing the problem to that of splittable delegations, we evaluate the penalty incurred from enforcing confluence. 
That is, we examine the requirement that each delegator must delegate all of her weight to exactly one other agent instead of splitting her vote among multiple agents. 
This is equivalent to comparing the optimal solutions to the problems of confluent and splittable flow on the same graph.
We compute these solutions by solving an MILP and LP, respectively. 
	
As seen in \cref{appfig:confsplit}, the difference between the two solutions is negligible even for large values of $t$. 
\cref{appsubfig:confsplitsingle} plots a single run of the two solutions over time and suggests that the confluent solution is very close to the ceiling of the fractional LP solution. 
\cref{appsubfig:confsplitavg} averages the optimal confluent and splittable solutions over $100$ traces to demonstrate that, in our setting, the solution for confluent flow closely approximates the less constrained solution to splittable flow on average.

\begin{figure}[H]
	\centering
	\begin{subfigure}[h]{0.49\textwidth}
		\includegraphics[width=\linewidth]{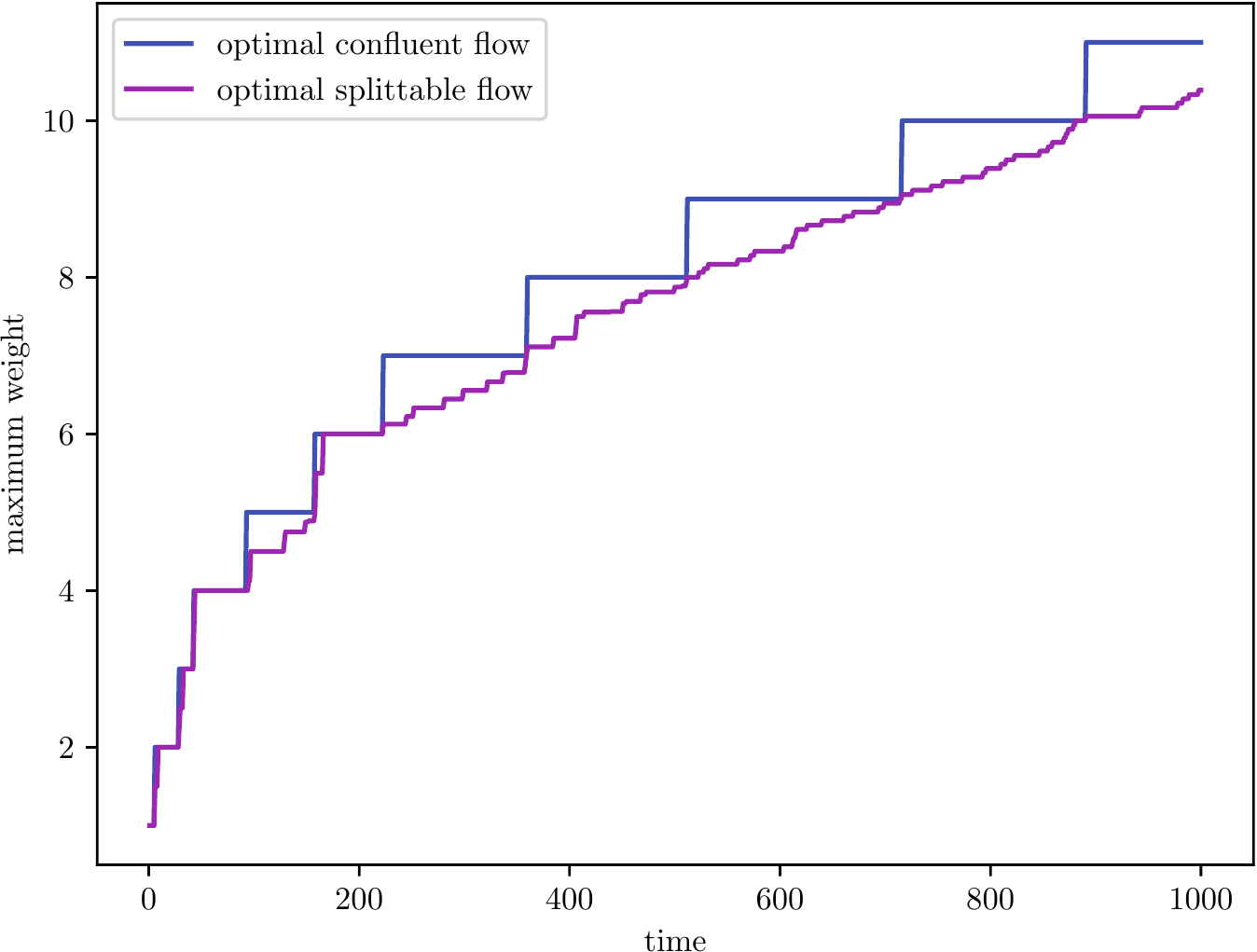}
		\caption{Single trace}
		\label{appsubfig:confsplitsingle}
	\end{subfigure}
	\begin{subfigure}[h]{0.49\textwidth}
		\includegraphics[width=\linewidth]{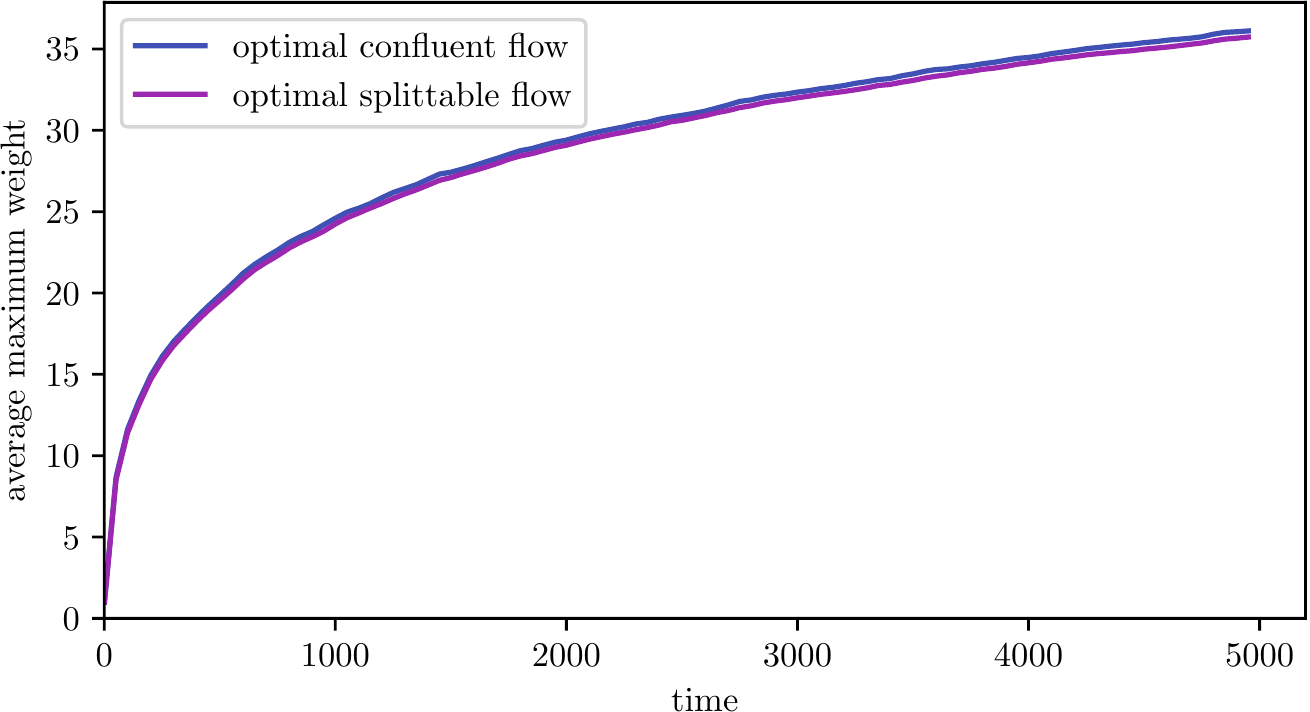}
		\caption{Averaged over $100$ traces}
		\label{appsubfig:confsplitavg}
	\end{subfigure}%
	\caption{Confluent vs.\ splittable flow: $\gamma = 1$, $d = 0.5$, $k = 2$.}
	\label{appfig:confsplit}
\end{figure}

 \subsection{Histograms}
 \label{app:histograms}

 \begin{figure}[H]
 	\centering
 	\begin{subfigure}[h]{0.49\textwidth}
 		\includegraphics[width=\linewidth]{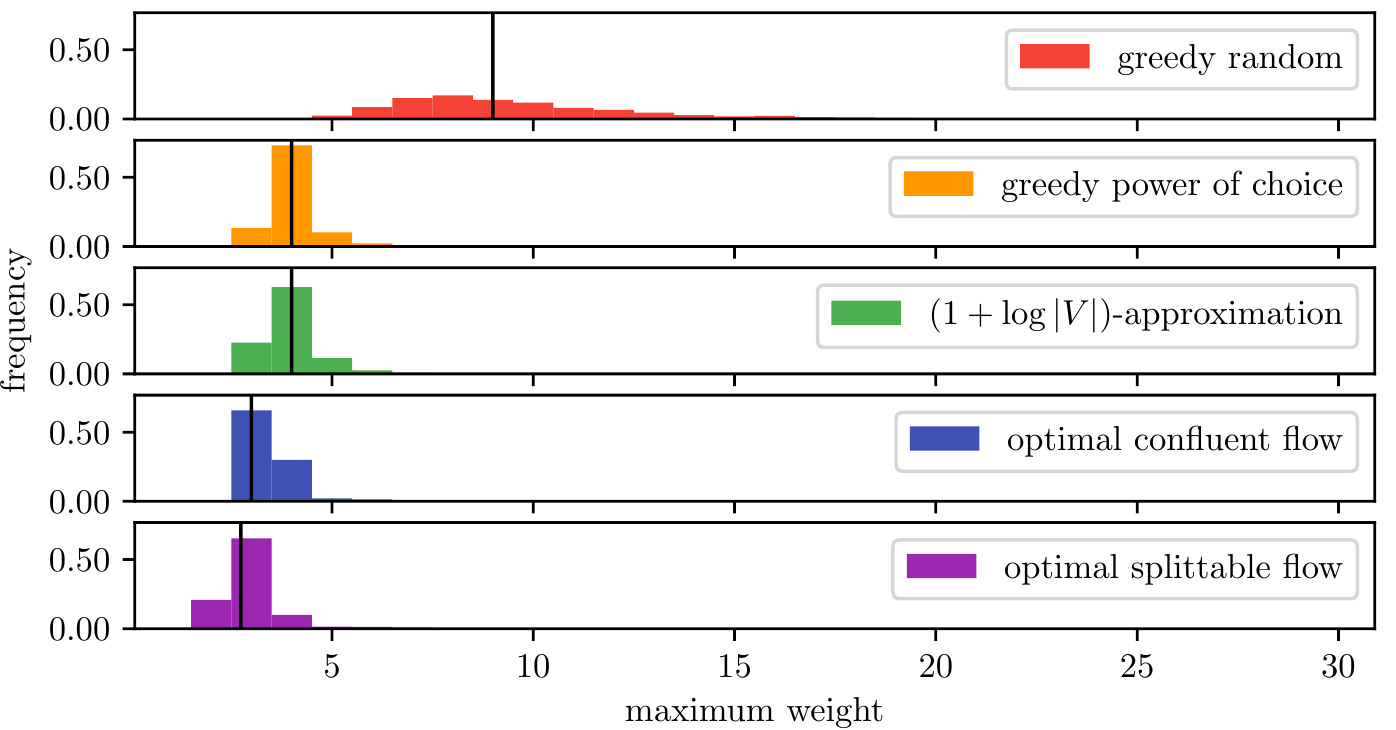}
 		\caption{All mechanisms and optimal splittable solution}
 	\end{subfigure}
 	\begin{subfigure}[h]{0.49\textwidth}
 		\includegraphics[width=\linewidth]{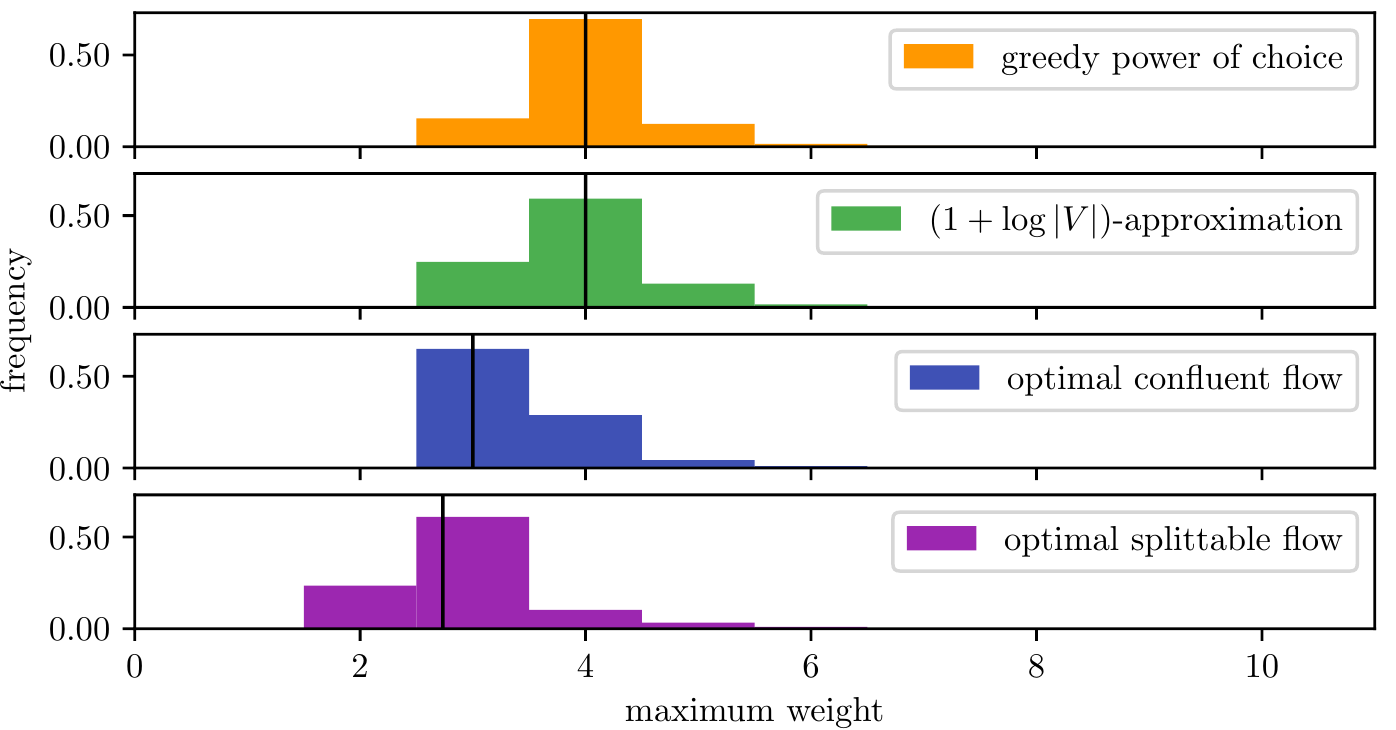}
 		\caption{Excluding random delegation}
 	\end{subfigure}%
 	\caption{Frequency of maximum weights at time $500$ over $1\,000$ runs: $\gamma = 0$, $d = 0.25$, $k = 2$.}
 	\label{fig:histg0d25}
 \end{figure}

 \begin{figure}[H]
 	\centering
 	\begin{subfigure}[h]{0.49\textwidth}
 		\includegraphics[width=\linewidth]{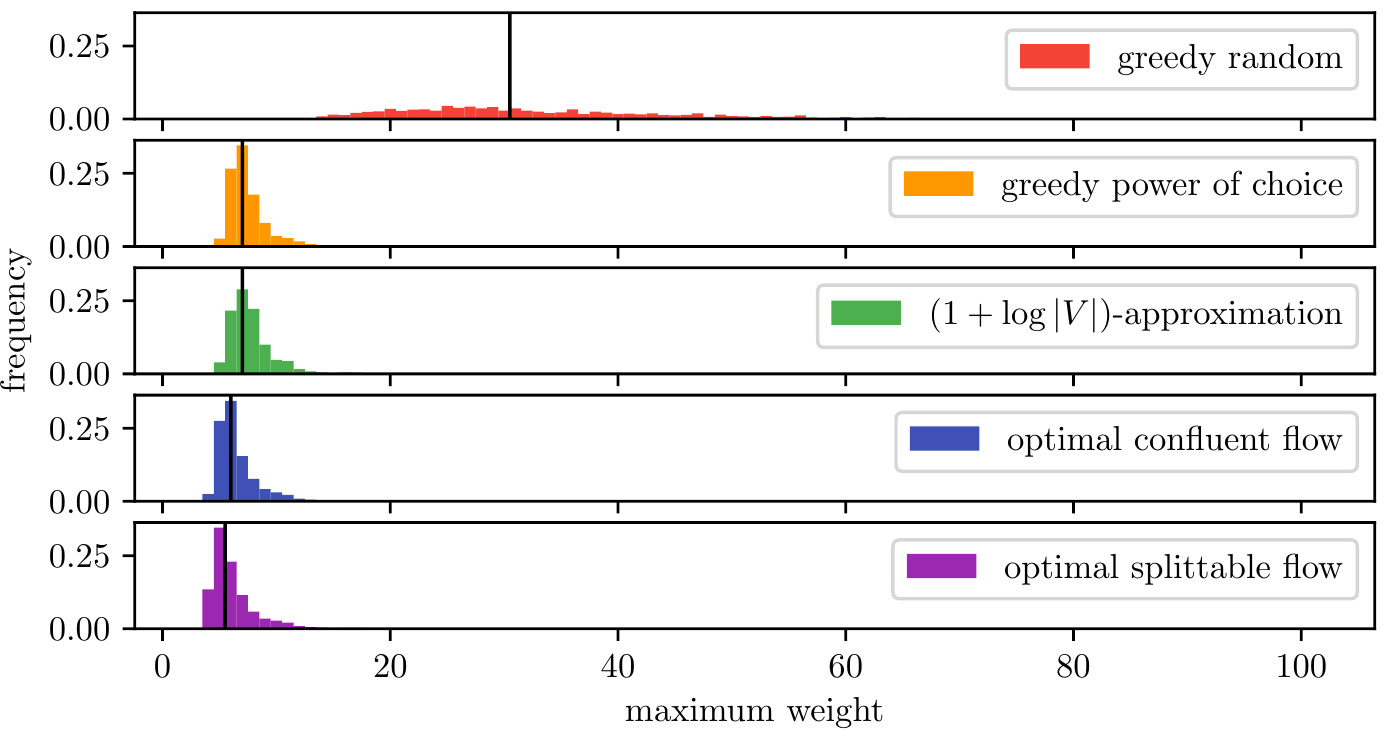}
 		\caption{All mechanisms and optimal splittable solution}
 	\end{subfigure}
 	\begin{subfigure}[h]{0.49\textwidth}
 		\includegraphics[width=\linewidth]{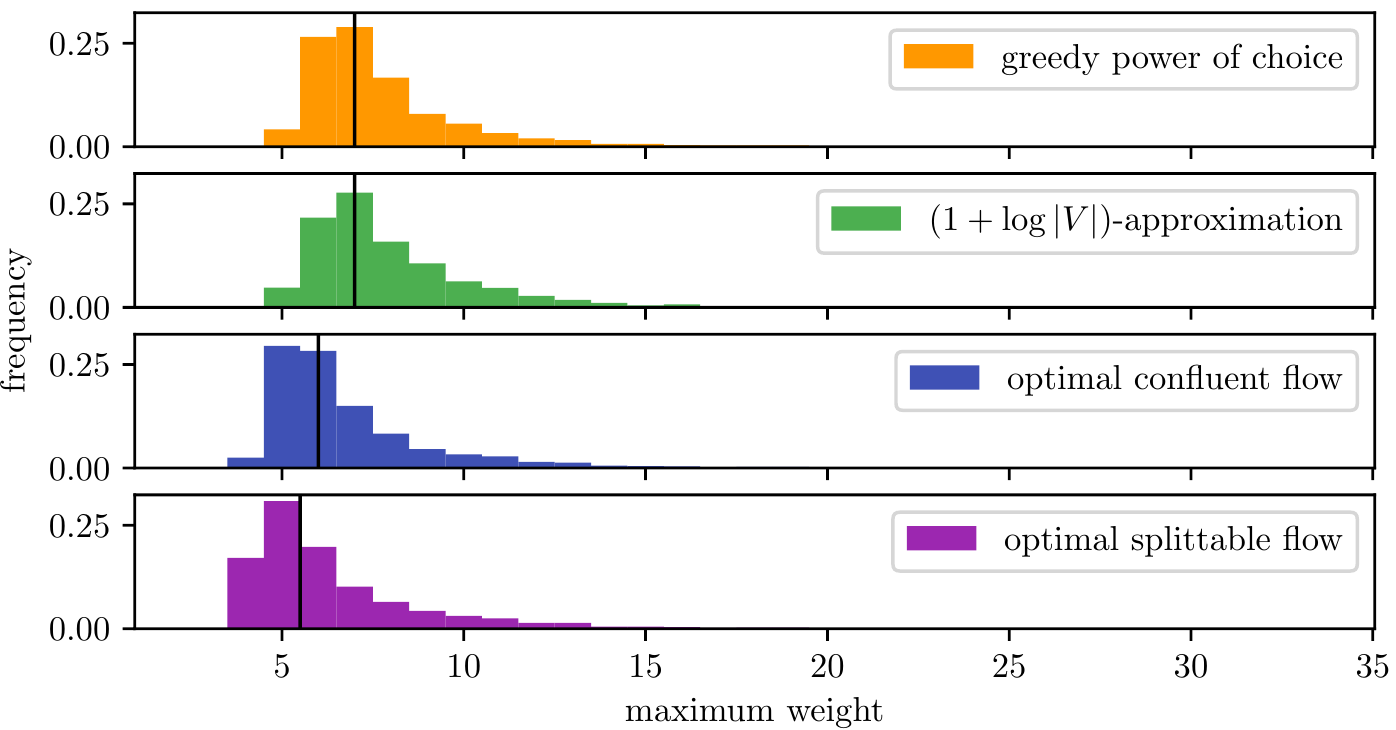}
 		\caption{Excluding random delegation}
 	\end{subfigure}%
 	\caption{Frequency of maximum weights at time $500$ over $1\,000$ runs: $\gamma = 0$, $d = 0.5$, $k = 2$.}
 	\label{fig:histg0d50}
 \end{figure}

 \begin{figure}[H]
 	\centering
 	\begin{subfigure}[h]{0.49\textwidth}
 		\includegraphics[width=\linewidth]{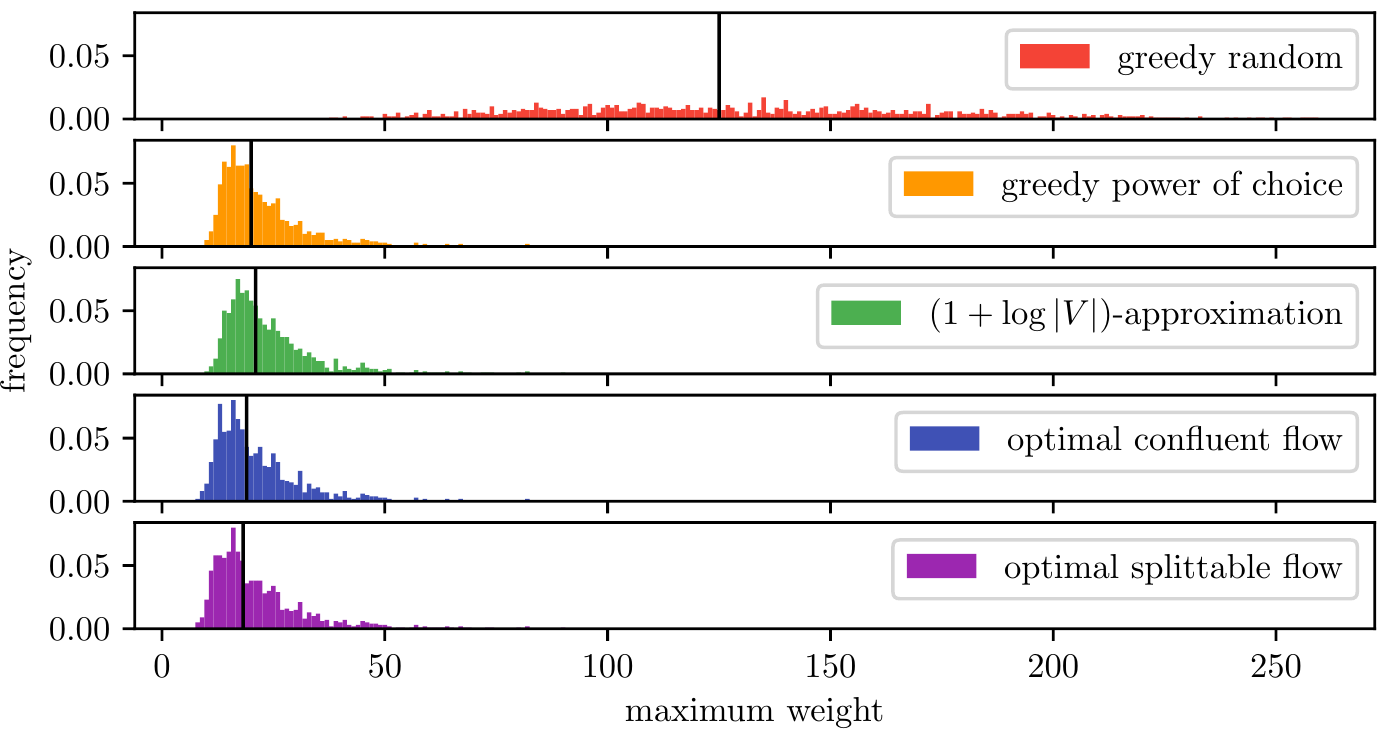}
 		\caption{All mechanisms and optimal splittable solution}
 	\end{subfigure}
 	\begin{subfigure}[h]{0.49\textwidth}
 		\includegraphics[width=\linewidth]{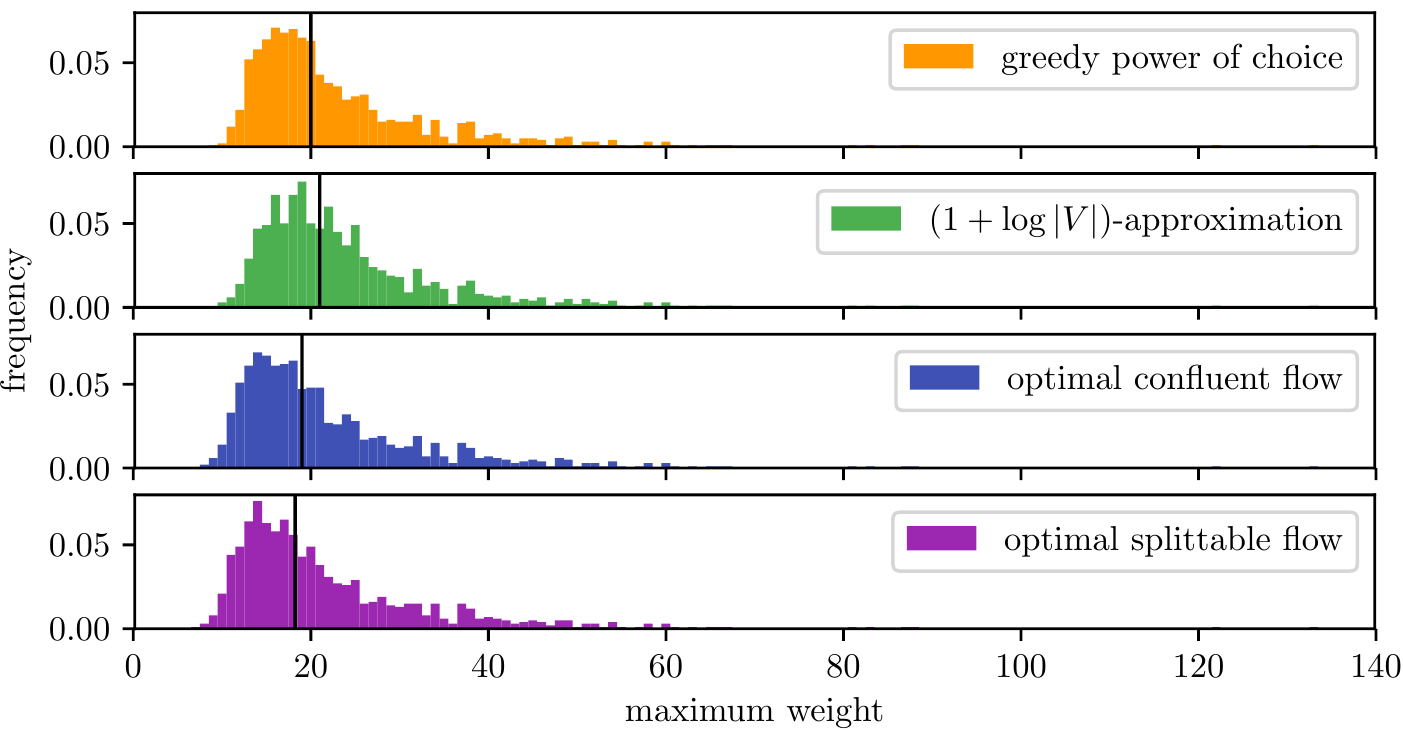}
 		\caption{Excluding random delegation}
 	\end{subfigure}%
 	\caption{Frequency of maximum weights at time $500$ over $1\,000$ runs: $\gamma = 0$, $d = 0.75$, $k = 2$.}
 	\label{fig:histg0d75}
 \end{figure}

 \begin{figure}[H]
 	\centering
 	\begin{subfigure}[h]{0.49\textwidth}
 		\includegraphics[width=\linewidth]{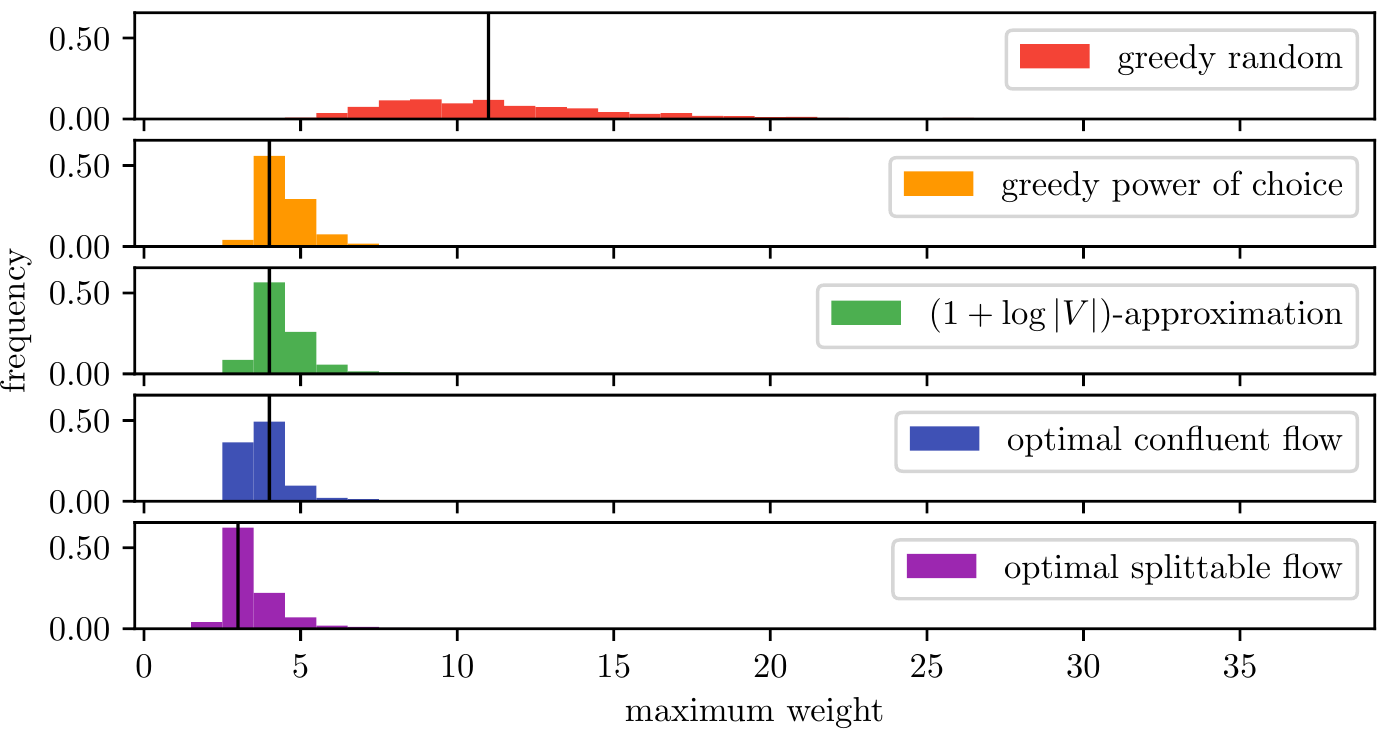}
 		\caption{All mechanisms and optimal splittable solution}
 	\end{subfigure}
 	\begin{subfigure}[h]{0.49\textwidth}
 		\includegraphics[width=\linewidth]{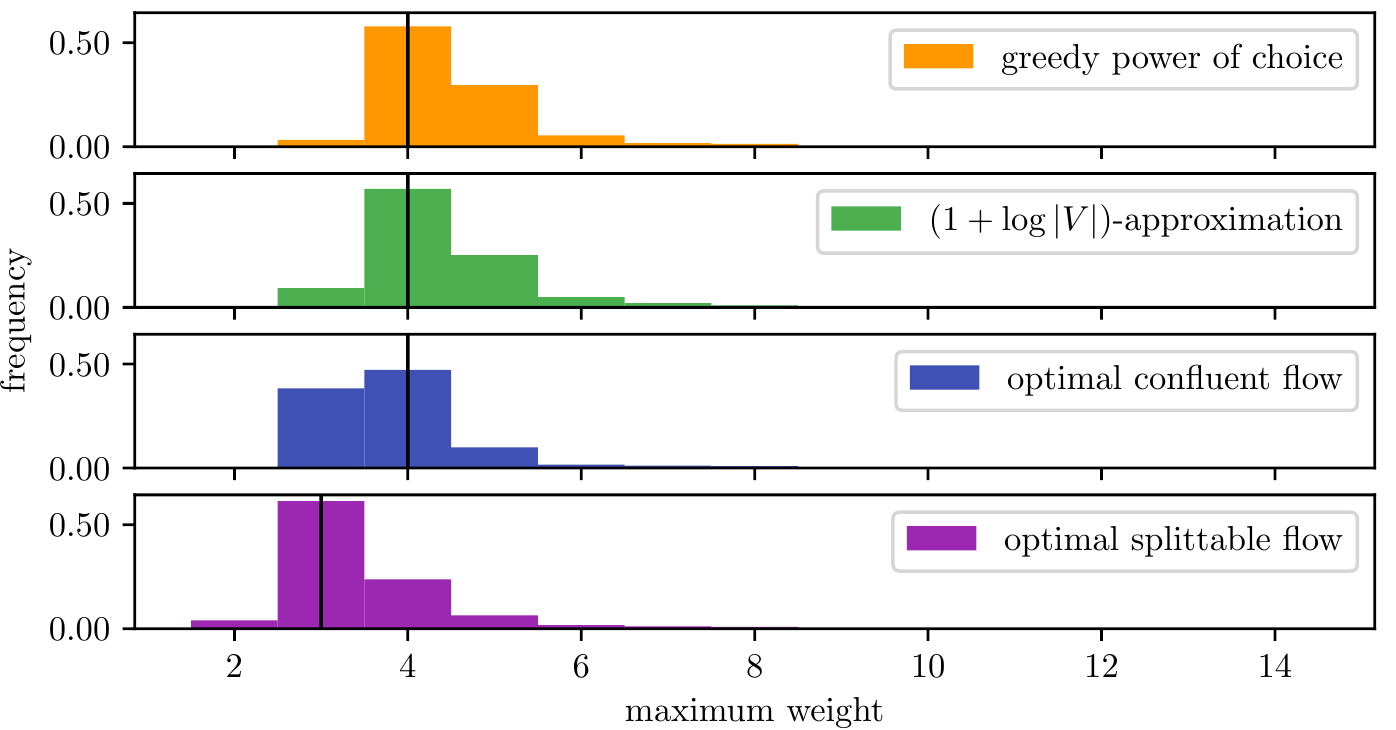}
 		\caption{Excluding random delegation}
 	\end{subfigure}%
 	\caption{Frequency of maximum weights at time $500$ over $1\,000$ runs: $\gamma = 0.5$, $d = 0.25$, $k = 2$.}
 	\label{fig:histg50d25}
 \end{figure}

 \begin{figure}[H]
 	\centering
 	\begin{subfigure}[h]{0.49\textwidth}
 		\includegraphics[width=\linewidth]{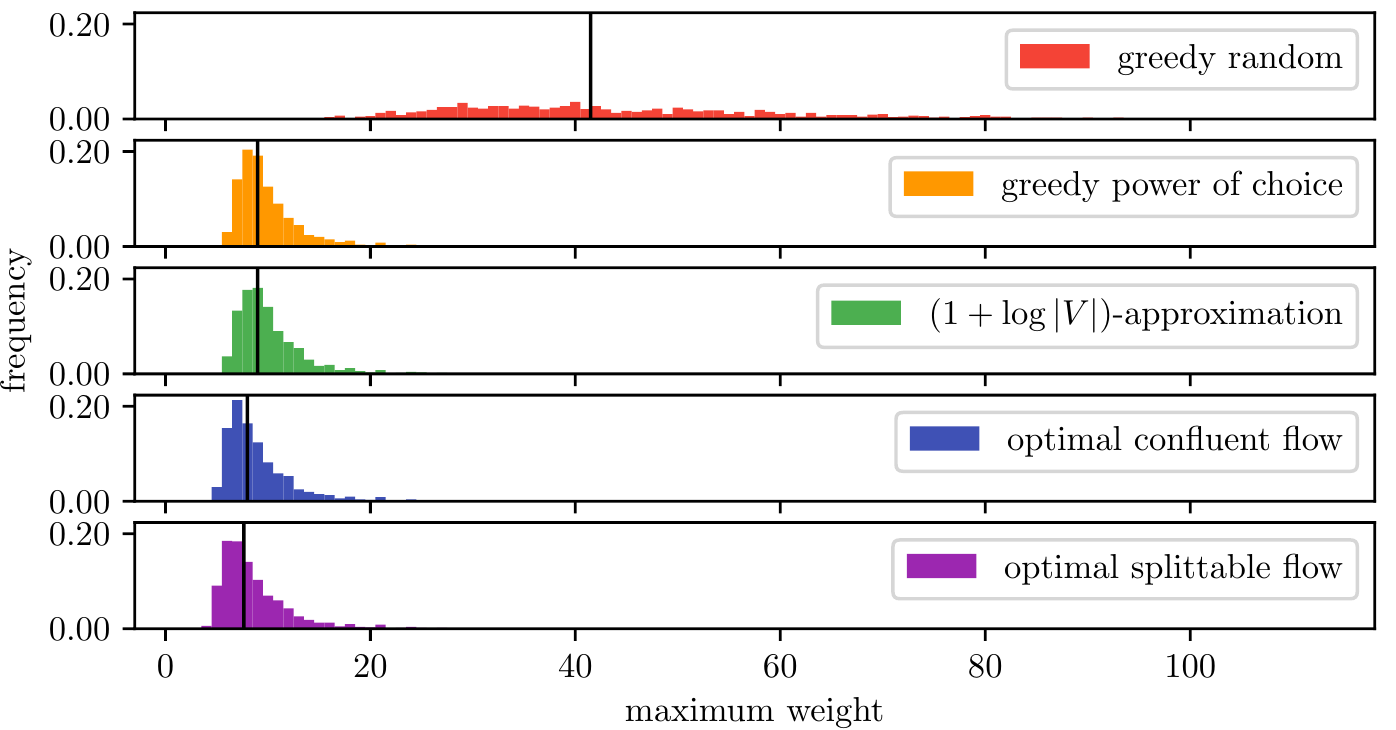}
 		\caption{All mechanisms and optimal splittable solution}
 	\end{subfigure}
 	\begin{subfigure}[h]{0.49\textwidth}
 		\includegraphics[width=\linewidth]{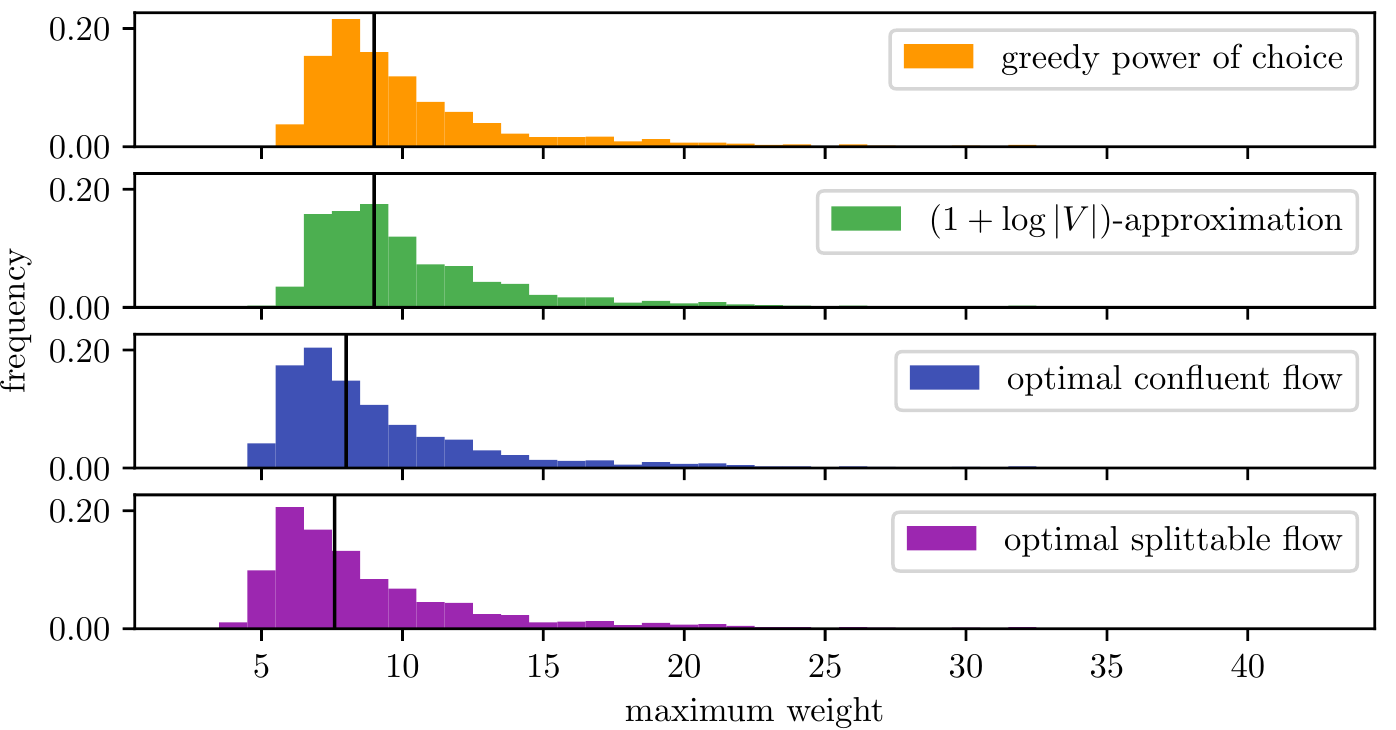}
 		\caption{Excluding random delegation}
 	\end{subfigure}%
 	\caption{Frequency of maximum weights at time $500$ over $1\,000$ runs: $\gamma = 0.5$, $d = 0.5$, $k = 2$.}
 	\label{fig:histg50d50}
 \end{figure}

 \begin{figure}[H]
 	\centering
 	\begin{subfigure}[h]{0.49\textwidth}
 		\includegraphics[width=\linewidth]{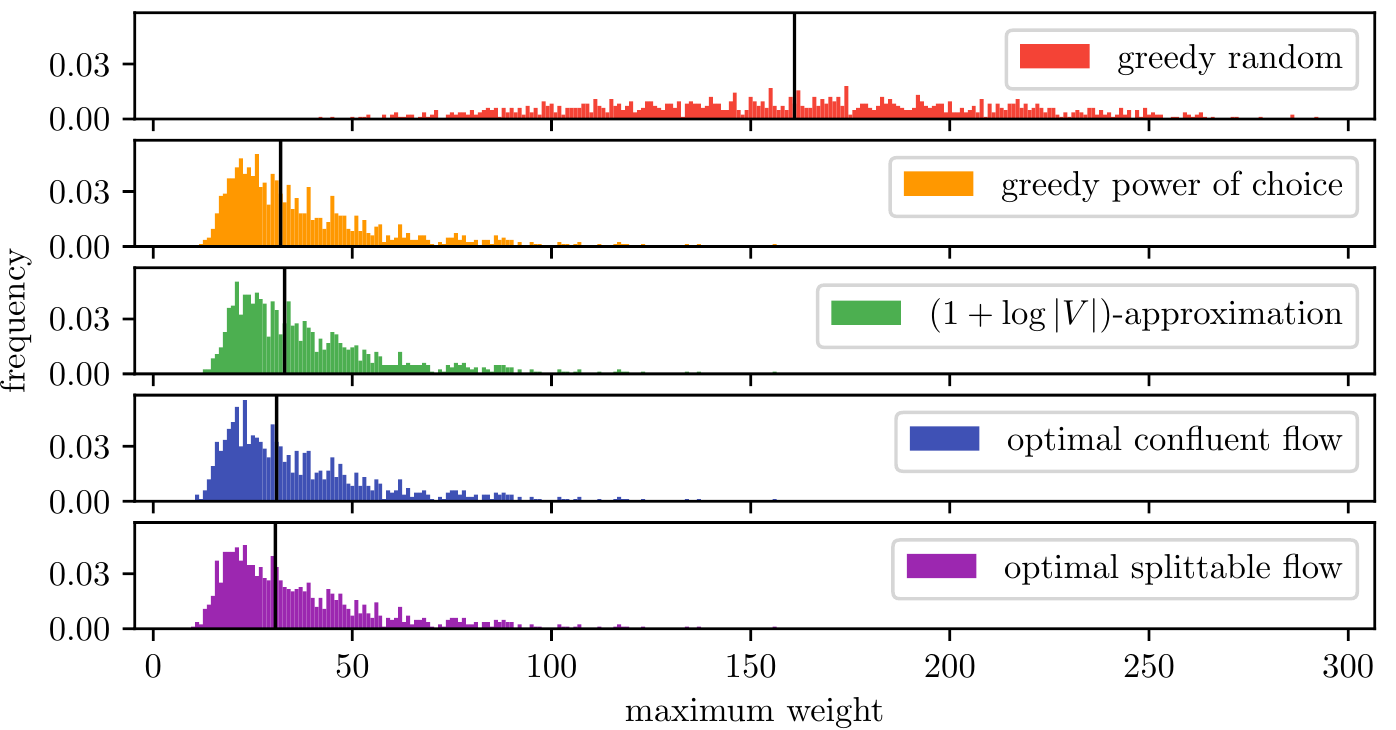}
 		\caption{All mechanisms and optimal splittable solution}
 	\end{subfigure}
 	\begin{subfigure}[h]{0.49\textwidth}
 		\includegraphics[width=\linewidth]{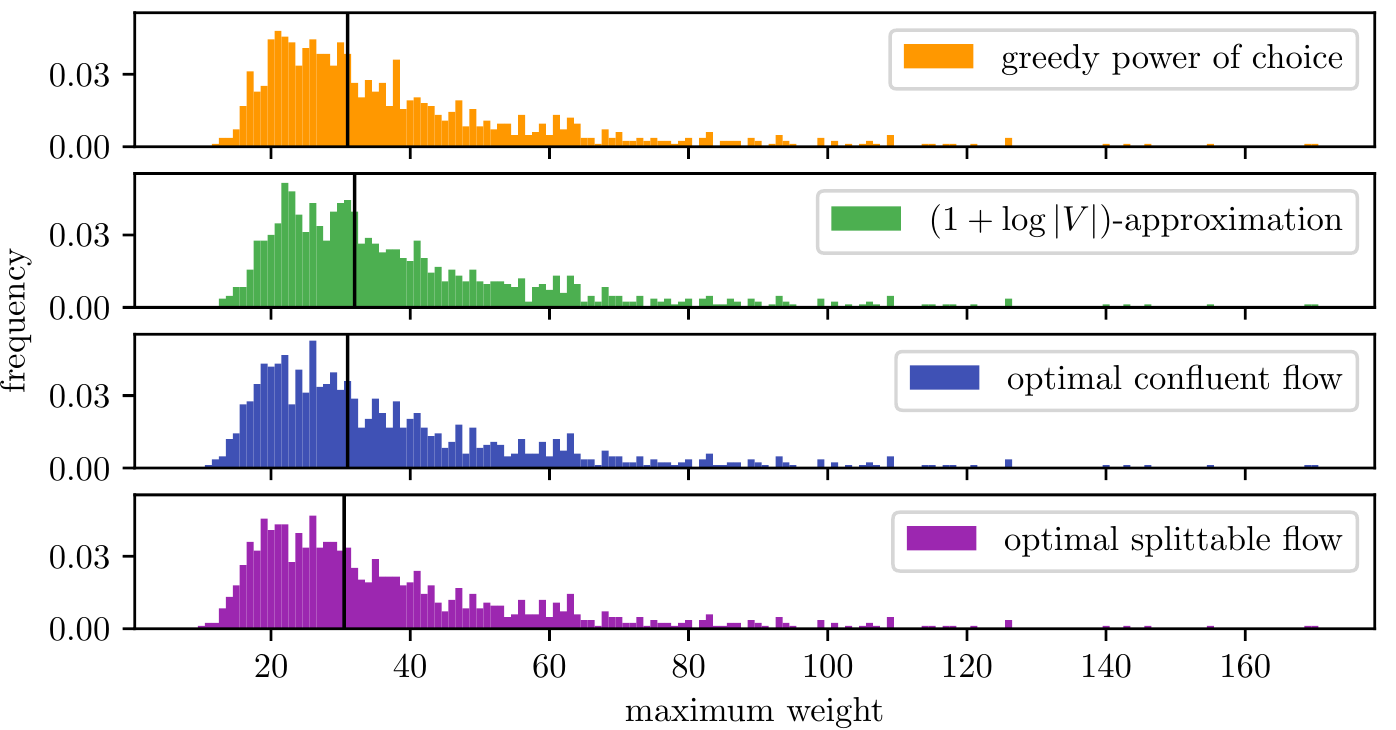}
 		\caption{Excluding random delegation}
 	\end{subfigure}%
 	\caption{Frequency of maximum weights at time $500$ over $1\,000$ runs: $\gamma = 0.5$, $d = 0.75$, $k = 2$.}
 	\label{fig:histg50d75}
 \end{figure}

 \begin{figure}[H]
 	\centering
 	\begin{subfigure}[h]{0.49\textwidth}
 		\includegraphics[width=\linewidth]{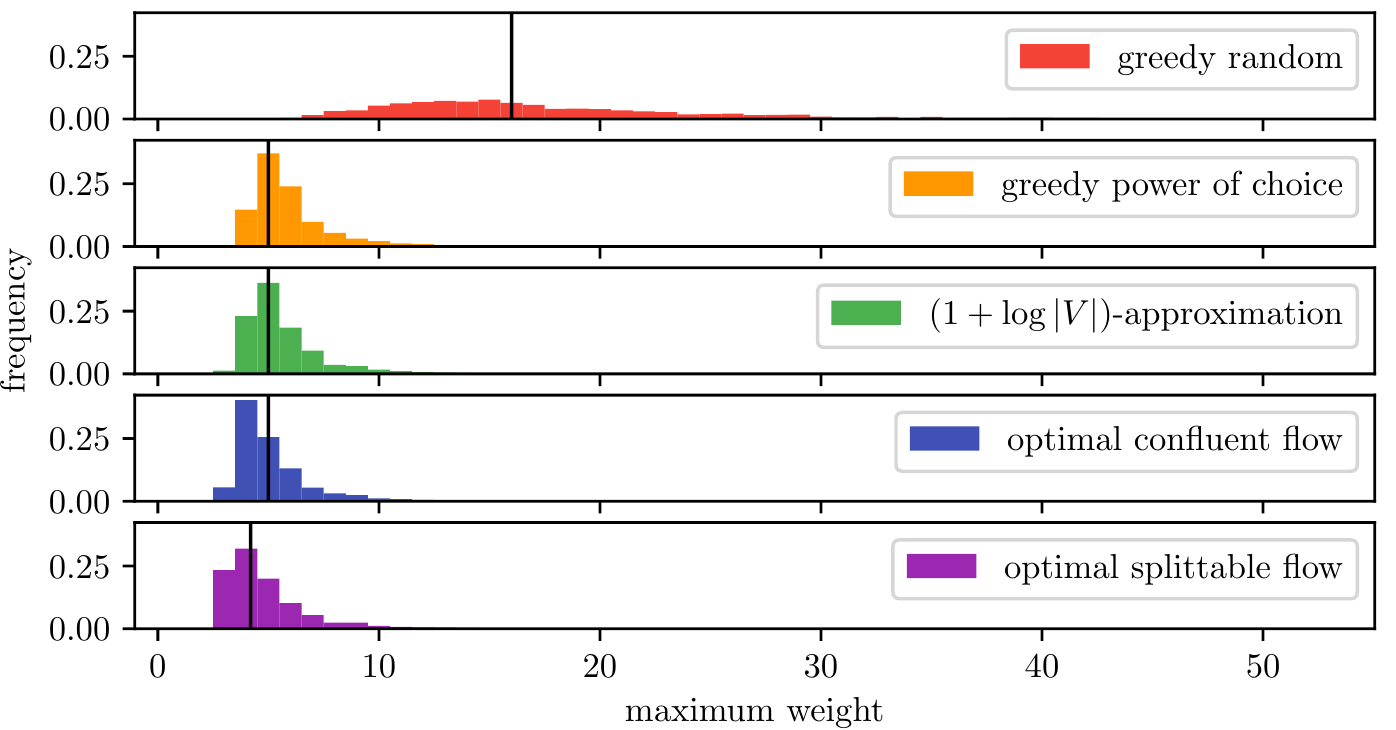}
 		\caption{All mechanisms and optimal splittable solution}
 	\end{subfigure}
 	\begin{subfigure}[h]{0.49\textwidth}
 		\includegraphics[width=\linewidth]{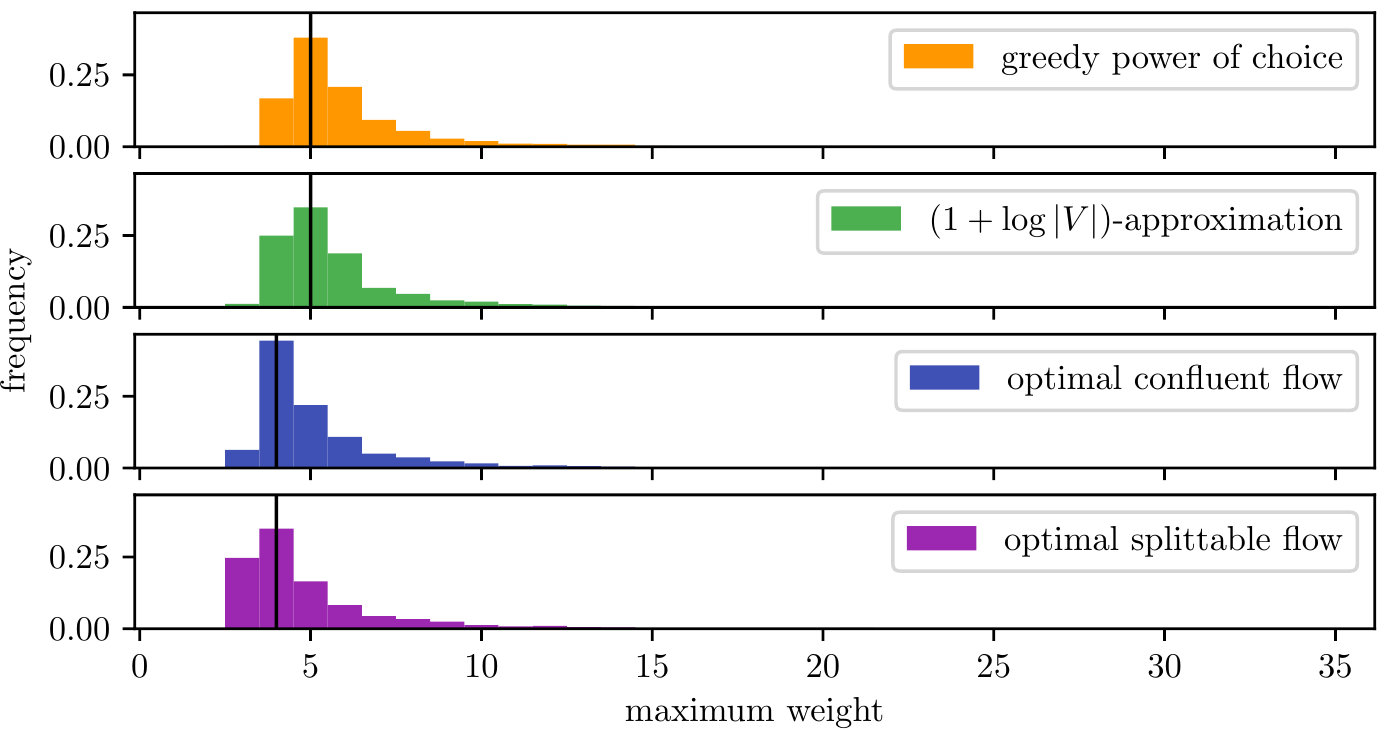}
 		\caption{Excluding random delegation}
 	\end{subfigure}%
 	\caption{Frequency of maximum weights at time $500$ over $1\,000$ runs: $\gamma = 1$, $d = 0.25$, $k = 2$.}
 	\label{fig:histg100d25}
 \end{figure}

 \begin{figure}[H]
 	\centering
 	\begin{subfigure}[h]{0.49\textwidth}
 		\includegraphics[width=\linewidth]{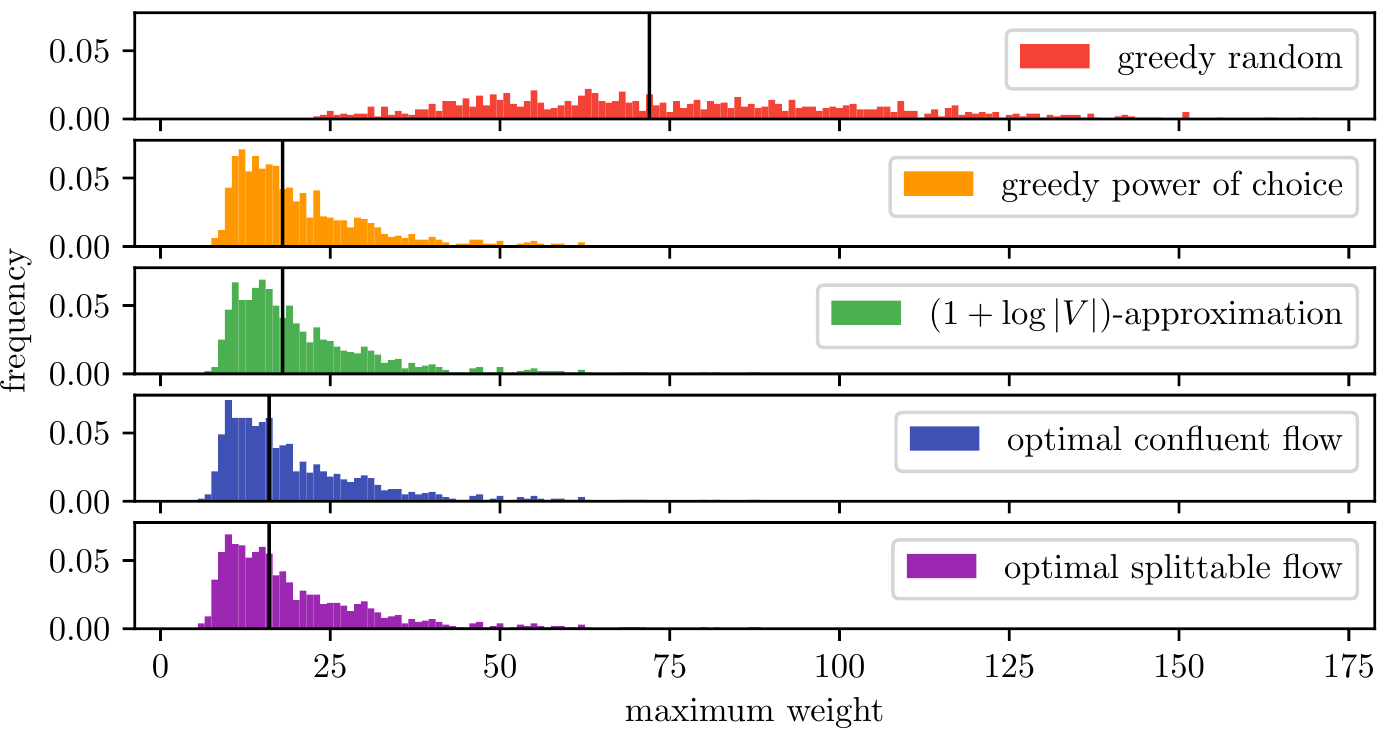}
 		\caption{All mechanisms and optimal splittable solution}
 		\label{subfig:histg100d50all}
 	\end{subfigure}
 	\begin{subfigure}[h]{0.49\textwidth}
 		\includegraphics[width=\linewidth]{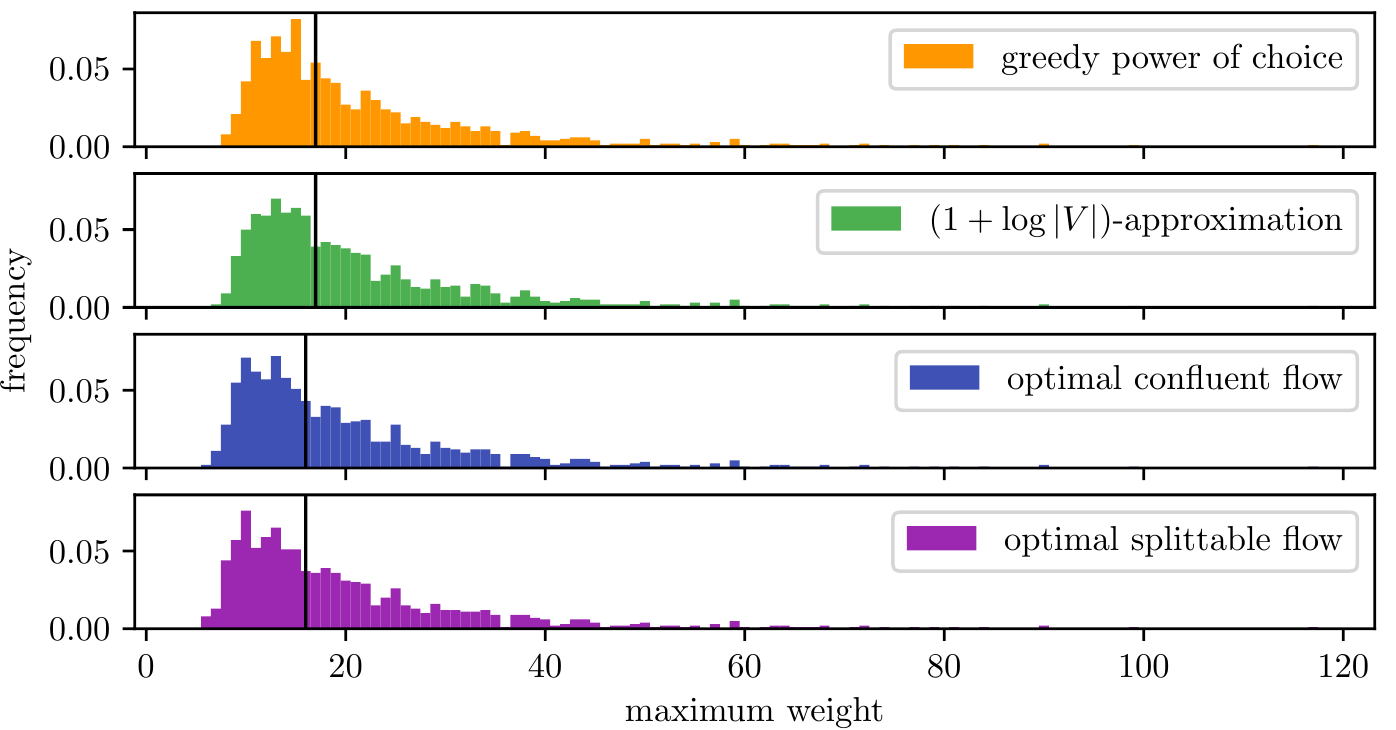}
 		\caption{Excluding random delegation}
 		\label{subfig:histg100d50norandom}
 	\end{subfigure}%
 	\caption{Frequency of maximum weights at time $500$ over $1\,000$ runs: $\gamma = 1$, $d = 0.5$, $k = 2$.}
 	\label{fig:histg100d50app}
 \end{figure}

 \begin{figure}[H]
 	\centering
 	\begin{subfigure}[h]{0.49\textwidth}
 		\includegraphics[width=\linewidth]{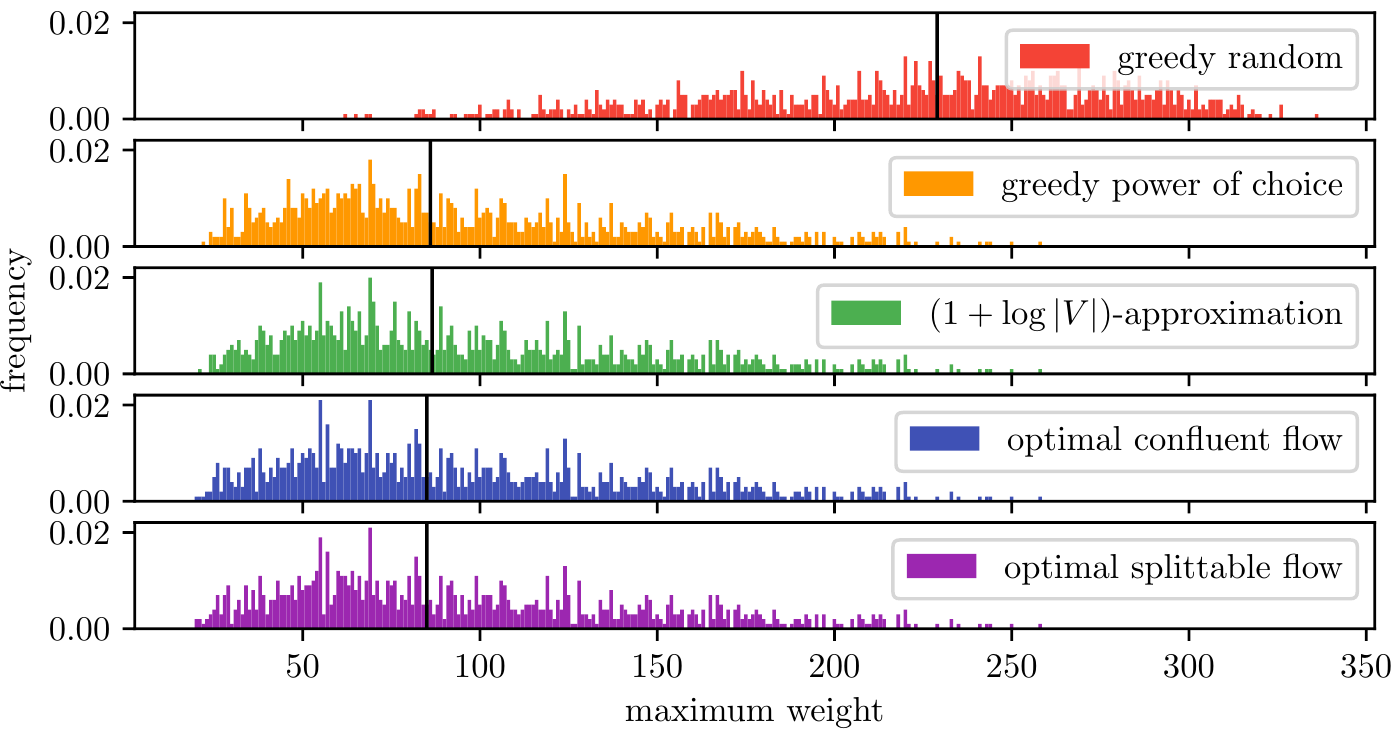}
 		\caption{All mechanisms and optimal splittable solution}
 	\end{subfigure}
 	\begin{subfigure}[h]{0.49\textwidth}
 		\includegraphics[width=\linewidth]{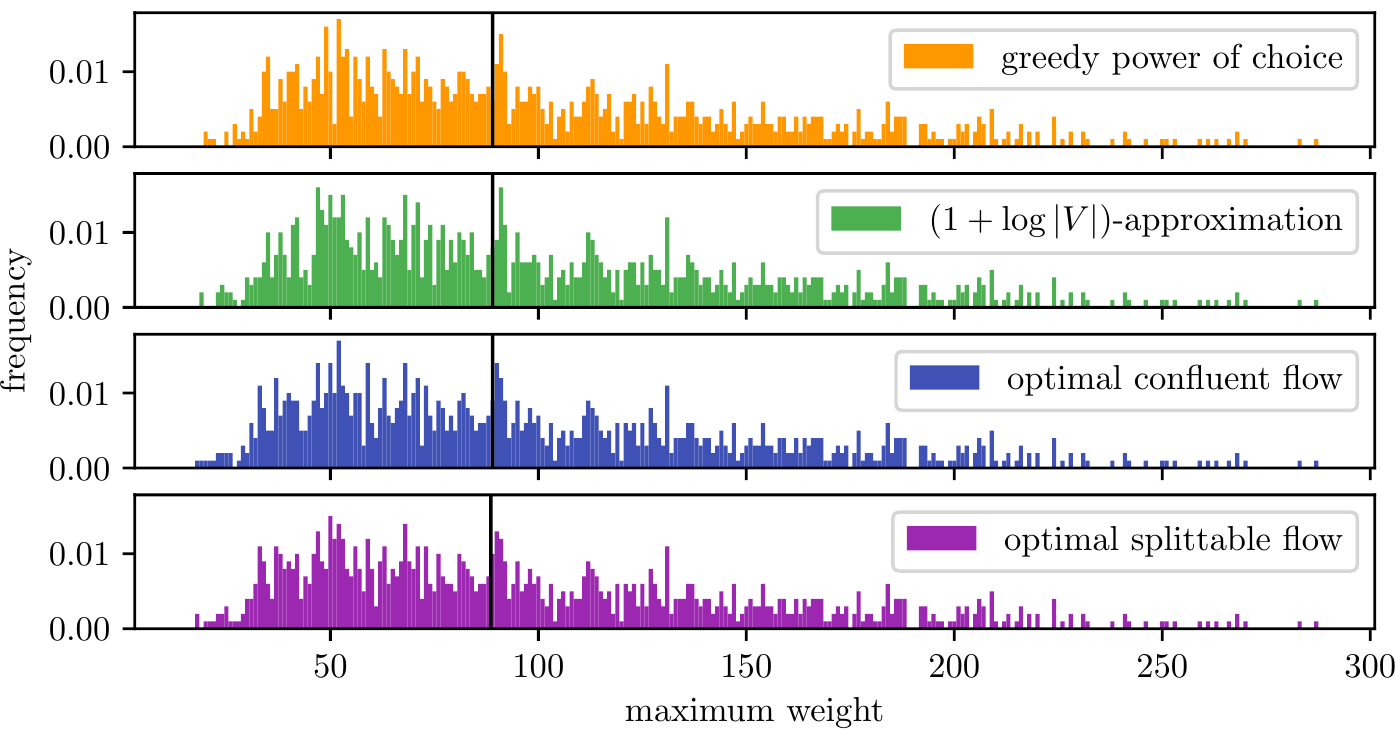}
 		\caption{Excluding random delegation}
 	\end{subfigure}%
 	\caption{Frequency of maximum weights at time $500$ over $1\,000$ runs: $\gamma = 1$, $d = 0.75$, $k = 2$.}
 	\label{fig:histg100d75}
 \end{figure}

\subsection{Running Times}
\label{sec:runtimegraphs}
\begin{figure}[H]
\centering
\begin{subfigure}[h]{.45\textwidth}
\includegraphics[width=\textwidth]{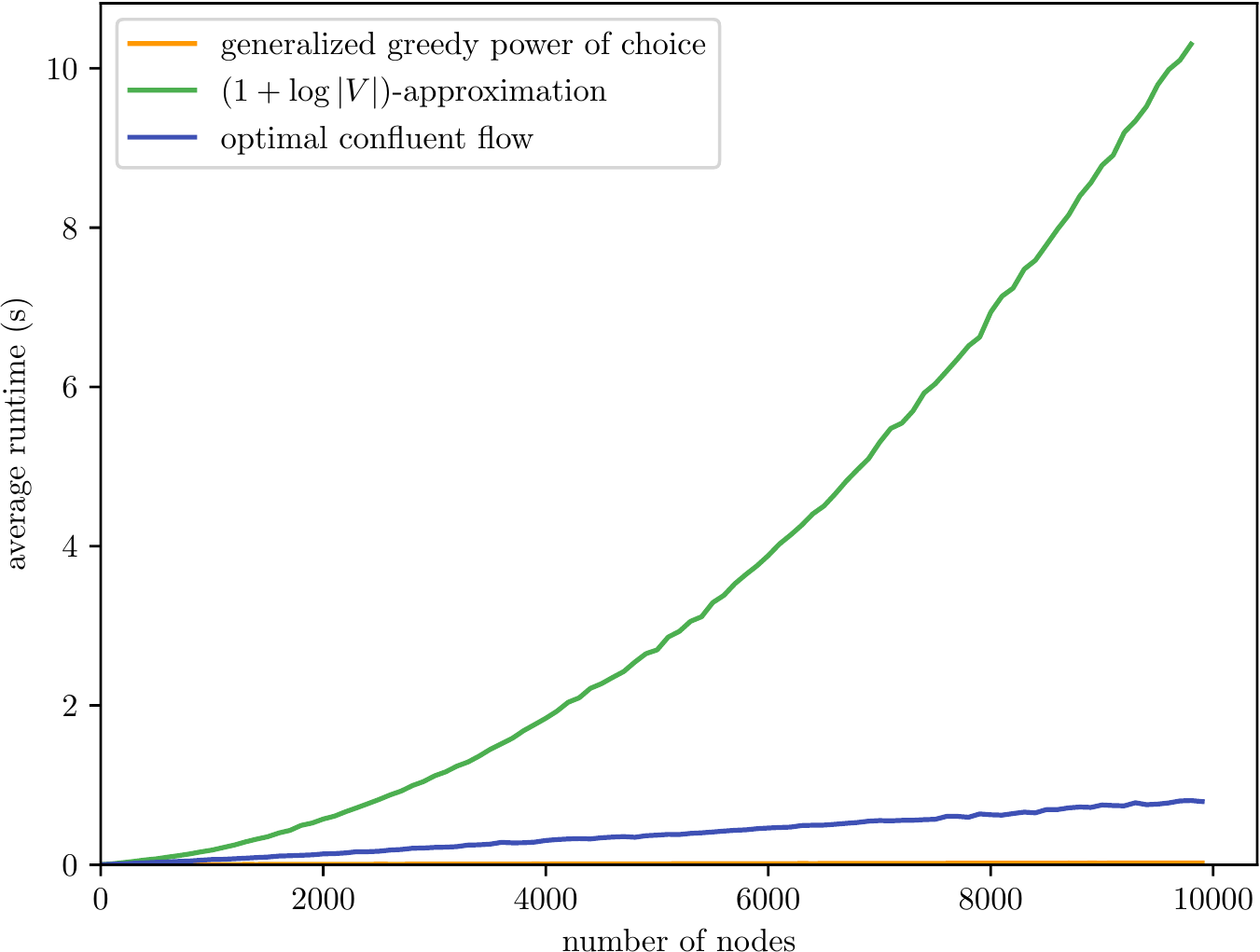}
\caption{$\gamma = 0$, $d = 0.25$}
\end{subfigure}
\begin{subfigure}[h]{.45\textwidth}
\includegraphics[width=\textwidth]{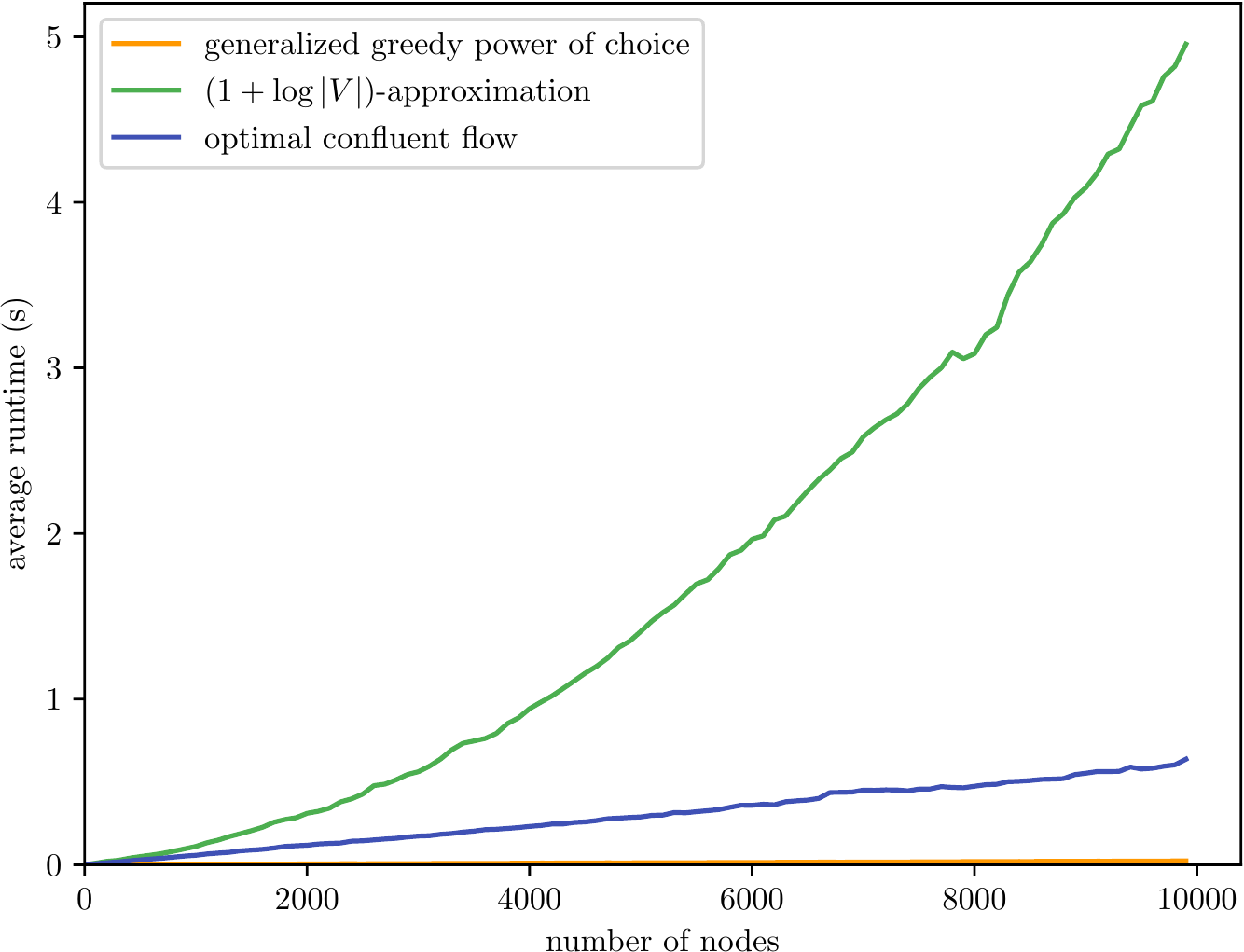}
\caption{$\gamma = 1$, $d = 0.25$}
\end{subfigure} \\
    \begin{subfigure}[h]{.45\textwidth}
\includegraphics[width=\textwidth]{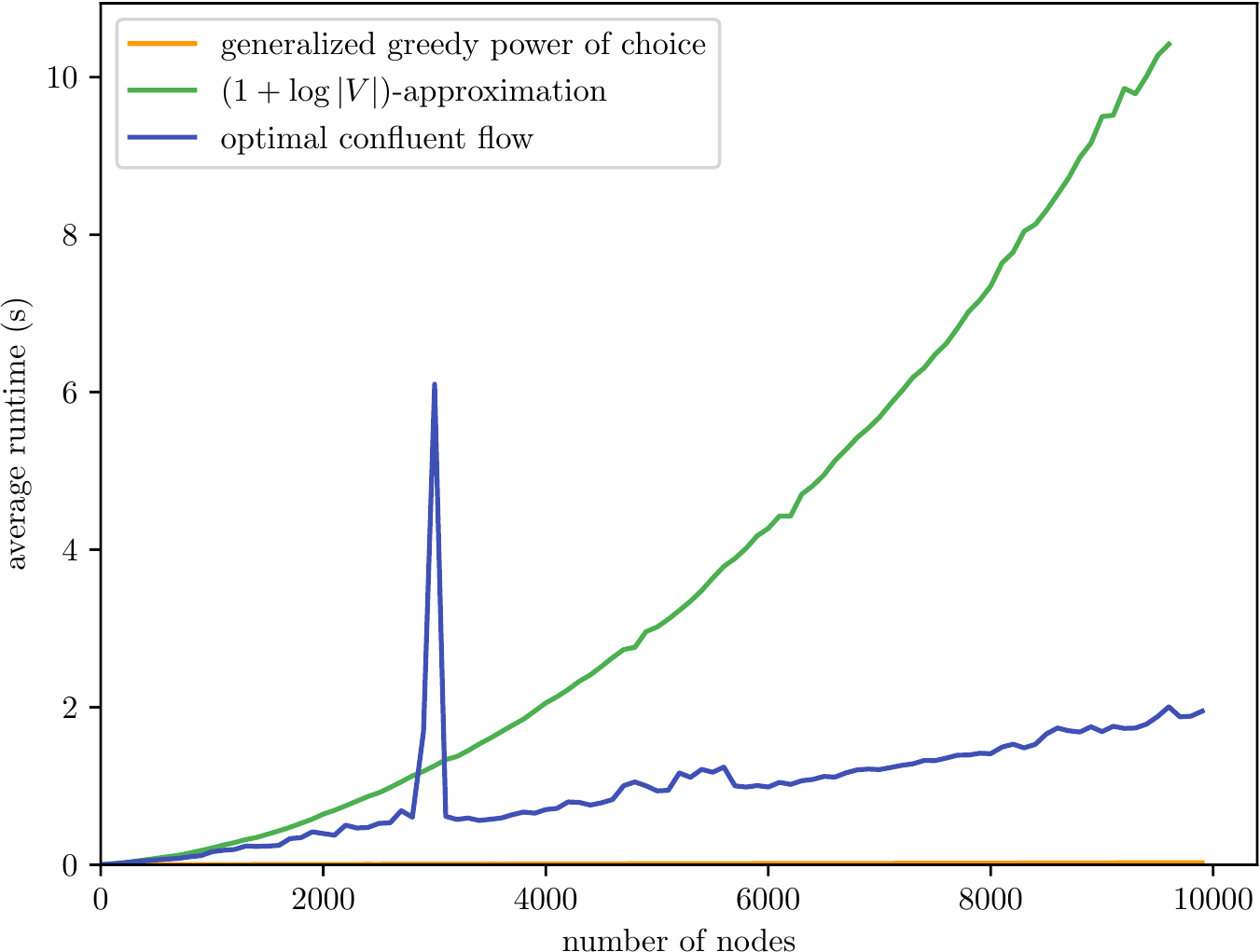}
\caption{$\gamma = 0$, $d = 0.5$}
\end{subfigure}
    \begin{subfigure}[h]{.45\textwidth}
\includegraphics[width=\textwidth]{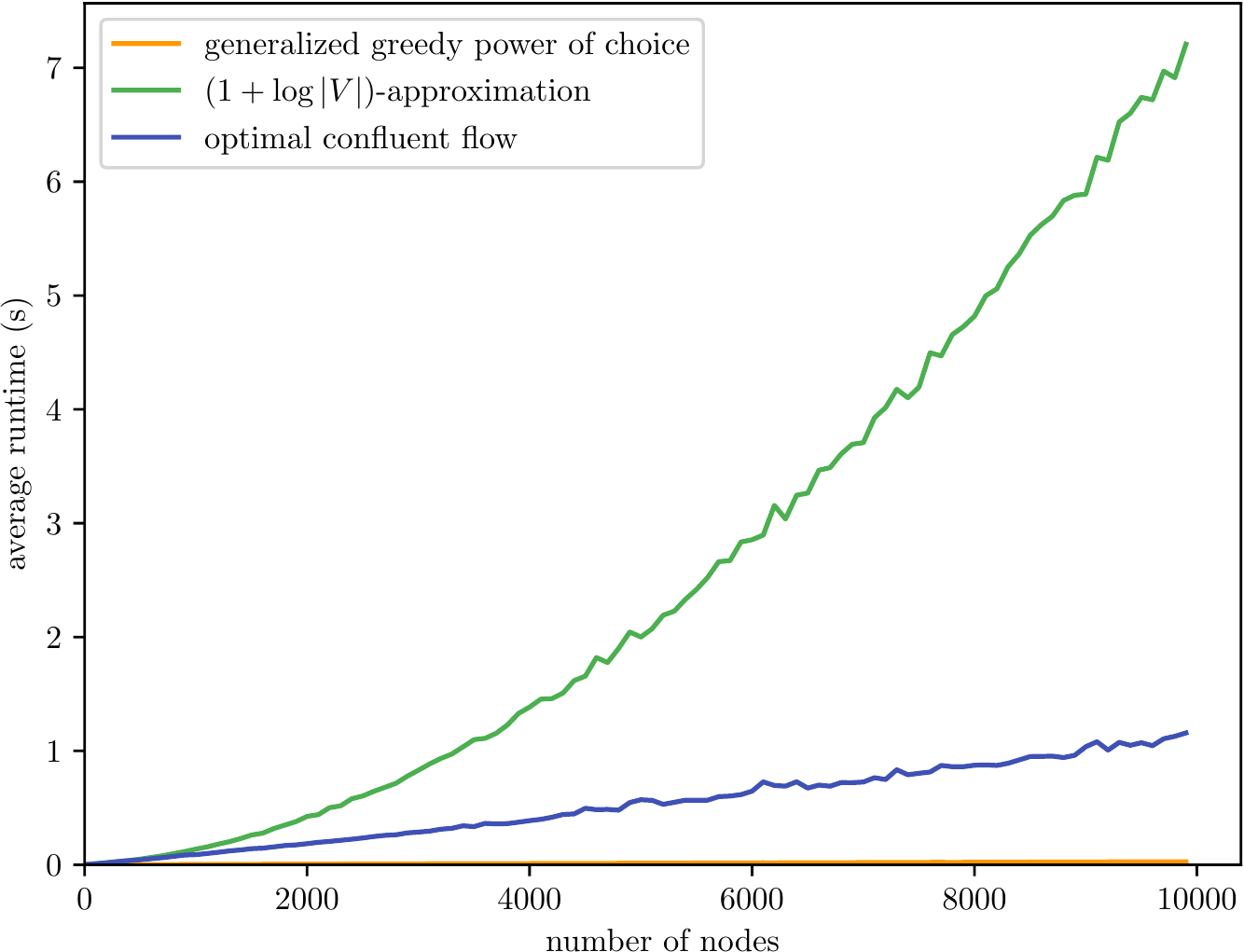}
\caption{$\gamma = 1$, $d = 0.5$}
\end{subfigure} \\
\begin{subfigure}[h]{.45\textwidth}
\includegraphics[width=\textwidth]{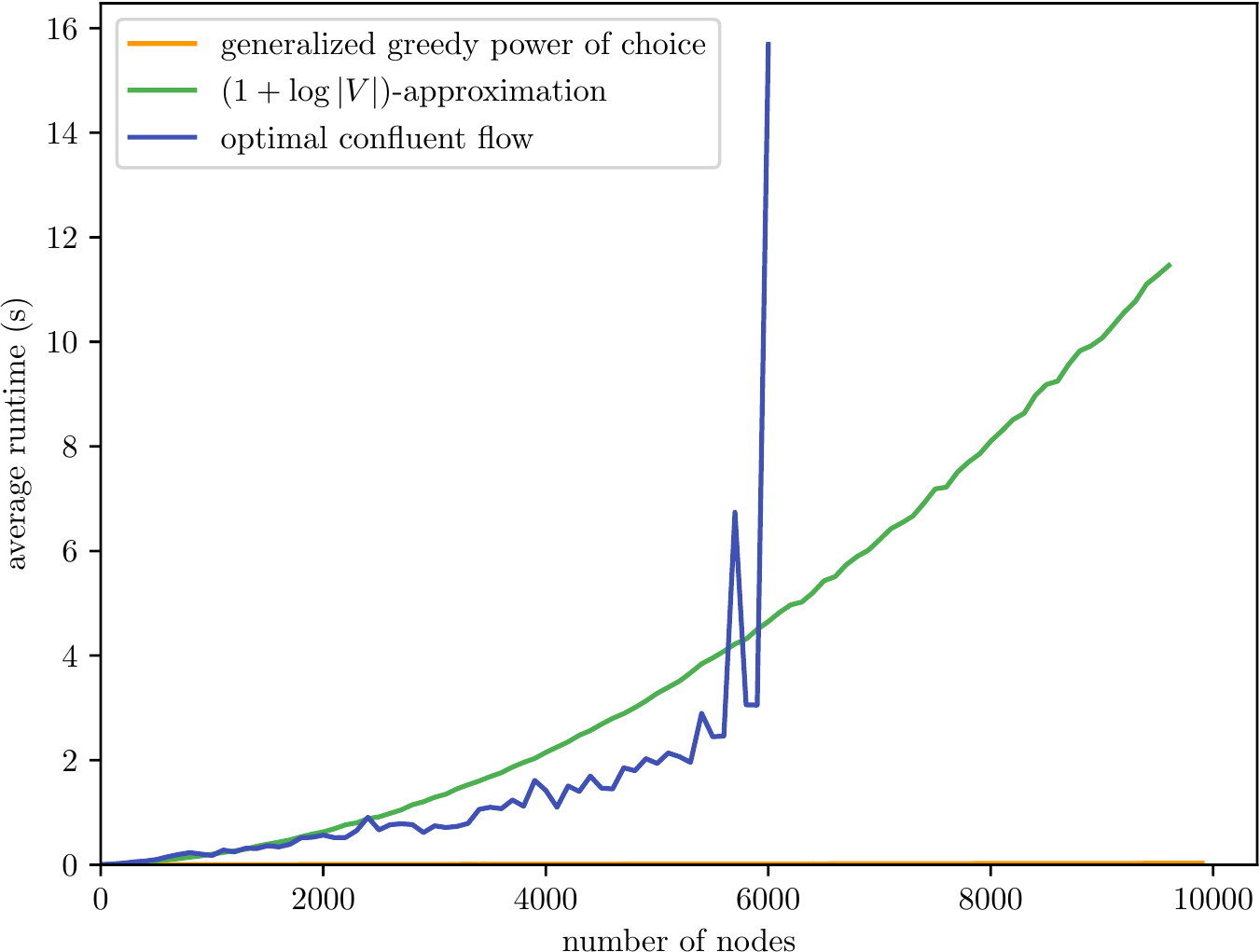}
\caption{$\gamma = 0$, $d = 0.75$}
\end{subfigure} 
\begin{subfigure}[h]{.45\textwidth}
\includegraphics[width=\textwidth]{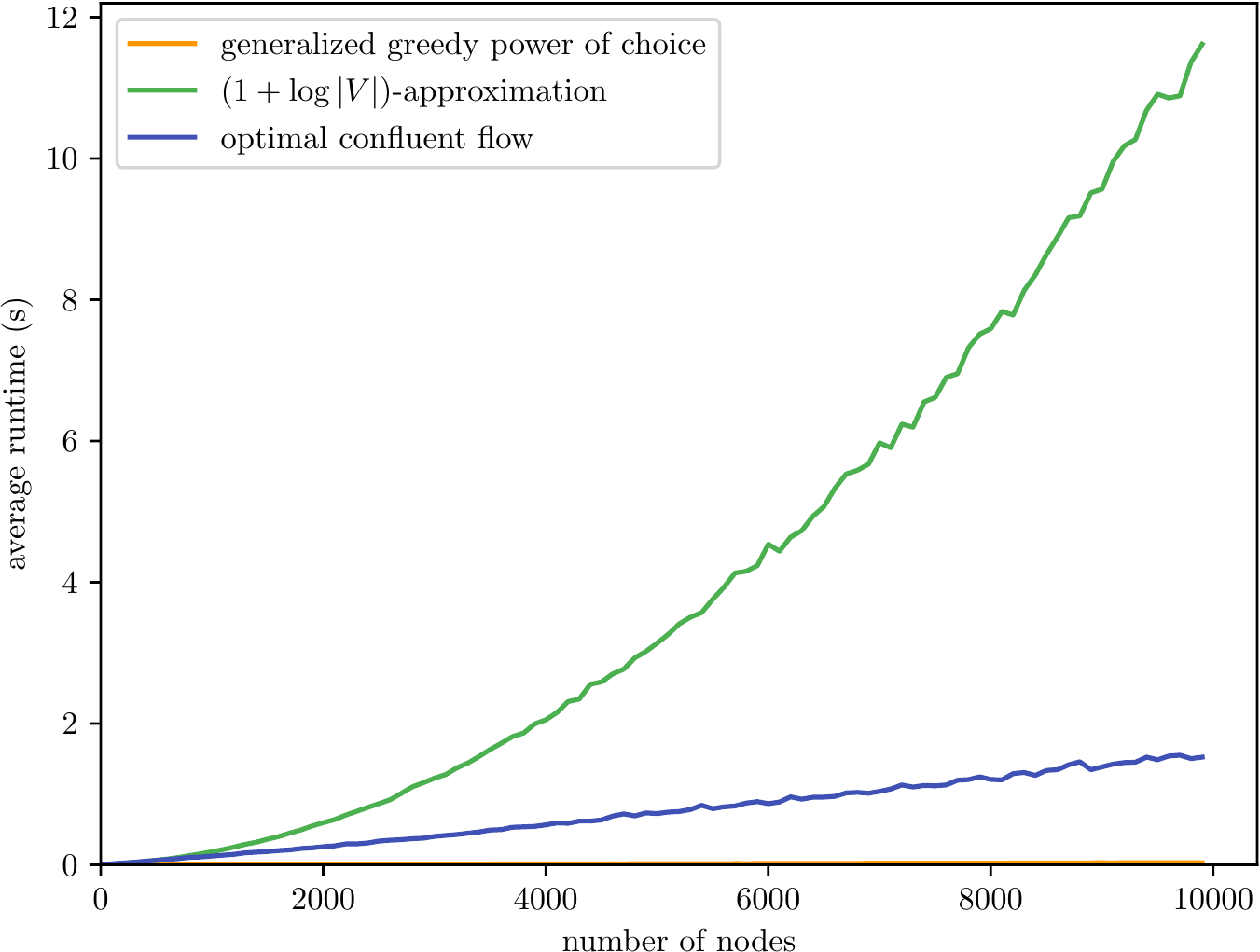}
\caption{$\gamma = 1$, $d = 0.75$}
\end{subfigure}
    \caption{Running time of mechanisms, averaged over 20 simulations. Running time computed every 20 steps of the simulation. Curve for a mechanism $m$ is discontinued at size $s$ if the mechanism needed more than 8~minutes total to resolve delegations of size $1, 21, 41, \dots, s$ in one simulation.}
    \label{fig:runtimefig}
\end{figure}
 
\end{document}